\newcommand{\shortstrut}{\vrule
  height \dimexpr\ht\strutbox-1.5pt\relax
  depth  \dimexpr\dp\strutbox-1.5pt\relax
  width  0pt}
\newcommand{\tallstrut}{\vrule
  height \dimexpr\ht\strutbox+2.5pt\relax
  depth  \dimexpr\dp\strutbox\relax
  width  0pt}
\newcommand{\tikzexternalenable}{}
\newcommand{\tikzexternaldisable}{}
\newcommand{\Xs}{\overrightarrow{X}}
\newcommand{\Ys}{\overrightarrow{Y}}
\newcommand{\Zs}{\overrightarrow{Z}}
\newcommand{\Ws}{\overrightarrow{W}}
\newcommand{\Vs}{\overrightarrow{V}}
\newcommand{\Us}{\overrightarrow{U}}
\newcommand{\Gs}{\overrightarrow{G}}
\newcommand{\thetas}{\overrightarrow{\theta}}
\newcommand{\Cons}{\mathrm{Cons}}
\newcommand{\Nil}{\mathrm{Nil}}
\newcommand{\Node}{\mathrm{Node}}
\newcommand{\Leaf}{\mathrm{Leaf}}
\newcommand{\one}{\mathtt{1}}
\newcommand{\two}{\mathtt{2}}
\newcommand{\dbllist}{\mathit{dbllist}}
\newcommand{\nat}{\mathit{nat}}
\newcommand{\li}{\mathit{list}}
\newcommand{\dom}{\mathrm{dom}}
\newcommand{\toto}{\longrightarrow}
\newcommand{\reduces}{\> \longrightarrow_{\mathrm{v}}\>}
\newcommand{\reducestwo}{\toto^2_{\mathrm{v}}}
\newcommand{\reducesp}[1]{\> \longrightarrow_{\mathrm{v}, {#1}}\>}
\newcommand{\reducespstar}[1]{\> \longrightarrow^\ast_{\mathrm{v}, {#1}}\>}
\newcommand{\lto}{\longrightarrow}
\newcommand{\fn}{\mathit{fn}}
\newcommand{\caseof}[4]{\mathbf{case}\;#1\;\mathbf{of}\;#2\to #3\;\texttt{|}\;\mathbf{otherwise} \to #4}
\newcommand{\letin}[3]{\mathbf{let}\;#1\;\texttt{=}\;#2\;\mathbf{in}\;#3}
\newcommand{\zero}{\textbf{0}}
\newcommand{\LI}[3]{(\{#1\} \multimap #2){\{#3\}}}
\newcommand{\LIempty}[2]{(\emptyset \multimap #1){\{#2\}}}
\newcommand{\Root}{\textit{pr}}
\newcommand{\FusedToRoot}{\textit{fused2PR}}
\newcommand{\areFused}[3]{#1 \bowtie_{#3} #2}
\newcommand{\CollectVars}{\mathit{vt}}
\newcommand{\seq}[1]{\overrightarrow{#1}}
\newcommand{\taus}{\overrightarrow{\tau}}
\newcommand{\ml}[1]{\begin{array}[b]{@{}l@{}l@{}l@{}l@{}}#1\end{array}}
\newcommand{\mlc}[1]{\begin{array}[b]{ccccc}#1\end{array}}
\newcommand{\narrowcolon}{\mathop{:}}
\newcommand{\paren}[1]{\bigl( #1 \bigr)}
\newcommand{\angled}[1]{\langle #1 \rangle}
\newcommand{\norm}[1]{\lvert #1 \rvert}
\newcommand{\TyVar}{\textsf{Ty-Var}}
\newcommand{\TyArrow}{\textsf{Ty-Arrow}}
\newcommand{\TyApp}{\textsf{Ty-App}}
\newcommand{\TyAlpha}{\textsf{Ty-Alpha}}
\newcommand{\TyCong}{\textsf{Ty-Cong}}
\newcommand{\TyProd}{\textsf{Ty-Prod}}
\newcommand{\TyCase}{\textsf{Ty-Case}}
\newcommand{\TyLIIntro}{\textsf{Ty-LI-Intro}}
\newcommand{\TyLITrans}{\textsf{Ty-LI-Trans}}
\newcommand{\TyLIElimZ}{\textsf{Ty-LI-Elim0}}
\newcommand{\TyLIIntroZ}{\textsf{Ty-LI-Intro0}}
\newcommand{\TyLIE}{\textsf{Ty-LI-E}}
\newcommand{\TyLIs}{\textsf{Ty-LI-}\ast}
\newcommand{\TyProdA}{\textsf{P}}
\newcommand{\TyLIIntroA}{\textsf{I}}
\newcommand{\TyLITransA}{\textsf{Tr}}
\newcommand{\TyLIElimZA}{\textsf{E0}}
\newcommand{\TyLIIntroZA}{\textsf{I0}}
\newcommand{\TyCongA}{\textsf{C}}
\newcommand{\TyCaseC}{\textsf{Ty-Case}}
\newcommand{\RdBeta}{\textsf{Rd-$\beta$}}
\newcommand{\RdCaseMatch}{\textsf{Rd-Case1}}
\newcommand{\RdCaseOther}{\textsf{Rd-Case2}}
\newcommand{\RdBetaD}{\textsf{Rd-$\beta$-D}}
\newcommand{\RdCaseMatchD}{\textsf{Rd-Case1-D}}
\newcommand{\RdCaseOtherD}{\textsf{Rd-Case2-D}}
\newcommand{\VarLinks}{\mathit{prs}}
\newcommand{\PatCond}{\mathit{patCond}}
\newcommand{\vinf}[1]{%
  \begin{array}[b]{c}
    \vdots \\
    #1
  \end{array}%
}
\newcommand{\Tx}{\mathcal{T}}
\newcommand{\Tc}{\widehat{\mathcal{T}}}
\newcommand{\eDLpop}{\ensuremath{e_{\mathsf{DLpop}}}}
\newcommand{\eLLT}{\ensuremath{e_{\mathsf{LLT}}}}
\newcommand{\eDLun}{\ensuremath{e_{\mathsf{DLfail}}}}
\newcommand{\eLappend}{\ensuremath{e_{\mathsf{append}}}}
\newcommand{\patDLun}{\ensuremath{\Tx_{\mathsf{DLfail}}}}
\newcommand{\GDL}{\ensuremath{G_\mathsf{DL3}}}
\newcommand{\Pdl}{\ensuremath{P_\textsf{DL-LLT}}}
\newtheorem{theorem}{Theorem}[section]
\newtheorem{prop}[theorem]{Proposition}
\newtheorem{corollary}[theorem]{Corollary}
\newtheorem{lemma}[theorem]{Lemma}
\newtheorem*{lemma*}{Lemma}
\newtheorem{example}[theorem]{Example}
\theoremstyle{definition}
\newtheorem{definition}[theorem]{Definition}
\crefname{theorem}{theorem}{theorems}
\Crefname{theorem}{Theorem}{Theorems}
\crefname{prop}{proposition}{propositions}
\Crefname{prop}{Proposition}{Propositions}
\crefname{corollary}{corollary}{corollaries}
\Crefname{corollary}{Corollary}{Corollaries}
\crefname{lemma}{lemma}{lemmas}
\Crefname{lemma}{Lemma}{Lemmas}
\crefname{example}{example}{examples}
\Crefname{example}{Example}{Examples}
\crefname{definition}{definition}{definitions}
\Crefname{definition}{Definition}{Definitions}
\newenvironment{mydef}[1][]
{\begin{definition}[#1]\pushQED{\qed}}
    {\popQED\end{definition}}
\newcommand{\ctxx}[1]{%
  \tikz[font=\footnotesize, baseline=-0.5ex, scale=0.5]{%
    \node[ctxn] (x) at (0, 0) {\(#1\)};
    \draw[] (x)--++(-1, 0.4);
    \draw[] (x)--++(-1, -0.4);
    \draw[] (x)--++(1, 0.4);
    \node[font=\tiny] at (-1.3,  0.3) {$Z$};
    \node[font=\tiny] at (-1.3, -0.3) {$X$};
    \node[font=\tiny] at ( 1.3, +0.3) {$Y$};
    \node[font=\tiny] at (-0.6,  0.5) {$3$};
    \node[font=\tiny] at (-0.6, -0.5) {$1$};
    \node[font=\tiny] at ( 0.6, +0.5) {$2$};
  }
}
\newcommand{\ctxu}[1]{%
  \tikz[font=\footnotesize, baseline=-0.5ex, scale=0.5]{%
    \node[ctxn,inner sep=3pt] (x) at (0, 0) {\(#1\)};
    \draw[] (x)--++(0, 1.);
    \node[font=\tiny] at (0,     1.3) {$X$};
    \node[font=\tiny] at (-0.25,  0.7) {$1$};
  }
}
\newcommand{\lltr}[1]{%
  \tikz[font=\footnotesize, baseline=0.5ex, scale=0.5]{%
    \node[triangle,inner sep=0pt] (x) at (0, 0) {\(#1\)};
    \draw[] ($(x.north)-(0,0.1)$)--++(0, 0.5);
    \draw[] (x)--++(1, -0.1);
    \draw[] (x)--++(-1, -0.1);
    \node[font=\tiny] at (0,     1.3) {$X$};
    \node[font=\tiny] at (-1.3, -0.1) {$L$};
    \node[font=\tiny] at ( 1.3, -0.1) {$R$};
    \node[font=\tiny] at (-0.25,  0.7) {$3$};
    \node[font=\tiny] at (-0.6, 0.2) {$1$};
    \node[font=\tiny] at ( 0.6, 0.2) {$2$};
  }
}
\newcommand{\dblltype}[3]{%
  \tikz[font=\footnotesize, baseline=-0.5ex, scale=0.6]{%
    \node[tvar,inner sep=2pt] (cons1) at (0, 0) {\(\mathit{dbll}\)};
    \draw[]   (cons1)--++(-1, 0.4);
    \draw[]   (cons1)--++(-1, -0.4);
    \draw[] (cons1)--++(1, 0.4);
    \node[font=\tiny] at (-1.3, -0.3) {$#1$}; 
    \node[font=\tiny] at ( 1.3,  0.3) {$#2$}; 
    \node[font=\tiny] at (-1.3,  0.3) {$#3$}; 
    \node[font=\tiny] at (-0.6,  0.5) {$3$};
    \node[font=\tiny] at (-0.6, -0.5) {$1$};
    \node[font=\tiny] at ( 0.6,  0.5) {$2$};
  }
}
\newcommand{\lltrtype}[3]{%
  \tikz[baseline=1ex, scale=0.6]{%
    \node[lltr] (x) at (0, 0) {\(\mathit{lltr}\)};
    \draw[] ($(x.north)-(0,0.1)$)--++(0, 0.6);
    \draw[]   (x)--++( 1.0, -0.15);
    \draw[] (x)--++(-1.0, -0.15);
    \node[font=\tiny] at (0,     1.3) {$#1$}; 
    \node[font=\tiny] at (-1.2, -0.1) {$#2$}; 
    \node[font=\tiny] at ( 1.2, -0.1) {$#3$}; 
    \node[font=\tiny] at (-0.25,  0.7) {$3$};
    \node[font=\tiny] at (-0.6, 0.2) {$1$};
    \node[font=\tiny] at ( 0.6, 0.2) {$2$};
  }
}
\newcommand{\ctxyy}[2]{%
  \begin{scope}[shift={(#1,#2)}]
    \coordinate (t1) at (-2,  0)  ;
    \coordinate (t2) at (0,   2);
    \coordinate (t3) at (2,   0)  ;
    \coordinate (t4) at (1.7, 0)  ;
    \coordinate (t5) at (0.6, 1)  ;
    \coordinate (t6) at (0,    0.4)  ;
    \coordinate (t7) at (-0.6, 1)  ;
    \coordinate (t8) at (-1.7, 0)  ;
    \draw[fill=ctxcolor,draw=linkcolor,dashed,rounded corners,thin]
    (t1)--(t2)--(t3)--(t4)--(t5)--(t6)--(t7)--(t8)--cycle;
    \node[] at (0, 1.2) {\(y\)};
    \draw[] ($(t2)-(0,0.1)$) --++(0, 0.5);
    \draw[] (-1.7,0.3) --++(-0.6, -0);
    \draw[]   ( 1.7,0.3) --++(+0.6, -0);
    \node[font=\tiny] at (0,     2.6) {$X$};
    \node[font=\tiny] at (-2.5, 0.35) {$L$};
    \node[font=\tiny] at ( 2.5, 0.35) {$R$};
    \node[font=\tiny] at (-0.25, 2.1) {$6$};
    \node[font=\tiny] at (-1.9, 0.5)  {$1$};
    \node[font=\tiny] at (-1.45, 0.1) {$2$};
    \node[font=\tiny] at (-0.7, 0.8)  {$3$};
    \node[font=\tiny] at (+0.7, 0.8)  {$4$};
    \node[font=\tiny] at (+1.45, 0.1) {$5$};
    \node[font=\tiny] at ( 1.9, 0.5)  {$6$};
  \end{scope}
}
\newcommand{\ctxyl}[2]{%
  \begin{scope}[shift={(#1,#2)}]
    \coordinate (t1) at (-2,  0)  ;
    \coordinate (t2) at (0,   2);
    \coordinate (t3) at (0.5, 1.5)  ;
    \coordinate (t6) at (-1,0)  ;
    \draw[fill=ctxcolor,draw=linkcolor,dashed, rounded corners,thin]
    (t1)--(t2)--(t3)--(t6)--cycle;
    \node[] at (-0.5, 1) {\(y\)};
    \draw[] ($(t2)-(0,0.1)$) --++(0, 0.5);
    \draw[] (-1.7,0.3) --++(-0.6, -0);
    \node[font=\tiny] (LX) at (0,     2.6) {$X$};
    \node[font=\tiny] (LL) at (-2.5, 0.35) {$L$};
    \node[font=\tiny] (LR) at ( 2.3, 0.35) {$R$};
    \node[font=\tiny] at (-0.25, 2.1) {$4$};
    \node[font=\tiny] at (-1.9, 0.5)  {$1$};
    \node[font=\tiny] at (-0.70, 0.0) {$2$};
    \node[font=\tiny] at (0.55, 1.5)  {$3$};
  \end{scope}
}
\tikzset{
  every path/.append style = {thin},
  ghost/.style={fill=white,draw=black!50,text=black!50,dashed},
  every node/.style={font=\scriptsize,thin},
}
\newcommand\newblock{\hskip .11em\@plus.33em\@minus.07em}
\begin{document}

\title{Introducing Linear Implication Types to $\lambda_{GT}$ \\
  for Computing With Incomplete Graphs}

\affiliate{WU}{Waseda University, Shinjuku, Tokyo 169-8555, Japan}

\author{Jin Sano}{WU}[sano@ueda.info.waseda.ac.jp]
\author{Naoki Yamamoto}{WU}[yamamoto@ueda.info.waseda.ac.jp]
\author{Kazunori Ueda}{WU}[ueda@ueda.info.waseda.ac.jp]

\begin{abstract}
  Designing programming languages that support intuitive and safe manipulation of complex
  data structures remains a fundamental research challenge.
  The $\lambda_{GT}$ language addresses this challenge by treating hypergraphs
  (hereafter simply referred to as graphs) as first-class data structures in a purely functional setting.
  By representing data as graphs,
  $\lambda_{GT}$ naturally captures sharing and cyclic structures.
  Moreover,
  the language provides declarative graph manipulation through pattern matching,
  and its type system is designed to guarantee the safety of such operations.
  Nevertheless, the previously proposed type system of $\lambda_{GT}$ still suffers from two fundamental limitations.
  The first limitation is the lack of support for \emph{incomplete graphs},
  that is, graphs in which some components are missing from the graphs of user-defined types.
  Such incomplete graphs arise naturally in practice, in particular as intermediate graphs during pattern matching.
  The second limitation is the reliance on dynamic type checking during pattern matching.
  Naively eliminating dynamic checks leads to unsoundness,
  leaving fully static type checking as an open problem.
  This paper addresses both limitations.
  To overcome the first limitation,
  we extend the type system of $\lambda_{GT}$ with \emph{linear implication types},
  which enable the typing of incomplete graphs.
  To address the second limitation,
  we introduce additional constraints on both case patterns and types
  to eliminate the need for dynamic type checking while preserving soundness.
\end{abstract}

\begin{keyword}
  Functional programming,
  graphs,
  type system,
  \(\lambda_{GT}\).
\end{keyword}

\maketitle

\section{Introduction}\label{sec:intro}

Designing programming languages that enable intuitive and safe manipulation of complex data structures
remains a fundamental challenge in programming language research.
Traditional destructive
pointer manipulation introduce
complexity and are susceptible to errors
such as memory leaks and dangling references.
To address these issues, various type systems have been
proposed, including affine/linear types
for ensuring compile-time memory safety and shape types
\cite{shapetypes} for
representing pointer data structures with sharing and cycles.
Whereas imperative programs with explicit manipulation of heaps
and pointers could benefit from them, design of high-level declarative
languages that allow us to manipulate pointer data structures at a
higher level of abstraction is largely an open problem.

The $\lambda_{GT}$ language
offers a novel approach to addressing these challenges~\cite{sano2023,sano-icgt2023}.
In $\lambda_{GT}$,
data structures with shared references and cycles are abstracted as
\textit{port hypergraphs},
a generalisation of graphs in which a hyperedge can connect any number of nodes
by attaching to them at specific points called \textit{ports}.
The $\lambda_{GT}$ language is
a purely functional language that
supports port hypergraphs (hereafter referred to as graphs)
as primary data structures,
providing declarative manipulation of these structures
based on \textit{graph decomposition through pattern matching}
and \textit{construction of new graphs based on graph composition}
(as opposed to explicit, step-by-step pointer manipulation).
Moreover, $\lambda_{GT}$ is equipped with a type system that allows users
to define structural constraints and enforces that all values
strictly conform to these user-defined specifications.
Nevertheless, the previously proposed type system of $\lambda_{GT}$
leaves two significant open challenges.

First, the type system does not support \emph{incomplete graphs},
that is, graphs in which some elements are missing from the graphs of user-defined types.
For example, a doubly-linked list missing one or more tail nodes and a leaf-linked tree with one or more absent leaves
cannot be properly captured by the previous type system.
Such incomplete graphs naturally arise in practice,
particularly in pattern matching.
Although users may define separate types for such incomplete structures,
this approach is cumbersome and lacks generality.

Second, the type system relies on dynamic type checking during pattern matching.
Although this mechanism guarantees type soundness,
achieving fully static type checking remains an important open challenge.

This study addresses these open challenges
by incorporating \emph{linear implication types}
into the $\lambda_{GT}$ type system,
enabling static typing of incomplete graphs.
We have found that a naive integration of linear implication types leads to unsoundness in certain scenarios.
To overcome this, we propose additional constraints on both the type system and the language syntax.

\subsection{Contributions}

This paper makes the following contributions:

\begin{enumerate}
  \item We propose a novel type system that incorporates linear implications
        into the previous type system of $\lambda_{GT}$
        to enable typing of incomplete graphs
        and prove soundness under dynamic type checking in case expressions (\Cref{sec:lgt-ext}).
  \item We identify an issue where the naive application of linear implications leads to unsoundness
        (\Cref{sec:unsound-example}).
  \item
        We introduce additional constraints on production rules in the type system
        and on patterns in case expressions,
        and provide a proof sketch
        for the soundness conjecture (\Cref{sec:sec4-constraints}).

\end{enumerate}

\subsection{Structure of the Paper}

The remainder of this paper is structured as follows.
\Cref{sec:lgt-lang} presents an overview of the $\lambda_{GT}$ language,
including its syntax, semantics, and type system,
and identifies open problems that remain to be addressed.
\Cref{sec:lgt-ext} describes our proposed extension to the
$\lambda_{GT}$ type system
that incorporates linear implications,
and discusses its properties with
a complete proof of soundness under dynamic type checking in
case expressions (our first contribution).
It also gives some typing examples and
discussion on the admissibility of typing rules.
\Cref{sec:soundness} examines the soundness of the extended type system
without dynamic checking,
highlighting counterexamples that reveal the limitations of naive approaches
(our second contribution),
presents the constraints we propose to address these issues
(our third contribution),
and discusses the soundness under extended typing rules.
\Cref{sec:related} discusses related work,
comparing our approach with existing verification frameworks and type systems for graphs and data structures.
\Cref{sec:conclusion} concludes the paper and outlines possible directions for future research.

\subsection*{Preliminaries}

We introduce some syntactic conventions as follows:
\(\seq{E}\) denotes
\(E_1, \dots, E_n\)
or
\(E_1 \dots E_n\)
for some \(n\ (\geq 0)\) and
some syntactic entity
\(E_i\ (1\le i\le n)\).
The length $n$ of $\seq{E}$ is denoted as $\norm{\seq{E}}$.
\(S\{s\}\) denotes a set \(S\)
such that \(s \in S\).
Hypergraphs and hyperlinks (hyperedges) may be simply denoted as graphs and links, respectively.

\section{The \(\lambda_{GT}\) Language}\label{sec:lgt-lang}

This section presents an overview of \(\lambda_{GT}\),
a purely functional programming language
that integrates hypergraphs and their transformations into the functional programming paradigm~\cite{sano2023}.
In \(\lambda_{GT}\), data structures are represented as hypergraphs,
enabling computations to be expressed via graph transformations~\cite{EhrigEPT06,handbook_graph_grammar}.
In contrast to traditional imperative approaches that rely on destructive pointer-based memory operations,
which are often both complex and error-prone,
\(\lambda_{GT}\) employs compositional constructs and graph pattern matching.
This design ensures fully declarative manipulation of data structures.
Another key feature of \(\lambda_{GT}\) is its type system,
which guarantees safety and ensures that operations preserve user-defined structural forms.

The core design requirements for the \(\lambda_{GT}\) language are as follows:
\begin{enumerate}
  \item The language should allow graphs to be composed easily.
  \item The language should allow variables to be bound to graphs.
  \item The language should support graph decomposition via pattern matching.
  \item As a functional language, it should treat functions as
        first-class citizens
        (possibly with lightweight mechanism for embedding functions into graphs as labels
        and taking functions out of graphs).
\end{enumerate}
We first describe the syntax and semantics of \(\lambda_{GT}\) in \Cref{sec:synsem},
which addresses all these requirements.

Further requirements for the type system are as follows:
\begin{enumerate}
  \setcounter{enumi}{4}
  \item
        The type system should allow users to define the types of graph structures
        and be able to verify that values conform to these types.
  \item
        The type system should be able to handle intermediate forms of graph structures
        during manipulations.
  \item The type system is expected to be fully static.
\end{enumerate}
We describe the existing type system in \Cref{sec:lgt-type}.
Although the type system satisfies Requirement 5,
Requirements 6 and 7 remain open challenges,
as highlighted in previous work and described in \Cref{sec:challenges}.

For further details of \(\lambda_{GT}\) not described in full below,
the readers are referred to \cite{sano2023}.

\subsection{Syntax and Semantics}\label{sec:synsem}

The syntax of \(\lambda_{GT}\) consists of
two mutually dependent ingredients:
(i) term-based hypergraph representations as values
and (ii) expressions that manipulate these structures.
To realise Requirement 1,
we adopt the graph definition style of HyperLMNtal~\cite{sano-ba,sano2021},
which offers both a term-based syntax and compositional semantics.
\Figref{table:lgt-syntax-concept} illustrates how the graph
definitions of HyperLMNtal
have been extended and adapted into a functional language framework
in a mutually dependent manner.

\newcommand{\tikznd}[2]{
  \tikz[remember picture, baseline=(#1.base)]{%
    \node [rounded corners, fill=gray!10!white] (#1) {#2};
  }
}

\begin{figure}[tb]
  \tikzexternaldisable
  \small
  \begin{minipage}[t]{0.215\textwidth}
    \centering
    \small
    \textbf{Extending graphs}
  \end{minipage}
  \hfill
  \begin{minipage}[t]{0.215\textwidth}
    \centering
    \small
    \textbf{\kern-5pt Extending \(\lambda\)-expressions}
  \end{minipage}
  \\
  \begin{minipage}[t]{0.215\textwidth}
    \small
    \framebox[\textwidth]{\parbox{\textwidth}{%
        \centering
        \tikznd{n11}{\small HyperLMNtal Graphs} \\[4pt]
        \textit{augmented with}                       \\[4pt]
        \tikznd{n12}{\small \(\lambda\)-abstractions} \\
        (embedded as node labels)
      }}%
  \end{minipage}
  \hfill
  \begin{minipage}[t]{0.215\textwidth}
    \small
    \hfill
    \framebox[\textwidth]{\parbox{\textwidth}{%
        \centering
        \tikznd{n21}{\small \(\lambda\)-expressions} \\[4pt]
        \textit{augmented with}                      \\[4pt]
        \tikznd{n22}{\small Graph Templates}\\
        (graphs with variables)
      }}%
  \end{minipage}
  \begin{tikzpicture}[remember picture,overlay]
    \path let
    \p1 = (n11),
    \p2 = (n22),
    in
    coordinate (c11) at (-0.2415\textwidth, \y1)
    coordinate (c22) at (-0.2415\textwidth, \y2);

    \node[fill=white, text=white, inner sep=0]
    (N1) at (c11) {\tiny $\langle$ \quad includes \quad $\rangle$\kern-20pt};
    \node[fill=white, text=white, inner sep=0]
    (N2) at (c22) {\tiny $\langle$ \quad has \quad $\rangle$\kern-5pt};

    \draw[<-] (n11) to (N1.east);
    \draw[] (n12) to (N2.east);

    \draw[->] (N2.east) to[out=0,in=180] (n21.west);
    \draw[] (N1.east) to[out=0,in=180] (n22.west);

    \node[fill=white, below=0.1mm of c11, inner xsep=0, inner ysep=0.4mm]
    {\tiny $\langle$includes$\rangle$};
    \node[fill=white, below=0.1mm of c22, inner xsep=0, inner ysep=0.4mm]
    {\tiny $\langle$has$\rangle$};
  \end{tikzpicture}
  \tikzexternalenable
  \caption{Conceptual Overview of the $\lambda_{GT}$ Syntax.}
  \label{table:lgt-syntax-concept}
\end{figure}

We first introduce the following syntactic categories:

\begin{itemize}
  \item \textbf{Link Names} \(X, Y, \dots\)
        represent individual edges in the hypergraph.
  \item \textbf{Constructor Name} \(C\)
        represents
        node labels of hypergraph structures.
  \item \textbf{Graph Variable Names}%
        \footnote{Called \textit{graph contexts} in the original paper
          \cite{sano2023}, following the terminology of HyperLMNtal,
          but renamed here to avoid confusion with other uses of
          `contexts' in programming language theory.}
        \(x, y, \dots\)
        denote placeholders for subgraphs.
\end{itemize}

\begin{mydef}[Syntax of $\lambda_{GT}$]\label{syntax}
  The syntax of \emph{values} and \emph{expressions} in
  \(\lambda_{GT}\) is defined by \figref{table:lgt-syntax}.
\end{mydef}

\begin{figure}[tb]
  \centering
  \begin{tabular}{l@{\:}r@{\:}c@{\:}l}
    \hline
    \\[-0.8em]
    Value      & \(G\)
               & \(::=\)
               &
    \(\mathbf{0}
    \mid p(\Xs)
    \mid X \bowtie Y
    \mid (G, G)
    \mid {\nu X.G}
    \)
    \\[1mm]
    Expression & \(e\)
               & \(::=\)
               &
    \(T
    \mid (\caseof{e}{T}{e}{e})
    \mid (e\; e)\)
    \\[1mm]
    Template   & \(T\)
               & \(::=\)
               &
    \(\mathbf{0}
    \mid p(\Xs)
    \mid X \bowtie Y
    \mid (T, T)
    \mid \nu X.T
    \mid x[\Xs]
    \)
    \\[1mm]
    Atom Name  & \(p\)
               & \(::=\)
               &
    \(C
    \mid \lambda\,x[\Xs].e\)
    \\[1.5mm]
    \hline
  \end{tabular}
  \vspace{3pt}
  \caption{Syntax of the \(\lambda_{GT}\) Language.}
  \label{table:lgt-syntax}
\end{figure}

In \(\lambda_{GT}\),
values are hypergraphs,
which are written in a term-based syntax
to ensure
high affinity with programming languages~\cite{sano2021}.

\begin{itemize}
  \item
        The term \(\mathbf{0}\) denotes the empty graph.

  \item
        An \emph{atom}
        \(p(\seq{X})\)
        stands for a \emph{node} of a data structure with the label \(p\) and
        totally ordered links \(\seq{X}\).
        The individual links are regarded as attached to
        \textit{ports} ($1^\textrm{st}$, $2^\textrm{nd}$, $\dots$)
        of an atom rather than the atom itself, and such
        (hyper)graphs are often called \textit{port (hyper)graphs}.

        An atom name
        (i.e., a node label)
        is either a constructor
        (as in \(\mathrm{Nil} (X)\) and
        \(\mathrm{Cons} (Y, Z, X)\))
        or a \(\lambda\)-abstraction.

        A \(\lambda\)-abstraction atom has the form
        \((\lambda\, x [\seq{X}]. e) (\seq{Y})\).
        Here, \(x [\seq{X}]\),
        where \(\seq{X}\) is a sequence of
        \textit{pairwise distinct} links,
        denotes a \emph{graph variable},
        which corresponds to and extends
        a variable in functional languages.
        We need this extension
        (with the \([\seq{X}]\))
        because, unlike standard tree structures
        with single roots, a graph
        may have an arbitrary number of roots
        that are ``access points'' to
        the graph from outside.

        The \(\lambda\)-abstraction atom
        \((\lambda\, x [\seq{X}]. e) (\seq{Y})\)
        takes a graph with free links
        (i.e., links not hidden by $\nu X$ described soon below)
        \(\seq{X}\),
        binds it to the graph variable \(x [\seq{X}]\), and
        returns the value
        obtained by evaluating the expression \(e\)
        with the bound graph variable.
        The $\Ys$ attached to the abstraction represent the links
        of the atom whose name (node label) is the abstraction.
        Those links are used for forming graph structures containing
        functions but becomes unnecessary and disappears upon
        $\beta$-reduction (as will be defined in \figref{table:lgt-reduction}).

  \item
        A \emph{fusion},
        \(X \bowtie Y\),
        fuses the link \(X\) and the link \(Y\) into a single link.

  \item
        A \emph{molecule},
        \((G, G)\),
        stands for
        the composition or gluing of graphs.

  \item
        A \emph{hyperlink creation},
        \(\nu X.G\), hides the link name \(X\) in \(G\).
        In \(\lambda_{GT}\), all links are
        \emph{free links} as long as they are not hidden.
        The set of free links in \(T\) is denoted as \(\fn (T)\)
        defined inductively as in \figref{table:free-names}.
        \begin{figure}[t]
          \small
          \hrulefill{}\vspace*{-.3em}
          \[
            \begin{aligned}
              \fn(\zero)      & \triangleq \emptyset              \\
              \fn(p(\seq{X})) & \triangleq \{\seq{X}\}            \\
              \fn(X\bowtie Y) & \triangleq \{X,Y\}                \\
              \fn((T_1, T_2)) & \triangleq \fn(T_1) \cup \fn(T_2) \\
              \fn(\nu X.T)    & \triangleq \fn(T) \setminus \{X\} \\
              \fn(x[\seq{X}]) & \triangleq \{\seq{X}\}            \\[-.3em]
            \end{aligned}
          \]
          \hrulefill{}
          \caption{The set of free link names}\label{table:free-names}
        \end{figure}
        %
        Links that do not occur free are called \emph{local links}.
\end{itemize}

Since values are represented as graphs, and graphs may include
$\lambda$-abstractions as node labels,
the language
naturally supports higher-order programming.
Furthermore, values can be graphs containing one or more $\lambda$-abstractions.
This allows the representation of not only lists or trees of functions
(as in standard functional languages)
but also
dataflow graphs with functions assigned to each node,
as demonstrated in~\cite{sano-icgt2023}.

A \(\lambda_{GT}\)
program is called an \emph{expression}
which is either of the following:
\begin{itemize}
  \item
        \(T\) is a graph with zero or more graph variables.

  \item
        \((\caseof{e_1}{T}{e_2}{e_3})\)
        evaluates \(e_1\),
        checks whether it matches the graph template \(T\), and
        reduces to \(e_2\) or \(e_3\).
        We require that
        graph variables names occurring in pattern $T$
        are pairwise distinct
        and that
        the link names in each
        graph variables are pairwise distinct.
  \item
        \((e_1\; e_2)\) is an application.
\end{itemize}

\begin{mydef}[Syntactic Convention]\label{SyntacticConvention}
  For convenience, we introduce some abbreviation rules.
  \begin{enumerate}
    \item
          An atom \(p()\) (with no arguments) is written also as \(p\).
    \item
          A sequence of nested
          binders
          \( \nu X_1.\,\dots\,. \nu X_n. T \)
          can be abbreviated to
          \( \nu X_1 \dots X_n. T \)
          or \(\nu \Xs. T \).
          When $n=0$, $\nu . T$ is written also as $T$.
    \item
          The \textit{term notation} allows
          \[\nu Y.(p_1 (\Xs,Y,\Zs), p_2 (\Ws,Y)),\]
          where $Y \notin \{\Xs,\Zs,\Ws\}$,
          to be written as
          \[ p_1(\Xs, p_2(\Ws),\Zs). \]
          That is, an atom $p_2 (\Ws,Y)$ may be embedded into the
          position of the other occurrence of $Y$ as $p_2 (\Ws)$.
          Here, we allow $\Xs,\Zs,\Ws$ to contain such embedded
          atoms (rather than links) introduced by the term notation.
    \item
          As in other functional languages,
          we provide the form
          \(\letin{x[\Xs]}{e'}{e}\)
          as an abbreviation for
          \(((\lambda\,x[\Xs].e)()\;e')\),
          which can be written also (by (1)) as
          \(((\lambda\,x[\Xs].e)\;e')\). \qedhere
  \end{enumerate}
\end{mydef}

An example of the term notation will be given as soon as
Structural Congruence is introduced in \Cref{SC}.

We introduce two forms of
substitution in \(\lambda_{GT}\).
\textit{Link substitution}
\(\sigma\) replaces occurrences of free link names within a graph with
other link names.
For example, \(T\angled{\Zs/\Ys}\) denotes the substitution of all
free occurrences of \(\Ys\) in \(T\) with \(\Zs\).
\textit{Graph substitution}
\(\theta\) replaces occurrences of free graph variables within an
expression with specified graph templates (or graphs).
For example, \(e[T/x[\Xs]]\) denotes the replacement of all
occurrences of \(x[\Xs]\) in the expression \(e\) with the graph template
\(T\).
We require that the link names in $\Xs$ are pairwise distinct and
also that $\fn(T) = \{\Xs\}$, that is, the substitution does not
change the set of free links.
Their formal definitions, which avoid capture
(of local links or local graph variables) in a standard manner,
can be found in Appendix \ref{app:substitutions}.

To establish a formal foundation for pattern matching,
we introduce the equivalence relation \(\equiv\) on graphs
and graph templates.

\begin{mydef}[Structural Congruence]\label{SC}
  The equivalence relation \(\equiv\) on graphs
  and graph templates is defined as the minimal equivalence relation
  that satisfies the rules given in \figref{table:lgt-cong}.
\end{mydef}

\begin{figure}[tb]
  \centering
  \begin{tabular}{c}
    \hline
    \\[\dimexpr-1.0em+1mm\relax]
    (E1) \((\mathbf{0}, T) \equiv T\)
    \quad
    (E2)   \((T_1, T_2) \equiv (T_2, T_1)\)
    \\[2.0mm]
    (E3) \((T_1, (T_2, T_3)) \equiv ((T_1, T_2), T_3)\)
    \\[2.0mm]
    (E4) \(\dfrac{T_1 \equiv T_2}{(T_1, T_3) \equiv (T_2, T_3)}\)
    \qquad (E5) \(\dfrac{T_1 \equiv T_2}{\nu X.T_1 \equiv \nu X.T_2}\)
    \\[4.0mm]
    (E6) \(\nu X.(X \bowtie Y, T)  \equiv \nu X.T\langle Y / X \rangle\)
    \\
    where \(X \in \fn(T) \lor Y \in \fn(T)\)
    \\[2.0mm]
    (E7) \(\nu X.\nu Y.X \bowtie Y \equiv \mathbf{0}\)
    \qquad
    (E8) \(\nu X.\mathbf{0} \equiv \mathbf{0}\)
    \\[2.0mm]
    (E9) \(\nu X.\nu Y.T \equiv \nu Y.\nu X.T\)
    \\[2.0mm]
    (E10) \(\nu X.(T_1, T_2) \equiv (\nu X.T_1, T_2)\)
    where \(X \notin \fn(T_2)\)
    \\[2.0mm]
    \hline
  \end{tabular}
  \vspace{3pt}
  \caption{Congruence Rules Over \(\lambda_{GT}\) Graph Templates.}
  \label{table:lgt-cong}
\end{figure}

Two graphs related by \(\equiv\) are essentially the same and are convertible
to each other in zero steps.
(E1), (E2) and (E3) are the characterisation of molecules as multisets.
(E4) and (E5) are structural rules that make \(\equiv\) a congruence.
(E6) and (E7) are concerned with fusions.
(E7) says that a closed fusion is equivalent to \(\mathbf{0}\).
(E6) is an absorption law of \(\bowtie\),
which says that a fusion can be absorbed by connecting hyperlinks.
Because of the symmetry of \(\equiv\),
(E6) says that a graph can emit a fusion as well.
For the symmetry of $\bowtie$,
we refer the reader to Theorem~2.1 in~\cite{sano2023}.
(E8), (E9) and (E10) are concerned with hyperlink creation.

\begin{example}[Term Notation]
  Consider
  \[\begin{array}{l@{\qquad}l}
      \nu Z_1.\nu Z_2.\nu Z_3.\nu Z_4.( &       \\
      \quad \Node(Z_1, Z_2, X),         & (a_1) \\
      \quad \quad \Node(Z_3, Z_4, Z_1), & (a_2) \\
      \quad \qquad \Leaf(Z_3),          & (a_3) \\
      \quad \qquad \Leaf(Z_4),          & (a_4) \\
      \quad \quad \Leaf(Z_2),           & (a_5) \\
      ).
    \end{array}\]
  Since structural congruence allows the reordering of atoms
  (E2)(E3) and the reordering and movement of binders (E9)(E10),
  the term notation (\Cref{SyntacticConvention}~(3))
  allows us to write the above graph as
  \[
    \Node(\Node(\Leaf, \Leaf), \Leaf, X)
  \]
  by embedding $a_3$ and $a_4$ into $a_2$ and then embedding $a_2$
  and $a_5$ into $a_1$.
\end{example}


\begin{mydef}[Small Step Semantics of $\lambda_{GT}$]\label{SmallStepSemantics}
  The $\lambda_{GT}$ language is given
  a small-step, contextual semantics using the reduction relation
  defined in \figref{table:lgt-reduction},
  where the $\thetas$ here stands for a sequence of graph substitutions
  $[T_1/x_1[\overrightarrow{X_1}]]\dots
    [T_n/x_n[\overrightarrow{X_n}]]$ such that each
  $T_i$ is a template not containing graph variables (i.e., a \emph{value}),
  making the substitutions order-independent.
\end{mydef}

\begin{figure}[tb]
  \centering
  \hrulefill{}\\
  \raggedright Evaluation context $E$:
  \[
    E ::= [\,] \mid (\caseof{E}{T}{e}{e}) \mid (E\; e) \mid (G\; E)
  \]
  \medskip
  \raggedright Reduction relation $\reduces$:
  \vspace{-6pt}
  \begin{prooftree}
    \AXC{\(G\equiv T\thetas\)}
    \RightLabel{\RdCaseMatch{}}
    \UIC{\(
      E[\caseof{G}{T}{e_1}{e_2}]
      \reduces
      E[e_1\thetas]
      \)}
  \end{prooftree}

  \begin{prooftree}
    \AXC{\(\nexists \thetas.(G \equiv T\thetas)\)}
    \RightLabel{\RdCaseOther{}}
    \UIC{\(
      E[\caseof{G}{T}{e_1}{e_2}]
      \reduces
      E[e_2]
      \)}
  \end{prooftree}

  \begin{prooftree}
    \AXC{$G_1 \equiv (\lambda\, x [\Xs].e) (\Ys)$}
    \AXC{\(\fn(G_2) = \{\Xs\}\)}
    \RightLabel{\RdBeta{}}
    \BIC{\(
      E[G_1\ G_2]
      \reduces
      E[e [G_2 / x[\Xs]]]
      \)}
  \end{prooftree}

  \hrulefill{}%
  \caption{Reduction Relation of \(\lambda_{GT}\).}
  \label{table:lgt-reduction}
\end{figure}

\RdCaseMatch{} and \RdCaseOther{} are the reduction rules for graph pattern
matching.
If matching succeeds (\RdCaseMatch{}), we apply to $\eDLpop$ the graph
substitution obtained by the matching.

Note that the substitution $\theta$ in \RdCaseMatch{} is not necessarily unique.
Issues such as the decidability of determining whether \RdCaseMatch{} or
\RdCaseOther{} applies, as well as the design of a practical language
implementation, are beyond the scope of this paper.
Nevertheless, we conjecture that this decision problem can be reduced to
the subgraph isomorphism problem, which is known to be decidable.
A proof that graphs in $\lambda_{GT}$, their equivalence, and the corresponding
notion of matching correspond to graphs in graph theory, graph isomorphism,
and subgraph isomorphism, respectively, is left for future work.

\RdBeta{} applies a function to a value.
The definition is standard except that
(i) we need to check the correspondence of free links and
(ii) we use graph substitution instead of the standard substitution in
the \(\lambda\)-calculus.
Note that, while the free links of $G$ must be exactly $\Xs$,
the graph variable $x[\Xs]$ representing $G$
may be used in $e$ in arbitrarily many times, each
with its free links renamed as $x[\overrightarrow{X'}]$,
$x[\overrightarrow{X''}]$ etc.

It is worth noting that the above small-step semantics does not
guarantee the uniqueness of the evaluation result due to the
possibly nondeterministic graph pattern matching in \RdCaseMatch{}.

\subsubsection{Motivating Examples}\label{sec:MotivatingExamples1}

In the \(\lambda_{GT}\) language,
it is possible to express transformations on data structures
such as removing the last element from a doubly-linked list or applying a function to
the leaves of a leaf-linked tree in a purely declarative manner.
These transformations rely on pattern matching and avoid the need for explicit
destructive updates to the heap or pointer manipulations.

A doubly-linked list is a fundamental data structure frequently used in programming
due to its efficiency in supporting bidirectional traversal.
In \(\lambda_{GT}\), a doubly-linked list is naturally represented as a graph,
where each \(\Cons\) node maintains hyperlinks (i.e., pointers) to both its predecessor and successor nodes,
and a \(\Nil\) node terminates the list.
The formal encoding of a doubly-linked list in this language is given as follows:
Here,
the first argument of each \(\Cons\) atom is connected to its value;
for example, \(1, 2\), and \(3\).
The second and the third argument point to the previous and next elements,
respectively.
The last argument is the access point to the current node,
which also applies to the \(\Nil\) node.
The first argument of the \(\Nil\) node points to the previous node.
\begin{align}\label{eq:dbllist-eg}
  \GDL \triangleq \begin{array}{lll}
                    \nu W_1 W_2 W_3 W_4 W_5.( \\
                    \quad \Cons(W_1,X,W_2,Z),
                    1(W_1),                   \\
                    \quad \Cons(W_3,Z,W_4,W_2),
                    2(W_3),                   \\
                    \quad \Cons(W_5,W_2,Y,W_4),
                    3(W_5),                   \\
                    \quad \Nil(W_4, Y)        \\
                    )
                  \end{array}
\end{align}

The structure of the doubly-linked list $\GDL$ 
can be
visualised as follows:
\begin{align}
  \begin{tikzpicture}[scale=0.9]
  \tikzset{
    every node/.style={font=\small,thin},
  }
  \begin{scope}[shift={(0,0)}]
    \node [atom, inner sep=1pt] (C1)  at (0, 0)    {$\mathrm{C}$};
    \coordinate (p1A) at ($(C1)+(0,0.4)$) {};
    \coordinate [] (p1R) at ($(p1A)+( 0.5,0.4)$);
    \coordinate [] (p1L) at ($(p1A)+(-0.5,0.4)$);
    \node [atom, inner sep=1pt] (N1)  at ($(C1)+(0,-1)$) {1};
    \draw [] (C1) to (N1);
    \node [font=\tiny] at (0.100, 0.3)   {4};
    \node [font=\tiny] at (-0.3, -0.100) {2};
    \node [font=\tiny] at (-0.100, -0.3) {1};
    \node [font=\tiny] at (0.3, -0.100)  {3};
    \node [font=\tiny] at (-0.100, -0.7) {1};
    \node [text=gray,font=\scriptsize] at (0.4, -0.5) {$(W_1)$};
  \end{scope}
  \begin{scope}[shift={(2,0)}]
    \node [atom, inner sep=1pt] (C2)  at (0, 0)    {$\mathrm{C}$};
    \coordinate (p2A) at ($(C2)+(0,0.4)$) {};
    \coordinate [] (p2R) at ($(p2A)+( 0.5,0.4)$);
    \coordinate [] (p2L) at ($(p2A)+(-0.5,0.4)$);
    \node [atom, inner sep=1pt] (N2)  at ($(C2)+(0,-1)$) {2};
    \draw [] (C2) to (N2);
    \node [font=\tiny] at (0.100, 0.3)   {4};
    \node [font=\tiny] at (-0.3, -0.100) {2};
    \node [font=\tiny] at (-0.100, -0.3) {1};
    \node [font=\tiny] at (0.3, -0.100)  {3};
    \node [font=\tiny] at (-0.100, -0.7) {1};
    \node [text=gray,font=\scriptsize] at (0, 1) {$(W_2)$};
    \node [text=gray,font=\scriptsize] at (0.4, -0.5) {$(W_3)$};
  \end{scope}
  \begin{scope}[shift={(4,0)}]
    \node [atom, inner sep=1pt] (C3)  at (0, 0)    {$\mathrm{C}$};
    \coordinate (p3A) at ($(C3)+(0,0.4)$) {};
    \coordinate [] (p3R) at ($(p3A)+( 0.5,0.4)$);
    \coordinate [] (p3L) at ($(p3A)+(-0.5,0.4)$);
    \node [atom, inner sep=1pt] (N3)  at ($(C3)+(0,-1)$) {3};
    \draw [] (C3) to (N3);
    \node [font=\tiny] at (0.100, 0.3)   {4};
    \node [font=\tiny] at (-0.3, -0.100) {2};
    \node [font=\tiny] at (-0.100, -0.3) {1};
    \node [font=\tiny] at (0.3, -0.100)  {3};
    \node [font=\tiny] at (-0.100, -0.7) {1};
    \node [text=gray,font=\scriptsize] at (0, 1) {$(W_4)$};
    \node [text=gray,font=\scriptsize] at (0.4, -0.5) {$(W_5)$};
  \end{scope}
  \begin{scope}[shift={(6,0)}]
    \node [atom, inner sep=1pt] (C4)  at (0, 0)    {$\mathrm{N}$};
    \coordinate (p4A) at ($(C4)+(0,0.4)$) {};
    \coordinate [] (p4R) at ($(p4A)+( 0.5,0.4)$);
    \coordinate [] (p4L) at ($(p4A)+(-0.5,0.4)$);
    \node [font=\tiny] at (0.100, 0.3)   {2};
    \node [font=\tiny] at (-0.3, -0.100) {1};
  \end{scope}
  %
  \node [] (LZ) at (-1, 0.8)  {$Z$};
  \node [] (LX) at (-1, 0)  {$X$};
  \node [] (LY) at (7, 0.8)   {$Y$};
  %
  \draw [in=180, out=0, looseness=1.50] (LZ) to (p1L);
  \draw [in=90, out=0, rounded corners] (p1L) to (p1A.north);
  \draw (p1A) to (C1);
  %
  \draw []  (C1) to  (LX);
  %
  \draw [in=180, out=0, looseness=1.50] (C1) to (p2L);
  \draw [in=90, out=0, rounded corners] (p2L) to (p2A.north);
  \draw (p2A) to (C2);
  \node [circle, fill=white] (p3) at (1, 0.65) {};
  \draw [in=0, out=180, looseness=1.50] (C2) to (p1R);
  \draw [in=90, out=180, rounded corners]  (p1R) to (p1A);
  %
  \draw [in=180, out=0, looseness=1.50] (C2) to (p3L);
  \draw [in=90, out=0, rounded corners] (p3L) to (p3A.north);
  \draw (p3A) to (C3);
  \node [circle, fill=white] (p3) at (3, 0.65) {};
  \draw [in=0, out=180, looseness=1.50] (C3) to (p2R);
  \draw [in=90, out=180, rounded corners]  (p2R) to (p2A);
  %
  \draw [in=180, out=0, looseness=1.50] (C3) to (p4L);
  \draw [in=90, out=0, rounded corners] (p4L) to (p4A.north);
  \draw (p4A) to (C4);
  \node [circle, fill=white] (p4) at (5, 0.65) {};
  \draw [in=0, out=180, looseness=1.50] (C4) to (p3R);
  \draw [in=90, out=180, rounded corners]  (p3R) to (p3A);
  %
  \draw [in=0, out=180, looseness=1.50] (LY) to (p4R);
  \draw [in=90, out=180, rounded corners]  (p4R) to (p4A);
\end{tikzpicture}
\end{align}

In this figure, \(\mathrm{C}\) denotes \(\mathrm{Cons}\),
and \(\mathrm{N}\) represents \(\mathrm{Nil}\).
The numbers around each atom indicate the atom's ports to which links
are attached.
Arguments numbers are attached to the arguments of an atom.
Although the link \(X\) does not play a significant role in this example,
it is essential for defining a well-structured inductive representation
and for enabling a
systematic implementation of programs that manipulate this
structure.

The program for removing the last element from a doubly-linked list
can be expressed straightforwardly using pattern matching in \(\lambda_{GT}\).

The function can be visually represented as
\begin{align}\label{eq:dbllist-removal-eg-vis}
  \tikzset{
    every node/.style={font=\small,thin},
  }
  \eDLpop \triangleq
  \begin{array}{l@{}}
    (\lambda\,\ctxx{x}.                            \\[1.5mm]
    \: \mathbf{case}\; \ctxx{x}\; \mathbf{of}\; \\[1.5mm]
    \quad
    \tikz[baseline=-0.5ex, scale=0.65]{%
      \begin{scope}[shift={(0,0)}]
        \node [ctxn] (C1)  at (0, 0)    {$y$};
        \coordinate (p1AL) at ($(C1)+(-0.1,0.4)$) {};
        \coordinate (p1AR) at ($(C1)+(+0.1,0.4)$) {};
        \coordinate [] (p1R) at ($(p1AR)+( 0.5,0.4)$);
        \coordinate [] (p1L) at ($(p1AL)+(-0.5,0.4)$);
        \node [font=\tiny] at (0.500, -0.150)  {1};
        \node [font=\tiny] at (+0.250, 0.550)  {2};
        \node [font=\tiny] at (-0.550, -0.150) {3};
        \node [font=\tiny] at (-0.300, 0.550)  {4};
        \node [text=gray,font=\tiny] at (0.5, 1) {$(W_2)$};
      \end{scope}
      \begin{scope}[shift={(1.5,0)}]
        \node [atom, inner sep=1pt] (C2)  at (0, 0)    {$\mathrm{C}$};
        \coordinate (p2A) at ($(C2)+(0,0.4)$) {};
        \coordinate [] (p2R) at ($(p2A)+( 0.5,0.4)$);
        \coordinate [] (p2L) at ($(p2A)+(-0.5,0.4)$);
        \node [ctxn] (N2)  at ($(C2)+(0,-1.2)$) {$w$};
        \draw [] (C2) to (N2);
        \node [font=\tiny] at (0.100, 0.450)   {4};
        \node [font=\tiny] at (-0.400, -0.150) {2};
        \node [font=\tiny] at (-0.100, -0.450) {1};
        \node [font=\tiny] at (0.400, -0.150)  {3};
        \node [font=\tiny] at (-0.100, -0.650) {1};
        \node [text=gray,font=\tiny] at (0, 1) {$(W_1)$};
        \node [text=gray,font=\tiny] at (0.500, -0.5) {$(W_3)$};
      \end{scope}
      \begin{scope}[shift={(3,0)}]
        \node [atom, inner sep=1pt] (C3)  at (0, 0) {$\mathrm{N}$};
        \coordinate (p3A) at ($(C3)+(0,0.4)$) {};
        \coordinate [] (p3R) at ($(p3A)+( 0.5,0.4)$);
        \coordinate [] (p3L) at ($(p3A)+(-0.5,0.4)$);
        \node [font=\tiny] at (0.100, 0.450)   {2};
        \node [font=\tiny] at (-0.400, -0.150) {1};
      \end{scope}
      %
      \node [font=\tiny] (LZ) at (-1, 0.8)  {$Z$};
      \node [font=\tiny] (LX) at (-1, 0)  {$X$};
      \node [font=\tiny] (LY) at (4, 0.8)   {$Y$};
      \draw [in=180, out=0, looseness=1.50] (LZ) to (p1L);
      \draw [in=90, out=0, rounded corners] (p1L) to (p1AL);
      \draw (p1AL) to (C1);
      \draw []  (C1) to  (LX);
      \draw [in=180, out=0, looseness=1.50] (C1) to (p2L);
      \draw [in=90, out=0, rounded corners] (p2L) to (p2A);
      \draw (p2A) to (C2);
      \node[circle,minimum size=1pt,inner sep=1pt,fill=white] (p3) at (0.80, 0.7) {};
      \draw [in=0, out=180, looseness=1.50] (C2) to (p1R);
      \draw [in=90, out=180, rounded corners]  (p1R) to (p1AR);
      \draw [in=180, out=0, looseness=1.50] (C2) to (p3L);
      \draw [in=90, out=0, rounded corners] (p3L) to (p3A);
      \draw (p3A) to (C3);
      \node[circle,minimum size=1pt,inner sep=1pt,fill=white] (p3) at (2.25, 0.7) {};
      \draw [in=0, out=180, looseness=1.50] (C3) to (p2R);
      \draw [in=90, out=180, rounded corners]  (p2R) to (p2A);
      \draw [in=0, out=180, looseness=1.50] (LY) to (p3R);
      \draw [in=90, out=180, rounded corners]  (p3R) to (p3A);
    }
    \mathrel{\!\!\!\!\to}
    \tikz[baseline=-0.5ex, scale=0.65]{%
      \begin{scope}[shift={(0,0)}]
        \node [ctxn] (C1)  at (0, 0)    {$y$};
        \coordinate (p1AL) at ($(C1)+(-0.1,0.4)$) {};
        \coordinate (p1AR) at ($(C1)+(+0.1,0.4)$) {};
        \coordinate [] (p1R) at ($(p1AR)+( 0.5,0.4)$);
        \coordinate [] (p1L) at ($(p1AL)+(-0.5,0.4)$);
        \node [font=\tiny] at (0.500, -0.150)  {1};
        \node [font=\tiny] at (+0.250, 0.550)  {2};
        \node [font=\tiny] at (-0.550, -0.150) {3};
        \node [font=\tiny] at (-0.300, 0.550)  {4};
        \node [text=gray,font=\tiny] at (0.5, 1) {$(W_2)$};
      \end{scope}
      \begin{scope}[shift={(1.5,0)}]
        \node [atom, inner sep=1pt] (C2)  at (0, 0) {$\mathrm{N}$};
        \coordinate (p2A) at ($(C2)+(0,0.4)$) {};
        \coordinate [] (p2R) at ($(p2A)+( 0.5,0.4)$);
        \coordinate [] (p2L) at ($(p2A)+(-0.5,0.4)$);
        \node [font=\tiny] at (0.100, 0.400)   {2};
        \node [font=\tiny] at (-0.400, -0.100) {1};
      \end{scope}
      %
      \node [font=\tiny] (LZ) at (-1, 0.8)  {$Z$};
      \node [font=\tiny] (LX) at (-1, 0)    {$X$};
      \node [font=\tiny] (LY) at (2.5, 0.8) {$Y$};
      \draw [in=180, out=0, looseness=1.50] (LZ) to (p1L);
      \draw [in=90, out=0, rounded corners] (p1L) to (p1AL);
      \draw (p1AL) to (C1);
      \draw []  (C1) to  (LX);
      \draw [in=180, out=0, looseness=1.50] (C1) to (p2L);
      \draw [in=90, out=0, rounded corners] (p2L) to (p2A);
      \draw (p2A) to (C2);
      \node[circle,minimum size=1pt,inner sep=1pt,fill=white] (p3) at (0.80, 0.7) {};
      \draw [in=0, out=180, looseness=1.50] (C2) to (p1R);
      \draw [in=90, out=180, rounded corners]  (p1R) to (p1AR);
      \draw [in=0, out=180, looseness=1.50] (LY) to (p2R);
      \draw [in=90, out=180, rounded corners]  (p2R) to (p2A);
    }
    \\[12mm]
    \: \texttt{|}\; \ \mathbf{otherwise} \qquad
    \to \ctxx{x} \\
    )(W),\\[-12pt]
  \end{array}%
\end{align}

\noindent
(when it is given as a unary atom with the link $W$).
Alternatively, the function can be expressed in our formal notation
as
\begin{align}\label{eq:dbllist-removal-eg}
  \eDLpop \triangleq
  \begin{array}{l}
    (\lambda\,x[X,Y,Z].                                 \\
    \quad \mathbf{case}\; x[X,Y,Z]\; \mathbf{of}\;      \\
    \qquad \nu W_1 W_2 W_3. (                           \\
    \qquad\quad y[W_1,W_2,X,Z],                         \\
    \qquad\quad \Cons(W_3,W_2,Y,W_1),                   \\
    \qquad\quad z[W_3],                                 \\
    \qquad\quad \Nil(W_1,Y)                             \\
    \qquad) \to \nu W.(y[Y,W,X,Z], \Nil(W,Y))           \\
    \quad\: \texttt{|}\;\mathbf{otherwise} \to x[X,Y,Z] \\
    )(W).
  \end{array}
\end{align}

As a function, only the node label ($\lambda$-abstraction) of the
atom matters, and the role of the $W$ is to accommodate the function
inside a linked structure.  If a function need not be put in a linked
structure, it can be represented as the label of a nullary atom, in
which case the $(W)$ above is omitted
(as in $\eLLT$ defined later in \eqref{eq:lltree-traverse-eg-vis}).

To illustrate the evaluation of this function, consider the doubly-linked list
$\GDL$ defined in \eqref{eq:dbllist-eg}.
Applying \(\eDLpop\) to $\GDL$ results in the following
two-step reduction (\RdBeta{} followed by \RdCaseMatch{}):
\begin{align*}
   &
  \Biggl(
  \eDLpop \qquad
  \tikz[baseline=-0.5ex, scale=0.7]{%
    \tikzset{
      every node/.style={font=\small,thin},
    }
    \begin{scope}[shift={(0,0)}]
      \node [atom, inner sep=1pt] (C1)  at (0, 0)    {$\mathrm{C}$};
      \coordinate (p1A) at ($(C1)+(0,0.4)$) {};
      \coordinate [] (p1R) at ($(p1A)+( 0.5,0.4)$);
      \coordinate [] (p1L) at ($(p1A)+(-0.5,0.4)$);
      \node [atom, inner sep=1pt] (N1)  at ($(C1)+(0,-1)$) {1};
      \draw [] (C1) to (N1);
      \node [font=\tiny] at (0.100, 0.400)   {4};
      \node [font=\tiny] at (-0.400, -0.100) {2};
      \node [font=\tiny] at (-0.100, -0.400) {1};
      \node [font=\tiny] at (0.400, -0.100)  {3};
      \node [font=\tiny] at (-0.100, -0.7) {1};
      \node [text=gray,font=\tiny] at (0.500, -0.5) {$(W_1)$};
    \end{scope}
    \begin{scope}[shift={(2,0)}]
      \node [atom, inner sep=1pt] (C2)  at (0, 0)    {$\mathrm{C}$};
      \coordinate (p2A) at ($(C2)+(0,0.4)$) {};
      \coordinate [] (p2R) at ($(p2A)+( 0.5,0.4)$);
      \coordinate [] (p2L) at ($(p2A)+(-0.5,0.4)$);
      \node [atom, inner sep=1pt] (N2)  at ($(C2)+(0,-1)$) {2};
      \draw [] (C2) to (N2);
      \node [font=\tiny] at (0.100, 0.400)   {4};
      \node [font=\tiny] at (-0.400, -0.100) {2};
      \node [font=\tiny] at (-0.100, -0.400) {1};
      \node [font=\tiny] at (0.400, -0.100)  {3};
      \node [font=\tiny] at (-0.100, -0.7) {1};
      \node [text=gray,font=\tiny] at (0, 1) {$(W_2)$};
      \node [text=gray,font=\tiny] at (0.500, -0.5) {$(W_3)$};
    \end{scope}
    \begin{scope}[shift={(4,0)}]
      \node [atom, inner sep=1pt] (C3)  at (0, 0)    {$\mathrm{C}$};
      \coordinate (p3A) at ($(C3)+(0,0.4)$) {};
      \coordinate [] (p3R) at ($(p3A)+( 0.5,0.4)$);
      \coordinate [] (p3L) at ($(p3A)+(-0.5,0.4)$);
      \node [atom, inner sep=1pt] (N3)  at ($(C3)+(0,-1)$) {3};
      \draw [] (C3) to (N3);
      \node [font=\tiny] at (0.100, 0.400)   {4};
      \node [font=\tiny] at (-0.400, -0.100) {2};
      \node [font=\tiny] at (-0.100, -0.400) {1};
      \node [font=\tiny] at (0.400, -0.100)  {3};
      \node [font=\tiny] at (-0.100, -0.7) {1};
      \node [text=gray,font=\tiny] at (0, 1) {$(W_4)$};
      \node [text=gray,font=\tiny] at (0.500, -0.5) {$(W_5)$};
    \end{scope}
    \begin{scope}[shift={(6,0)}]
      \node [atom, inner sep=1pt] (C4)  at (0, 0)    {$\mathrm{N}$};
      \coordinate (p4A) at ($(C4)+(0,0.4)$) {};
      \coordinate [] (p4R) at ($(p4A)+( 0.5,0.4)$);
      \coordinate [] (p4L) at ($(p4A)+(-0.5,0.4)$);
      \node [font=\tiny] at (0.100, 0.400)   {2};
      \node [font=\tiny] at (-0.400, -0.100) {1};
    \end{scope}
    %
    \node [] (LZ) at (-1, 0.8)  {$Z$};
    \node [] (LX) at (-1, 0)  {$X$};
    \node [] (LY) at (7, 0.8)   {$Y$};
    %
    \draw [in=180, out=0, looseness=1.50] (LZ) to (p1L);
    \draw [in=90, out=0, rounded corners] (p1L) to (p1A);
    \draw (p1A) to (C1);
    %
    \draw []  (C1) to  (LX);
    %
    \draw [in=180, out=0, looseness=1.50] (C1) to (p2L);
    \draw [in=90, out=0, rounded corners] (p2L) to (p2A);
    \draw (p2A) to (C2);
    \node [circle, fill=white] (p3) at (1, 0.65) {};
    \draw [in=0, out=180, looseness=1.50] (C2) to (p1R);
    \draw [in=90, out=180, rounded corners]  (p1R) to (p1A);
    %
    \draw [in=180, out=0, looseness=1.50] (C2) to (p3L);
    \draw [in=90, out=0, rounded corners] (p3L) to (p3A);
    \draw (p3A) to (C3);
    \node [circle, fill=white] (p3) at (3, 0.65) {};
    \draw [in=0, out=180, looseness=1.50] (C3) to (p2R);
    \draw [in=90, out=180, rounded corners]  (p2R) to (p2A);
    %
    \draw [in=180, out=0, looseness=1.50] (C3) to (p4L);
    \draw [in=90, out=0, rounded corners] (p4L) to (p4A);
    \draw (p4A) to (C4);
    \node [circle, fill=white] (p4) at (5, 0.65) {};
    \draw [in=0, out=180, looseness=1.50] (C4) to (p3R);
    \draw [in=90, out=180, rounded corners]  (p3R) to (p3A);
    %
    \draw [in=0, out=180, looseness=1.50] (LY) to (p4R);
    \draw [in=90, out=180, rounded corners]  (p4R) to (p4A);
  }
  \Biggr)
  \notag
  \\
   & \reducestwo
  \quad
  \tikz[baseline=-0.5ex, scale=0.7]{%
    \tikzset{
      every node/.style={font=\small,thin},
    }
    \begin{scope}[shift={(0,0)}]
      \node [atom, inner sep=1pt] (C1)  at (0, 0)    {$\mathrm{C}$};
      \coordinate (p1A) at ($(C1)+(0,0.4)$) {};
      \coordinate [] (p1R) at ($(p1A)+( 0.5,0.4)$);
      \coordinate [] (p1L) at ($(p1A)+(-0.5,0.4)$);
      \node [atom, inner sep=1pt] (N1)  at ($(C1)+(0,-1)$) {1};
      \draw [] (C1) to (N1);
      \node [font=\tiny] at (0.100, 0.400)   {4};
      \node [font=\tiny] at (-0.400, -0.100) {2};
      \node [font=\tiny] at (-0.100, -0.400) {1};
      \node [font=\tiny] at (0.400, -0.100)  {3};
      \node [font=\tiny] at (-0.100, -0.7) {1};
      \node [text=gray,font=\tiny] at (0.500, -0.5) {$(W_1)$};
    \end{scope}
    \begin{scope}[shift={(2,0)}]
      \node [atom, inner sep=1pt] (C2)  at (0, 0)    {$\mathrm{C}$};
      \coordinate (p2A) at ($(C2)+(0,0.4)$) {};
      \coordinate [] (p2R) at ($(p2A)+( 0.5,0.4)$);
      \coordinate [] (p2L) at ($(p2A)+(-0.5,0.4)$);
      \node [atom, inner sep=1pt] (N2)  at ($(C2)+(0,-1)$) {2};
      \draw [] (C2) to (N2);
      \node [font=\tiny] at (0.100, 0.400)   {4};
      \node [font=\tiny] at (-0.400, -0.100) {2};
      \node [font=\tiny] at (-0.100, -0.400) {1};
      \node [font=\tiny] at (0.400, -0.100)  {3};
      \node [font=\tiny] at (-0.100, -0.7) {1};
      \node [text=gray,font=\tiny] at (0, 1) {($W_2)$};
      \node [text=gray,font=\tiny] at (0.500, -0.5) {$(W_3)$};
    \end{scope}
    \begin{scope}[shift={(4,0)}]
      \node [atom, inner sep=1pt] (C4)  at (0, 0)    {$\mathrm{N}$};
      \coordinate (p4A) at ($(C4)+(0,0.4)$) {};
      \coordinate [] (p4R) at ($(p4A)+( 0.5,0.4)$);
      \coordinate [] (p4L) at ($(p4A)+(-0.5,0.4)$);
      \node [font=\tiny] at (0.100, 0.400)   {2};
      \node [font=\tiny] at (-0.400, -0.100) {1};
    \end{scope}
    %
    \node [] (LZ) at (-1, 0.8)  {$Z$};
    \node [] (LX) at (-1, 0)  {$X$};
    \node [] (LY) at (5, 0.8)   {$Y$};
    %
    \draw [in=180, out=0, looseness=1.50] (LZ) to (p1L);
    \draw [in=90, out=0, rounded corners] (p1L) to (p1A);
    \draw (p1A) to (C1);
    %
    \draw []  (C1) to  (LX);
    %
    \draw [in=180, out=0, looseness=1.50] (C1) to (p2L);
    \draw [in=90, out=0, rounded corners] (p2L) to (p2A);
    \draw (p2A) to (C2);
    \node [circle, fill=white] (p3) at (1, 0.65) {};
    \draw [in=0, out=180, looseness=1.50] (C2) to (p1R);
    \draw [in=90, out=180, rounded corners]  (p1R) to (p1A);
    %
    \draw [in=180, out=0, looseness=1.50] (C2) to (p4L);
    \draw [in=90, out=0, rounded corners] (p4L) to (p4A);
    \draw (p4A) to (C4);
    \node [circle, fill=white] (p4) at (3, 0.65) {};
    \draw [in=0, out=180, looseness=1.50] (C4) to (p2R);
    \draw [in=90, out=180, rounded corners]  (p2R) to (p2A);
    %
    \draw [in=0, out=180, looseness=1.50] (LY) to (p4R);
    \draw [in=90, out=180, rounded corners]  (p4R) to (p4A);
  }
  \notag
\end{align*}

In this example,
firstly, the application reduces to a case expression
using
\RdBeta{}.
Then, the case expression is evaluated with \RdCaseMatch{}.
In this evaluation step,
the graph variable\
\tikz[baseline=-0.5ex, scale=0.65]{%
  \begin{scope}[shift={(0,0)}]
    \node [ctxn] (C1)  at (0, 0)    {$y$};
    \coordinate (p1AL) at ($(C1)+(-0.1,0.4)$) {};
    \coordinate (p1AR) at ($(C1)+(+0.1,0.4)$) {};
    \coordinate [] (p1R) at ($(p1AR)+( 0.5,0.4)$);
    \coordinate [] (p1L) at ($(p1AL)+(-0.5,0.4)$);
    \node [font=\tiny] at (0.500, -0.100)  {1};
    \node [font=\tiny] at (+0.200, 0.500)  {2};
    \node [font=\tiny] at (-0.500, -0.100) {3};
    \node [font=\tiny] at (-0.200, 0.500)  {4};
  \end{scope}
  %
  \node [font=\tiny] (LZ) at (-1, 0.8)  {$Z$};
  \node [font=\tiny] (LX) at (-1, 0)    {$X$};
  \node [font=\tiny] (LW2) at (1, 0.8)  {$W_2$};
  \node [font=\tiny] (LW1) at (1, 0)    {$W_1$};
  \draw [out=0, in=90, looseness=1.50] (LZ) to (p1AL);
  \draw []  (C1) to  (LX);
  %
  \draw [out=180, in=90, looseness=1.50] (LW2) to (p1AR);
  \draw []  (C1) to  (LW1);
}
matches the first two \(\mathrm{Cons}\) nodes containing the numbers 1 and 2,
and it is directly terminated with \(\mathrm{Nil}\).

Another example is a program that applies a function to the
values of leaves in a leaf-linked tree,
a tree in which the leaves are explicitly linked,
allowing efficient traversal.
Such a tree can be illustrated as follows.
\begin{align}\label{eq:lltree-eg-vis}
   & \lltr{t} \triangleq
  \quad
  \notag
  \\[-6pt]
   &
  \quad
  \tikz[baseline=-0.5ex, scale=0.7]{%
    \begin{scope}[shift={(0, 2)}]
      \node[atom,inner sep=1pt] (node1) at (0,0) {\(\mathrm{N}\)};
      \node [font=\tiny] at (0.100, 0.395)   {3};
      \node [font=\tiny] at (-0.395, -0.100) {1};
      \node [font=\tiny] at (0.395, -0.100)  {2};
    \end{scope}
    \begin{scope}[shift={(-1.4, 1.2)}]
      \node[atom,inner sep=1pt] (node2) at (0,0) {\(\mathrm{N}\)};
      \node [font=\tiny] at (0.100, 0.395)   {3};
      \node [font=\tiny] at (-0.395, -0.100) {1};
      \node [font=\tiny] at (0.395, -0.100)  {2};
      \node [text=gray,font=\tiny] at (0.500, 0.800) {$(W_1)$};
    \end{scope}
    \begin{scope}[shift={(1.4, 1.2)}]
      \node[atom,inner sep=1pt] (node3) at (0,0) {\(\mathrm{N}\)};
      \node [font=\tiny] at (0.100, 0.395)   {3};
      \node [font=\tiny] at (-0.395, -0.100) {1};
      \node [font=\tiny] at (0.395, -0.100)  {2};
      \node [text=gray,font=\tiny] at (-0.500, 0.800) {$(W_2)$};
    \end{scope}
    \begin{scope}[shift={(-0.8, 0)}]
      \node[atom,inner sep=1pt] (node4) at (0,0) {\(\mathrm{N}\)};
      \node [font=\tiny] at (0.100, 0.395)   {3};
      \node [font=\tiny] at (-0.395, -0.100) {1};
      \node [font=\tiny] at (0.395, -0.100)  {2};
      \node [text=gray,font=\tiny] at (0.100, 0.700)   {$(W_3)$};
    \end{scope}
    %
    \begin{scope}[shift={(-2.6, 0)   }]
      \node[atom,inner sep=1pt,fill=black!60,text=white] (leaf1) at (0, 0) {\(\mathrm{L}\)};
      \node [font=\tiny] at (0.100, 0.395)   {3};
      \node [font=\tiny] at (0.395, 0.100)   {2};
      \node [font=\tiny] at (-0.100, -0.395) {1};
      \node [font=\tiny] at (-0.100, -0.7)   {1};
      \node [text=gray,font=\tiny] at (-0.510, -0.5)   {$(W_8)$};
    \end{scope}
    \begin{scope}[shift={(-1.6, -1.3)}]
      \node[atom,inner sep=1pt] (leaf2) at (0, 0) {\(\mathrm{L}\)};
      \node [font=\tiny] at (0.100, 0.395)   {3};
      \node [font=\tiny] at (0.395, 0.100)   {2};
      \node [font=\tiny] at (-0.100, -0.395) {1};
      \node [font=\tiny] at (-0.100, -0.7)   {1};
      \node [text=gray,font=\tiny] at (-0.45, 0.9)   {$(W_4)$};
      \node [text=gray,font=\tiny] at (0.410, -0.5) {$(W_9)$};
    \end{scope}
    \begin{scope}[shift={(-0.395, -1.3)}]
      \node[atom,inner sep=1pt] (leaf3) at (0, 0) {\(\mathrm{L}\)};
      \node [font=\tiny] at (0.100, 0.395)   {3};
      \node [font=\tiny] at (0.395, 0.100)   {2};
      \node [font=\tiny] at (-0.100, -0.395) {1};
      \node [font=\tiny] at (-0.100, -0.7)   {1};
      \node [text=gray,font=\tiny] at (0.40, 0.820) {$(W_5)$};
      \node [text=gray,font=\tiny] at (0.495, -0.5)  {$(W_{10})$};
    \end{scope}
    \begin{scope}[shift={(1.1, -0.7)}]
      \node[atom,inner sep=1pt] (leaf4) at (0,0) {\(\mathrm{L}\)};
      \node [font=\tiny] at (0.100, 0.395)   {3};
      \node [font=\tiny] at (0.395, 0.100)   {2};
      \node [font=\tiny] at (-0.100, -0.395) {1};
      \node [font=\tiny] at (-0.100, -0.7)   {1};
      \node [text=gray,font=\tiny] at (0.300, 1)    {$(W_6)$};
      \node [text=gray,font=\tiny] at (0.495, -0.5) {$(W_{11})$};
    \end{scope}
    \begin{scope}[shift={(2.6, 0) }]
      \node[atom,inner sep=1pt] (leaf5) at (0, 0) {\(\mathrm{L}\)};
      \node [font=\tiny] at (0.100, 0.395)   {3};
      \node [font=\tiny] at (0.395, -0.100)  {2};
      \node [font=\tiny] at (-0.100, -0.395) {1};
      \node [font=\tiny] at (-0.100, -0.7)   {1};
      \node [text=gray,font=\tiny] at (-0.100, 0.9)    {$(W_7)$};
      \node [text=gray,font=\tiny] at (0.495, -0.5) {$(W_{12})$};
    \end{scope}
    \node[atom,inner sep=1pt,below=0.5cm] (num1) at (leaf1) {\(1\)};
    \node[atom,inner sep=1pt,below=0.5cm] (num2) at (leaf2) {\(2\)};
    \node[atom,inner sep=1pt,below=0.5cm] (num3) at (leaf3) {\(3\)};
    \node[atom,inner sep=1pt,below=0.5cm] (num4) at (leaf4) {\(4\)};
    \node[atom,inner sep=1pt,below=0.5cm] (num5) at (leaf5) {\(5\)};
    \node[] (LL) at (-4, 0)   {\(L\)};
    \node[] (LX) at (0, 3)   {\(X\)};
    \node[] (LR) at (4, 0)   {\(R\)};
    \draw[]   (node1) -- (LX);
    \draw[]        (node1) -- (node2);
    \draw[]        (node1) -- (node3);
    \draw[]        (node2) -- (node4);
    \draw[looseness=1.50,in=90,out=-130] (node2) to (leaf1);
    \draw[looseness=1.50,in=90,out=-130]        (node4) to (leaf2);
    \draw[looseness=1.50,in=90,out=-50 ]        (node4) to (leaf3);
    \draw[looseness=1.50,in=90,out=-130]        (node3) to (leaf4);
    \draw[looseness=1.50,in=90,out=-50 ]        (node3) to (leaf5);
    \draw[in=90,out=30,looseness=1.50] (LL) to (leaf1.north);
    \draw[in=90,out=30,looseness=1.50] (leaf1) to (leaf2);
    \draw[in=90,out=30,looseness=1.50] (leaf2) to (leaf3);
    \draw[in=90,out=30,looseness=1.50] (leaf3) to (leaf4);
    \draw[in=90,out=30,looseness=1.50] (leaf4) to (leaf5);
    \draw[] (leaf5) to (LR);
    \draw[]      (leaf1) -- (num1);
    \draw[]      (leaf2) -- (num2);
    \draw[]      (leaf3) -- (num3);
    \draw[]      (leaf4) -- (num4);
    \draw[]      (leaf5) -- (num5);
  }
\end{align}

In this context,
\(\mathrm{N}\) represents a node,
while \(\mathrm{L}\) represents a leaf.
To facilitate traversal, we introduce a distinguished leaf,
which is visually marked as
\tikz[baseline=-0.5ex, scale=0.5]{
  \node[atom,inner sep=1pt,fill=black!60,text=white]
  (leaf1) at (0, 0)    {\(\mathrm{L}\)};
}.
This marked leaf serves as an indicator for traversal operations.

The program that applies a given function to the
value of a marked leaf of a leaf-linked tree
can be represented as follows:

\begin{align}\label{eq:lltree-traverse-eg-vis}
   & \eLLT \triangleq
    (\lambda\,\ctxu{f}.\:
    (\lambda\,\lltr{x}.\:
    \mathbf{case}\; \lltr{x}\; \mathbf{of}
  \notag
  \\
   &
  \begin{array}{@{}l@{}}
    \tikz[baseline=6ex, scale=0.7]{%
      \ctxyy{0}{0};
      \begin{scope}[shift={(-0.6, -0.5)}]
        \node[atom,inner sep=1pt,fill=black!60,text=white]
        (leaf1) at (0, 0) {\(\mathrm{L}\)};
        \node [font=\tiny] at (0.100, 0.395)   {3};
        \node [font=\tiny] at (0.395, 0.100)   {2};
        \node [font=\tiny] at (-0.100, -0.395) {1};
        \node [font=\tiny] at (-0.100, -0.7)   {1};
        \node [text=gray,font=\tiny] at (0.410, -0.5)   {$(W_1)$};
        \node[ctxn,inner sep=2pt,below=0.5cm] (num1) at (leaf1)  {\(z\)};
        \draw[]      (leaf1) -- (num1);
        \node [text=gray,font=\tiny] at (0.435, 0.85)   {$(W_2)$};
      \end{scope}
      \begin{scope}[shift={(+0.6, -0.5)}]
        \node[atom,inner sep=1pt]
        (leaf2) at (0, 0) {\(\mathrm{L}\)};
        \node [font=\tiny] at (0.100, 0.395)   {3};
        \node [font=\tiny] at (0.395, 0.100)   {2};
        \node [font=\tiny] at (-0.100, -0.395) {1};
        \node [font=\tiny] at (-0.100, -0.7)   {1};
        \node [text=gray,font=\tiny] at (0.410, -0.5)   {$(W_3)$};
        \node[ctxn,inner sep=2pt,below=0.5cm] (num2) at (leaf2)  {\(w\)};
        \draw[]      (leaf2) -- (num2);
        \node [text=gray,font=\tiny] at (-0.420, 0.50)   {$(W_4)$};
        \node [text=gray,font=\tiny] at (1.010, 0.25)   {$(W_5)$};
      \end{scope}
      \coordinate (p1) at (-1.4, 0.2);
      \coordinate (p2) at (+1.4, 0.2);
      \coordinate (p3) at (-0.6, 0.9);
      \coordinate (p4) at (+0.6, 0.9);
      \draw[looseness=1.50,out=0,in=90]    (p1) to (leaf1);
      \draw[looseness=1.50,out=-90,in=90]  (p3) to (leaf1);
      \draw[looseness=1.50,out=-90 ,in=90] (p4) to (leaf2);
      \draw[looseness=1.50,out=0,in=90] (leaf1) to (leaf2);
      \draw[looseness=1.50,out=0,in=180] (leaf2) to (p2);
    }
    \hspace{-2mm}
    \to
    \begin{array}[t]{l}
      \mathbf{let}\;\ctxu{z'}\;\texttt{=}\;{\ctxu{f} \: \ctxu{z}}
      \\[2mm]
      \mathbf{in}
      \hspace{-3mm}
      \tikz[baseline=10ex, scale=0.7]{%
        \ctxyy{0}{0};
        \begin{scope}[shift={(-0.6, -0.5)}]
          \node[atom,inner sep=1pt]
          (leaf1) at (0, 0) {\(\mathrm{L}\)};
          \node [font=\tiny] at (0.100, 0.395)   {3};
          \node [font=\tiny] at (0.395, 0.100)   {2};
          \node [font=\tiny] at (-0.100, -0.395) {1};
          \node [font=\tiny] at (-0.100, -0.7)   {1};
          \node [text=gray,font=\tiny] at (0.410, -0.5)   {$(W_1)$};
          \node[ctxn,inner sep=2pt,below=0.5cm] (num1) at (leaf1)  {\(z'\)};
          \draw[]      (leaf1) -- (num1);
          \node [text=gray,font=\tiny] at (0.435, 0.85)   {$(W_2)$};
        \end{scope}
        \begin{scope}[shift={(+0.6, -0.5)}]
          \node[atom,inner sep=1pt,fill=black!60,text=white]
          (leaf2) at (0, 0) {\(\mathrm{L}\)};
          \node [font=\tiny] at (0.100, 0.395)   {3};
          \node [font=\tiny] at (0.395, 0.100)   {2};
          \node [font=\tiny] at (-0.100, -0.395) {1};
          \node [font=\tiny] at (-0.100, -0.7)   {1};
          \node [text=gray,font=\tiny] at (0.410, -0.5)   {$(W_3)$};
          \node[ctxn,inner sep=2pt,below=0.5cm] (num2) at (leaf2)  {\(w\)};
          \draw[]      (leaf2) -- (num2);
          \node [text=gray,font=\tiny] at (-0.420, 0.50)   {$(W_4)$};
          \node [text=gray,font=\tiny] at (1.010, 0.25)   {$(W_5)$};
        \end{scope}
        \coordinate (p1) at (-1.4, 0.2);
        \coordinate (p2) at (+1.4, 0.2);
        \coordinate (p3) at (-0.6, 0.9);
        \coordinate (p4) at (+0.6, 0.9);
        \draw[looseness=1.50,out=0,in=90]    (p1) to (leaf1);
        \draw[looseness=1.50,out=-90,in=90]  (p3) to (leaf1);
        \draw[looseness=1.50,out=-90 ,in=90] (p4) to (leaf2);
        \draw[looseness=1.50,out=0,in=90] (leaf1) to (leaf2);
        \draw[looseness=1.50,out=0,in=180] (leaf2) to (p2);
      }
    \end{array}
    \\[-0mm]
    \:
    \texttt{|}\;\mathbf{otherwise}
    \to \mathbf{case}\; \lltr{x}\; \mathbf{of}\;   \\[0.5mm]
    \tikz[baseline=5ex, scale=0.7]{%
      \ctxyl{0}{0};
      \begin{scope}[shift={(0.8, 0)}]
        \node[atom,inner sep=1pt,fill=black!60,text=white]
        (leaf1) at (0, 0) {\(\mathrm{L}\)};
        \node [font=\tiny] at (0.100, 0.395)   {3};
        \node [font=\tiny] at (0.395, 0.100)   {2};
        \node [font=\tiny] at (-0.100, -0.395) {1};
        \node [font=\tiny] at (-0.100, -0.7)   {1};
        \node [text=gray,font=\tiny] at (0.50, -0.5)    {$(W_1)$};
        \node [text=gray,font=\tiny] at (0.40, 0.8)     {$(W_2)$};
        \node[ctxn,inner sep=2pt,below=0.5cm] (num1) at (leaf1)  {\(z\)};
        \draw[]      (leaf1) -- (num1);
      \end{scope}
      \coordinate (p1) at (-0.70, 0.2);
      \coordinate (p2) at (0.3, 1.35);
      \draw[looseness=1.50,out=0,in=90] (p1) to (leaf1);
      \draw[looseness=1.50,out=-45,in=90] (p2) to (leaf1);
      \draw[looseness=1.50,out=30,in=180] (leaf1) to (LR);
    }
    \hspace{-2mm}
    \to
    \begin{array}[t]{l}
      \mathbf{let}\;\ctxu{z'}\;\texttt{=}\;{\ctxu{f} \: \ctxu{z}}
      \\[2mm]
      \mathbf{in}
      \hspace{-2mm}
      \tikz[baseline=10ex, scale=0.7]{%
        \ctxyl{0}{0};
        \begin{scope}[shift={(0.8, 0)}]
          \node[atom,inner sep=1pt]
          (leaf1) at (0, 0) {\(\mathrm{L}\)};
          \node [font=\tiny] at (0.100, 0.395)   {3};
          \node [font=\tiny] at (0.395, 0.100)   {2};
          \node [font=\tiny] at (-0.100, -0.395) {1};
          \node [font=\tiny] at (-0.100, -0.7)   {1};
          \node [text=gray,font=\tiny] at (0.50, -0.5)    {$(W_1)$};
          \node [text=gray,font=\tiny] at (0.40, 0.8)     {$(W_2)$};
          \node[ctxn,inner sep=2pt,below=0.5cm] (num1) at (leaf1)  {\(z'\)};
          \draw[]      (leaf1) -- (num1);
        \end{scope}
        \coordinate (p1) at (-0.70, 0.2);
        \coordinate (p2) at (0.3, 1.35);
        \draw[looseness=1.50,out=0,in=90] (p1) to (leaf1);
        \draw[looseness=1.50,out=-45,in=90] (p2) to (leaf1);
        \draw[looseness=1.50,out=30,in=180] (leaf1) to (LR);
      }
    \end{array}
    \\[-5mm]
    \: \texttt{|}\;\mathbf{otherwise}
    \to \lltr{x}                                   \\[2mm]
))\\[-12pt]
  \end{array}
\end{align}

Applying the function \(\eLLT\) to a function
that returns a successor of a given number,
we obtain a function that produces a tree
in which the marked leaf contains the successor value,
and the marker is moved one step forward.
This is illustrated
as follows using the leaf-linked tree
$t[L,R,X]$ defined in \eqref{eq:lltree-eg-vis}.

\begin{align}\label{eq:lltree-eg-vis-2}
   &
  \Biggl(\biggl(\eLLT
    \quad
    \tikz[baseline=-0.5ex, scale=0.5]{%
      \node[atom,inner sep=1pt] (node1) at (0, 0) {\(+1\)};
      \draw[] (node1) --++(0, 1);
      \node[font=\tiny] at (0,     1.3) {$X$};
      \node[font=\tiny] at (-0.25,  0.7) {$1$};
    }\biggr)
  \;\;
  \lltr{t}
  \Biggr)
  \notag
  \\[-5mm]
   & \reducestwo
  \qquad
  \tikz[baseline=-0.5ex, scale=0.7]{%
    \begin{scope}[shift={(0, 2)}]
      \node[atom,inner sep=1pt] (node1) at (0,0) {\(\mathrm{N}\)};
      \node [font=\tiny] at (0.100, 0.395)   {3};
      \node [font=\tiny] at (-0.395, -0.100) {1};
      \node [font=\tiny] at (0.395, -0.100)  {2};
    \end{scope}
    \begin{scope}[shift={(-1.4, 1.2)}]
      \node[atom,inner sep=1pt] (node2) at (0,0) {\(\mathrm{N}\)};
      \node [font=\tiny] at (0.100, 0.395)   {3};
      \node [font=\tiny] at (-0.395, -0.100) {1};
      \node [font=\tiny] at (0.395, -0.100)  {2};
      \node [text=gray,font=\tiny] at (0.500, 0.800) {$(W_1)$};
    \end{scope}
    \begin{scope}[shift={(1.4, 1.2)}]
      \node[atom,inner sep=1pt] (node3) at (0,0) {\(\mathrm{N}\)};
      \node [font=\tiny] at (0.100, 0.395)   {3};
      \node [font=\tiny] at (-0.395, -0.100) {1};
      \node [font=\tiny] at (0.395, -0.100)  {2};
      \node [text=gray,font=\tiny] at (-0.500, 0.800) {$(W_2)$};
    \end{scope}
    \begin{scope}[shift={(-0.8, 0)}]
      \node[atom,inner sep=1pt] (node4) at (0,0) {\(\mathrm{N}\)};
      \node [font=\tiny] at (0.100, 0.395)   {3};
      \node [font=\tiny] at (-0.395, -0.100) {1};
      \node [font=\tiny] at (0.395, -0.100)  {2};
      \node [text=gray,font=\tiny] at (0.100, 0.700)   {$(W_3)$};
    \end{scope}
    %
    \begin{scope}[shift={(-2.6, 0)   }]
      \node[atom,inner sep=1pt] (leaf1) at (0, 0) {\(\mathrm{L}\)};
      \node [font=\tiny] at (0.100, 0.395)   {3};
      \node [font=\tiny] at (0.395, 0.100)   {2};
      \node [font=\tiny] at (-0.100, -0.395) {1};
      \node [font=\tiny] at (-0.100, -0.7)   {1};
      \node [text=gray,font=\tiny] at (-0.510, -0.5)   {$(W_8)$};
      \node[atom,inner sep=1pt,below=0.5cm] (num1) at (leaf1) {\(2\)};
    \end{scope}
    \begin{scope}[shift={(-1.6, -1.3)}]
      \node[atom,inner sep=1pt,fill=black!60,text=white] (leaf2) at (0, 0) {\(\mathrm{L}\)};
      \node [font=\tiny] at (0.100, 0.395)   {3};
      \node [font=\tiny] at (0.395, 0.100)   {2};
      \node [font=\tiny] at (-0.100, -0.395) {1};
      \node [font=\tiny] at (-0.100, -0.7)   {1};
      \node [text=gray,font=\tiny] at (-0.45, 0.9)   {$(W_4)$};
      \node [text=gray,font=\tiny] at (0.410, -0.5) {$(W_9)$};
      \node[atom,inner sep=1pt,below=0.5cm] (num2) at (leaf2) {\(2\)};
    \end{scope}
    \begin{scope}[shift={(-0.395, -1.3)}]
      \node[atom,inner sep=1pt] (leaf3) at (0, 0) {\(\mathrm{L}\)};
      \node [font=\tiny] at (0.100, 0.395)   {3};
      \node [font=\tiny] at (0.395, 0.100)   {2};
      \node [font=\tiny] at (-0.100, -0.395) {1};
      \node [font=\tiny] at (-0.100, -0.7)   {1};
      \node [text=gray,font=\tiny] at (0.40, 0.820) {$(W_5)$};
      \node [text=gray,font=\tiny] at (0.495, -0.5)  {$(W_{10})$};
      \node[atom,inner sep=1pt,below=0.5cm] (num3) at (leaf3) {\(3\)};
    \end{scope}
    \begin{scope}[shift={(1.1, -0.7)}]
      \node[atom,inner sep=1pt] (leaf4) at (0,0) {\(\mathrm{L}\)};
      \node [font=\tiny] at (0.100, 0.395)   {3};
      \node [font=\tiny] at (0.395, 0.100)   {2};
      \node [font=\tiny] at (-0.100, -0.395) {1};
      \node [font=\tiny] at (-0.100, -0.7)   {1};
      \node [text=gray,font=\tiny] at (0.300, 1)    {$(W_6)$};
      \node [text=gray,font=\tiny] at (0.495, -0.5) {$(W_{11})$};
      \node[atom,inner sep=1pt,below=0.5cm] (num4) at (leaf4) {\(4\)};
    \end{scope}
    \begin{scope}[shift={(2.6, 0) }]
      \node[atom,inner sep=1pt] (leaf5) at (0, 0) {\(\mathrm{L}\)};
      \node [font=\tiny] at (0.100, 0.395)   {3};
      \node [font=\tiny] at (0.395, -0.100)  {2};
      \node [font=\tiny] at (-0.100, -0.395) {1};
      \node [font=\tiny] at (-0.100, -0.7)   {1};
      \node [text=gray,font=\tiny] at (-0.100, 0.9)    {$(W_7)$};
      \node [text=gray,font=\tiny] at (0.495, -0.5) {$(W_{12})$};
      \node[atom,inner sep=1pt,below=0.5cm] (num5) at (leaf5) {\(5\)};
    \end{scope}
    \node[] (LL) at (-4, 0)   {\(L\)};
    \node[] (LX) at (0, 3)   {\(X\)};
    \node[] (LR) at (4, 0)   {\(R\)};
    \draw[]   (node1) -- (LX);
    \draw[]        (node1) -- (node2);
    \draw[]        (node1) -- (node3);
    \draw[]        (node2) -- (node4);
    \draw[looseness=1.50,in=90,out=-130] (node2) to (leaf1);
    \draw[looseness=1.50,in=90,out=-130]        (node4) to (leaf2);
    \draw[looseness=1.50,in=90,out=-50 ]        (node4) to (leaf3);
    \draw[looseness=1.50,in=90,out=-130]        (node3) to (leaf4);
    \draw[looseness=1.50,in=90,out=-50 ]        (node3) to (leaf5);
    \draw[in=90,out=30,looseness=1.50] (LL) to (leaf1.north);
    \draw[in=90,out=30,looseness=1.50] (leaf1) to (leaf2);
    \draw[in=90,out=30,looseness=1.50] (leaf2) to (leaf3);
    \draw[in=90,out=30,looseness=1.50] (leaf3) to (leaf4);
    \draw[in=90,out=30,looseness=1.50] (leaf4) to (leaf5);
    \draw[] (leaf5) to (LR);
    \draw[]      (leaf1) -- (num1);
    \draw[]      (leaf2) -- (num2);
    \draw[]      (leaf3) -- (num3);
    \draw[]      (leaf4) -- (num4);
    \draw[]      (leaf5) -- (num5);
  }\notag\\[-12pt]
\end{align}

Using a fixed point operator (i.e., \texttt{let rec}),
we can define the map function that applies the given function
to all the leaves of the given leaf-linked tree.
Although \texttt{let rec} has been implemented \cite{sano-icgt2023},
we focus on more non-obvious issues in the handling of graphs in
\(\lambda_{GT}\) and leave the formal definition
of \texttt{let rec} to an extended version of this paper.

\subsection{Type System}\label{sec:lgt-type}

The \(\lambda_{GT}\) language is equipped with a type system.
This section describes the type system first proposed in our
prior work~\cite{sano2023}
and examines its inherent limitations.


Our type system is formulated with four ingredients, (i) type names,
(ii) types (or type atoms), (iii) production rules, and (iv) typing
rules.  We describe them step by step.

Firstly, we introduce \emph{type names} \(\alpha, \beta, \dots\).
In concrete syntax, they are distinguished by using identifiers
starting with lowercase letters (as opposed to constructor names
starting with uppercase letters).

Secondly, we define the syntax of \emph{type atoms} (or simply \emph{types)}
in \(\lambda_{GT}\) as follows:

\begin{mydef}[Types of \(\lambda_{GT}\)]\label{OriginalType}
  \[
    \tau\>\> ::= \>\>(\tau \to \tau)(\Xs) \mid \alpha (\Xs)  \qedhere\]
\end{mydef}

Thirdly, we introduce \emph{production rules of hypergraphs}
based on hypergraph grammar, which acts as a
counterpart of type definitions in standard functional languages.

\begin{mydef}[Production Rule]\label{ProductionRule}
  $$
    \alpha (\Xs)
    \lto
    \nu \Zs.(C(\Ys), \seq{W \bowtie U}, \taus)
  $$
  For each production rule,
  the set of the free links on the left- and the right-hand sides
  must be equal, that is, we require that
  $$
    \{\Xs\} = \paren{\{\Ys\} \cup \fn(\seq{W \bowtie U}) \cup \fn(\taus)}
    \setminus \{\Zs\}.
  $$
  Without loss of generality,
  we also require that links in $\fn(\seq{W \bowtie U})$ are free
  links of the right-hand side, i.e.,
  $\fn(\seq{W \bowtie U}) \cap \{\Zs\} = \emptyset$.
  This can be achieved by absorbing the fusions of local links
  by preprocessing with structural congruence.
\end{mydef}

Compared to the original definition \cite{sano2023},
we have restricted the form of production rules to
what could be called a ``standard form''to ensure
that each production rule has exactly one constructor atom on the
right-hand side.
This does not reduce the expressive power of the types,
as rules with multiple constructor atoms can be represented by
introducing auxiliary type names.

The types \((\tau \to \tau)(\Xs)\) and \(\alpha (\Xs)\)
correspond to arrow types and type
names in existing functional languages.
However, since the values, which are graphs, have free links in
\(\lambda_{GT}\),
the types also have free links.

Finally, we define \emph{typing relation} and \emph{typing rules}.

\begin{mydef}[Typing Environment]
  A \emph{typing environment} \(\Gamma\) is
  a set of bindings of the form \(x[\Xs]: \tau\), where
  we require that (i) $\{\Xs\}= \fn(\tau)$
  to ensure the consistency of free links, and that (ii)
  the links in $\Xs$ are pairwise distinct.
  We also require that the graph variables names occurring in a typing
  environment are pairwise distinct.
\end{mydef}

\begin{mydef}[Typing Relation]\label{def:TypingRelation}
  A \emph{typing relation} is represented as
  \(
  \Gamma \vdash_P e: \tau,
  \)
  where \(\Gamma\) is the typing environment,
  \(P\) is a set of production rules,
  \(e\) is an expression, and \(\tau\) is its type,
  and
  $\fn(e) = \fn(\tau)$.
\end{mydef}

A fundamental characteristic of the type system of \(\lambda_{GT}\)
is that each type \(\tau\) contains zero or more free links,
which directly correspond to the free links present in the hypergraph values.

We introduce explicit type annotations to \(\lambda\)-abstractions
\(
\lambda\, x[\Xs]: \tau\,.\,e
\).
In addition, we allow type annotations for graph variables occurring
within expressions.
In a case expression $(\caseof{e_1}{T}{e_2}{e_3})$,
each graph variable occurring in \(T\) must occur with
a type annotation.
In the representation of the production rules,
rules with identical left-hand sides can be enumerated using \(\mid\),
following the convention of formal grammars.

\begin{mydef}[Collecting Graph Variables with Type Annotations]
  \begin{align*}
    \CollectVars(\zero)        & \triangleq \emptyset                                \\[-2pt]
    \CollectVars(p(\Xs))       & \triangleq \emptyset                                \\[-2pt]
    \CollectVars(x[\Xs]: \tau) & \triangleq \{x[\Xs]: \tau\}                         \\[-2pt]
    \CollectVars((T_1, T_2))   & \triangleq \CollectVars(T_1) \cup \CollectVars(T_2) \\[-2pt]
    \CollectVars(\nu X.T)      & \triangleq \CollectVars(T) \qedhere
  \end{align*}
\end{mydef}


\begin{mydef}[Typing Rules]
  The typing relation \(\Gamma \vdash_P e: \tau\)
  is deduced by the inference rules in \figref{table:typing-rules}.
\end{mydef}

\begin{figure}[tb]
  \centering
  \hrulefill\par

  \begin{prooftree}
    \AXC{$x[\Xs]:\tau \in \Gamma$}
    \RightLabel{\TyVar{}}
    \UIC{$\Gamma \vdash_P x[\Zs] : \tau\angled{\Zs/\Xs}$}
  \end{prooftree}

  \begin{prooftree}
    \AXC{$\Gamma, x[\Xs]: \tau_1 \vdash_P e : \tau_2$}
    \RightLabel{\TyArrow{}}
    \UIC{$\Gamma \vdash_P (\lambda\, x[\Xs] : \tau_1.e)(\Zs) : (\tau_1 \to \tau_2) (\Zs)$}
  \end{prooftree}

  \begin{prooftree}
    \AXC{$\Gamma \vdash_P e_1 : (\tau_1 \rightarrow \tau_2) (\Zs)$}
    \AXC{$\Gamma \vdash_P e_2 : \tau_1$}
    \RightLabel{\TyApp{}}
    \BIC{$\Gamma \vdash_P (e_1\; e_2) : \tau_2$}
  \end{prooftree}

  \begin{prooftree}
    \AXC{$\Gamma \vdash_P T : \tau$}
    \AXC{$Y \notin \fn(T)$}
    \RightLabel{\TyAlpha{}}
    \BIC{$\Gamma \vdash_P T\angled{Y/X} : \tau \angled{Y/X}$}
  \end{prooftree}

  \begin{prooftree}
    \AXC{$\Gamma \vdash_P T : \tau$}
    \AXC{$T \equiv T'$}
    \RightLabel{\TyCong{}}
    \BIC{$\Gamma \vdash_P T' : \tau$}
  \end{prooftree}

  \begin{prooftree}
    \AXC{$\Gamma \vdash_P T_1 : \tau_1$}
    \AXC{$\dots$}
    \AXC{$\Gamma \vdash_P T_n : \tau_n$}
    \RightLabel{\TyProd{}}
    \TIC{$\Gamma
        \mathrel{\vdash_P} \nu
        \Zs.(C(\Ys), \seq{U \bowtie V}, T_1, \dots, T_n) : \alpha (\Xs)$}
  \end{prooftree}

  \raggedleft where
  $(\alpha (\Xs) \lto \nu
    \Zs.(C(\Ys), \seq{U \bowtie V}, \tau_1, \dots, \tau_n)) \in P$

  \begin{prooftree}
    \AXC{$\Gamma \vdash_P e_1 : \tau_1$}
    \AXC{$\Gamma, \CollectVars(T) \vdash_P e_2 : \tau_2$}
    \AXC{$\Gamma \vdash_P e_3 : \tau_2$}
    \RightLabel{\TyCase{}}
    \TIC{$\Gamma \vdash_P (\caseof{e_1}{T}{e_2}{e_3}) : \tau_2$}
  \end{prooftree}

  \hrulefill\par

  \caption{Typing Rules of \(\lambda_{GT}\).}
  \label{table:typing-rules}
\end{figure}

\TyVar{}, \TyArrow{}, and \TyApp{},
are analogous to those of conventional functional type systems
except that the free links of variables must be properly handled.
In contrast,
\TyAlpha{}, \TyCong{}, and \TyProd{} are specific
to the hypergraph incorporation of \(\lambda_{GT}\).
\TyCong{} incorporates structural congruence
into the reduction relation.
\TyAlpha{} \(\alpha\)-converts the free link names of both the graph and
its type.
\TyProd{} incorporates production rules to the type system.
Each $\tau_i$ denotes a type atom occurring in the right-hand
side
of a production rule, and each \(T_i\) denotes an expression
of type $\tau_i$.
By simultaneously replacing each $\tau_i$
by the corresponding $T_i$,
the right-hand side of the production rule
is transformed into an expression of type $\alpha(\Xs)$.
\TyCase{} defines the typing rule for the case expression.


The type system proposed in prior research~\cite{sano2023} was not fully static;
instead, it relied on dynamic checking for case expressions.
\begin{mydef}[Reduction with Dynamic Type Checking in Case Expressions]
  The reduction relation $\reducesp{P}$ denotes a reduction relation
  parameterised by the production rules $P$.
  The reduction rules for case expressions with dynamic checking are defined in
  \figref{fig:refined-rd-case}.
  The remaining reduction rule, \RdBeta{}, is defined as in
  \figref{table:lgt-reduction}, except that it uses $\reducesp{P}$.
\end{mydef}
In dynamic type checking for case expressions,
after pattern matching has been performed,
the runtime system checks whether there exists a graph substitution \( \thetas \)
that satisfies \( G \equiv T\thetas \),
and checks that all type annotations
for the graph variables within $T$ are satisfied.
Here, $P$ denotes the set of production rules
used to type the case expression.
If the matching with the type constraints succeeds,
evaluation proceeds along
\RdCaseMatchD{};
otherwise,
evaluation follows \RdCaseOtherD{}.

The fundamental reason why static type checking is difficult in pattern matching
stems from the structural complexity of the underlying computational model.
Unlike conventional functional programming languages that operate on tree-like data structures,
the language under consideration employs graph-based structures.
Graphs inherently exhibit a higher degree of complexity compared to trees,
introducing significant challenges in enforcing static type constraints.
Furthermore, as demonstrated earlier,
the expressiveness of the pattern matching mechanism in this language surpasses
that of conventional algebraic data type (ADT) pattern matching.
For instance, it permits direct matching on the tail of a list and the leaves of a leaf-linked tree,
which extends beyond the capabilities typically supported in ADT-based pattern matching.
This enhanced expressiveness introduces additional difficulties in establishing a
fully-static type system.

\begin{figure}[tb]
  \hrulefill{}%
  \vspace*{-9pt}
  \begin{prooftree}
    \AXC{$\begin{array}{c}
          G \equiv T\thetas
          \land
          \bigwedge_{x[\Xs]: \tau \in \CollectVars(T)}
          \paren{\emptyset \vdash_P x[\Xs]\thetas: \tau}
        \end{array}$}
    \RightLabel{\RdCaseMatchD{}}
    \UIC{$\mathmakebox[\dimexpr\width-6pt\relax][l]{%
          E[\caseof{G}{T}{e_1}{e_2}]\!\reducesp{P}\! E[e_1\thetas]}$}
  \end{prooftree}
  \medskip\smallskip
  \begin{prooftree}
    \AXC{$
        \not\exists \thetas.
        \paren{
          G \equiv T\thetas
          \land
          \bigwedge_{x[\Xs]: \tau \in \CollectVars(T)}
          \paren{\emptyset \vdash_P x[\Xs]\thetas: \tau}
        }$}
    \RightLabel{\RdCaseOtherD{}}
    \UIC{$\mathmakebox[\dimexpr\width-2pt\relax][l]{%
          E[\caseof{G}{T}{e_1}{e_2}] \reducesp{P} E[e_2]}$}
  \end{prooftree}

  \hrulefill{}%
  \caption{Reduction Relation for Case Expressions in Dynamic Type Checking.}
  \label{fig:refined-rd-case}
\end{figure}

Again, the substitution $\theta$ in \RdCaseMatchD{} is not necessarily unique
even we have dynamic type checking.

In this work, we mainly focus on the theoretical design of the language.
Under the current reduction semantics, however, the language does not report
errors when type-inconsistent matching occurs.
For an actual implementation of the language, an alternative design choice
would be possible.
In particular, when pattern matching succeeds but type checking fails,
the evaluator could report a type error and terminate execution.
Even if this modification is adopted, the subsequent soundness discussion
can be carried out in essentially the same way, provided that
\texttt{Type-error} is allowed as a possible state.

\subsubsection{Motivating Examples}\label{sec:MotivatingExamples}

The type system of \(\lambda_{GT}\)
allows users to define custom shapes such as
doubly-linked lists and leaf-linked trees.

For instance, the production rules for doubly-linked lists
and leaf-linked trees are
illustrated as follows. 
\begin{align}\label{eq:prod-rules-eg-vis}
   & \Pdl \triangleq
  \notag          \\
   & \! \left\{
  \begin{array}{@{\!}c@{\!\!}l@{\!\!}}
    \dblltype{X}{Y}{Z}
     &
    \longrightarrow
    \tikz[baseline=-0.5ex, scale=0.6]{%
      \begin{scope}[shift={(0,0)}]
        \node [atom, inner sep=1pt] (C1)  at (0, 0) {$\mathrm{N}$};
        \coordinate (p1A) at ($(C1)+(0,0.4)$) {};
        \coordinate [] (p1R) at ($(p1A)+( 0.5,0.4)$);
        \coordinate [] (p1L) at ($(p1A)+(-0.5,0.4)$);
        \node [font=\tiny] at (0.100, 0.400)   {2};
        \node [font=\tiny] at (-0.400, -0.100) {1};
      \end{scope}
      %
      \node [font=\tiny] (LZ) at (-1, 0.8) {$Z$};
      \node [font=\tiny] (LX) at (-1, 0)   {$X$};
      \node [font=\tiny] (LY) at (1, 0.8)  {$Y$};
      \draw [in=180, out=0, looseness=1.50] (LZ) to (p1L);
      \draw [in=90, out=0, rounded corners] (p1L) to (p1A);
      \draw (p1A) to (C1);
      \draw []  (C1) to  (LX);
      \draw [in=0, out=180, looseness=1.50] (LY) to (p1R);
      \draw [in=90, out=180, rounded corners]  (p1R) to (p1A);
    }
    \mid
    \tikz[baseline=-0.4ex, scale=0.6]{%
      \tikzset{
        every node/.style={font=\small,thin},
      }
      \begin{scope}[shift={(0,0)}]
        \node [atom, inner sep=1pt] (C1)  at (0, 0)    {$\mathrm{C}$};
        \coordinate (p1A) at ($(C1)+(0,0.4)$) {};
        \coordinate [] (p1R) at ($(p1A)+( 0.5,0.4)$);
        \coordinate [] (p1L) at ($(p1A)+(-0.5,0.4)$);
        \node[tvar,inner sep=3pt] (N1)  at ($(C1)+(0,-1.1)$)  {\(\mathit{n}\)};
        \draw [] (C1) to (N1);
        \node [font=\tiny] at (0.100, 0.450)   {4};
        \node [font=\tiny] at (-0.450, -0.100) {2};
        \node [font=\tiny] at (-0.100, -0.450) {1};
        \node [font=\tiny] at (0.450, -0.100)  {3};
        \node [font=\tiny] at (-0.100, -0.7) {1};
        \node [text=gray,font=\tiny] at (0.50, -0.6) {$(W_1)$};
      \end{scope}
      \begin{scope}[shift={(1.5,0)}]
        \node[tvar,inner sep=2pt] (C2) at (0, 0) {\(\mathit{dbll}\)};
        \coordinate (p2A) at ($(C2)+(0,0.4)$) {};
        \coordinate (p2AL) at ($(C2)+(-0.2,0.3)$);
        \coordinate (p2AR) at ($(C2)+(+0.2,0.3)$);
        \coordinate [] (p2R) at ($(p2A)+( 0.6,0.4)$);
        \coordinate [] (p2L) at ($(p2A)+(-0.6,0.4)$);
        \node [font=\tiny] at (-0.600, -0.100) {1};
        \node [font=\tiny] at (+0.400, 0.450)  {2};
        \node [font=\tiny] at (-0.400, 0.450)  {3};
        \node [text=gray,font=\tiny] at (-0.400, 1.000)  {$(W_2)$};
      \end{scope}
      %
      \node [font=\tiny] (LZ) at (-1, 0.8) {$Z$};
      \node [font=\tiny] (LX) at (-1, 0)  {$X$};
      \node [font=\tiny] (LY) at (2.5, 0.8) {$Y$};
      %
      \draw [in=180, out=0, looseness=1.50] (LZ) to (p1L);
      \draw [in=90, out=0, rounded corners] (p1L) to (p1A);
      \draw (p1A) to (C1);
      %
      \draw [font=\tiny]  (C1) to  (LX);
      %
      \draw [in=180, out=0, looseness=1.50] (C1) to (p2L);
      \draw [in=90, out=0, rounded corners] (p2L) to (p2AL);
      \node[circle,minimum size=1pt,inner sep=1pt,fill=white] (p3) at (0.80, 0.7) {};
      \draw [in=0, out=180, looseness=1.50] (C2) to (p1R);
      \draw [in=90, out=180, rounded corners]  (p1R) to (p1A);
      %
      \draw [in=0, out=180, looseness=1.50] (LY) to (p2R);
      \draw [in=90, out=180, rounded corners]  (p2R) to (p2AR);
    }
    \\[6mm]
    \tikz[baseline=1ex, scale=0.6]{%
      \node[tvar,inner sep=3pt] (num1)  at (0, 0) {\(\mathit{n}\)};
      \draw[]        (num1) --++ (0, 1);
      \node[font=\tiny] at (0,     1.3) {$X$};
      \node[font=\tiny] at (-0.25,  0.5) {$1$};
    }
     &
    \longrightarrow
    \tikz[baseline=1ex, scale=0.6]{%
      \node[atom,inner sep=1pt] (num1) at (0, 0)    {\(\mathrm{1}\)};
      \draw[]             (num1) --++ (0, 1);
      \node[font=\tiny] at (0,     1.3) {$X$};
      \node[font=\tiny] at (-0.25,  0.5) {$1$};
    }
    \mid
    \tikz[baseline=1ex, scale=0.6]{%
      \node[atom,inner sep=1pt] (num1) at (0, 0)    {\(\mathrm{2}\)};
      \draw[]             (num1) --++ (0, 1);
      \node[font=\tiny] at (0,     1.3) {$X$};
      \node[font=\tiny] at (-0.25,  0.5) {$1$};
    }
    \mid
    \dots
    \\[3mm]
     \lltrtype{X}{L}{R}
     &
    \longrightarrow\!\!
    \tikz[baseline=-2pt, scale=0.6]{%
      \node[atom,inner sep=1pt]
      (cons1) at (0, 0) {\(\mathrm{L}\)};
      \node[tvar,inner sep=3pt] (num1)  at (0, -1.3) {\(\mathit{n}\)};
      \draw[]        (cons1) -- (num1);
      \coordinate (p1) at (+0.75, 0);
      \coordinate (p2) at (-0.75, 0);
      \coordinate (p3) at (0, 0.85) ;
      \node [font=\tiny] at (+0.85, 0) {$R$};
      \node [font=\tiny] at (-0.85, 0) {$L$};
      \node [font=\tiny] at (0, 0.95)  {$X$};
      \draw[](cons1) to (p1);
      \draw[looseness=1.50,in=30,out=90] (cons1) to (p2);
      \draw[]   (cons1) -- (p3);
      \node [font=\tiny] at (0.200, 0.395)   {3};
      \node [font=\tiny] at (0.395, 0.200)   {2};
      \node [font=\tiny] at (-0.200, -0.395) {1};
      \node [font=\tiny] at (-0.200, -0.7)   {1};
      \node [text=gray,font=\tiny] at (0.50, -0.7)   {$(W_1)$};
    }
    \mid
    \tikz[baseline=-2pt, scale=0.6]{%
      \node[atom,inner sep=1pt,text=white,fill=black!60]
      (cons1) at (0, 0) {\(\mathrm{L}\)};
      \node[tvar,inner sep=3pt] (num1)  at (0, -1.3) {\(\mathit{n}\)};
      \draw[]        (cons1) -- (num1);
      \coordinate (p1) at (+0.75, 0);
      \coordinate (p2) at (-0.75, 0);
      \coordinate (p3) at (0, 0.85) ;
      \node [font=\tiny] at (+0.85, 0) {$R$};
      \node [font=\tiny] at (-0.85, 0) {$L$};
      \node [font=\tiny] at (0, 0.95)  {$X$};
      \draw[](cons1) to (p1);
      \draw[looseness=1.50,in=30,out=90] (cons1) to (p2);
      \draw[]   (cons1) -- (p3);
      \node [font=\tiny] at (0.200, 0.395)   {3};
      \node [font=\tiny] at (0.395, 0.200)   {2};
      \node [font=\tiny] at (-0.200, -0.395) {1};
      \node [font=\tiny] at (-0.200, -0.7)   {1};
      \node [text=gray,font=\tiny] at (0.50, -0.7)   {$(W_1)$};
    }
    \mid\!\!\!
    \tikz[baseline=5.4ex, scale=0.6]{%
      \begin{scope}[shift={(0,1.3)}]
        \node[atom,inner sep=1pt] (cons1) at (0, 0) {\(\mathrm{N}\)};
        \node[font=\tiny] at (0,     1.1) {$X$};
        \node[font=\tiny] at (-0.2, 0.5) {$3$};
        \node[font=\tiny] at (-0.2, -0.4) {$1$};
        \node[font=\tiny] at ( 0.2, -0.4) {$2$};
      \end{scope}
      \draw[]   (cons1) --++(0, 1);
      \begin{scope}[shift={(-0.75, 0)}]
        \node[lltr] (x1) at (0, 0) {\(\mathit{lltr}\)};
        \node[text=gray,font=\tiny] at (-0.1, 1.0) {$(W_1)$};
        \node[font=\tiny] at (-0.25,  0.6) {$3$};
        \node[font=\tiny] at (-0.5, 0.2) {$1$};
        \node[font=\tiny] at ( 0.5, 0.2) {$2$};
      \end{scope}
      \begin{scope}[shift={(+0.75, 0)}]
        \node[lltr] (x2) at (0.0, 0) {\(\mathit{lltr}\)};
        \node[text=gray,font=\tiny] at (0.1, 1.0) {$(W_2)$};
        \node[font=\tiny] at (-0.25,  0.6) {$3$};
        \node[font=\tiny] at (-0.5, 0.2) {$1$};
        \node[font=\tiny] at ( 0.5, 0.2) {$2$};
      \end{scope}
      \draw[]        (cons1) -- ($(x1.north)-(0,0.1)$);
      \draw[]        (cons1) -- ($(x2.north)-(0,0.1)$);
      \draw[]        (x1) -- (x2);
      \draw[] (x1)--++(-1.0, -0.2);
      \draw[]   (x2)--++( 1.0, -0.2);
      \node[text=gray,font=\tiny] at (0, -0.4) {$(Y)$};
      \node [font=\tiny] at (-1.9, -0.20) {$L$};
      \node [font=\tiny] at (+1.9, -0.20) {$R$};
    }
  \end{array}\right.
\end{align}\relax

Formally, the production rules can be expressed as follows.
\begin{align}\label{eq:prod-rules-eg}
  \Pdl \triangleq
  \left\{
  \begin{array}{@{}l@{}}
    \mathit{dbllist}(X,Y,Z)  \lto                                                 \\
    \quad \;\; \Nil(X, Z), Y \bowtie Z                                            \\[1mm]
    \quad \mid \nu W_1 W_2.(\Cons(W_1,X,W_2,Z),                                   \\
    \qquad \mathit{nat}(W_1),\mathit{dbllist}(Z,Y,W_2))                           \\[3mm]
    \mathit{nat}(X) \lto 1(X) \mid 2(X) \mid \dots                                \\[3mm]
    \mathit{lltree}(L,R,X) \lto                                                   \\
    \quad \;\; \nu W_1. (\Leaf(W_1,R,X), L \bowtie X, \mathit{nat}(W_1))          \\[1mm]
    \quad \mid \nu W_1. (\mathrm{LeafM}(W_1,R,X), L \bowtie X, \mathit{nat}(W_1)) \\[1mm]
    \quad \mid \nu Y W_1 W_2.(\Node(W_1, W_2,X),                                  \\
    \qquad \mathit{lltree}(L,Y, W_1), \mathit{lltree}(Y,R, W_1))
  \end{array}\right.
\end{align}

By applying the typing rules \TyAlpha{}, \TyCong{}, and \TyProd{},
the following typing relations can be derived.
\begin{align}
  \emptyset \vdash_{\Pdl}
  \tikz[baseline=-0.5ex, scale=0.7]{%
    \tikzset{
      every node/.style={font=\small,thin},
    }
    \begin{scope}[shift={(0,0)}]
      \node [atom, inner sep=1pt] (C1)  at (0, 0)    {$\mathrm{C}$};
      \coordinate (p1A) at ($(C1)+(0,0.4)$) {};
      \coordinate [] (p1R) at ($(p1A)+( 0.5,0.4)$);
      \coordinate [] (p1L) at ($(p1A)+(-0.5,0.4)$);
      \node [atom, inner sep=1pt] (N1)  at ($(C1)+(0,-1)$) {1};
      \draw [] (C1) to (N1);
      \node [font=\tiny] at (0.100, 0.400)   {4};
      \node [font=\tiny] at (-0.400, -0.100) {2};
      \node [font=\tiny] at (-0.100, -0.400) {1};
      \node [font=\tiny] at (0.400, -0.100)  {3};
      \node [font=\tiny] at (-0.100, -0.7) {1};
      \node [text=gray,font=\tiny] at (0.500, -0.5) {$(W_1)$};
    \end{scope}
    \begin{scope}[shift={(2,0)}]
      \node [atom, inner sep=1pt] (C2)  at (0, 0)    {$\mathrm{C}$};
      \coordinate (p2A) at ($(C2)+(0,0.4)$) {};
      \coordinate [] (p2R) at ($(p2A)+( 0.5,0.4)$);
      \coordinate [] (p2L) at ($(p2A)+(-0.5,0.4)$);
      \node [atom, inner sep=1pt] (N2)  at ($(C2)+(0,-1)$) {2};
      \draw [] (C2) to (N2);
      \node [font=\tiny] at (0.100, 0.400)   {4};
      \node [font=\tiny] at (-0.400, -0.100) {2};
      \node [font=\tiny] at (-0.100, -0.400) {1};
      \node [font=\tiny] at (0.400, -0.100)  {3};
      \node [font=\tiny] at (-0.100, -0.7) {1};
      \node [text=gray,font=\tiny] at (0, 1) {$(W_2)$};
      \node [text=gray,font=\tiny] at (0.500, -0.5) {$(W_3)$};
    \end{scope}
    \begin{scope}[shift={(4,0)}]
      \node [atom, inner sep=1pt] (C4)  at (0, 0)    {$\mathrm{N}$};
      \coordinate (p4A) at ($(C4)+(0,0.4)$) {};
      \coordinate [] (p4R) at ($(p4A)+( 0.5,0.4)$);
      \coordinate [] (p4L) at ($(p4A)+(-0.5,0.4)$);
      \node [font=\tiny] at (0.100, 0.400)   {2};
      \node [font=\tiny] at (-0.400, -0.100) {1};
    \end{scope}
    %
    \node [font=\tiny] (LZ) at (-1, 0.8)  {$Z$};
    \node [font=\tiny] (LX) at (-1, 0)  {$X$};
    \node [font=\tiny] (LY) at (5, 0.8)   {$Y$};
    %
    \draw [in=180, out=0, looseness=1.50] (LZ) to (p1L);
    \draw [in=90, out=0, rounded corners] (p1L) to (p1A);
    \draw (p1A) to (C1);
    %
    \draw []  (C1) to  (LX);
    %
    \draw [in=180, out=0, looseness=1.50] (C1) to (p2L);
    \draw [in=90, out=0, rounded corners] (p2L) to (p2A);
    \draw (p2A) to (C2);
    \node [circle, fill=white] (p3) at (1, 0.65) {};
    \draw [in=0, out=180, looseness=1.50] (C2) to (p1R);
    \draw [in=90, out=180, rounded corners]  (p1R) to (p1A);
    %
    \draw [in=180, out=0, looseness=1.50] (C2) to (p4L);
    \draw [in=90, out=0, rounded corners] (p4L) to (p4A);
    \draw (p4A) to (C4);
    \node [circle, fill=white] (p4) at (3, 0.65) {};
    \draw [in=0, out=180, looseness=1.50] (C4) to (p2R);
    \draw [in=90, out=180, rounded corners]  (p2R) to (p2A);
    %
    \draw [in=0, out=180, looseness=1.50] (LY) to (p4R);
    \draw [in=90, out=180, rounded corners]  (p4R) to (p4A);
  }
  :
  \dblltype{X}{Y}{Z}\notag\\[-12pt]
\end{align}

\vspace*{-12pt}
\begin{align}
  \emptyset \vdash_{\Pdl}
  \lltr{t}
  :
  \lltrtype{X}{L}{R}
\end{align}
There are many examples that can be
successfully handled in this type system as shown in \Cite{sano2023}.

However,
\(\lambda_{GT}\) type system, in its current form,
does not support typing for incomplete data structures.
Examples include
the following three graph variables \(y[\dots]\),
which occurred in
$\eDLpop$ \eqref{eq:dbllist-removal-eg-vis}
and $\eLLT$ \eqref{eq:lltree-traverse-eg-vis}.
\begin{align}\label{eq:incomplete-gctx-eg-vis}
  \tikz[baseline=-0.5ex, scale=0.65]{%
    \begin{scope}[shift={(0,0)}]
      \node [ctxn] (C1)  at (0, 0)    {$y$};
      \coordinate (p1AL) at ($(C1)+(-0.1,0.4)$) {};
      \coordinate (p1AR) at ($(C1)+(+0.1,0.4)$) {};
      \coordinate [] (p1R) at ($(p1AR)+( 0.5,0.4)$);
      \coordinate [] (p1L) at ($(p1AL)+(-0.5,0.4)$);
      \node [font=\tiny] at (0.500, -0.100)  {1};
      \node [font=\tiny] at (+0.200, 0.500)  {2};
      \node [font=\tiny] at (-0.500, -0.100) {3};
      \node [font=\tiny] at (-0.200, 0.500)  {4};
      \node [text=gray,font=\tiny] at (0.45, 1) {$(W_2)$};
    \end{scope}
    \begin{scope}[shift={(1.5,0)}]
      \node [atom, inner sep=1pt, ghost] (C2)  at (0, 0)    {$\mathrm{C}$};
      \coordinate (p2A) at ($(C2)+(0,0.4)$) {};
      \coordinate [] (p2R) at ($(p2A)+( 0.5,0.4)$);
      \coordinate [] (p2L) at ($(p2A)+(-0.5,0.4)$);
      \node [ctxn, ghost] (N2)  at ($(C2)+(0,-1.2)$) {$w$};
      \draw [ ghost] (C2) to (N2);
      \node [text=gray, font=\tiny] at (0.100, 0.400)   {4};
      \node [text=gray, font=\tiny] at (-0.400, -0.100) {2};
      \node [text=gray, font=\tiny] at (-0.100, -0.400) {1};
      \node [text=gray, font=\tiny] at (0.400, -0.100)  {3};
      \node [text=gray, font=\tiny] at (-0.100, -0.7) {1};
      \node [text=gray,font=\tiny] at (0, 1) {$(W_1)$};
      \node [text=gray, font=\tiny] at (0.450, -0.5) {$(W_3)$};
    \end{scope}
    \begin{scope}[shift={(3,0)}]
      \node [atom, inner sep=1pt, ghost] (C3)  at (0, 0) {$\mathrm{N}$};
      \coordinate (p3A) at ($(C3)+(0,0.4)$) {};
      \coordinate [] (p3R) at ($(p3A)+( 0.5,0.4)$);
      \coordinate [] (p3L) at ($(p3A)+(-0.5,0.4)$);
    \end{scope}
    %
    \node [font=\tiny] (LZ) at (-1, 0.8)  {$Z$};
    \node [font=\tiny] (LX) at (-1, 0)  {$X$};
    \node [text=gray,font=\tiny] (LY) at (4, 0.8)   {$Y$};
    \draw [in=180, out=0, looseness=1.50] (LZ) to (p1L);
    \draw [in=90, out=0, rounded corners] (p1L) to (p1AL);
    \draw (p1AL) to (C1);
    \draw []  (C1) to  (LX);
    \draw [in=180, out=0, looseness=1.50] (C1) to (p2L);
    \draw [in=90, out=0, rounded corners] (p2L) to (p2A);
    \draw (p2A) to (C2);
    \node[circle,minimum size=1pt,inner sep=1pt,fill=white] (p3) at (0.80, 0.7) {};
    \draw [in=0, out=180, looseness=1.50] (C2) to (p1R);
    \draw [in=90, out=180, rounded corners]  (p1R) to (p1AR);
    \draw [in=180, out=0, looseness=1.50,ghost] (C2) to (p3L);
    \draw [in=90, out=0, rounded corners,ghost] (p3L) to (p3A);
    \draw [ghost] (p3A) to (C3);
    \node[circle,minimum size=1pt,inner sep=1pt,fill=white] (p3) at (2.25, 0.7) {};
    \draw [in=0, out=180, looseness=1.50, ghost] (C3) to (p2R);
    \draw [in=90, out=180, rounded corners, ghost]  (p2R) to (p2A);
    \draw [in=0, out=180, looseness=1.50, ghost] (LY) to (p3R);
    \draw [in=90, out=180, rounded corners, ghost]  (p3R) to (p3A);
  }
\end{align}

\vspace*{-6pt}
\begin{align}\label{eq:incomplete-gctx-eg-vis2}
  \tikz[baseline=6ex, scale=0.7]{%
    \ctxyy{0}{0};
    \begin{scope}[shift={(-0.6, -0.5)}]
      \node[atom,inner sep=1pt,ghost,fill=black!10,text=white]
      (leaf1) at (0, 0) {\(\mathrm{L}\)};
      \node [text=gray,font=\tiny] at (0.100, 0.395)   {3};
      \node [text=gray,font=\tiny] at (0.395, 0.100)   {2};
      \node [text=gray,font=\tiny] at (-0.100, -0.395) {1};
      \node [text=gray,font=\tiny] at (-0.100, -0.7)   {1};
      \node [text=gray,font=\tiny] at (0.410, -0.5)   {$(W_1)$};
      \node[ctxn,inner sep=2pt,below=0.5cm,ghost] (num1) at (leaf1)  {\(z\)};
      \draw[ghost]      (leaf1) -- (num1);
      \node [text=gray,font=\tiny] at (0.435, 0.85)   {$(W_2)$};
    \end{scope}
    \begin{scope}[shift={(+0.6, -0.5)}]
      \node[atom,inner sep=1pt,ghost]
      (leaf2) at (0, 0) {\(\mathrm{L}\)};
      \node [text=gray,font=\tiny] at (0.100, 0.395)   {3};
      \node [text=gray,font=\tiny] at (0.395, 0.100)   {2};
      \node [text=gray,font=\tiny] at (-0.100, -0.395) {1};
      \node [text=gray,font=\tiny] at (-0.100, -0.7)   {1};
      \node [text=gray,font=\tiny] at (0.410, -0.5)   {$(W_3)$};
      \node[ctxn,inner sep=2pt,below=0.5cm,ghost] (num2) at (leaf2)  {\(w\)};
      \draw[ghost]      (leaf2) -- (num2);
      \node [text=gray,font=\tiny] at (-0.420, 0.50)   {$(W_4)$};
      \node [text=gray,font=\tiny] at (1.010, 0.25)   {$(W_5)$};
    \end{scope}
    \coordinate (p1) at (-1.4, 0.2);
    \coordinate (p2) at (+1.4, 0.2);
    \coordinate (p3) at (-0.6, 0.9);
    \coordinate (p4) at (+0.6, 0.9);
    %
    %
    \draw[looseness=1.50,out=0,in=90,ghost] (leaf1) to (leaf2);
    \draw[looseness=1.50,out=0,in=90]    (p1) to (leaf1);
    \draw[looseness=1.50,out=-90,in=90]  (p3) to (leaf1);
    \draw[looseness=1.50,out=-90 ,in=90] (p4) to (leaf2);
    \draw[looseness=1.50,out=0,in=180] (leaf2) to (p2);
  }
\end{align}

\vspace*{-6pt}
\begin{align}\label{eq:incomplete-gctx-eg-vis3}
  \tikz[baseline=5ex, scale=0.7]{%
    \ctxyl{0}{0};
    \begin{scope}[shift={(0.8, 0)}]
      \node[atom,inner sep=1pt,ghost,fill=black!10,text=white]
      (leaf1) at (0, 0) {\(\mathrm{L}\)};
      \node [text=gray,font=\tiny] at (0.100, 0.395)   {3};
      \node [text=gray,font=\tiny] at (0.395, 0.100)   {2};
      \node [text=gray,font=\tiny] at (-0.100, -0.395) {1};
      \node [text=gray,font=\tiny] at (-0.100, -0.7)   {1};
      \node [text=gray,font=\tiny] at (0.460, -0.5)   {$(W_1)$};
      \node [font=\tiny] at (0.400, 0.8)   {$W_2$};
      \node[ctxn,inner sep=2pt,below=0.5cm,ghost] (num1) at (leaf1)  {\(z\)};
      \draw[ghost]      (leaf1) -- (num1);
    \end{scope}
    \coordinate (p1) at (-0.70, 0.2);
    \coordinate (p2) at (0.4, 1.4);
    \draw[looseness=1.50,out=0,in=90] (p1) to (leaf1);
    \draw[looseness=1.50,out=-45,in=90] (p2) to (leaf1);
    \draw[looseness=1.50,out=30,in=180,ghost] (leaf1) to (LR);
    \node[text=gray,font=\tiny] at (LR)  {$R$};
  }
\end{align}

As a consequence,
even with the appropriate type annotation for the argument of the functions,
$\eDLpop$ and $\eLLT$
cannot be typed as follows:
{\makeatletter\@fleqntrue
\begin{equation}
  \ \emptyset \vdash_{\Pdl} \eDLpop : (dbllist(X,Y,Z) \!\to\! dbllist(X,Y,Z
  ))(W),
  \label{eq:dbllist-typing-relation-eg}
\end{equation}
\begin{equation}
  \ \emptyset \vdash_{\Pdl} \eLLT :
  \begin{array}[t]{lll}
    (((\mathit{nat}(X) \to \mathit{nat}(X))(X) \to \\
    \quad lltree(X,Y,Z))(X) \to                    \\
    \qquad lltree(X,Y,Z))(X).
  \end{array}
  \label{eq:lltree-typing-relation-eg}
\end{equation}
\@fleqnfalse
\makeatother
}

\subsection{Open Challenges}\label{sec:challenges}

As described above,
although the type system of \(\lambda_{GT}\) is highly promising
and capable of handling many practical examples,
several open challenges remain.

One key challenge is
typing partially constructed graphs,
such as doubly-linked lists without tails or leaf-linked trees that lack certain leaves,
as outlined in Requirement 6
in the beginning of \Cref{sec:lgt-lang}.
The previous work partially addressed
this problem by requiring users to define types for such incomplete graphs (e.g., difference lists)
separately from the types of fully constructed graphs~\cite{sano2023}.
However, this approach is cumbersome
and does not generalise well to more complex cases,
in which structural incompleteness is more intricate and difficult to express within the type system.

Another key challenge is achieving
a fully static type system,
as outlined in
Requirement 7.
The type system proposed in \cite{sano2023} is mostly static, but
during pattern matching,
it still relies on dynamic type checking.

To overcome these challenges,
we propose an extension to the type system incorporating linear implication types,
which enables typing of incomplete structures (Requirement 6),
while eliminating the reliance on dynamic type checking (Requirement 7).

\section{Introducing Linear Implication Types}\label{sec:lgt-ext}

\(\lambda_{GT}\) enables declarative programming based on
graph pattern matching and graph composition,
supported by a type system that ensures their shapes.

However, its existing type system lacks static typing of incomplete
data structures.
To address this limitation,
we propose an extended type system that introduces
\emph{linear implication types} (\(\multimap\)) and proves its
soundness with dynamic type checking in case expressions.

The rest of this section proceeds in the following steps.
\Cref{sec:ext-type} introduces linear implication types and the type
system extended by linear implication types.
\Cref{sec:li-free} discusses the (non-obvious) design of linear
implication types.
\Cref{sec:sec3-limp-example} presents some typing examples with
complete proof trees.
\Cref{sec:sec3-soundness} discusses the soundness of the type system
under dynamic type checking in case expressions.

\subsection{The Extended Type System}\label{sec:ext-type}

We extend the type system of \(\lambda_{GT}\)
by incorporating the \emph{linear implication type}
\[(\{\taus\} \multimap \tau)\{\Xs\}\]
into the type system.
The $\{\taus\}$ on the left operand of $\multimap$ is a multiset
that allows us to reorder its elements whenever necessary.
Let $\emptyset$ stand for an empty $\{\taus\}$.
The necessity of explicitly specifying the set of free links \(\{\Xs\}\)
will be discussed later in \Cref{sec:li-free}.
Unlike atoms and arrow types, whose free links are represented as ordered
sequences, the free links of a linear implication type form an unordered set.
Accordingly, atoms and arrow types annotate free links using parentheses
``('' and ``)'', reflecting their ordered nature, whereas linear implication
types use braces ``\{'' and ``\}'' to emphasise that free links are treated
as a set.

As will be explained shortly, we adopt this form of linear implication
(as opposed to a standard form and currying
$\tau_1 \multimap \tau_2 \multimap \dots \multimap \tau_n$)
to smoothly relate it to production rules.

We refine the type in \Cref{OriginalType} as follows.

\begin{mydef}[Syntax of the Type, Revised]
  \[\tau ::=
    \LI{\taus}{\tau}{\Xs}
    \mid (\tau \to \tau)(\Xs)
    \mid \alpha (\Xs) \qedhere
  \]
\end{mydef}

In order to keep the theory not too complicated, henceforth we assume that
the above extension does not apply to types used in the following contexts
(though we believe that this restriction is not essential):
\begin{enumerate}
  \item
        types occurring in the left- and right-hand sides of $\multimap$ and $\to$
        are $\multimap$-free,
  \item types occurring in the right-hand sides of production rules
        (\Cref{ProductionRule}) are $\multimap$-free, and
  \item the first subexpression of a \textbf{case} expression cannot be
        of a linear implication type.
\end{enumerate}
The above three restrictions are essentially
to avoid the generation and handling
of curried forms of $\multimap$ which we are
going to dispense with (as explained above) in proofs etc.

The linear implication type \(\LI{\seq{\tau_1}}{\tau_2}{\Xs}\)
represents a structure with free links \(\Xs\),
which attains the type \(\tau_2\) when any (sub)graphs of types
\(\seq{\tau_1}\) is supplied.
This differs fundamentally from arrow types \((\tau_1 \to \tau_2)(\Xs)\).
Unlike the function type \((\taus \to \tau)(\Xs)\),
which represent the type of functions that contain expressions,
the linear implication type \(\LI{\taus}{\tau}{\Xs}\)
is for operations on data structures based on
the decomposition and composition of subgraphs.

The incorporation of this type is motivated by the \emph{Magic Wand}
operator in Separation Logic~\cite{OHearn-SLsite}
and the \emph{Difference Type} in LMNtalGG~\cite{yamamoto2024}.
This extension provides a formal mechanism for explicitly managing
the type of incomplete structures.

We provide an intuitive explanation of how linear implication
enables the typing of data structures with missing elements.
\begin{align}\label{eq:dbllist-diff-eg-vis}
   & \emptyset \vdash_{\Pdl}
  \tikz[baseline=-0.5ex, scale=0.7]{%
    \tikzset{
      every node/.style={font=\small,thin},
    }
    \begin{scope}[shift={(0,0)}]
      \node [atom, inner sep=1pt] (C1)  at (0, 0)    {$\mathrm{C}$};
      \coordinate (p1A) at ($(C1)+(0,0.4)$) {};
      \coordinate [] (p1R) at ($(p1A)+( 0.5,0.4)$);
      \coordinate [] (p1L) at ($(p1A)+(-0.5,0.4)$);
      \node [atom, inner sep=1pt] (N1)  at ($(C1)+(0,-1)$) {1};
      \draw [] (C1) to (N1);
      \node [font=\tiny] at (0.100, 0.400)   {4};
      \node [font=\tiny] at (-0.400, -0.100) {2};
      \node [font=\tiny] at (-0.100, -0.400) {1};
      \node [font=\tiny] at (0.400, -0.100)  {3};
      \node [font=\tiny] at (-0.100, -0.7) {1};
      \node [text=gray,font=\tiny] at (0.500, -0.5) {$(W_1)$};
    \end{scope}
    \begin{scope}[shift={(2,0)}]
      \node [atom, inner sep=1pt,ghost] (C2)  at (0, 0)    {$\mathrm{C}$};
      \coordinate (p2A) at ($(C2)+(0,0.4)$) {};
      \coordinate [] (p2R) at ($(p2A)+( 0.5,0.4)$);
      \coordinate [] (p2L) at ($(p2A)+(-0.5,0.4)$);
      \node [atom, inner sep=1pt,ghost] (N2)  at ($(C2)+(0,-1)$) {2};
      \draw [ghost] (C2) to (N2);
      \node [text=gray,font=\tiny] at (0.100, 0.400)   {4};
      \node [text=gray,font=\tiny] at (-0.400, -0.100) {2};
      \node [text=gray,font=\tiny] at (-0.100, -0.400) {1};
      \node [text=gray,font=\tiny] at (0.400, -0.100)  {3};
      \node [text=gray,font=\tiny] at (-0.100, -0.7) {1};
      \node [font=\tiny] at (0, 1) {$W_2$};
      \node [text=gray,font=\tiny] at (0.500, -0.5) {$(W_3)$};
    \end{scope}
    \begin{scope}[shift={(4,0)}]
      \node [atom, inner sep=1pt,ghost] (C4)  at (0, 0)    {$\mathrm{N}$};
      \coordinate (p4A) at ($(C4)+(0,0.4)$) {};
      \coordinate [] (p4R) at ($(p4A)+( 0.5,0.4)$);
      \coordinate [] (p4L) at ($(p4A)+(-0.5,0.4)$);
      \node [text=gray,font=\tiny] at (0.100, 0.400)   {2};
      \node [text=gray,font=\tiny] at (-0.400, -0.100) {1};
    \end{scope}
    %
    \node [] (LZ) at (-1, 0.8)  {$Z$};
    \node [] (LX) at (-1, 0)  {$X$};
    \node [text=gray] (LY) at (5, 0.8)   {$Y$};
    %
    \draw [in=180, out=0, looseness=1.50] (LZ) to (p1L);
    \draw [in=90, out=0, rounded corners] (p1L) to (p1A);
    \draw [] (p1A) to (C1);
    %
    \draw []  (C1) to  (LX);
    %
    \draw [in=180, out=0, looseness=1.50] (C1) to (p2L);
    \draw [in=90, out=0, rounded corners] (p2L) to (p2A);
    \draw (p2A) to (C2);
    \node [circle, fill=white] (p3) at (1, 0.65) {};
    \draw [in=0, out=180, looseness=1.50] (C2) to (p1R);
    \draw [in=90, out=180, rounded corners]  (p1R) to (p1A);
    %
    \draw [in=180, out=0, looseness=1.50,ghost] (C2) to (p4L);
    \draw [in=90, out=0, rounded corners,ghost] (p4L) to (p4A);
    \draw [ghost] (p4A) to (C4);
    \node [circle, fill=white] (p4) at (3, 0.65) {};
    \draw [in=0, out=180, looseness=1.50,ghost] (C4) to (p2R);
    \draw [in=90, out=180, rounded corners,ghost]  (p2R) to (p2A);
    %
    \draw [in=0, out=180, looseness=1.50,ghost] (LY) to (p4R);
    \draw [in=90, out=180, rounded corners,ghost]  (p4R) to (p4A);
  }
  \notag                 \\ &
  :
  \quad
  \dblltype{Z}{Y}{W_2}
  \multimap
  \dblltype{X}{Y}{Z}
\end{align}

\begin{align}\label{eq:lltree-diff-eg-vis}
   & \emptyset \vdash_{\Pdl}
  \tikz[baseline=-0.5ex, scale=0.7]{%
    \begin{scope}[shift={(0, 2)}]
      \node[atom,inner sep=1pt] (node1) at (0,0) {\(\mathrm{N}\)};
      \node [font=\tiny] at (0.100, 0.395)   {3};
      \node [font=\tiny] at (-0.395, -0.100) {1};
      \node [font=\tiny] at (0.395, -0.100)  {2};
    \end{scope}
    \begin{scope}[shift={(-1.4, 1.2)}]
      \node[atom,inner sep=1pt] (node2) at (0,0) {\(\mathrm{N}\)};
      \node [font=\tiny] at (0.100, 0.395)   {3};
      \node [font=\tiny] at (-0.395, -0.100) {1};
      \node [font=\tiny] at (0.395, -0.100)  {2};
      \node [text=gray,font=\tiny] at (0.500, 0.800) {$(W_1)$};
    \end{scope}
    \begin{scope}[shift={(1.4, 1.2)}]
      \node[atom,inner sep=1pt] (node3) at (0,0) {\(\mathrm{N}\)};
      \node [font=\tiny] at (0.100, 0.395)   {3};
      \node [font=\tiny] at (-0.395, -0.100) {1};
      \node [font=\tiny] at (0.395, -0.100)  {2};
      \node [text=gray,font=\tiny] at (-0.500, 0.800) {$(W_2)$};
    \end{scope}
    \begin{scope}[shift={(-0.8, 0)}]
      \node[atom,inner sep=1pt] (node4) at (0,0) {\(\mathrm{N}\)};
      \node [font=\tiny] at (0.100, 0.395)   {3};
      \node [font=\tiny] at (-0.395, -0.100) {1};
      \node [font=\tiny] at (0.395, -0.100)  {2};
      \node [text=gray,font=\tiny] at (0.100, 0.700)   {$(W_3)$};
    \end{scope}
    %
    \begin{scope}[shift={(-2.6, 0)   }]
      \node[atom,inner sep=1pt,ghost,fill=black!10,text=white] (leaf1) at (0, 0) {\(\mathrm{L}\)};
      \node [text=gray,font=\tiny] at (0.100, 0.395)   {3};
      \node [text=gray,font=\tiny] at (0.395, 0.100)   {2};
      \node [text=gray,font=\tiny] at (-0.100, -0.395) {1};
      \node [text=gray,font=\tiny] at (-0.100, -0.7)   {1};
      \node [text=gray,font=\tiny] at (-0.510, -0.5)   {$(W_8)$};
      \node [] at (0, 0.9)   {\(L\)};
    \end{scope}
    \begin{scope}[shift={(-1.6, -1.3)}]
      \node[atom,inner sep=1pt,ghost] (leaf2) at (0, 0) {\(\mathrm{L}\)};
      \node [text=gray,font=\tiny] at (0.100, 0.395)   {3};
      \node [text=gray,font=\tiny] at (0.395, 0.100)   {2};
      \node [text=gray,font=\tiny] at (-0.100, -0.395) {1};
      \node [text=gray,font=\tiny] at (-0.100, -0.7)   {1};
      \node [font=\tiny] at (-0.35, 0.9)   {$W_4$};
      \node [text=gray,font=\tiny] at (0.410, -0.5) {$(W_9)$};
    \end{scope}
    \begin{scope}[shift={(-0.395, -1.3)}]
      \node[atom,inner sep=1pt] (leaf3) at (0, 0) {\(\mathrm{L}\)};
      \node [font=\tiny] at (0.100, 0.395)   {3};
      \node [font=\tiny] at (0.395, 0.100)   {2};
      \node [font=\tiny] at (-0.100, -0.395) {1};
      \node [font=\tiny] at (-0.100, -0.7)   {1};
      \node [font=\tiny] at (0.30, 0.820) {$W_5$};
      \node [text=gray,font=\tiny] at (0.495, -0.5)  {$(W_{10})$};
    \end{scope}
    \begin{scope}[shift={(1.1, -0.7)}]
      \node[atom,inner sep=1pt] (leaf4) at (0,0) {\(\mathrm{L}\)};
      \node [font=\tiny] at (0.100, 0.395)   {3};
      \node [font=\tiny] at (0.395, 0.100)   {2};
      \node [font=\tiny] at (-0.100, -0.395) {1};
      \node [font=\tiny] at (-0.100, -0.7)   {1};
      \node [text=gray,font=\tiny] at (0.300, 1)    {$(W_6)$};
      \node [text=gray,font=\tiny] at (0.495, -0.5) {$(W_{11})$};
    \end{scope}
    \begin{scope}[shift={(2.6, 0) }]
      \node[atom,inner sep=1pt] (leaf5) at (0, 0) {\(\mathrm{L}\)};
      \node [font=\tiny] at (0.100, 0.395)   {3};
      \node [font=\tiny] at (0.395, -0.100)  {2};
      \node [font=\tiny] at (-0.100, -0.395) {1};
      \node [font=\tiny] at (-0.100, -0.7)   {1};
      \node [text=gray,font=\tiny] at (-0.100, 0.9)    {$(W_7)$};
      \node [text=gray,font=\tiny] at (0.495, -0.5) {$(W_{12})$};
    \end{scope}
    \node[atom,inner sep=1pt,below=0.5cm,ghost] (num1) at (leaf1) {\(1\)};
    \node[atom,inner sep=1pt,below=0.5cm,ghost] (num2) at (leaf2) {\(2\)};
    \node[atom,inner sep=1pt,below=0.5cm] (num3) at (leaf3) {\(3\)};
    \node[atom,inner sep=1pt,below=0.5cm] (num4) at (leaf4) {\(4\)};
    \node[atom,inner sep=1pt,below=0.5cm] (num5) at (leaf5) {\(5\)};
    \node[text=white] (LL) at (-4, 0)   {\(L\)};
    \node[] (LX) at (0, 3)   {\(X\)};
    \node[] (LR) at (4, 0)   {\(R\)};
    \draw[]   (node1) -- (LX);
    \draw[]        (node1) -- (node2);
    \draw[]        (node1) -- (node3);
    \draw[]        (node2) -- (node4);
    \draw[looseness=1.50,in=90,out=-130] (node2) to (leaf1);
    \draw[looseness=1.50,in=90,out=-130]        (node4) to (leaf2);
    \draw[looseness=1.50,in=90,out=-50 ]        (node4) to (leaf3);
    \draw[looseness=1.50,in=90,out=-130]        (node3) to (leaf4);
    \draw[looseness=1.50,in=90,out=-50 ]        (node3) to (leaf5);
    \draw[in=90,out=30,looseness=1.50,ghost] (LL) to (leaf1.north);
    \draw[in=90,out=30,looseness=1.50,dashed,draw=gray] (leaf1) to (leaf2);
    \draw[in=90,out=30,looseness=1.50] (leaf2) to (leaf3);
    \draw[in=90,out=30,looseness=1.50] (leaf3) to (leaf4);
    \draw[in=90,out=30,looseness=1.50] (leaf4) to (leaf5);
    \draw[] (leaf5) to (LR);
    \draw[ghost] (leaf1) -- (num1);
    \draw[ghost] (leaf2) -- (num2);
    \draw[]       (leaf3) -- (num3);
    \draw[]       (leaf4) -- (num4);
    \draw[]       (leaf5) -- (num5);
  }
  \notag                \\
   & :
  \left\{
  \lltrtype{L}{L}{W_4}
  \;\raisebox{-3.8ex}{,}\;
  \lltrtype{W_4}{W_4}{W_5}
  \right\}
  \multimap
  \lltrtype{X}{L}{R}
\end{align}

Using the linear implication type,
we can successfully annotate the types of the graph variables in
$\eDLpop$ and $\eLLT$ defined in \eqref{eq:dbllist-removal-eg-vis} and \eqref{eq:lltree-traverse-eg-vis}.
For example, the graph variable
\(y[W_1,W_2,X,Z]\) in $\eDLpop$
can be type-annotated as follows.
\begin{align}
  \tikz[baseline=-0.5ex, scale=0.65]{%
    \begin{scope}[shift={(0,0)}]
      \node [ctxn] (C1)  at (0, 0)    {$y$};
      \coordinate (p1AL) at ($(C1)+(-0.1,0.4)$) {};
      \coordinate (p1AR) at ($(C1)+(+0.1,0.4)$) {};
      \coordinate [] (p1R) at ($(p1AR)+( 0.5,0.4)$);
      \coordinate [] (p1L) at ($(p1AL)+(-0.5,0.4)$);
      \node [font=\tiny] at (0.500, -0.100)  {1};
      \node [font=\tiny] at (+0.200, 0.500)  {2};
      \node [font=\tiny] at (-0.500, -0.100) {3};
      \node [font=\tiny] at (-0.200, 0.500)  {4};
      \node [font=\tiny] at (0.3, 1) {$W_2$};
    \end{scope}
    \begin{scope}[shift={(1.5,0)}]
      \node [atom, inner sep=1pt,ghost] (C2)  at (0, 0)    {$\mathrm{C}$};
      \coordinate (p2A) at ($(C2)+(0,0.4)$) {};
      \coordinate [] (p2R) at ($(p2A)+( 0.5,0.4)$);
      \coordinate [] (p2L) at ($(p2A)+(-0.5,0.4)$);
      \node [ctxn,ghost] (N2)  at ($(C2)+(0,-1.2)$) {$w$};
      \draw [ghost] (C2) to (N2);
      \node [text=gray,font=\tiny] at (0.100, 0.400)   {4};
      \node [text=gray,font=\tiny] at (-0.400, -0.100) {2};
      \node [text=gray,font=\tiny] at (-0.100, -0.400) {1};
      \node [text=gray,font=\tiny] at (0.400, -0.100)  {3};
      \node [text=gray,font=\tiny] at (-0.100, -0.7) {1};
      \node [font=\tiny] at (0, 1) {$W_1$};
      \node [text=gray,font=\tiny] at (0.500, -0.5) {$(W_3)$};
    \end{scope}
    \begin{scope}[shift={(3,0)}]
      \node [atom, inner sep=1pt,ghost] (C3)  at (0, 0) {$\mathrm{N}$};
      \coordinate (p3A) at ($(C3)+(0,0.4)$) {};
      \coordinate [] (p3R) at ($(p3A)+( 0.5,0.4)$);
      \coordinate [] (p3L) at ($(p3A)+(-0.5,0.4)$);
      \node [text=gray,font=\tiny] at (0.100, 0.400)   {4};
      \node [text=gray,font=\tiny] at (-0.400, -0.100) {2};
    \end{scope}
    %
    \node [font=\tiny] (LZ) at (-1, 0.8)  {$Z$};
    \node [font=\tiny] (LX) at (-1, 0)  {$X$};
    \node [text=gray,font=\tiny] (LY) at (4, 0.8)   {$Y$};
    \draw [in=180, out=0, looseness=1.50] (LZ) to (p1L);
    \draw [in=90, out=0, rounded corners] (p1L) to (p1AL);
    \draw (p1AL) to (C1);
    \draw []  (C1) to  (LX);
    \draw [in=180, out=0, looseness=1.50] (C1) to (p2L);
    \draw [in=90, out=0, rounded corners] (p2L) to (p2A);
    \draw (p2A) to (C2);
    \node[circle,minimum size=1pt,inner sep=1pt,fill=white] (p3) at (0.80, 0.7) {};
    \draw [in=0, out=180, looseness=1.50] (C2) to (p1R);
    \draw [in=90, out=180, rounded corners]  (p1R) to (p1AR);
    \draw [in=180, out=0, looseness=1.50,ghost] (C2) to (p3L);
    \draw [in=90, out=0, rounded corners,ghost] (p3L) to (p3A);
    \draw [ghost] (p3A) to (C3);
    \node[circle,minimum size=1pt,inner sep=1pt,fill=white] (p3) at (2.25, 0.7) {};
    \draw [in=0, out=180, looseness=1.50,ghost] (C3) to (p2R);
    \draw [in=90, out=180, rounded corners,ghost]  (p2R) to (p2A);
    \draw [in=0, out=180, looseness=1.50,ghost] (LY) to (p3R);
    \draw [in=90, out=180, rounded corners,ghost]  (p3R) to (p3A);
  }
  \!\!:
  \dblltype{W_2}{Y}{W_1}
  \multimap
  \dblltype{X}{Y}{Z}\notag\\[-12pt]
\label{eq:dbllist-diff-gctx-eg-vis}
\end{align}

This represents that the graph variable \(y[W_1,W_2,X,Z]\)
matches a subgraph that is an incomplete graph
that will become a doubly-linked list
if we supply a doubly-linked list at the end.

Formally, the graph variable $y$ 
can be type-annotated as
\begin{align}\label{eq:dbllist-diff-gctx-eg}
  \begin{array}{lll}
    y[W_1,W_2,X,Z]:                   \\
    \quad
    \LI{\mathit{dbllist}(W_2,Y,W_1)}{ \\
      \qquad \mathit{dbllist}(X,Y,Z)}{W_1,W_2,X,Z}.
  \end{array}
\end{align}

We define the free links of the linear implication type
$\LI{\seq{\tau_1}}{\tau_2}{\Xs}$ as
\[\fn(\LI{\seq{\tau_1}}{\tau_2}{\Xs}) \triangleq \{\Xs\}.\]
We discuss in \Cref{sec:li-free}
why linear implication types require the explicit notation \(\Xs\) of
free links.

For defining the typing rules for the linear implication type later,
we define link substitution for linear implication types as follows.

\begin{mydef}[Link Substitution, Extended]
  Given a substitution \(\sigma\),
  the substitution of a linear implication type is defined as
  \[\begin{array}[b]{l}
      (\LI{\tau_1, \dots, \tau_n}{\tau}{X_1, \dots, X_n})
      \sigma \triangleq \\
      \quad \LI{\tau_1 \sigma, \dots, \tau_n \sigma}{\tau\sigma}{
        X_1\sigma, \dots, X_n\sigma}.
    \end{array} \qedhere \]
\end{mydef}

We then present the newly added typing rules.
\begin{mydef}[Newly Added Typing Rules]
  The newly added typing rules that incorporate linear implication types
  are defined in \figref{table:revised-typing-rules}.
\end{mydef}

\begin{figure*}[tb]
  \centering
  \hrulefill\par\medskip

  \begin{prooftree}
    \AXC{}
    \RightLabel{\TyLIIntro{}}
    \UIC{$\Gamma
        \vdash_P
        (C(\Ys), \seq{W \bowtie U})
        : \LI{\taus}{\alpha(\Xs)}{\Zs}$}
  \end{prooftree}
  \raggedleft where
  $(\alpha (\Xs) \lto \nu \Zs.(C(\Ys), \seq{W \bowtie U}, \taus))\in P
    \text{ and }
    \{\Zs\} = \{\Ys\} \cup \mathit{fn}(\seq{W \bowtie U}).
    \hspace{120pt}$

  \bigskip

  \begin{prooftree}\footnotesize
    \AXC{$\Gamma \vdash_P T_1 : \LI{\tau_2, \seq{\tau_1}}{\tau_1}{\Xs}$}
    \AXC{$\Gamma \vdash_P T_2 : \LI{\seq{\tau_2}}{\tau_2}{\Ys}$}
    \RightLabel{$\TyLITrans{}$}
    \BIC{$\Gamma \vdash_P \nu \Zs.(T_1, T_2) :
        \LI{\seq{\tau_1},\seq{\tau_2}}{\tau_1}{\Ws}$}
  \end{prooftree}%
  \begin{center}
    where\quad
    $\begin{array}[t]{rl@{\qquad}rlll}
        \text{(i)}   & \{\Xs, \Ys\} \setminus \{\Zs\} = \{\Ws\},
                     &
        \text{(ii)}  & \{\Xs\} \cap \{\Ys\} \subseteq \fn(\tau_2),                                          \\
        \text{(iii)} & \{\Xs\} \cap \fn(\seq{\tau_2}) \subseteq \fn(\tau_2),
                     &
        \text{(iv)}  & \paren{\{\Ys\} \cup \fn(\seq{\tau_2})} \cap \fn(\seq{\tau_1}) \subseteq \fn(\tau_2), \\
        \text{(v)}   & \fn(\tau_1) \subseteq \{\Ws\} \cup \fn(\seq{\tau_1},\seq{\tau_2}),                   \\[2pt]
        \multicolumn{6}{l}{\text{%
            and $\seq{\tau_i}$ denotes $\tau_{i1}, \dots, \tau_{in}$
            and is different from $\tau_i$.}}
      \end{array}$
  \end{center}



  \bigskip

  \begin{center}
    \begin{minipage}[b]{0.9\columnwidth}
      \centering
      \begin{prooftree}
        \AXC{$\Gamma \vdash_P
            T : \LIempty{\tau}{\Ys}$}
        \RightLabel{\TyLIElimZ{}}
        \UIC{$\Gamma \vdash_P \nu\Zs.T : \tau$}
      \end{prooftree}
      \raggedleft where $\fn(T) = \{\Ys\}$
      and $\fn(T) \setminus \{\Zs\} = \fn(\tau)$.
    \end{minipage}
    \begin{minipage}[b]{0.9\columnwidth}
      \centering
      \begin{prooftree}
        \AXC{$\Gamma \vdash_P T : \tau$}
        \RightLabel{\TyLIIntroZ{}}
        \UIC{$\Gamma \vdash_P
            T : \LIempty{\tau}{\Ys}$}
      \end{prooftree}
      \raggedleft where $\fn(T) = \fn(\tau) = \{\Ys\}$.
    \end{minipage}
  \end{center}

  \hrulefill\par
  \caption{Newly Added Typing Rules.}
  \label{table:revised-typing-rules}
\end{figure*}

Let us explain the individual rules.

\TyLIIntro{} is intuitively read as: ``a graph (atom) $C(\Ys)$ has
the type which states that $C(\Ys)$ is obtained from a graph of type
$\alpha(\Xs)$ by removing graphs of types $\tau_1, \dots, \tau_n$.''
\TyLIIntro{} is designed in this form to smoothly relate production
rules and the linear implication type.  Indeed, the form of the linear
implication type that allows a multiset of types as its antecedent is
exactly motivated by this relation.

\TyLITrans{} states the transitivity of linear implication;
it states that two linear implications can be fused by cancelling $\tau_2$.
The free link condition (i) comes from the constraint that free links of
the subject and its type must have the same set of free links.
The remaining conditions (ii)–(v) are technical side constraints
whose detailed roles are explained in
Appendix~A.3 of the extended version~\cite{sano2025arxiv},
where we also clarify how and where each condition is employed.
\TyLITrans{} is thought of a generalisation of the standard elimination
rule of linear implication.

These introduction and elimination rules are accompanied by two `base
cases', namely \TyLIElimZ{} and \TyLIIntroZ{}, for handling linear
implication with empty antecedents.

The standard elimination rule of linear implication
could be represented as follows.
\begin{mydef}[Elimination of Linear Implication]
  ~ 
  \begin{prooftree}
    \AXC{$\begin{array}{@{}c@{}}
          \Gamma \vdash_P
          y[\Ys] : \LI{\tau_1, \dots, \tau_n}{\tau}{\Ys}
          \\
          \Gamma \vdash_P T_1 : \tau_1
          \quad \dots \quad
          \Gamma \vdash_P T_n : \tau_n
        \end{array}$}
    \RightLabel{\TyLIE{}}
    \UIC{$\Gamma
        \vdash_P \nu \Xs.(y[\Ys],
        T_1, \dots, T_n) : \tau$}
  \end{prooftree}
  where $\fn(\nu \Xs.(y[\Ys],T_1, \dots, T_n)) = \fn(\tau)$.
\end{mydef}

We note that this rule
is admissible and can be derived easily using \TyLITrans{},
\TyLIElimZ{}, and \TyLIIntroZ{}.
The constraint on free links,
$\fn(\nu \Xs.(y[\Ys],T_1, \dots, T_n)) = \fn(\tau)$,
is derived from the
free link conditions of \TyLITrans{}, \TyLIElimZ{}, and \TyLIIntroZ{}.


The key typing rule of $\lambda_{GT}$ is \TyProd{} of
\figref{table:typing-rules}.
This rule become admissible with the newly added rules of
\figref{table:revised-typing-rules}.
\begin{theorem}[Admissibility of \TyProd{}]\label{thm:typrod-admissibility}
  For any typing derivation in $\lambda_{GT}$,
  the same typing judgement can be derived without using \TyProd{}.
\end{theorem}
\begin{proof}
  \Figref{fig:ty-prod-admissible-eg} shows how \TyProd{} can be derived
  from \TyLIIntro{}, \TyLIIntroZ{}, \TyLITrans{}, \TyLIElimZ{} for any fixed $n$.
\end{proof}

\begin{figure*}[tb]
  \hrulefill

  \raggedright \TyProd{}:
  \begin{prooftree}\footnotesize
    \AXC{$\paren{\alpha (\Xs) \lto \nu
          \Zs.(C(\Ys), \seq{U \bowtie V}, \tau_1, \dots, \tau_n)} \in P$}
    \AXC{$\Gamma \vdash_P T_1 : \tau_1$}
    \AXC{$\dots$}
    \AXC{$\Gamma \vdash_P T_n : \tau_n$}
    \RightLabel{$\TyProd{}$}
    \QuaternaryInfC{$\Gamma \vdash_P
        \nu \Zs.(C(\Ys), \seq{U \bowtie V}, T_1, \dots, T_n) : \alpha (\Xs)$}
  \end{prooftree}

  \(\TyProd{}\) derived from other rules:
  \begin{prooftree}\scriptsize
    \AXC{$\paren{\alpha (\Xs) \lto \nu
          \Zs.(C(\Ys), \seq{U \bowtie V}, \tau_1, \dots, \tau_n)} \in P$}
    \RightLabel{$\TyLIIntro{}$}
    \UIC{$\Gamma \vdash_P
        (C(\Ys), \seq{U \bowtie V})
        : \LI{\tau_1,\tau_2,\dots,\tau_n}{\alpha(\Xs)}{\Us,\Vs,\Ys}$}
    \AXC{$\Gamma \vdash_P T_1 : \tau_1$}
    \RightLabel{$\TyLIIntroZ{}$}
    \UIC{$\Gamma \vdash_P T_1
        : \LIempty{\tau_1}{\Zs_1}$}
    \RightLabel{$\TyLITrans{}$}
    \BIC{$\mlc{\Gamma \vdash_P (C(\Ys), \seq{U \bowtie V},T_1)
          : \LI{\tau_2,\dots,\tau_n}{\alpha(\Xs)}{\Zs_1,\Us,\Vs,\Ys}
          \\
          \displaystyle\genfrac{}{}{0pt}{}{\ddots}{
            \raisebox{-1.8pt}{$\Gamma \vdash_P (C(\Ys), \seq{U \bowtie V},T_1,\dots,T_{n-1})
                : \LI{\tau_n}{\alpha(\Xs)}{\Zs_{n-1},\dots,\Zs_1,\Us,\Vs,\Ys}$}}
        }$}
    \AXC{\hspace*{-80pt}$\displaystyle\frac{\Gamma \vdash_P T_n : \tau_n}{\Gamma \vdash_P T_n
          : \LIempty{\tau_n}{\Zs_n}}\text{\TyLIIntroZ{}}$\hspace*{-43pt}}
    \RightLabel{$\TyLITrans{}$}
    \BIC{$\Gamma \vdash_P (C(\Ys), \seq{U \bowtie V},T_1, \dots,T_{n-1},T_n)
        : \LIempty{\alpha(\Xs)}{\Zs_n,\Zs_{n-1},\dots,\Zs_1,\Us,\Vs,\Ys}$}
    \RightLabel{$\TyLIElimZ{}$}
    \UIC{$\Gamma \vdash_P \nu \Zs.
        (C(\Ys), \seq{U \bowtie V},T_1,\dots, T_{n-1},T_n)
        : \alpha(\Xs)$}
  \end{prooftree}
  \hrulefill
  \caption{Admissibility of \TyProd{}.}
  \label{fig:ty-prod-admissible-eg}
\end{figure*}

We also modify the \TyCase{} rule
so that the consistency between the type expected by the pattern
and the type of the matching expression in a \textbf{case} expression
can be checked.
We introduce the revised typing rule for \textbf{case} expressions,
shown in \Cref{def:revised-ty-case}.

\begin{mydef}[Revised \TyCase{}]\label{def:revised-ty-case}
  \mbox{}
  \vspace{-10pt}
  \begin{prooftree}
    \def\extraVskip{1pt}
    \def\defaultHypSeparation{\hskip .1in}
    \AXC{$\begin{array}{r@{~}l}
          \Gamma                  & \vdash_P e_1 : \tau_1
          \\
          \Gamma, \CollectVars(T) & \vdash_P e_2 : \tau_2
        \end{array}$}
    \AXC{$\begin{array}{r@{~}l}
          \CollectVars(T) & \vdash_P T : \tau_1
          \\
          \Gamma          & \vdash_P e_3 : \tau_2
        \end{array}$}
    \RightLabel{\TyCase{}}
    \BIC{$\ml{\Gamma \vdash_P\shortstrut (\caseof{e_1}{T}{e_2}{e_3}) : \tau_2}$}
  \end{prooftree}%
  \vspace{-3pt}
\end{mydef}

\TyCase{} is the revised typing rule for case expressions.
This rule defines the type assignment for conditional expressions
that perform structural analysis on graph-based terms.
Given an expression \(e_1\) of type \(\tau_1\),
a pattern \(T\) that also has type \(\tau_1\),
and two branches \(e_2\) and \(e_3\),
the expression \(\caseof{e_1}{T}{e_2}{e_3}\) is
given
type \(\tau_2\).

\subsection{Free Links of Linear Implication Types}\label{sec:li-free}

In this section, we examine the need of the explicit notation of free
links \(\{\Xs\}\) in linear
implication types \(\LI{\tau_1}{\tau_2}{\Xs}\).

The free links of $\lambda$-abstraction atoms are distinct from the free
links of the input--output graph of a function.
For linear implication types, however, one might expect that the free
links can be inferred from the free links of their constituent types,
since some of them originate from the argument graph.
For example, consider the type annotation \eqref{eq:dbllist-diff-gctx-eg}.
The free links of the expression \(y[W_1,W_2,X,Z]\) is
\(\{W_1,W_2,X,Z\}\).
One may think that this is obtained as the symmetric difference of the
sets of free links of the two operands, i.e.,
$(\{W_2,Y,W_1\}\setminus \{X,Y,Z\}) \cup
  (\{X,Y,Z\}\setminus \{W_2,Y,W_1\}$,
which gives \(\{W_1,W_2,X,Z\}\) in this particular example.
However, it is not obvious whether this idea generalises to all
cases.


To illustrate this point, let \(T_0\) be a graph (or
a graph variable) such that
\(\Gamma \vdash_P T_0:\LI{\tau_1}{\tau_2}{\Xs}\) for some
\(\Gamma\) and \(P\).
That is, \(T_0\) represents a graph that will acquire type
\(\tau_2\) upon supplying a graph of type \(\tau_1\).
Let \(T_1\) be a graph of type \(\tau_1\),
and \(T_2\) be a graph \(\nu \Xs.(T_0,T_1)\)
satisfying \(\fn(\nu \Xs.(T_0,T_1))=\fn(T_2)\),
which, by the definition of \(T_0\), is given the type \(\tau_2\).

The free link sets of the expressions \(T_0\), \(T_1\), and
\(T_2 \triangleq \nu \Xs.(T_0,T_1)\)
an be classified as in
\tabref{table:revision-comparison},
with each category labelled from 1 to 8.
Note that the free links \(\Xs_8\) do not occur anywhere and that
the free links \(\Xs_7\) are not possible by definition.
The remaining six categories are visualised in
\figref{fig:li-freelinks-vis}.

\begin{table}[b]
  \caption{Free Links of Graphs Enumerated.}
  \centering
  \begin{tabular}{ccccl}
    \hline                                           \\[-6pt]
              &
    \(\mathit{fn}(T_0)\)
              &
    \(\mathit{fn}(T_1)\)
              &
    \(\mathit{fn}(T_2)\)
    \\
    \(\Xs_1\) & \(\in\)    & \(\in\)    & \(\in\)    \\
    \(\Xs_2\) & \(\in\)    & \(\in\)    & \(\notin\) \\
    \(\Xs_3\) & \(\in\)    & \(\notin\) & \(\in\)    \\
    \(\Xs_4\) & \(\in\)    & \(\notin\) & \(\notin\) \\
    \(\Xs_5\) & \(\notin\) & \(\in\)    & \(\in\)    \\
    \(\Xs_6\) & \(\notin\) & \(\in\)    & \(\notin\) \\
    \(\Xs_7\) & \(\notin\) & \(\notin\) & \(\in\)    \\
    \(\Xs_8\) & \(\notin\) & \(\notin\) & \(\notin\) \\[3pt]
    \hline
  \end{tabular}
  \label{table:revision-comparison}
\end{table}

\begin{figure}[tb]
  \centering
  \begin{tikzpicture}[scale=0.4]
    \draw[dashed] (-0.5,-0.5) circle [radius=3.5];

    \coordinate (X1) at (-1, 4);
    \coordinate (X2) at (-1, -3);
    \coordinate (X3) at (-5, 1);
    \coordinate (X4) at (-3, -1);
    \coordinate (X5) at (3, -3);
    \coordinate (X6) at (1, -3);

    \filldraw[black] (X1) circle (4pt) node[anchor=east] {\(\Xs_1\)};
    \filldraw[black] (X2) circle (4pt) node[anchor=east] {\(\Xs_2\)};
    \filldraw[black] (X3) circle (4pt) node[anchor=east] {\(\Xs_3\)};
    \filldraw[black] (X4) circle (4pt) node[anchor=north] {\(\Xs_4\)};
    \filldraw[black] (X5) circle (4pt) node[anchor=north] {\(\Xs_5\)};
    \filldraw[black] (X6) circle (4pt) node[anchor=north] {\(\Xs_6\)};

    \draw[] (0,0) -- (X3);
    \draw[] (0,0) -- (X4);
    \draw[] (0,0) -- (X5);
    \draw[] (0,0) -- (X6);
    \fill [white] (0,0) circle [radius=2.25];

    \draw[] (0, 2.25) -- (X1);
    \draw[] (-1, 1.83) -- (X1);

    \draw[] (0, -2.25) -- (X2);
    \draw[] (-1, -1.83) -- (X2);

    \coordinate (p1) at (-0.9,1.8);
    \coordinate (p2) at (-0.9,-1.8);

    \draw[] (p1)
    to [out=160,in=45] (-1.59099025767,1.59099025767)
    arc (135:225:2.25)
    to [out=-45,in=-160] (p2)
    to [out=180-160,in=180+160] cycle;

    \draw[fill=black!10]
    (0,2.25)
    to [out=180,in=180-60] (-0.25,1.8)
    to [out=360-60,in=90] (0.35,0)
    to [out=180+90,in=60] (-0.25,-1.8)
    to [out=180+60,in=180] (0,-2.25)
    arc (-90:90:2.25)
    --cycle;

    \node at (-1,0) {\(T_0\)};
    \node at (1.5,0) {\(T_1\)};
    \node at (-2.5,1.5) {\(T_2\)};
  \end{tikzpicture}
  \caption{Free Links of Graphs Visualised.}
  \label{fig:li-freelinks-vis}
\end{figure}

Among the free links, \(\Xs_2\),
\(\Xs_4\) and \(\Xs_6\) may be ignored as they
are absorbed by $\nu \Xs$.
It remains crucial to distinguish between \(\Xs_1\) and \(\Xs_5\), where
\(\Xs_1\) belongs to \(\fn(T_0)\), whereas \(\Xs_5\) does not.
However, using only the free link sets of \(T_1\) and \(T_2\),
it is not possible to differentiate between \(\Xs_1\) and \(\Xs_5\).
Consequently, even when \(\fn(\tau_1) = \fn(T_1)\) and \(\fn(\tau_2) =
\fn(T_2)\) hold,
the free link set of \(T_0: \tau_1 \multimap \tau_2\)
cannot be determined solely from $\tau_1$ and $\tau_2$.
To resolve this issue,
an explicit annotation of the free link set  \(\Xs \triangleq
\fn(T_0)\)
is incorporated into the type, denoted as \(\LI{\tau_1}{\tau_2}{\Xs}\).


The above observation is in contrast with
a typing framework of LMNtal (without hyperlinks)
that incorporates the notion of \emph{difference types}
\cite{yamamoto2024}
that can be thought of as a counterpart of to our linear implication
types for incomplete hypergraphs.
In \cite{yamamoto2024}, such explicit annotation is unnecessary
because the language enforces a structural constraint
whereby a link occurring exactly twice within a term is classified as
a local link, which means that any link that is a free link in both
\(T_0\) and \(T_1\) must necessarily be a local link in \(T_2\).
This property allows the set of free links in \(T_0\)
to be derived solely from the free link sets of \(T_1\) and \(T_2\)
as their symmetric difference.

\subsection{Typing Examples}\label{sec:sec3-limp-example}

In this section, we present an application of the linear implication type.
We first show the typing of a segment of a doubly-linked list,
and then a program that removes the last element of a doubly-linked list.

\subsubsection{Doubly-Linked List Segments}

First we revisit the production rules for doubly-linked lists in
$\Pdl$ \eqref{eq:prod-rules-eg} (\Cref{sec:MotivatingExamples}):
\begin{align}
  \begin{array}{@{}lll}
    \mathit{dbllist}(X,Y,Z)  \lto
    \Nil(X, Z), Y \bowtie Z                     \\[1mm]
    \quad \mid \nu W_1 W_2.(\Cons(W_1,X,W_2,Z), \\
    \qquad \mathit{nat}(W_1),\mathit{dbllist}(Z,Y,W_2)).
  \end{array}
\end{align}

Now we consider the typing of a fragment of a doubly-linked list that consists of two adjacent $\Cons{}$ nodes with $\mathit{nat}$ elements:
\begin{equation}
  \begin{array}{l}
    \nu W_1. (\Cons(W_1, X, W_2, Z), 3(W_1)), \\
    \quad\nu W_3. (\Cons(W_3, Z, W_4, W_2), 5(W_3)).
  \end{array}
\end{equation}

This graph has the following type under $\Pdl$, as shown in \figref{fig:diff-dbllist-typing-derivation-eg}:
\[\LI{\mathit{dbllist}(W_2,Y,W_4)}{\mathit{dbllist}(X,Y,Z)}{W_4,W_2,X,Z}.\]

\begin{figure*}[tb]\scriptsize
  \hrulefill

  \(\psi(1) \triangleq\)
  \begin{prooftree}
    \AXC{}
    \RightLabel{\TyLIIntro{}}
    \UIC{$\ml{\vdash_{\Pdl} \Cons(W_1,X,W_2,Z) \\
          : \LI{\mathit{nat}(W_1), \mathit{dbllist}(Z,Y,W_2)}{\mathit{dbllist}(X,Y,Z)}{W_1,X,W_2,Z}}$}
    \AXC{}
    \RightLabel{\TyLIIntro{}}
    \UIC{$\vdash_{\Pdl} 3(W_1) : \LIempty{\mathit{nat}(W_1)}{W_1}$}
    \RightLabel{\TyLITrans{}}
    \BIC{$\vdash_{\Pdl} \nu W_1. (\Cons(W_1,X,W_2,Z),3(W_1)) :
        \LI{\mathit{dbllist}(Z,Y,W_2)}{\mathit{dbllist}(X,Y,Z)}{X,W_2,Z}$}
  \end{prooftree}

  \medskip

  \(\psi(2) \triangleq\)
  \begin{prooftree}
    \AXC{}
    \RightLabel{\TyLIIntro{}}
    \UIC{$\ml{\vdash_{\Pdl} \Cons(W_3,Z,W_4,W_2) \\
          : \LI{\mathit{nat}(W_3), \mathit{dbllist}(W_2,Y,W_4)}{\mathit{dbllist}(Z,Y,W_2)}{W_3,Z,W_4,W_2}}$}
    \AXC{}
    \RightLabel{\TyLIIntro{}}
    \UIC{$\vdash_{\Pdl} 5(W_3)
        : \LIempty{\mathit{nat}(W_3)}{W_3}$}
    \RightLabel{\TyLITrans{}}
    \BIC{$\vdash_{\Pdl} \nu W_3. (\Cons(W_3,Z,W_4,W_2),5(W_3))
        : \LI{\mathit{dbllist}(W_2,Y,W_4)}{\mathit{dbllist}(Z,Y,W_2)}{Z,W_4,W_2}$}
  \end{prooftree}

  \medskip

  \begin{prooftree}
    \AXC{$\psi(1)$}
    \AXC{$\psi(2)$}
    \RightLabel{\TyLITrans{}}
    \BIC{$\ml{\vdash_{\Pdl}
          \nu W_1. (\Cons(W_1, X, W_2, Z), 3(W_1)),
          \nu W_3. (\Cons(W_3, Z, W_4, W_2), 5(W_3)) \\
          : \LI{\mathit{dbllist}(W_2,Y,W_4)}{\mathit{dbllist}(X,Y,Z)}{W_4,W_2,X,Z}}$}
  \end{prooftree}

  \hrulefill
  \caption{Typing Derivation of a Fragment of a Doubly-Linked List.}
  \label{fig:diff-dbllist-typing-derivation-eg}
\end{figure*}

Next, we consider the type of a complete doubly-linked list adding a
$\Nil$ node to the above fragment:
\begin{equation}
  \begin{array}{l}
    \nu W_2 W_4. (\nu W_1. (\Cons(W_1, X, W_2, Z), 3(W_1)), \\
    \quad\nu W_3. (\Cons(W_3, Z, W_4, W_2), 5(W_3)),        \\
    \qquad\Nil(W_2, W_4), Y \bowtie W_4).
  \end{array}
\end{equation}

This graph has type $\mathit{dbllist}(X,Y,Z)$.
We give the typing derivation of the graph in \figref{fig:dbllist-typing-derivation-eg}.
Note that we can type a complete graph without missing parts in this way without using \TyProd{},
which is consistent with the admissibility of \TyProd{} established in \Cref{thm:typrod-admissibility}.


\begin{figure*}[tb]
  \hrulefill\scriptsize

  \begin{prooftree}
    \AXC{(\Cref{fig:diff-dbllist-typing-derivation-eg})}
    \UIC{$\ml{\vdash_{\Pdl} G \\
          : \LI{\mathit{dbllist}(W_2,Y,W_4)}{\mathit{dbllist}(X,Y,Z)}{W_4,W_2,X,Z}}$}
    \AXC{}
    \RightLabel{\TyLIIntro{}}
    \UIC{$\ml{\vdash_{\Pdl} \Nil(W_2, W_4), Y \bowtie W_4 \\
          : \LIempty{\mathit{dbllist}(W_2,Y,W_4)}{W_2,Y,W_4}}$}
    \RightLabel{\TyLITrans{}}
    \BIC{$\vdash_{\Pdl} \nu W_2 W_4. (G, \Nil(W_2, W_4), Y \bowtie W_4)
        : \LIempty{\mathit{dbllist}(X,Y,Z)}{X,Y,Z}$}
    \RightLabel{\TyLIElimZ{}}
    \UIC{$\vdash_{\Pdl} \nu W_2 W_4. (G, \Nil(W_2, W_4), Y \bowtie W_4) : \mathit{dbllist}(X,Y,Z)$}
  \end{prooftree}
  \qquad where $G \triangleq \nu W_1. (\Cons(W_1, X, W_2, Z), 3(W_1)),
    \quad\nu W_3. (\Cons(W_3, Z, W_4, W_2), 5(W_3))$.

  \hrulefill
  \caption{Typing Derivation of a Function over Doubly-Linked Lists.}
  \label{fig:dbllist-typing-derivation-eg}
\end{figure*}

\subsubsection{Pattern Matching at the List Tail}

In this section, we present an application of the linear implication type
by assigning types to a program that removes the last element of a doubly-linked list.

First, consider the expression \(\eDLpop\)
\eqref{eq:dbllist-removal-eg} (\Cref{sec:MotivatingExamples1}).
In this case, we refine the case expression template
by explicitly adding type annotations to the graph variable \(y\) and \(z\),
illustrated as follows: 
\begin{align}\label{eq:dbllist-removal-annot-eg}
  \eDLpop' \triangleq
  \begin{array}{l}
    (\lambda\,x[X,Y,Z]: \mathit{dbllist}(X,Y,Z).        \\
    \quad \mathbf{case}\; x[X,Y,Z]\; \mathbf{of}\;      \\
    \qquad \nu W_1 W_2 W_3. (                           \\
    \quad\qquad y[W_1,W_2,X,Z]:                         \\
    \qquad\qquad \LI{\mathit{dbllist}(W_2,Y,W_1)}{      \\
      \qquad\qquad\quad  \mathit{dbllist}(X,Y,Z)}{W_1,W_2,X,Z}
    ,                                                   \\
    \quad\qquad \Cons(W_3,W_2,Y,W_1),                   \\
    \quad\qquad z[W_3]: \mathit{nat}(W_3),              \\
    \quad\qquad \Nil(W_1,Y)                             \\
    \qquad) \to \nu W.(y[Y,W,X,Z], \Nil(W,Y))           \\
    \quad\: \texttt{|}\;\mathbf{otherwise} \to x[X,Y,Z] \\
    )(W).                                               \\[-13pt]
  \end{array}
\end{align}

We now demonstrate that, given \( \Pdl \) \eqref{eq:prod-rules-eg}
and \( \eDLpop' \) \eqref{eq:dbllist-removal-annot-eg},
we can derive the typing judgement:
\begin{align}\label{eq:dbllist-removal-type-judge-eg}
  \emptyset \vdash_{\Pdl} \eDLpop' : (\mathit{dbllist}(X,Y,Z) \to \mathit{dbllist}(X,Y,Z))(W).
\end{align}

The detailed derivation of this typing judgement is illustrated
in \figref{fig:dbllist-removal-typing-derivation-eg}.

For $\eDLpop'$
\eqref{eq:dbllist-removal-annot-eg},
the type environment obtained by collecting all graph variables and
their corresponding type annotations is given as follows:
\begin{align}
  \Gamma' \triangleq
  \begin{array}{lll}
    y[W_1,W_2,X,Z]:                               \\
    \quad \LI{\mathit{dbllist}(W_2,Y,W_1)}{       \\
    \qquad \mathit{dbllist}(X,Y,Z)}{W_1,W_2,X,Z}, \\
    z[W_3]: \mathit{nat}(W_3).
  \end{array}
\end{align}

In the application of the \TyCase{} rule, we employ the type environment
\(\Gamma \triangleq x[X,Y,Z]: dbllist(X,Y,Z)\)
along with \(\Gamma'\), to establish the type judgement.

The derivation of \figref{fig:dbllist-removal-typing-derivation-eg}
proceeds as follows:
\(\varphi(1)\) and \(\varphi(4)\) are derived directly using the \TyVar{} rule.
\(\varphi(2)\) employs the newly introduced \TyLIE{} rule
for the elimination of linear implication types,
ensuring consistency in the pattern matching process within the case expression.
\(\varphi(3)\) verifies
the type of the reconstructed graph after pattern matching,
utilising the graph variable \(y\).

\begin{figure*}[tb]
  \hrulefill
  \def\ScoreOverhang{0pt}
  \def\defaultHypSeparation{\hskip .05in}

  \(\varphi(3) \triangleq\)
  \begin{prooftree}
    \footnotesize
    \AXC{}
    \RightLabel{\TyVar{}}
    \UIC{\(\begin{array}{lll}
        (\Gamma, {\Gamma'}) \vdash_{\Pdl}
        y[W_1,W_2,X,Z] \\
        : \LI{\mathit{dbllist}(W_2,Y,W_1)}{\mathit{dbllist}(X,Y,Z)}{W_1,W_2,X,Z}
      \end{array}\)
      \hspace{-1cm}
    }
    \AXC{}
    \RightLabel{\TyProd{}}
    \UIC{\(\begin{array}{lll}
        (\Gamma, \Gamma') \vdash_{\Pdl} \Nil(X,Z), Y \bowtie Z \\
        : \mathit{dbllist}(X,Y,Z)
      \end{array}\)}
    \RightLabel{\TyAlpha{}}
    \UIC{\(\begin{array}{lll}
        (\Gamma, \Gamma') \vdash_{\Pdl}
        \Nil(W_2,W_1), Y \bowtie W_1 \\
        : \mathit{dbllist}(W_2,Y,W_1)
      \end{array}\)}
    \RightLabel{\textbf{\TyLIE{}}}
    \BIC{\((\Gamma, \Gamma') \vdash_{\Pdl}
      \nu W_1 W_2.(
      y[W_1,W_2,X,Z],
      \Nil(W_2,W_1), Y \bowtie W_1)
      : \mathit{dbllist}(X,Y,Z)\)}
    \RightLabel{\TyCong{}}
    \UIC{\((\Gamma, \Gamma') \vdash_{\Pdl}
      \nu W.(y[Y,W,X,Z], \Nil(W,Y)) : \mathit{dbllist}(X,Y,Z)\)}
  \end{prooftree}

  \medskip

  \(\varphi'(2) \triangleq\)
  \begin{prooftree}
    \footnotesize
    \AXC{}
    \RightLabel{\TyVar{}}
    \UIC{\(\Gamma' \vdash_{\Pdl} z[W_1] : \mathit{nat}(W_1)\)}
    \AXC{}
    \RightLabel{\TyProd{}}
    \UIC{\(\Gamma' \vdash_{\Pdl}
      \Nil(X,Z), Y \bowtie Z : \mathit{dbllist}(X,Y,Z)\)}
    \RightLabel{\TyAlpha{}}
    \UIC{\(\Gamma' \vdash_{\Pdl}
      \Nil(Z,W_2), Y \bowtie W_2 : \mathit{dbllist}(Z,Y,W_2)\)}
    \RightLabel{\TyProd{}}
    \BIC{\(\Gamma' \vdash_{\Pdl}
    \nu W_1 W_2.(\Cons(W_1,X,W_2,Z),z[W_1],\Nil(Z,W_2),Y \bowtie W_2):
    \mathit{dbllist}(X,Y,Z)\)}
    \RightLabel{\TyAlpha{}}
    \UIC{\(
    \Gamma' \vdash_{\Pdl}
    \nu W_3W_4.(\Cons(W_3,W_2,W_4,W_1),z[W_3],
    \Nil(W_1,W_4), Y \bowtie W_4) : \mathit{dbllist}(W_2,Y,W_1)
    \)}
  \end{prooftree}

  \medskip

  \(\varphi(2) \triangleq\)
  \begin{prooftree}
    \footnotesize

    \AXC{}
    \RightLabel{\TyVar{}}
    \UIC{\(\begin{array}{lll}
        \Gamma'\{y[W_1, W_2, X, Z]:
        \LI{\mathit{dbllist}(W_2,Y,W_1)}{%
          \mathit{dbllist}(X,Y,Z)}{W_1,W_2,X,Z)}
        \vdash_{\Pdl} \\
        y[W_1, W_2, X, Z]:
        \LI{\mathit{dbllist}(W_2,Y,W_1)}{%
          \mathit{dbllist}(X,Y,Z)}{W_1,W_2,X,Z}
      \end{array}\)}

    \AXC{\(\begin{array}{c}\mbox{}\\\varphi'(2)\end{array}\)}
    \RightLabel{\textbf{\TyLIE{}}}
    \BIC{\(\begin{array}{lll}
        \Gamma'\{y[W_1, W_2, X, Z]:
        \LI{\mathit{dbllist}(W_2,Y,W_1)}{\mathit{dbllist}(X,Y,Z)}{W_1,W_2,X,Z)}
        \vdash_{\Pdl} \\
        \quad \nu W_1 W_2. (y[W_1,W_2,X,Z],
        \nu W_3W_4.(\Cons(W_3,W_2,W_4,W_1),z[W_3],
        \Nil(W_1,W_4), Y \bowtie W_4))
        : \mathit{dbllist}(X,Y,Z)
      \end{array}\)}

    \RightLabel{\TyCong{}}
    \UIC{\(\Gamma' \vdash_{\Pdl}
      \nu W_1 W_2 W_3. (
      y[W_1,W_2,X,Z],
      \Cons(W_3,W_2,Y,W_1),
      z[W_3],
      \Nil(W_1,Y))
      : \mathit{dbllist}(X,Y,Z)\)}
  \end{prooftree}

  \medskip

  \(\varphi(1) \triangleq \varphi(4) \triangleq\)
  \begin{prooftree}
    \AXC{}
    \RightLabel{\TyVar{}}
    \UIC{\(x[X,Y,Z]: \mathit{dbllist}(X,Y,Z) \vdash_{\Pdl}
      x[X,Y,Z]: \mathit{dbllist}(X,Y,Z)\)}
  \end{prooftree}

  \medskip

  \begin{prooftree}
    \AXC{\(\varphi(1)\)}
    \AXC{\(\varphi(2)\)}
    \AXC{\(\varphi(3)\)}
    \AXC{\(\varphi(4)\)}
    \RightLabel{\textbf{\TyCase{}}}
    \QuaternaryInfC{\(\Gamma \vdash_{\Pdl} (\caseof{e_1}{T}{e_2}{e_3}) : \tau_2\)}
    \RightLabel{\TyArrow{}}
    \UIC{\(\emptyset \vdash_{\Pdl} e :
      (\mathit{dbllist}(X,Y,Z) \to \mathit{dbllist}(X,Y,Z))(W)\)}
  \end{prooftree}

  where

  \begin{itemize}
    \item
          \(\Gamma \triangleq x[X,Y,Z]: \mathit{dbllist}(X,Y,Z)\)
          \qquad
          \(e_1 \triangleq e_3 \triangleq x[X,Y,Z]\)
          \qquad
          \(\tau_1 \triangleq \tau_2 \triangleq \mathit{dbllist}(X,Y,Z)\)
          \qquad
          \(e_2 \triangleq \nu W.(y[Y,W,X,Z], \Nil(W,Y))\)

    \item
          \(\Gamma' \triangleq
          y[W_1,W_2,X,Z]
          : \LI{\mathit{dbllist}(W_2,Y,W_1)}{\mathit{dbllist}(X,Y,Z)}{W_1,W_2,X,Z},
          z[W_3]: \mathit{nat}(W_3)
          \)

    \item
          \(T \triangleq
          \nu W_1 W_2 W_3. (
          y[W_1,W_2,X,Z]: \dots,
          \Cons(W_3,W_2,Y,W_1),
          z[W_3]: \mathit{nat}(W_3),
          \Nil(W_1,Y))
          \)

    \item
          \(e \triangleq (\lambda\,x[X,Y,Z]:\mathit{dbllist}(X,Y,Z).\,
          \caseof{e_1}{T}{e_2}{e_3})(W)\)
  \end{itemize}

  \hrulefill
  \caption{Typing Derivation of the Typing Judgement \eqref{eq:dbllist-removal-type-judge-eg}
    (Removing the Last Element from a Doubly-Linked List).}
  \label{fig:dbllist-removal-typing-derivation-eg}
\end{figure*}

The typing relation \eqref{eq:lltree-typing-relation-eg}
for the expression $\eLLT$ that manipulates leaf-linked trees
can also be derived straightforwardly,
in a manner similar to \figref{fig:dbllist-removal-typing-derivation-eg}.

\subsection{Soundness of the Type System with Linear Implication Types}\label{sec:sec3-soundness}

This section proves the type preservation and progress properties
of $\lambda_{GT}$'s type system with linear implication types.



For simplicity, we assume that every graph variable name has
its own arity;
i.e., we disallow any graph variable name to occur with different
arities (and stand for different variables).


\begin{lemma}[Weakening]\label{lemma-weakening}
  If \(\Gamma \vdash_P e : \tau\) and
  \(x[\Xs] \not\in \mathrm{dom}(\Gamma)\), then
  \(\Gamma, x[\Xs]: \tau' \vdash_P e : \tau\).
\end{lemma}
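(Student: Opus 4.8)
The plan is to argue by structural induction on the derivation of $\Gamma \vdash e : \tau$, tacitly assuming that $x[\Xs]:\tau'$ is a well-formed binding (so $\fn(\tau') = \{\Xs\}$ with $\Xs$ pairwise distinct) so that $\Gamma, x[\Xs]:\tau'$ is again a legal typing environment. In every case the induction hypothesis is applied to the premises and the same rule instance is reassembled; the work is entirely in checking that side conditions survive the enlargement of $\Gamma$ and that no variable capture occurs.

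In the base case \TyVar{}, the hypothesis is that some binding $y[\Ys]:\tau$ lies in $\Gamma$, which still holds in $\Gamma, x[\Xs]:\tau'$, so the identical instance of \TyVar{} yields the conclusion. For the rules \TyApp{}, \TyAlpha{}, \TyCong{}, \TyProd{}, \TyLIIntro{}, \TyLITrans{}, \TyLIElimZ{}, and \TyLIIntroZ{}, the typing environment appearing in all premises coincides with the one in the conclusion, so I would apply the induction hypothesis to each premise to replace $\Gamma$ by $\Gamma, x[\Xs]:\tau'$ throughout, and then re-apply the rule. All of the side conditions of these rules --- the freshness $Y \notin \fn(T)$ of \TyAlpha{}, the congruence $T \equiv T'$ of \TyCong{}, the production-rule membership and the link equations of \TyProd{}, \TyLIIntro{}, and \TyLITrans{}, and the free-link equalities of \TyLIElimZ{} and \TyLIIntroZ{} --- refer only to links, types, and the production set $P$, never to $\Gamma$, and hence remain valid verbatim.

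The only cases needing genuine care are the two binding rules, \TyArrow{} and \TyCase{}. In \TyArrow{} the relevant premise has the form $\Gamma, y[\Ys]:\tau_1 \vdash e' : \tau_2$ for the $\lambda$-bound variable $y[\Ys]$, and in \TyCase{} the middle premise is $\Gamma, \CollectVars(T) \vdash e_2 : \tau_2$, where $\CollectVars(T)$ collects the type-annotated graph variables bound by the pattern $T$. To apply the induction hypothesis to these premises with the new binding $x[\Xs]:\tau'$, I need $x[\Xs]$ to be outside the domain of the extended environment, i.e.\ $x$ must differ from $y$ (resp.\ from every variable in $\CollectVars(T)$). This is secured by $\alpha$-renaming the bound graph variables away from $x$ --- legitimate since they are binders --- using the paper's standing conventions that every graph variable name has a fixed arity and that the names occurring in a typing environment are pairwise distinct; after this renaming the induction hypothesis applies and the rule can be reassembled (with the induction hypothesis used unchanged on the remaining premises $e_1$ and $e_3$ of \TyCase{}). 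The main obstacle is precisely this capture-avoidance step for \TyCase{}, because the pattern simultaneously introduces several annotated graph variables into $e_2$; once the Barendregt-style freshness convention for those bound names is in force, however, the argument is routine.
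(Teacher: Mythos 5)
Your proof is correct and follows essentially the same route as the paper's: induction on the typing derivation, with the fresh binding absorbed trivially in the \TyVar{} case and the same rule instance reassembled everywhere else. The only difference is that you spell out the capture-avoidance for the binding rules \TyArrow{} and \TyCase{} (via $\alpha$-renaming under the paper's distinct-name conventions), a point the paper's proof passes over silently but which is handled exactly as you describe.
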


\begin{proof} By induction on the derivation tree of the typing relation
  \(\Gamma \vdash_P e : \tau\).
  For a derivation tree consisting of a single \TyVar{}, augmenting
  the environment with a fresh binding does not invalidate the
  derivation.  For all other derivation trees, the final step of the
  derivation (with a rule other than \TyVar{}) does not depend on the
  content of the environment $\Gamma$, which can therefore be augmented.
\end{proof}

The following Substitution Lemma plays the key role in establishing
the preservation property.

\begin{lemma}[Substitution Lemma]\label{lemma-substitution-lemma}
  If \(\Gamma, x[\Xs]:\tau' \vdash_P e : \tau\) and
  \(\Gamma \vdash_P G : \tau'\), then
  \(\Gamma \vdash_P e[G / x[\Xs]] : \tau\).
\end{lemma}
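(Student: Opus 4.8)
The plan is to prove the statement by structural induction on the derivation of $\Gamma, x[\Xs]:\tau' \vdash e : \tau$, distinguishing cases on the last rule applied, and invoking \Cref{lemma-weakening} where the environment grows. Throughout I may assume, by the bound-variable convention and the requirement that graph variable names in an environment be pairwise distinct, that $x$ differs from every other graph variable name in scope, and that the capture-avoiding substitution $e[G/x[\Xs]]$ is well defined: indeed $\fn(G) = \fn(\tau') = \{\Xs\}$ by the Typing Relation and Typing Environment definitions, so the side condition on graph substitution is met.

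The genuine base case is \TyVar{}. If the variable typed is some $y[\cdot]$ with $y \neq x$, then $e[G/x[\Xs]] = e$ and the binding $y[\cdot]$ still lies in $\Gamma$, so the same \TyVar{} instance applies. If instead $e = x[\Zs]$ and $\tau = \tau'\angled{\Zs/\Xs}$, then $e[G/x[\Xs]] = G\angled{\Zs/\Xs}$, and from $\Gamma \vdash G : \tau'$ I obtain $\Gamma \vdash G\angled{\Zs/\Xs} : \tau'\angled{\Zs/\Xs} = \tau$ by a chain of \TyAlpha{} applications realising the bijective link renaming $\Xs \mapsto \Zs$, routing through fresh intermediate link names so that the side condition $Y \notin \fn(\cdot)$ is respected at each step.

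Most inductive cases are immediate: substitution distributes over the subexpressions, the induction hypothesis is applied to each premise, and the same rule is reapplied. In \TyArrow{} and \TyCase{} the premises carry an enlarged environment ($\Gamma, x[\Xs]:\tau', y[\Ys]:\tau_1$, respectively $\Gamma, x[\Xs]:\tau', \CollectVars(T)$); viewing the environment as a set, I apply the induction hypothesis with the larger environment in the role of $\Gamma$, after first lifting $\Gamma \vdash G : \tau'$ to that environment by \Cref{lemma-weakening}. \TyApp{}, \TyProd{}, \TyLITrans{}, \TyLIElimZ{}, \TyLIIntroZ{} are handled identically (their link-side conditions involve only types and are unaffected by substitution), and \TyLIIntro{} is trivial since its subject $(C(\Ys), \seq{W \bowtie U})$ contains no graph variable. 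Wherever a binder $\nu\Zs$ is present (\TyProd{}, \TyLITrans{}, \TyLIElimZ{}) I first alpha-rename $\Zs$ away from $\fn(G)$ so the substitution does not capture.

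The two cases that require a supporting lemma are \TyCong{} and \TyAlpha{}, where the conclusion is obtained from the premise by, respectively, structural congruence and a link renaming. For \TyCong{} I use that $\equiv$ is a congruence with respect to graph substitution, i.e.\ $T \equiv T'$ implies $T[G/x[\Xs]] \equiv T'[G/x[\Xs]]$, after which the induction hypothesis and \TyCong{} close the case. For \TyAlpha{} I use that link renaming commutes with graph substitution, $(T[G/x[\Xs]])\angled{Y/X} = (T\angled{Y/X})[G/x[\Xs]]$ (or at worst $\equiv$, then \TyCong{}), which holds because $Y$ is fresh for $T$ and $\fn(G) = \{\Xs\}$; note also that \TyAlpha{}'s freshness side condition survives the induction step since $\fn(T[G/x[\Xs]]) \subseteq \fn(T)$. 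I expect the main obstacle to be exactly this bookkeeping — formulating and discharging the commutation of graph substitution with link renaming and with $\equiv$, together with the iterated-\TyAlpha{} argument in the \TyVar{} case — all routine in spirit, but the place where capture-avoidance details must be made fully precise.
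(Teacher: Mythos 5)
Your proposal is correct and follows essentially the same route as the paper's proof: induction on the typing derivation with case analysis on the last rule, weakening to handle enlarged environments, compatibility of $\equiv$ with graph substitution for \TyCong{}, and the commutation of link renaming with graph substitution (proved via the graph-variable base case, using $\fn(G)=\{\Xs\}$) for \TyAlpha{} and the \TyVar{} case. The only differences are presentational (you make the weakening step and the iterated-\TyAlpha{} renaming explicit, which the paper leaves largely implicit).
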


The proof technique is standard, i.e.,
by induction on the derivation tree of the typing relation and case
analysis by the typing rule used in the final step, but
the handling of two kinds of substitution (graph substitution and
link substitution), structural congruence, and so on, needs special care.
We postpone its complete proof to Appendix \ref{sec:app-soundness}.

Before proving the preservation properties, we
we show type preservation of evaluation contexts.
\begin{lemma}[Evaluation Context Preservation]\label{ECP}
  If
  \(\Gamma \vdash_P E[e] : \tau\), there exists \(\tau'\) such that
  \(\Gamma \vdash_P e : \tau'\), and for all \(e'\) such that
  \(\Gamma \vdash_P e' : \tau'\), \(\Gamma \vdash_P E[e'] : \tau\) holds.
\end{lemma}

\begin{proof}
  Based on straightforward induction on the structure of evaluation contexts $E$.
  If \(E = [\,]\), the claim holds trivially.
  Otherwise, i.e., if
  $E[e]$ is $\textbf{case}\ E'[e] \ \textbf{of}\ \dots$ or
  \((E'[e]\,e_2)\) or \((G\,E'[e])\),
  $E'[e]$ and $E'[e']$ have the same type by the induction hypothesis, and
  the final step of the derivation tree with \TyCase{} or \TyApp{}
  establishes \(\Gamma \vdash_P E[e'] : \tau\).
\end{proof}

\begin{theorem}[Preservation / Subject Reduction]\label{theorem-preservation-subject-reduction}
  If \(\Gamma \vdash_P e : \tau\) and \(e \reducesp{P} e'\), then
  \(\Gamma \vdash_P e' : \tau\).
\end{theorem}

\begin{proof}
  By \figref{table:lgt-reduction}, without loss of generality,
  we can assume that $e$ is of the form
  \(E[e_0]\), where (i) \(E\) is an evaluation context and
  (ii) \(e_0\) is
  a redex such that \(e_0 \reducesp{P} e_0'\) for some \(e_0'\),
  which implies \(E[e_0] \reducesp{P} E[e_0']\).
  We prove
  \[
    \Gamma \vdash_P e_0 : \tau_0 \implies \Gamma \vdash_P e_0' : \tau_0
  \]
  by case analysis on the rule used for reduction:
  \RdBetaD{}, \RdCaseMatchD{}, or \RdCaseOtherD{}.

  \noindent
  \textbf{Case \textsf{Rd-}\(\boldsymbol{\beta}\)\textsf{-D}}:
  The reduction is of the following form.
  \begin{prooftree}
    \AXC{$G_1 \equiv (\lambda\, x [\Xs].e) (\Ys)$}
    \AXC{\(\fn(G_2) = \{\Xs\}\)}
    \RightLabel{\RdBetaD{}}
    \BIC{\(E[G_1\ G_2] \reducesp{P} E[e [G_2 / x[\Xs]]]\)}
  \end{prooftree}
  Since an application can be typed only with \TyApp{},
  the typing of the subject \(e_0\) must use \TyApp{}, whose premises are
  \begin{itemize}
    \item
          \(\Gamma \vdash_P G_1 : (\tau_1 \to \tau_0)(\Ys)\)
          and
    \item
          \(\Gamma \vdash_P G_2 : \tau_1\).
  \end{itemize}
  In the derivation of
  \(\Gamma \vdash_P G_1 : (\tau_1 \to \tau)(\Ys)\),
  \TyArrow{} must be used, whose premise is
  \(\Gamma, x[\Xs]:\tau_1 \vdash_P e_1 : \tau_0\).
  With this and the substitution lemma, we have
  \[
    \Gamma \vdash_P e_1[G_2/x[\Xs]] : \tau_0.
  \]%

  \noindent
  \textbf{Case \RdCaseMatchD{}:}
  The reduction is of the form
  %
  \vspace{-4pt}
  \begin{prooftree}
    \def\ScoreOverhang{5pt}
    \AXC{$G \equiv T\thetas
      \land
      \bigwedge_{i} \paren{\emptyset \vdash_P x_i[\Xs_i]\thetas: \sigma_i}$}
    \RightLabel{\RdCaseMatchD{}}
    \UIC{$\begin{array}{lll}
          (\caseof{G}{T}{e_1}{e_2}) \\
          \reducesp{P} e_1\thetas
        \end{array}$}
  \end{prooftree}
  where
  \(\thetas\mathord{=} [G_1/x_1[\Xs_1]] \dots [G_n/x_n[\Xs_n]]\)
  such that \(\{x_1[\Xs_1],\)\\
  \(\dots, x_n[\Xs_n]\}\) is the set of graph
  variables in $T$.
  The typing of this case expression can be done only with \TyCase{}:
  %
  \begin{prooftree}
    \def\defaultHypSeparation{\hskip .2in}
    \def\ScoreOverhang{0pt}
    \AXC{$\Gamma \vdash_P G : \tau'$}
    \AXC{$(\Gamma, \Gamma') \vdash_P e_1 : \tau_0$}
    \AXC{$\Gamma \vdash_P e_2 : \tau_0$}
    \RightLabel{\TyCase{}}
    \TIC{$\Gamma \vdash_P (\caseof{G}{T}{e_1}{e_2}) : \tau_0$}
  \end{prooftree}
  where each graph variable occurring
  in \(T\) must occur with a type annotation
  $x_i[\overrightarrow{X_i}]:\sigma_i$,
  and
  \(\Gamma' = x_1[\Xs_1]:\sigma_1, \dots, x_n[\Xs_n]:\sigma_n\).
  By applying the substitution lemma $n$ times, from the second premise
  \[\Gamma, x_1[\Xs_1]:\sigma_1, \dots, x_n[\Xs_n]:\sigma_n \vdash_P e_1 : \tau_0,\]
  we can get
  \(\Gamma \vdash_P e_1[G_1/x_1[\Xs_1]] \dots [G_n/x_n[\Xs_n]] : \tau_0\)
  provided that we have $n$ typing relations
  \[\Gamma \vdash_P G_1:\sigma_1, \dots, \Gamma \vdash_P G_n:\sigma_n.\]
  These $n$ typing relations can be obtained
  by applying the weakening lemma to the family of premises of \RdCaseMatchD{}
  because $G_i=x_i[\Xs_i]\thetas$.
  Thus we have derived
  \(\Gamma \vdash_P e_1\thetas : \tau_0\).

  \noindent
  \textbf{Case \RdCaseOtherD{}:}
  The typing of the case expression must use \TyCase{} above, which contains
  \(\Gamma \vdash_P e_2 : \tau_0\); therefore we have
  \(\Gamma \vdash_P e_0' : \tau_0\).

  To summarise, let \(e = E[e_0]\).
  If \(e_0 \reducesp{P} e_0'\) and \(\Gamma \vdash_P e_0 : \tau_0\),
  then we have  \(\Gamma \vdash_P e_0' : \tau_0\) and
  \(\Gamma \vdash_P E[e_0'] : \tau\) from \Cref{ECP}.
  Therefore,
  \[
    \Gamma \vdash_P e : \tau \ \land \ e \reducesp{P} e'
    \implies
    \Gamma \vdash_P e' : \tau
  \]
  and the claim is established.
\end{proof}

\begin{theorem}[Progress]\label{theorem-progress}
  If \(\emptyset \vdash_P e : \tau\), then either \(e\) is a value or there
  exists \(e'\) such that \(e \reducesp{P} e'\).
\end{theorem}

\begin{proof}
  By induction on the structure of the expression \(e\).

  If \(e\) is a template $T$, the assumption that the typing
  environment is empty implies that $T$ contains no type variable
  and hence is a value.

  If $e$ is a case expression, by the induction hypothesis,
  either the evaluation of the condition
  makes progress,
  or the condition is a value, and
  the case expression is reduced to one of the two branches (logically trivial case analysis).

  If $e$ is an application $(e_1\, e_2)$ and
  \(e_1\) is a value, \(e_1\) must be a $\lambda$-atom because $e$ is
  well-typed.
  If \(e_2\) is a value as well,
  then $(e_1\, e_2)$ is a \(\beta\)-redex;
  otherwise, by the induction hypothesis
  \(e_2 \reducesp{P} e_2'\) and hence
  \((e_1\, e_2) \reducesp{P}\ (e_1\, e_2')\).
  If \(e_1\) is not a value, by the induction hypothesis
  \(e_1 \reducesp{P} e_1'\) and hence \(e \reducesp{P} (e_1'\,e_2)\).
\end{proof}

\begin{corollary}[Type Soundness]\label{corollary-type-soundness}
  If \(\emptyset \vdash_P e : \tau\) and
  \(e \reducespstar{P} e'\) with
  \(e'\) normal (no further steps), then \(e'\) is a value of type
  \(\tau\).
\end{corollary}

\begin{proof}
  Follows from repeated application of
  \Cref{theorem-preservation-subject-reduction} and
  \Cref{theorem-progress}.
\end{proof}

\section{Fully-Static Type System}\label{sec:soundness}

In the previous type systems (Sections~\ref{sec:lgt-lang} and~\ref{sec:lgt-ext}),
the typing rule \TyCase{} for \textbf{case} expressions
does not impose any special constraints on the type \(\tau_1\) of the expression \(e_1\).
Even when \(\tau_1\) does not match the type expected by the pattern,
the evaluation remains sound under dynamic checks:
if the match fails, the evaluation proceeds to the \textbf{otherwise} branch,
and if the match succeeds, the necessary checks are performed dynamically.

However, when dynamic checks at runtime are eliminated,
soundness can no longer be preserved by simply using the \TyCase{} rule as it is.
To address this, in this section we introduce additional constraints
that guarantee soundness of a fully static type system
without relying on the dynamic checks originally embedded in \textbf{case} expressions.


First, in \Cref{sec:sec4-ty-case},
we first present a counterexample demonstrating unsoundness,
and then discuss how to restore soundness.
In \Cref{sec:sec4-constraints}, we give a formal definition of these constraints.
Finally, in \Cref{sec:sec4-fully-static-soundness},
we provide a proof sketch of soundness for the extended type system
that incorporates these modifications and constraints.

\subsection{Observation}\label{sec:sec4-ty-case}

In this section,
we first illustrate a case where soundness is violated
in \Cref{sec:unsound-example}.
Subsequently,
we discuss how to prevent such unsoundness in \Cref{sec:IntroductionOfRestrictions}.

\subsubsection{An Unsound Example}\label{sec:unsound-example}

Consider the following expression, 
which ideally should not be typable as it leads to type inconsistency.
\begin{align}\label{eq:unsound-dbllist-removal-eg-vis}
  \eDLun \triangleq
  \begin{array}{l}
    (\lambda\,\ctxx{x} : \dblltype{X}{Y}{Z}.       \\[1.5mm]
    \quad \mathbf{case}\; \ctxx{x}\; \mathbf{of}\; \\[1.5mm]
    \qquad
    \tikz[baseline=-0.5ex, scale=0.7]{%
      \tikzset{
        every node/.style={font=\small,thin},
      }
      \begin{scope}[shift={(0,0)}]
        \node [atom, inner sep=1pt] (C1) at (0, 0) {$\mathrm{C}$};
        \coordinate (p1A) at ($(C1)+(0,0.4)$) {};
        \coordinate [] (p1R) at ($(p1A)+(+0.5,0.4)$);
        \coordinate [] (p1L) at ($(p1A)+(-0.5,0.4)$);
        \node [ctxn] (N1)  at ($(C1)+(0,-1.2)$) {$w$};
        \draw [] (C1) to (N1);
        \node [font=\tiny] at (+0.100, 0.400)   {4};
        \node [font=\tiny] at (-0.400, -0.100) {2};
        \node [font=\tiny] at (-0.100, -0.400) {1};
        \node [font=\tiny] at (0.400, -0.100)  {3};
        \node [font=\tiny] at (-0.100, -0.7) {1};
        \node [text=gray,font=\tiny] at (0.500, -0.6) {$(W_1)$};
      \end{scope}
      \begin{scope}[shift={(1.5,0)}]
        \node [ctxn] (C2)  at (0, 0) {$y$};
        \coordinate (p2A) at ($(C2)+(0,0.4)$) {};
        \coordinate (p2AL) at ($(C2)+(-0.1,0.4)$) {};
        \coordinate (p2AR) at ($(C2)+(+0.1,0.4)$) {};
        \coordinate [] (p2R) at ($(p2A)+( 0.5,0.4)$);
        \coordinate [] (p2L) at ($(p2A)+(-0.5,0.4)$);
        \node [font=\tiny] at (+0.300, +0.510) {3};
        \node [font=\tiny] at (-0.300, +0.510) {4};
        \node [font=\tiny] at (-0.510, -0.100) {1};
        \node [font=\tiny] at (+0.510, -0.100) {2};
        \node [text=gray,font=\tiny] at (-0.45, 1) {$(W_2)$};
        \node [text=gray,font=\tiny] at (+0.35, 1) {$(W_4)$};
      \end{scope}
      \begin{scope}[shift={(3,0)}]
        \node [ctxn] (C4)  at (0, 0) {$z$};
        \coordinate (p4A) at ($(C4)+(0,0.4)$) {};
        \coordinate (p4AL) at ($(C4)+(-0.1,0.4)$) {};
        \coordinate (p4AR) at ($(C4)+(+0.1,0.4)$) {};
        \coordinate [] (p4R) at ($(p4A)+( 0.5,0.4)$);
        \coordinate [] (p4L) at ($(p4A)+(-0.5,0.4)$);
        \node [font=\tiny] at (+0.300, +0.510) {2};
        \node [font=\tiny] at (-0.300, +0.510) {3};
        \node [font=\tiny] at (-0.510, -0.100) {1};
        \node [text=gray,font=\tiny] at (-0.4, 1) {$(W_3)$};
      \end{scope}
      %
      \node [] (LZ) at (-1, 0.8) {$Z$};
      \node [] (LX) at (-1, 0.0) {$X$};
      \node [] (LY) at (+4, 0.8) {$Y$};
      %
      \draw [in=180, out=0, looseness=1.50] (LZ) to (p1L);
      \draw [in=90, out=0, rounded corners] (p1L) to (p1A);
      \draw [] (p1A) to (C1);
      %
      \draw [] (C1) to (LX);
      %
      \draw [in=180, out=0, looseness=1.50] (C1) to (p2L);
      \draw [in=90, out=0, rounded corners] (p2L) to (p2AL);
      \node[circle,minimum size=1pt,inner sep=1pt,fill=white] (p3) at (0.75, 0.7) {};
      \draw [in=0, out=180, looseness=1.50] (C2) to (p1R);
      \draw [in=90, out=180, rounded corners]  (p1R) to (p1A);
      %
      \draw [in=180, out=0, looseness=1.50] (C2) to (p4L);
      \draw [in=90, out=0, rounded corners] (p4L) to (p4AL);
      \node[circle,minimum size=1pt,inner sep=1pt,fill=white] (p3) at (2.25, 0.7) {};
      \draw [in=0, out=180, looseness=1.50] (C4) to (p2R);
      \draw [in=90, out=180, rounded corners]  (p2R) to (p2AR);
      %
      \draw [in=0, out=180, looseness=1.50] (LY) to (p4R);
      \draw [in=90, out=180, rounded corners]  (p4R) to (p4AR);
    }
    \to \ctxx{z}                                   \\[0mm]
    \quad\: \texttt{|}\;\mathbf{otherwise}
    \to
    \tikz[baseline=-10pt, scale=0.6]{%
      \node[atom,inner sep=1pt]
      (cons1) at (0, 0) {\(\mathrm{L}\)};
      \node[tvar,inner sep=3pt] (num1)  at (0, -1.2) {\(1\)};
      \draw[]        (cons1) -- (num1);
      \coordinate (p1) at (+0.75, 0);
      \coordinate (p2) at (-0.75, 0);
      \coordinate (p3) at (0, 0.85) ;
      \node [font=\tiny] at (+0.85, 0) {$Y$}; 
      \node [font=\tiny] at (-0.85, 0) {$X$}; 
      \node [font=\tiny] at (0, 0.95)  {$Z$}; 
      \draw[](cons1) to (p1);
      \draw[looseness=1.50,in=30,out=90] (cons1) to (p2);
      \draw[]   (cons1) -- (p3);
      \node [font=\tiny] at (0.200, 0.395)   {3};
      \node [font=\tiny] at (0.395, 0.200)   {2};
      \node [font=\tiny] at (-0.200, -0.395) {1};
      \node [font=\tiny] at (-0.200, -0.8)   {1};
      \node [text=gray,font=\tiny] at (0.50, -0.6)   {$(W_1)$};
    }
    \\[-4mm]
    )(W)\\[-13pt]
  \end{array}
\end{align}

We demonstrate that the program \(\eDLun\)
leads to an inconsistency
when applied to doubly-linked lists of length two and zero.

When applied to a doubly-linked list of length two,
the expression evaluates to a doubly-linked list of length zero.
\begin{align}\label{eq:unsound-reduction-1-eg-vis}
  \left(
  \eDLun
  \quad
  \tikz[baseline=-0.5ex, scale=0.7]{%
    \tikzset{
      every node/.style={font=\small,thin},
    }
    \begin{scope}[shift={(0,0)}]
      \node [atom, inner sep=1pt] (C1)  at (0, 0)    {$\mathrm{C}$};
      \coordinate (p1A) at ($(C1)+(0,0.4)$) {};
      \coordinate [] (p1R) at ($(p1A)+( 0.5,0.4)$);
      \coordinate [] (p1L) at ($(p1A)+(-0.5,0.4)$);
      \node [atom, inner sep=1pt] (N1)  at ($(C1)+(0,-1)$) {1};
      \draw [] (C1) to (N1);
      \node [font=\tiny] at (0.100, 0.400)   {4};
      \node [font=\tiny] at (-0.400, -0.100) {2};
      \node [font=\tiny] at (-0.100, -0.400) {1};
      \node [font=\tiny] at (0.400, -0.100)  {3};
      \node [font=\tiny] at (-0.100, -0.7) {1};
      \node [text=gray,font=\tiny] at (0.500, -0.5) {$(W_1)$};
    \end{scope}
    \begin{scope}[shift={(1.5,0)}]
      \node [atom, inner sep=1pt] (C2)  at (0, 0)    {$\mathrm{C}$};
      \coordinate (p2A) at ($(C2)+(0,0.4)$) {};
      \coordinate [] (p2R) at ($(p2A)+( 0.5,0.4)$);
      \coordinate [] (p2L) at ($(p2A)+(-0.5,0.4)$);
      \node [atom, inner sep=1pt] (N2)  at ($(C2)+(0,-1)$) {2};
      \draw [] (C2) to (N2);
      \node [font=\tiny] at (0.100, 0.400)   {4};
      \node [font=\tiny] at (-0.400, -0.100) {2};
      \node [font=\tiny] at (-0.100, -0.400) {1};
      \node [font=\tiny] at (0.400, -0.100)  {3};
      \node [font=\tiny] at (-0.100, -0.7) {1};
      \node [text=gray,font=\tiny] at (0, 1) {$(W_2)$};
      \node [text=gray,font=\tiny] at (0.500, -0.5) {$(W_3)$};
    \end{scope}
    \begin{scope}[shift={(3,0)}]
      \node [atom, inner sep=1pt] (C4)  at (0, 0)    {$\mathrm{N}$};
      \coordinate (p4A) at ($(C4)+(0,0.4)$) {};
      \coordinate [] (p4R) at ($(p4A)+( 0.5,0.4)$);
      \coordinate [] (p4L) at ($(p4A)+(-0.5,0.4)$);
      \node [font=\tiny] at (0.100, 0.400)   {2};
      \node [font=\tiny] at (-0.400, -0.100) {1};
    \end{scope}
    %
    \node [] (LZ) at (-1, 0.8)  {$Z$};
    \node [] (LX) at (-1, 0)    {$X$};
    \node [] (LY) at (4, 0.8)   {$Y$};
    %
    \draw [in=180, out=0, looseness=1.50] (LZ) to (p1L);
    \draw [in=90, out=0, rounded corners] (p1L) to (p1A);
    \draw (p1A) to (C1);
    %
    \draw []  (C1) to  (LX);
    %
    \draw [in=180, out=0, looseness=1.50] (C1) to (p2L);
    \draw [in=90, out=0, rounded corners] (p2L) to (p2A);
    \draw (p2A) to (C2);
    \node [circle, fill=white] (p3) at (0.8, 0.65) {};
    \draw [in=0, out=180, looseness=1.50] (C2) to (p1R);
    \draw [in=90, out=180, rounded corners]  (p1R) to (p1A);
    %
    \draw [in=180, out=0, looseness=1.50] (C2) to (p4L);
    \draw [in=90, out=0, rounded corners] (p4L) to (p4A);
    \draw (p4A) to (C4);
    \node [circle, fill=white] (p4) at (2.1, 0.65) {};
    \draw [in=0, out=180, looseness=1.50] (C4) to (p2R);
    \draw [in=90, out=180, rounded corners]  (p2R) to (p2A);
    %
    \draw [in=0, out=180, looseness=1.50] (LY) to (p4R);
    \draw [in=90, out=180, rounded corners]  (p4R) to (p4A);
  }
  \right)
  \
  \reducestwo
  \
  \tikz[baseline=-0.5ex, scale=0.6]{%
    \begin{scope}[shift={(0,0)}]
      \node [atom, inner sep=1pt] (C1)  at (0, 0) {$\mathrm{N}$};
      \coordinate (p1A) at ($(C1)+(0,0.4)$) {};
      \coordinate [] (p1R) at ($(p1A)+( 0.5,0.4)$);
      \coordinate [] (p1L) at ($(p1A)+(-0.5,0.4)$);
      \node [font=\tiny] at (0.100, 0.400)   {2};
      \node [font=\tiny] at (-0.400, -0.100) {1};
    \end{scope}
    %
    \node [font=\tiny] (LZ) at (-1, 0.8) {$Z$};
    \node [font=\tiny] (LX) at (-1, 0)   {$X$};
    \node [font=\tiny] (LY) at (1, 0.8)  {$Y$};
    \draw [in=180, out=0, looseness=1.50] (LZ) to (p1L);
    \draw [in=90, out=0, rounded corners] (p1L) to (p1A);
    \draw (p1A) to (C1);
    \draw []  (C1) to  (LX);
    \draw [in=0, out=180, looseness=1.50] (LY) to (p1R);
    \draw [in=90, out=180, rounded corners]  (p1R) to (p1A);
  }
\end{align}

On the other hand, when applied to a doubly-linked list of length zero,
it does not return a doubly-linked list but rather a leaf in a leaf-linked tree.
\begin{align}\label{eq:unsound-reduction-2-eg-vis}
  \left(
  \eDLun
  \quad
  \tikz[baseline=-0.5ex, scale=0.6]{%
    \begin{scope}[shift={(0,0)}]
      \node [atom, inner sep=1pt] (C1)  at (0, 0) {$\mathrm{N}$};
      \coordinate (p1A) at ($(C1)+(0,0.4)$) {};
      \coordinate [] (p1R) at ($(p1A)+( 0.5,0.4)$);
      \coordinate [] (p1L) at ($(p1A)+(-0.5,0.4)$);
      \node [font=\tiny] at (0.100, 0.400)   {2};
      \node [font=\tiny] at (-0.400, -0.100) {1};
    \end{scope}
    %
    \node [font=\tiny] (LZ) at (-1, 0.8) {$Z$};
    \node [font=\tiny] (LX) at (-1, 0)   {$X$};
    \node [font=\tiny] (LY) at (1, 0.8)  {$Y$};
    \draw [in=180, out=0, looseness=1.50] (LZ) to (p1L);
    \draw [in=90, out=0, rounded corners] (p1L) to (p1A);
    \draw (p1A) to (C1);
    \draw []  (C1) to  (LX);
    \draw [in=0, out=180, looseness=1.50] (LY) to (p1R);
    \draw [in=90, out=180, rounded corners]  (p1R) to (p1A);
  }
  \right)
  \quad
  \reducestwo
  \quad
  \tikz[baseline=0pt, scale=0.6]{%
    \node[atom,inner sep=1pt]
    (cons1) at (0, 0) {\(\mathrm{L}\)};
    \node[tvar,inner sep=3pt] (num1)  at (0, -1.3) {\(1\)};
    \draw[]        (cons1) -- (num1);
    \coordinate (p1) at (+0.75, 0);
    \coordinate (p2) at (-0.75, 0);
    \coordinate (p3) at (0, 0.85) ;
    \node [font=\tiny] at (+0.9, 0) {$Y$}; 
    \node [font=\tiny] at (-0.9, 0) {$X$}; 
    \node [font=\tiny] at (0, 1)  {$Z$}; 
    \draw[](cons1) to (p1);
    \draw[looseness=1.50,in=30,out=90] (cons1) to (p2);
    \draw[]   (cons1) -- (p3);
    \node [font=\tiny] at (0.200, 0.395)   {3};
    \node [font=\tiny] at (0.395, 0.200)   {2};
    \node [font=\tiny] at (-0.200, -0.395) {1};
    \node [font=\tiny] at (-0.200, -0.75)   {1};
    \node [text=gray,font=\tiny] at (0.5, -0.6)   {$(W_1)$};
  }
\end{align}

This discrepancy in return types demonstrates a lack of type consistency.
Since the type system should reject programs exhibiting such inconsistencies,
this expression should not be typable.

However, contrary to this expectation,
the program is
typable under the production rules \(\Pdl\)
\eqref{eq:prod-rules-eg} (\Cref{sec:MotivatingExamples})
within the type system introduced in \Cref{sec:lgt-ext}.

To analyse the reason behind this,
we first present the
text version
of the program $\eDLun$ (previously introduced
visually) as follows:
\begin{align}\label{eq:unsound-dbllist-removal-eg}
  \eDLun \triangleq
  \begin{array}{l}
    (\lambda\,x[X,Y,Z]: \mathit{dbllist}(X,Y,Z).   \\
    \quad \mathbf{case}\; x[X,Y,Z]\; \mathbf{of}\; \\
    \qquad \nu W_1 W_2 W_3 W_4.(                   \\
    \qquad\quad \Cons(W_1,X,W_2,Z),                \\
    \qquad\quad w[W_1]: \mathit{nat}(W_1),         \\
    \qquad\quad y[Z,W_3,W_4,W_2]:                  \\
    \qquad\qquad \LI{\mathit{lltree}(Y,W_4,W_3)}{  \\
    \qquad\qquad\quad \mathit{dbllist}(Z, Y, W_2)  \\
    \qquad\qquad }{Z,W_3,W_4,W_2},                 \\
    \qquad\quad z[W_4,Y,W_3]:
    \mathit{lltree}(Y,W_4,W_3)                     \\
    \qquad ) \to z[X,Y,Z]                          \\
    \quad\: \texttt{|}\;\mathbf{otherwise} \to
    \Leaf(0,Y,Z), X \bowtie Z                      \\
    )(W).
  \end{array}
\end{align}

We demonstrate that the following type judgement
is successfully derived in the type derivation
presented in \figref{fig:unsound-dbllist-removal-typederiv-eg}:
\begin{align}\label{eq:unsound-type-judge-eg}
  \emptyset \vdash_{\Pdl} \eDLun : (\mathit{dbllist}(X,Y,Z) \to \mathit{lltree}(X,Y,Z))(W).
\end{align}

In the type derivation in \figref{fig:unsound-dbllist-removal-typederiv-eg},
the derivation of \(\varphi(2)\),
corresponding to the right-hand side of the first \(\to\) in the case expression,
leads to a type inconsistency in the return values.
While this step results in the type inconsistency,
it is not the fundamental source of the problem.

The primary issue arises in the type derivation of the pattern in the case expression,
as captured by \(\varphi(1)\).
Here, the type checking mechanism fails to
sufficiently enforce the expected types of subgraphs
that will be bound to
the graph variables \(y[Z,W_3,W_2,W_4]\) and \(z[W_4,Y,W_3]\).
In particular, the types of the subgraphs bound to these
graph variables will differ from
their expected annotated types,
leading to an unsound type derivation.

\begin{figure*}[tb]
  \hrulefill

  \(\varphi(2) \triangleq\)
  \begin{prooftree}
    \AXC{}
    \RightLabel{\TyVar{}}
    \UIC{$(\Gamma, \Gamma'), P' \vdash
        z[W_4,Y,W_3]: \mathit{lltree}(W_4,Y,W_3)
      $}
    \RightLabel{\TyAlpha{}}
    \UIC{$(\Gamma, \Gamma'), P' \vdash
        z[X,Y,Z]: \mathit{lltree}(X,Y,Z)
      $}
  \end{prooftree}

  \medskip

  \(\varphi(1) \triangleq\)
  \begin{prooftree}

    \AXC{}
    \RightLabel{\TyVar{}}
    \UIC{$\begin{array}{lll}
          \Gamma'\{
          y[Z,W_3,W_2,W_4]:
          \LI{\mathit{lltree}(W_4,Y,W_3)}{ \\
            \mathit{dbllist}(Z,Y,W_2)}{Z,W_3,W_2,W_4}
          \} \vdash_P                      \\
          y[Z,W_3,W_2,W_4]:
          \LI{\mathit{lltree}(W_4,Y,W_3)}{ \\
            \mathit{dbllist}(Z,Y,W_2))}{Z,W_3,W_2,W_4}
        \end{array}$}

    \AXC{}
    \RightLabel{\TyVar{}}
    \UIC{$\begin{array}{lll}
          \Gamma'\{
          z[W_4,Y,W_3]: \\
          \mathit{lltree}(W_4,Y,W_3)
          \} \vdash_P   \\
          z[W_4,Y,W_3]: \\
          \mathit{lltree}(W_4,Y,W_3)
        \end{array}$}
    \RightLabel{\TyLIE{}}
    \BIC{$\begin{array}{lll}
          \Gamma'\{
          y[Z,W_3,W_2,W_4]:
          \LI{\mathit{lltree}(W_4,Y,W_3)}{ \\
            \mathit{dbllist}(Z,Y,W_2)}{Z,W_3,W_2,W_4}
          \} \vdash_P                      \\
          \nu W_3 W_4.(y[Z,W_3,W_2,W_4], z[W_4,Y,W_3])
          : dbllist(Z,Y,W_2)
        \end{array}$}

    \AXC{}
    \RightLabel{\TyVar{}}
    \UIC{\(
      \Gamma' \vdash_P
      w[W_1]
      : \mathit{nat}(W_1)
      \)}
    \RightLabel{\TyProd{}}
    \BIC{$\Gamma' \vdash_P
        \nu W_1 W_2.(
        \Cons(W_1,X,W_2,Z),
        w[W_1],
        \nu W_3 W_4.(y[Z,W_3,W_2,W_4],
        z[W_4,Y,W_3])
        )
        : \mathit{dbllist}(X,Y,Z)$}
    \RightLabel{\TyCong{}}
    \UIC{$\Gamma' \vdash_P
        \nu W_1 W_2 W_3 W_4.(
        \Cons(W_1,X,W_2,Z),
        w[W_1],
        y[Z,W_3,W_2,W_4],
        z[W_4,Y,W_3]
        )
        : \mathit{dbllist}(X,Y,Z)$}
  \end{prooftree}

  \medskip

  \begin{prooftree}
    \AXC{}
    \RightLabel{\TyVar{}}
    \UIC{\(
      \Gamma \vdash_P
      x[X,Y,Z]
      : \mathit{dbllist}(X,Y,Z)
      \)}

    \AXC{\(\varphi(1)\)}
    \AXC{\(\varphi(2)\)}

    \AXC{}
    \RightLabel{\TyProd{}}
    \UIC{\(
      \Gamma \vdash_P
      \Leaf(0,R,X),L \bowtie X
      : \mathit{lltree}(L,R,X)
      \)}
    \RightLabel{\TyAlpha{}}
    \UIC{\(
      \Gamma \vdash_P
      \Leaf(0,Y,Z),X \bowtie Z
      : \mathit{lltree}(X,Y,Z)
      \)}

    \RightLabel{\TyCase{}}
    \QuaternaryInfC{$\begin{array}{lll}
          \Gamma \vdash_P
          (\caseof{x[X,Y,Z]}{T}{z[X,Y,Z]}{\Leaf(0,Y,Z),X \bowtie Z})
          : \mathit{lltree}(X,Y,Z)
        \end{array}$}

    \RightLabel{\TyArrow{}}
    \UIC{\(\emptyset \vdash_P e : (dbllist(X,Y,Z) \to \mathit{lltree}(X,Y,Z))(W)\)}
  \end{prooftree}

  where
  \begin{itemize}
    \item
          \(T \triangleq \nu W_1 W_2 W_3 W_4.(
          \Cons(W_1,X,W_2,Z),
          w[W_1],
          y[Z,W_3,W_2,W_4],
          z[W_4,Y,W_3]
          )\)
          \qquad
          \(\Gamma \triangleq x[X,Y,Z]: \mathit{dbllist}(X,Y,Z)\)

    \item
          \(\Gamma' \triangleq
          y[Z,W_3,W_2,W_4]:
          \LI{\mathit{lltree}(Y,W_4,W_3)}{\mathit{dbllist}(Z,Y,W_2)}{Z,W_3,W_2,W_4},
          z[W_4,Y,W_3]: \mathit{lltree}(W_4,Y,W_3),
          w[W_1]: \mathit{nat}(W_1)
          \)
  \end{itemize}

  \hrulefill
  \caption{Typing Derivation of Unsound Typing Judgement \eqref{eq:unsound-type-judge-eg}
  }
  \label{fig:unsound-dbllist-removal-typederiv-eg}
\end{figure*}

This issue stems from a fundamental limitation in the type system,
which lacks a mechanism to ensure that a subgraph matched by a graph variable
is \emph{always} of the expected annotated type.
Previous approaches relied on dynamic type checking
to enforce the correctness of subgraph bindings,
ensuring that the subgraph bound to \(z[W_4,Y,W_3]\)
conformed to the expected annotation \(\mathit{lltree}(W_4,Y,W_3)\).
However, the objective of this study is to establish a statically verifiable guarantee
without resorting to runtime type enforcement.

To resolve this issue,
we introduce additional constraints to ensure
that any subgraph matched by a graph variable conforms to the expected type.

In the following section,
we discuss the constraints
on
the pattern of case expressions
and constraints on the type system
that prevent this form of unsoundness.

\subsubsection{Introduction of Restrictions}\label{sec:IntroductionOfRestrictions}



The key to ensuring the soundness of the type system
is the introduction of structural constraints
that enforce the types
of subgraphs matched in a case expression to be
uniquely determined.

The first essential constraint is to ensure that type production rules are disjoint:
each right-hand side of a production rule must contain exactly one constructor atom,
a node with a constructor name,
whose name is unique within the set of rules.
By enforcing disjointness,
every subgraph connected to a terminal atom is guaranteed
to have a uniquely determined type.

Furthermore,
in pattern matching,
we impose an additional constraint
that graph variables must not be directly connected to one another.
Instead, they must be connected via terminal atoms.
This constraint ensures that any subgraph matched by a graph variable
must necessarily pass through a terminal atom,
thereby guaranteeing the unique determination
of the type of the subgraph matching the graph variable.

We believe that the constraint on production rules
implies that the graphs must have bounded treewidth
and excludes certain classes of graphs.
However, we have found that even with this constraint,
useful types can still be expressed;
for example,
the production rules
$\Pdl$ \eqref{eq:prod-rules-eg}
all satisfy these constraints.
Thus, we consider this as a reasonable and practical design choice.
Examining the relationship between this constraint and treewidth,
as well as exploring the possibility of relaxing it,
remains an open challenge for future research.

We also believe that
the constraints on patterns in case expressions
do not substantially limit
the expressiveness of the language.
All examples presented in
\Cref{sec:lgt-lang}
and
\Cref{sec:lgt-ext}
adhere to these constraints.
The program $\eDLun$ shown in
\eqref{eq:unsound-dbllist-removal-eg-vis}
intentionally violates these constraints;
however, it can be easily rewritten to comply with them
while preserving its semantics,
as demonstrated by $\eDLun'$
\eqref{eq:sound-dbllist-removal-eg-vis}.
\begin{align}\label{eq:sound-dbllist-removal-eg-vis}
  \eDLun' \triangleq
  \begin{array}{l}
    (\lambda\,\ctxx{x} : \dblltype{X}{Y}{Z}.       \\[1.5mm]
    \quad \mathbf{case}\; \ctxx{x}\; \mathbf{of}\; \\[1.5mm]
    \qquad
    \tikz[baseline=-0.5ex, scale=0.7]{%
      \tikzset{
        every node/.style={font=\small,thin},
      }
      \begin{scope}[shift={(0,0)}]
        \node [ctxn] (C1) at (0, 0) {$y$};
        \coordinate (p1A) at ($(C1)+(0,0.4)$) {};
        \coordinate (p1AL) at ($(C1)+(-0.1,0.4)$) {};
        \coordinate (p1AR) at ($(C1)+(+0.1,0.4)$) {};
        \coordinate [] (p1R) at ($(p1A)+(+0.5,0.4)$);
        \coordinate [] (p1L) at ($(p1A)+(-0.5,0.4)$);
        \node [font=\tiny] at (-0.300, +0.510) {4};
        \node [font=\tiny] at (-0.510, -0.100) {1};
        \node [font=\tiny] at (+0.510, -0.100) {2};
        \node [font=\tiny] at (+0.300, +0.510) {3};
      \end{scope}
      \begin{scope}[shift={(1.5,0)}]
        \node [atom, inner sep=1pt] (C2)  at (0, 0) {$\mathrm{C}$};
        \coordinate (p2A) at ($(C2)+(0,0.4)$) {};
        \coordinate [] (p2R) at ($(p2A)+( 0.5,0.4)$);
        \coordinate [] (p2L) at ($(p2A)+(-0.5,0.4)$);
        \node [ctxn] (N2)  at ($(C2)+(0,-1.2)$) {$w$};
        \draw [] (C2) to (N2);
        \node [font=\tiny] at (+0.100, +0.400) {4};
        \node [font=\tiny] at (-0.400, -0.100) {2};
        \node [font=\tiny] at (-0.100, -0.400) {1};
        \node [font=\tiny] at (0.400, -0.100)  {3};
        \node [text=gray,font=\tiny] at (-1.2, 1) {$(W_2)$};
        \node [text=gray,font=\tiny] at (0, 1) {$(W_3)$};
        \node [font=\tiny] at (-0.100, -0.7) {1};
        \node [text=gray,font=\tiny] at (0.500, -0.6) {$(W_1)$};
      \end{scope}
      \begin{scope}[shift={(3,0)}]
        \node [ctxn] (C4)  at (0, 0) {$z$};
        \coordinate (p4A) at ($(C4)+(0,0.4)$) {};
        \coordinate (p4AL) at ($(C4)+(-0.1,0.4)$) {};
        \coordinate (p4AR) at ($(C4)+(+0.1,0.4)$) {};
        \coordinate [] (p4R) at ($(p4A)+( 0.5,0.4)$);
        \coordinate [] (p4L) at ($(p4A)+(-0.5,0.4)$);
        \node [font=\tiny] at (-0.510, -0.100) {1};
        \node [font=\tiny] at (+0.300, +0.510) {2};
        \node [font=\tiny] at (-0.300, +0.510) {3};
        \node [text=gray,font=\tiny] at (-0.4, 1) {$(W_4)$};
      \end{scope}
      %
      \node [] (LZ) at (-1, 0.8) {$Z$};
      \node [] (LX) at (-1, 0.0) {$X$};
      \node [] (LY) at (+4, 0.8) {$Y$};
      %
      \draw [in=180, out=0, looseness=1.50] (LZ) to (p1L);
      \draw [in=90, out=0, rounded corners] (p1L) to (p1AL);
      %
      \draw [] (C1) to (LX);
      %
      \draw [in=180, out=0, looseness=1.50] (C1) to (p2L);
      \draw [in=90, out=0, rounded corners] (p2L) to (p2A);
      \draw (p2A) to (C2);
      \node[circle,minimum size=1pt,inner sep=1pt,fill=white] (p3) at (0.75, 0.7) {};
      \draw [in=0, out=180, looseness=1.50] (C2) to (p1R);
      \draw [in=90, out=180, rounded corners]  (p1R) to (p1A);
      %
      \draw [in=180, out=0, looseness=1.50] (C2) to (p4L);
      \draw [in=90, out=0, rounded corners] (p4L) to (p4AL);
      \node[circle,minimum size=1pt,inner sep=1pt,fill=white] (p3) at (2.25, 0.7) {};
      \draw [in=0, out=180, looseness=1.50] (C4) to (p2R);
      \draw [in=90, out=180, rounded corners]  (p2R) to (p2A);
      %
      \draw [in=0, out=180, looseness=1.50] (LY) to (p4R);
      \draw [in=90, out=180, rounded corners]  (p4R) to (p4AR);
    }
    \to \ctxx{z}                                   \\[0mm]
    \quad\: \texttt{|}\;\mathbf{otherwise}
    \to
    \tikz[baseline=-10pt, scale=0.6]{%
      \node[atom,inner sep=1pt]
      (cons1) at (0, 0) {\(\mathrm{L}\)};
      \node[tvar,inner sep=3pt] (num1)  at (0, -1.3) {\(1\)};
      \draw[]        (cons1) -- (num1);
      \coordinate (p1) at (+0.75, 0);
      \coordinate (p2) at (-0.75, 0);
      \coordinate (p3) at (0, 0.85) ;
      \node [font=\tiny] at (+0.9, 0) {$Y$}; 
      \node [font=\tiny] at (-0.9, 0) {$X$}; 
      \node [font=\tiny] at (0, 1)  {$Z$}; 
      \draw[](cons1) to (p1);
      \draw[looseness=1.50,in=30,out=90] (cons1) to (p2);
      \draw[]   (cons1) -- (p3);
      \node [font=\tiny] at (0.200, 0.395)   {3};
      \node [font=\tiny] at (0.395, 0.200)   {2};
      \node [font=\tiny] at (-0.200, -0.395) {1};
      \node [font=\tiny] at (-0.200, -0.75)   {1};
      \node [text=gray,font=\tiny] at (0.5, -0.6)   {$(W_1)$};
    }
    \\[-4mm]
    )(W)\\[-13pt]
  \end{array}
\end{align}


In the expression \(\eDLun'\), 
since the tail of a \(\mathrm{Cons}\) atom,
a subgraph connected through its second argument,
is always of type \(\mathit{dbllist}\),
any subgraph that matches the graph variable \(z\) must necessarily have the type
\(\mathit{dbllist}\).
Consequently, in the corrected example,
type derivation fails because the types of the graph variable \(z\) and
the \(\Leaf\) atom
both appearing on the right-hand sides of \(\to\)s in the case expression
do not align.

\subsection{Formalisation of Static Constraints}\label{sec:sec4-constraints}

In this section, we formalise the constraints on the structure of production rules
and on the patterns occurring in \textbf{case} expressions.

We impose restrictions on graphs to enable unique traversal through specific links.
We introduce the notion of \emph{primary roots} for types, atoms, and graph variables.
We then impose additional restrictions so that, by traversing
these primary roots,
we obtain a unique spanning tree.
The introduction of primary roots effectively restricts the class of graphs
handled by $\lambda_{GT}$ to connected graphs.

We refer to the last arguments of types, atoms, and graph variables as their
\emph{primary roots}, and we use \(\Root\) to denote the function that extracts them.

\begin{mydef}[Primary Root]\label{def:primary-root}
  \begin{align*}
    \Root(\alpha(\dots, X))          & \triangleq X                                         \\
    \Root((\tau \to \tau)(\dots, X)) & \triangleq X                                         \\
    \Root(\LI{\dots}{\tau}{\dots})   & \triangleq \Root(\tau)                               \\
    \Root(p(\dots, X))               & \triangleq X                                         \\
    \Root(x[\dots, X]: \tau)         & \triangleq X \text{ where } X = \Root(\tau) \qedhere
  \end{align*}
\end{mydef}

The primary root of a linear implication type is
the primary root of the type on the right-hand side of \(\multimap\).
For a graph variable of the form \(x[\dots, X] : \tau\),
if the link \(X\)
is not the primary root of the annotated type $\tau$,
its primary root is undefined.

In what follows,
we require that the primary root always exists
for every type and variable.

We define a notation for expressing whether two links are
fused with given link fusions.
\begin{mydef}[Fusibility Relation]\label{def:fused-links}
  Let $F = (
    W_1 \bowtie U_1,
    \dots,
    W_n \bowtie U_n
    )$.
  $\areFused{X}{Y}{F}$
  denotes that
  $\angled{X, Y} \in R$,
  where $R$ is the smallest equivalence relation
  that contains
  $\{
    \angled{W_1, U_1},
    \dots,
    \angled{W_n, U_n}
    \}$.
\end{mydef}
Note that identical links are always considered fused.

In order to distinguish primary roots from other links,
we must ensure that certain links are not fused with primary roots.
To this end, we define the set of links of type atoms
that may potentially be fused with primary roots
by applying production rules as follows:
\begin{mydef}[Links Fusible with Primary Roots]\label{def:fused-links-ports}
  \[\begin{array}[b]{l}
      \FusedToRoot((\tau_1 \to \tau_2)(\Ys, X), P) \triangleq \{X\}
      \\
      \FusedToRoot(\alpha(X_1, \dots, X_n), P) \triangleq \\
      \quad \{X_i \mid                                    \\
      \qquad
      \paren{
        \alpha(Y_1, \dots, Y_n) \lto
        \nu \Zs. (C(\Vs), \seq{W \bowtie U}, \taus)
      }
      \in P                                               \\
      \qquad \land\, \areFused{Y_i}{Y_n}{\seq{W \bowtie U}}
      \} \hfill\qedhere\hspace*{-8pt}
    \end{array}\]
\end{mydef}

In the case of an arrow type \((\tau_1 \to \tau_2)(\Ys, X)\),
no further production rules are applied,
and thus the primary root will not be fused with any other links.
Therefore, the resulting set simply consists of the primary root \(X\).

For type variables, the situation is more complex.
The resulting set consists of a subset
containing the \(i\)-th link \(X_i\) of the type variable.
Here, the index \(i\) corresponds to the position of the link
for which there exists a production rule
\[
  \alpha(Y_1, \dots, Y_n) \lto
  \nu \Zs. (C(\Vs), \seq{W \bowtie U}, \taus)
\]
such that the \(i\)-th link \(Y_i\) and the primary root \(Y_n\)
of the type atom \(\alpha(Y_1, \dots, Y_n)\) on the left-hand side
LHS of the rule
are fused together on the right-hand side RHS of the rule
by \(\seq{W \bowtie U}\).

Note that the sets of link names obtained by \(\FusedToRoot(\tau, P)\)
naturally include the primary roots of \(\tau\).

Using the notations defined above,
\Cref{def:disjoint-condition} states the constraints imposed on production rules.
\begin{mydef}[Restriction on Production Rules]\label{def:disjoint-condition}
  Each production rule in \(P\)
  must take the following form:
  \[
    \alpha(\Ws, X) \lto \nu \Ys.(C(\Zs, X), \seq{U \bowtie V}, \tau_1, \dots, \tau_n)
  \]
  and must satisfy the following conditions
  as well as the conditions given in \Cref{ProductionRule}:
  \begin{enumerate}
    \item
          \(X \notin \{\Ys\}\);
    \item
          $\Root(\tau_i) \in \{\Zs\}$ for all $i$;
    \item
          for all $i,j,X_i,X_j$ such that\\
          $X_i \in \FusedToRoot(\tau_i, P)$ and
          $X_j \in \FusedToRoot(\tau_j, P)$,
          \begin{itemize}
            \item $\neg\bigl(\areFused{X_i}{X}{\seq{U \bowtie V}}\bigr)$
                  and
            \item $i \neq j \implies
                    \neg \bigl(\areFused{X_i}{X_j}{\seq{U \bowtie V}}\bigr)$;
          \end{itemize}
    \item
          for all
          $r_1, r_2 \in P$
          such that \\
          $r_1 = \bigl(\alpha_1 (\seq{X_1}) \lto
            \nu \seq{Z_1}.(C_1(\seq{Y_1}), \seq{W_1 \bowtie U_1}, \seq{\tau_1})\bigr)$
          and \\
          $r_2 = \bigl(\alpha_2 (\seq{X_2}) \lto
            \nu \seq{Z_2}.(C_2(\seq{Y_2}), \seq{W_2 \bowtie U_2}, \seq{\tau_2})\bigr)$,\\[0.5mm]
          $C_1 = C_2 \implies r_1 = r_2$. \qedhere
  \end{enumerate}
\end{mydef}

Condition~(1) ensures that \(X\) is a free link
so that the primary root \(X\) can be reached from outside.

Condition~(2) ensures that the subsequent types \(\taus\)
(i.e., the types of subcomponents reachable from a constructor)
are connected to the constructor atom \(C(\Zs, X)\).

Condition~(3) ensures that each of its subcomponents
can be traversed uniquely once the constructor atom \(C\) is reached.
One might think that, with respect to Condition~(3), the simple requirement
\(\Root(\tau_i) \neq \Root(\tau_j)\) for all \(i \neq j\) would suffice.
However, this condition is insufficient
because it still allows the possibility
that the primary roots \(\Root(\tau_i)\) and \(\Root(\tau_j)\)
could be fused at a later stage by the application of production rules.
Condition~(3) is therefore defined using \(\FusedToRoot\).

We show that the production rules $\Pdl$ \eqref{eq:prod-rules-eg}
satisfy Condition~(3) in \Cref{eg:cond-dbllist} and \Cref{eg:cond-lltree} below.
\begin{example}[Doubly-Linked Lists]\label{eg:cond-dbllist}
  The production rules of \(\mathit{dbllist}\) in $\Pdl$
  \eqref{eq:prod-rules-eg}
  are as follows:
  \begin{align}
     & \mathit{dbllist}(X,Y,Z) \lto \notag{}                \\
     & \quad \;\; \Nil(X, Z), Y \bowtie Z
    \label{eg:prodrule-dbllist-1}                           \\
     & \quad \mid \nu W_1 W_2.(\Cons(W_1,X,W_2,Z), \notag{} \\
     & \qquad \mathit{nat}(W_1),\mathit{dbllist}(Z,Y,W_2)).
    \label{eg:prodrule-dbllist-2}
  \end{align}
  For the first production rule \eqref{eg:prodrule-dbllist-1},
  since no type atom occurs on the right-hand side,
  Condition~(3) is trivially satisfied.
  For the second production rule \eqref{eg:prodrule-dbllist-2}, we have
  \[\begin{array}{llll}
      \FusedToRoot(\mathit{nat}(W_1), P) =
      \{W_1\},
      \\
      \FusedToRoot(\mathit{dbllist}(Z,Y,W_2), P) =
      \{Y,W_2\}.
    \end{array}\]
  This production rule has no fusion,
  and the sets $\{W_1\}$, $\{Y, W_2\}$, and $\{X\}$ are pairwise disjoint.
  Therefore, this production rule satisfies Condition~(3).
\end{example}

\begin{example}[Leaf-Linked Trees]\label{eg:cond-lltree}
  The production rules of \(\mathit{lltree}\) in $\Pdl$
  \eqref{eq:prod-rules-eg}
  are as follows:
  \begin{align}
     & \mathit{lltree}(L,R,X) \lto \notag{}                                          \\
     & \quad \;\; \nu W_1. (\Leaf(W_1,R,X), L \bowtie X, \mathit{nat}(W_1))
    \label{eg:prodrule-lltree-1}                                                     \\
     & \quad \mid \nu W_1. (\mathrm{LeafM}(W_1,R,X), L \bowtie X, \mathit{nat}(W_1))
    \label{eg:prodrule-lltree-2}                                                     \\
     & \quad \mid \nu Y W_1 W_2.(\Node(W_1, W_2,X), \notag{}                         \\
     & \qquad \mathit{lltree}(L,Y, W_1), \mathit{lltree}(Y,R, W_1)).
    \label{eg:prodrule-lltree-3}
  \end{align}
  For the first and second production rules
  \eqref{eg:prodrule-lltree-1}\eqref{eg:prodrule-lltree-2},
  we have
  \[
    \FusedToRoot(\mathit{nat}(W_1), P) =
    \{W_1\},
  \]
  where \(W_1 \in \{W_1\}\) and \(X\) do
  not satisfy \(\areFused{W_1}{X}{L \bowtie X}\).
  Therefore, these production rules satisfy Condition~(3).
  For the third production rule \eqref{eg:prodrule-lltree-3}, we have
  \[\begin{array}{llll}
      \FusedToRoot(\mathit{lltree}(L, Y, W_1), P) =
      \{L, W_1\},
      \\
      \FusedToRoot(\mathit{lltree}(Y, R, W_2), P) =
      \{Y, W_2\}.
      \\
    \end{array}\]
  This production rule has no fusion,
  and the sets $\{L, W_1\}$, $\{Y, W_2\}$, and $\{X\}$ are pairwise disjoint.
  Therefore, this production rule satisfies Condition~(3).
\end{example}

Condition (4) requires
the constructor name \(C\) must be unique within the set $P$
of production rules.
Condition (4) is intended to uniquely determine both
the type of a graph whose root is a constructor atom
and the types of the subgraphs connected to that constructor atom.

We also impose a constraint on patterns in \textbf{case} expressions.
We first define the syntax of \textit{restricted templates},
which will be used to describe the patterns.
\begin{mydef}[Case Pattern Syntax]\label{def:pat-cond-syntax}
  The syntax of restricted templates $\Tx$ is
  defined recursively as follows:
  \[
    \begin{array}{l@{~}c@{~}l@{\quad}l@{}}
      \Tx & ::=
          & \nu \Ys.(x[\Zs, X]: \tau, \Tc_1, \dots, \Tc_n) \mid \Tc
          & \quad (n \geq 0),
      \\
      \Tc & ::=
          & \nu \Ys.(C(\Zs, X), \seq{U \bowtie V}, \Tx_1, \dots, \Tx_n)
          & \quad (n \geq 0).\qedhere
    \end{array}
  \]
\end{mydef}
The template \(\Tc\) denotes a template whose root contains a
constructor atom,
whereas the template \(\Tx\) allows a graph variable at the root.
When the root of a template is a graph variable,
its subtemplates must have constructor atoms at their root.
Note that the above definition allows two or more
constructor atoms to occur
consecutively.

To obtain the primary root of the leftmost graph variable or constructor atom
in restricted templates \(\Tx\),
we extend \(\Root\) as follows:
\begin{align*}
  \Root(\nu \Ys.(x[\Zs, X]: \tau, \dots)) & \triangleq X, \\
  \Root(\nu \Ys.(C(\Zs, X), \dots))       & \triangleq X.
\end{align*}

The constraint on links for case
patterns is formally defined as follows.
\begin{mydef}[Case Pattern Link Constraints]\label{def:pat-link}
  The pattern \(\Tx\) is required to satisfy \(\PatCond(\Tx)\)
  defined recursively as follows:
  \begin{align}
     & \PatCond(\nu \Ys.(x[\Zs, X]: \tau, \Tc_1, \dots, \Tc_n))
    \triangleq \notag{}                                                       \\
     & \quad X \notin \{\Zs\} \label{patc11}                                  \\
     & \quad \land\;
    \bigl(\forall i, j \in \{1, \dots, n\}.
    \; i \neq j \implies \notag{}                                             \\
     & \quad\qquad \land\; \VarLinks(\Tc_i) \cap \VarLinks(\Tc_j) = \emptyset
    \bigr)
    \label{patc31}                                                            \\
     & \quad \land\;
    \bigl(\forall i \in \{1, \dots, n\}. \notag{}                             \\
     & \quad\qquad \Root(\Tc_i) \in \{\Zs\} \label{patc21}                    \\
     & \quad\qquad \land\; \PatCond(\Tc_i)
    \bigr)
    \label{patc41}                                                            \\
     & \PatCond(
    \nu \Ys.(C(\Zs, X), \seq{U \bowtie V}, \Tx_1, \dots, \Tx_n)
    ) \triangleq \notag{}                                                     \\
     & \quad X \notin \{\Zs\} \label{patc12}                                  \\
     & \quad \land\;
    \bigl(\forall i, j \in \{1, \dots, n\}.
    \; i \neq j \implies \notag{}                                             \\
     & \quad\qquad \land\; \VarLinks(\Tx_i) \cap \VarLinks(\Tx_j) = \emptyset
    \bigr)
    \label{patc32}                                                            \\
     & \quad \land\;
    \bigl(\forall i \in \{1, \dots, n\}. \notag{}                             \\
     & \quad\qquad \Root(\Tx_i) \in \{\Zs\} \label{patc22}                    \\
     & \quad\qquad \land\; \PatCond(\Tx_i)
    \bigr)
    \label{patc42}
  \end{align}

  where $\VarLinks$ is defined recursively as follows:
  \begin{align}
     & \VarLinks(\nu \Ys.(x[\Zs]: \tau, \seq{\Tc})) \triangleq \notag{} \\
     & \quad
    \left(
    \bigcup\nolimits_{\Tc \in \seq{\Tc}} 
    \VarLinks(\Tc)
    \right) \setminus \{\Ys\},
    \label{patc:vl1}                                                    \\
     & \VarLinks(
    \nu \Ys.(C(\Zs, X), \seq{U \bowtie V}, \seq{\Tx})
    ) \triangleq \notag{}                                               \\
     & \quad
    \left\{ X \,\middle|
    \begin{array}{llll}
      \displaystyle
      X \bowtie_{\seq{U \bowtie V}} W \,\land \\[0.5mm]
      W \in \{X\} \cup \bigcup_{\Tx \in \seq{\Tx}} \VarLinks(\Tx)
    \end{array}
    \right\}
    \setminus \{\Ys\}. \label{patc:vl2}
  \end{align}
\end{mydef}

Given a graph template $\Tx$, $\VarLinks(\Tx)$
returns the set of links
that occur free and are either
the primary roots of constructor atoms or links fused with such links in the template.
The definition proceeds as follows:
\begin{enumerate}
  \item for a template whose root is a graph variable,
        it recurses into its subtemplates while eliminating hidden links\eqref{patc:vl1}.
  \item for a template whose root is a constructor atom,
        it collects the links that can be fused with
        the inductively obtained primary roots in its subtemplates
        or with the primary root of the constructor atom,
        while eliminating hidden links\eqref{patc:vl2}.
\end{enumerate}

Using \(\VarLinks\), \(\PatCond\) of \Cref{def:pat-link}
checks that a template satisfies the following properties:
\begin{enumerate}
  \item the primary root of the leftmost graph variable
        or constructor atom is a free link \eqref{patc11}\eqref{patc12};
  \item each subtemplate is connected to the
        leftmost graph variable or constructor atom
        through its primary root \eqref{patc21}\eqref{patc22};
  \item \emph{separation condition:}
        the primary roots of constructor atoms in subtemplates
        are pairwise disjoint \eqref{patc31}\eqref{patc32}; and
  \item this property is checked recursively
        \eqref{patc41}\eqref{patc42}.
\end{enumerate}

The purpose of the case pattern syntax
(\Cref{def:pat-cond-syntax})
and its link conditions
(\Cref{def:pat-link})
is to guarantee the type safety of pattern matching in case expressions;
that is,
whenever pattern matching succeeds,
the type $\LI{\seq{\tau}}{\tau}{\Xs}$ assigned to
each graph variable is identical to
the type $\LI{\seq{\tau'}}{\tau'}{\seq{X'}}$ of the subgraph
that matches it.
Here, $\{\seq{\tau}\}$ and $\{\seq{\tau'}\}$ are identical as multisets and can be empty.
The consistency of free links, $\{\Xs\} = \{\seq{X'}\}$, is trivial,
since the free links of congruent graphs
and the left- and right-hand sides of the typing relation
are equivalent.
Below we explain how the case pattern syntax (\Cref{def:pat-cond-syntax})
and its link conditions (\Cref{def:pat-link})
ensure $\{\seq{\tau}\} = \{\seq{\tau'}\}$, and then how they ensure $\tau = \tau'$.

The case pattern syntax
(\Cref{def:pat-cond-syntax})
and its link conditions
(\Cref{def:pat-link})
ensure these properties as follows
(for more detail, see the proof sketch of \Cref{prop:matching-consistency}):
\begin{description}
  \item[$\{\seq{\tau}\} = \{\seq{\tau'}\}$:]
        According to the syntactic conditions
        of
        \Cref{def:pat-cond-syntax}
        and the link condition
        \eqref{patc11},
        every subtemplate connected to a graph variable via its primary root
        has a constructor atom at its root.
        Thus, by the constraints on production rules
        (\Cref{def:disjoint-condition}),
        the types of these subtemplates
        and those of their corresponding matching subgraphs
        are uniquely determined.

        However, even if the types of these subtemplates
        and their matching subgraphs are uniquely determined,
        ambiguity may still arise when a subtemplate
        connected to the primary root of a graph variable
        is shared among other graph variables.
        To prevent this,
        the separation condition enforced by $\VarLinks$
        \eqref{patc31}\eqref{patc32}
        requires that
        the subtemplates of graph variables
        must not shared with other graph variables
        through their primary roots.

  \item[$\tau = \tau'$:]
        The syntactic conditions in
        \Cref{def:pat-cond-syntax},
        together with the link conditions other than $\VarLinks$
        in \Cref{def:pat-link}
        \eqref{patc11}\eqref{patc12}\eqref{patc21}\eqref{patc22}
        require that
        the primary root of each graph variable
        either (i) be connected to a constructor atom
        or (ii) serve as the primary root of the entire case pattern.
        For Case (i),
        by the constraints on production rules
        (\Cref{def:disjoint-condition}),
        the type of the subtemplate and its matching subgraph
        are uniquely determined.
        For Case (ii),
        \TyCase{} in \Cref{def:revised-ty-case} (and
        \Cref{def:pattern-condition} coming later) ensures that
        the type of the case pattern coincides with
        the type of the scrutinee
        to be evaluated and
        its evaluation result, i.e.,
        the graph being matched.
\end{description}

We allow only patterns in \textbf{case} expressions
that are structurally congruent to a template that follows \Cref{def:pat-cond-syntax}
and satisfy Case Pattern Link Conditions given in \Cref{def:pat-link}.

For example,
$\eDLun'$ \eqref{eq:sound-dbllist-removal-eg-vis}
(\Cref{sec:IntroductionOfRestrictions})
satisfies the constraints of
\Cref{def:pat-cond-syntax} and \Cref{def:pat-link}.
The case pattern of $\eDLun'$
is congruent to
\begin{align}\label{eq:sound-dbllist-removal-pat-eg}
  \patDLun' \triangleq
  \begin{array}{lll}
    \nu W_2 W_3.(y[X,W_3,W_2,Z]:                    \\
    \qquad \LI{\mathit{lltree}(W_3,Y,W_4)}{         \\
    \qquad\quad \mathit{dbllist}(X, Y, Z)           \\
    \qquad }{X,W_3,W_2,Z},                          \\
    \quad \nu W_1 W_4.(
    \Cons(W_1,W_2,W_4,W_3),                         \\
    \qquad w[W_1]: \mathit{nat}(W_1),               \\
    \qquad z[W_3,Y,W_4]: \mathit{lltree}(W_3,Y,W_4) \\
    \quad )                                         \\
    ).
  \end{array}
\end{align}
Here, the graph variables $y[X, W_3, W_2, Z]$ and $z[W_3, Y, W_4]$ do not
occur consecutively,
and the pattern $\patDLun'$ satisfies the syntactic conditions
specified in \Cref{def:pat-cond-syntax}.
It also satisfies the link conditions defined in \Cref{def:pat-link}.
In particular,
the primary root \(W_4\) of the graph variable \(z[W_3, Y, W_4]\)
is connected to the constructor atom \(\Cons(W_1, W_2, W_4, W_3)\)
as required by \eqref{patc22},
and it trivially satisfies the separation condition
since $\VarLinks$ always returns an empty set.

On the other hand,
\(\eDLun\) (defined in \eqref{eq:unsound-dbllist-removal-eg-vis})
does not satisfy the constraints of \Cref{def:pat-cond-syntax} and \Cref{def:pat-link}.
The case pattern of \(\eDLun\) is congruent to
\begin{align}\label{eq:unsound-dbllist-removal-pat-eg}
  \patDLun \triangleq
  \begin{array}{lll}
    \nu W_1 W_2.(
    \Cons(W_1,X,W_2,Z),                             \\
    \quad w[W_1]: \mathit{nat}(W_1),                \\
    \quad \nu W_3 W_4.(y[Z,W_3,W_4,W_2]:            \\
    \quad\qquad \LI{\mathit{lltree}(Y,W_4,W_3)}{    \\
    \quad\qquad\quad \mathit{dbllist}(Z, Y, W_2)    \\
    \quad\qquad }{Z,W_3,W_4,W_2},                   \\
    \qquad z[W_4,Y,W_3]: \mathit{lltree}(Y,W_4,W_3) \\
    \quad )                                         \\
    ).
  \end{array}
\end{align}
However, \(\patDLun\) does not satisfy the syntactic condition
in \Cref{def:pat-cond-syntax},
because the graph variables $y[Z,W_3,W_4,W_2]$ and $z[W_4,Y,W_3]$
occur consecutively.
Since the graph variable $z[W_4,Y,W_3]$ has its primary root $W_3$
connected only to $y[Z,W_3,W_4,W_2]$ and not to any constructor atom,
even if we rearrange the order of the constructor atom \(\Cons(W_1,X,W_2,Z)\)
and the graph variables $y[Z,W_3,W_4,W_2]$ and $z[W_4,Y,W_3]$,
the pattern still fails to satisfy the link conditions
\eqref{patc22} of
\Cref{def:pat-link}.

To represent typing derivations that involve only certain typing rules,
we define the following typing relation.
\begin{mydef}[Subset Type System]\label{def:subset-typesystem}
  Let $S$ be
  a subset of the set of the typing rules.
  $\Gamma \vdash_P^S e : \tau$
  denotes that
  $\Gamma \vdash_P e : \tau$
  can be derived only using typing rules in $S$.
\end{mydef}

The revised typing rule for \textbf{case} expressions is given as follows:
\begin{mydef}[Revised \TyCaseC{} with Pattern Constraints]\label{def:pattern-condition}
  \mbox{}
  \vspace{-1.2em}
  \begin{prooftree}
    \def\extraVskip{1pt}
    \def\defaultHypSeparation{\hskip .1in}
    \AXC{$\begin{array}{r@{~}l}
          \Gamma                  & \vdash_P e_1 : \tau_1
          \\
          \Gamma, \CollectVars(T) & \vdash_P e_2 : \tau_2
        \end{array}$}
    \AXC{$\begin{array}{r@{~}l}
          \CollectVars(T) & \vdash_P^S \Tx : \tau_1
          \\
          \Gamma          & \vdash_P e_3 : \tau_2
        \end{array}$}
    \RightLabel{\TyCase{}}
    \BIC{$\ml{\Gamma \vdash_P\shortstrut (\caseof{e_1}{T}{e_2}{e_3}) : \tau_2}$}
  \end{prooftree}%
  where $\Tx \equiv T$,
  $\PatCond(\Tx)$, and
  $S = \{
    \TyProd{},
    \TyAlpha{},\allowbreak
    \TyVar{},
    \TyLIE{}
    \}$.
\end{mydef}
This revised \TyCase{} allows only patterns in \textbf{case} expressions
that are structurally congruent to the template \(\Tx\),
whose syntax is defined in \Cref{def:pat-cond-syntax}
and which must satisfy the link constraints given in \Cref{def:pat-link}.

Importantly,
these constraints
(\Cref{def:disjoint-condition,def:pat-cond-syntax,def:pat-link})
do not significantly compromise the expressiveness of the language.
Most of the data types and operations illustrated in \Cite{sano2023}
remain expressible within this restricted framework.
Moreover,
the constrained language still supports algebraic data types,
as in conventional functional languages,
and additionally allows graph-shaped data structures with sharing and cycles.
Hence, it is more expressive than conventional functional languages with respect to data representation.

\subsection{Proof Sketch of Soundness}\label{sec:sec4-fully-static-soundness}

We conjecture that the type system described in \Cref{sec:sec4-constraints},
obtained by imposing the constraints of
\Cref{def:disjoint-condition} and \Cref{def:pattern-condition}, is sound.
Here, we only discuss a proof sketch.

This statement is formalised as follows.
\begin{prop}[Preservation]\label{lem:preserve}
  If
  \(\Gamma \vdash_P e: \tau\)
  and
  \(e \reduces e'\),
  then
  \(\Gamma \vdash_P e' : \tau\)
\end{prop}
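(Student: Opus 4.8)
The plan is to reuse the architecture of the proof of \Cref{theorem-preservation-subject-reduction} almost verbatim, since passing to the fully-static system changes only the \TyCase{} rule (\Cref{def:pattern-condition}) and adds the structural constraints of \Cref{def:disjoint-condition}. First I would note that \Cref{lemma-weakening}, \Cref{lemma-substitution-lemma} and \Cref{ECP} still hold for the revised rule: graph substitution never reaches the bound graph variables of a \textbf{case} pattern, so the side conditions \(T \equiv \Tx\), \(\PatCond(\Tx)\) and \(\CollectVars(T) \vdash_P^S \Tx : \tau_1\) are unaffected by substitution. As before, by \Cref{ECP} it suffices to take \(e = E[e_0]\) with \(e_0\) a redex, \(e_0 \reduces e_0'\), and prove \(\Gamma \vdash_P e_0 : \tau_0 \implies \Gamma \vdash_P e_0' : \tau_0\) by case analysis on the reduction rule. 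The \RdBeta{} case is literally unchanged (invert \TyApp{}, \TyArrow{}, apply \Cref{lemma-substitution-lemma}), and \RdCaseOther{} is immediate from the premise \(\Gamma \vdash_P e_3 : \tau_2\) of the revised \TyCase{}.

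All the work is in \RdCaseMatch{}: the fully-static semantics uses the plain rule of \Cref{table:lgt-reduction}, with no runtime type test, so the typings of the matched subgraphs are no longer available as a side condition. I would supply them through a separate lemma --- call it \emph{Pattern Matching Soundness} --- asserting: if \(\CollectVars(T) \vdash_P^S \Tx : \tau_1\) with \(\Tx \equiv T\) and \(\PatCond(\Tx)\), and \(\emptyset \vdash_P G : \tau_1\) (the scrutinee \(G\) is a value, hence closed), and \(G \equiv T\thetas\) with \(\thetas = [G_1/x_1[\Xs_1]] \cdots [G_n/x_n[\Xs_n]]\) covering every graph variable of \(T\), then \(\emptyset \vdash_P x_i[\Xs_i]\thetas : \sigma_i\) for each annotation \(x_i[\Xs_i] : \sigma_i\) in \(T\). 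Given this, the \RdCaseMatch{} case closes exactly as in \Cref{theorem-preservation-subject-reduction}: weaken each \(\emptyset \vdash_P G_i : \sigma_i\) to \(\Gamma\) via \Cref{lemma-weakening} and apply \Cref{lemma-substitution-lemma} \(n\) times to the premise \(\Gamma, \CollectVars(T) \vdash_P e_2 : \tau_2\).

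Proving Pattern Matching Soundness is the main obstacle, and it is exactly where \Cref{def:disjoint-condition,def:pat-cond-syntax,def:pat-link} are used. I would proceed by induction on the restricted template \(\Tx\), after using \TyCong{}/\TyAlpha{} to align the head atom of \(\Tx\) with a head atom of \(G\). The key sub-claim is a \emph{unique-decomposition} property of values: whenever \(\emptyset \vdash_P G : \alpha(\Xs)\) and the root of \(G\) is a constructor atom \(C(\Zs,X)\), the derivation must, modulo \TyAlpha{}/\TyCong{}, come from the one production rule whose right-hand side has constructor name \(C\) --- uniqueness of \(C\) is \Cref{def:disjoint-condition}(4) --- and then, because the primary roots of the immediate subterms are pairwise non-fused and are not fused with the root (\Cref{def:disjoint-condition}(2)--(3), via \(\FusedToRoot\)), the types of the subgraphs reached along the primary roots \(\Zs\) are forced. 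Traversing the primary roots downward from the root of \(T\): the link conditions \eqref{patc21} and \eqref{patc22} guarantee every subtemplate is reached through such a primary root; the separation condition \eqref{patc31} and \eqref{patc32} (via \(\VarLinks\)) guarantees no subgraph is claimed by two distinct graph variables; and the syntactic restriction that graph variables are never directly adjacent (\Cref{def:pat-cond-syntax}) guarantees that the subgraph bound to each \(x_i\) is a composition of pieces, each sitting below a constructor atom, so its type --- hence, for \(\sigma_i\) of the form \(\LI{\taus}{\tau}{\Xs}\), the antecedent multiset and the consequent --- is pinned down, exactly as in the ``LHS/RHS of \(\multimap\)'' discussion following \Cref{def:pat-link}. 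For the one graph variable that may sit at the root of \(\Tx\), its consequent component is matched directly against \(\tau_1\), which coincides with the type of \(G\) by hypothesis; this supplies the base case.

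The delicate technical points I anticipate are: (a) an inversion lemma permuting interleaved \TyCong{} and \TyAlpha{} steps to the top of the derivation of \(\emptyset \vdash_P G : \tau_1\), of the kind already needed in \Cref{lemma-substitution-lemma}; (b) strengthening the induction hypothesis so that it carries the linear-implication bookkeeping, since intermediate subterms of \(G\) and \(T\) carry \(\multimap\)-types with accumulating antecedent multisets (from \TyLITrans{} and \TyLIE{}) even though \(\tau_1\) and the \(\sigma_i\) are assumed \(\multimap\)-free only at the outermost level; and (c) confirming that restricting the template derivation to the rule subset \(S = \{\TyProd{}, \TyAlpha{}, \TyVar{}, \TyLIE{}\}\) really does force the structural shape the induction exploits. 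Progress (not asked here) is unaffected, so combined with \Cref{theorem-progress} this yields type soundness of the fully-static system.
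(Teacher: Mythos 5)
Your decomposition is exactly the paper's: reduce preservation to the \RdBeta{}/\RdCaseOther{} cases already handled in \Cref{theorem-preservation-subject-reduction}, and isolate the whole difficulty of \RdCaseMatch{} in a lemma stating that a successful match forces each bound subgraph to have its annotated type --- your ``Pattern Matching Soundness'' is literally \Cref{prop:matching-consistency}, and your closing step (weakening plus the substitution lemma applied $n$ times) is the paper's. The argument you sketch for that lemma also uses the same ingredients (constructor uniqueness from \Cref{def:disjoint-condition}(4), forced types along primary roots via \(\FusedToRoot\), the link and separation conditions of \Cref{def:pat-link}, and the root-variable case resolved by the revised \TyCase{} of \Cref{def:pattern-condition}). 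The one genuine divergence is in how the ``unique decomposition of values'' is made precise. The paper does it by introducing Normal Forms: it first proves (Appendix \ref{sec:elim-limprules}) that any value of non-\(\multimap\) type admits a derivation using only \TyProd{} (so the \(\multimap\)-bookkeeping you worry about in your point (b) disappears from the value side entirely), and then proves existence and uniqueness of this \TyProd{}-only derivation, i.e.\ of the primary-root spanning tree (\Cref{lem:normal-form-exists,lem:normal-form-uniq}, Appendices \ref{app:normal-form-proof} and \ref{sec:uniqueness-normal}); your plan instead keeps \(\multimap\)-types in play and strengthens the induction hypothesis to carry the accumulating antecedent multisets, together with an inversion lemma permuting \TyCong{}/\TyAlpha{}. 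That variant looks workable but is the harder road: the wholesale \(\TyLIs\)-elimination is precisely what lets the paper argue by a clean case analysis on the unique spanning tree, and your points (a)--(c) are exactly the places where that machinery would be doing the work. Note also that the paper itself only claims a proof sketch for this proposition, so your level of completeness is comparable to the source.
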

In what follows, we outline a proof sketch of \Cref{lem:preserve}.

In the system described in \Cref{sec:sec4-constraints},
dynamic checks in \textbf{case} expressions are eliminated.
It is therefore necessary to argue that the properties previously guaranteed
by dynamic checks are instead ensured statically.

More concretely, we consider the following proposition.
\begin{prop}[Matching Type Consistency]\label{prop:matching-consistency}
  Let $e$ be an expression $(\caseof{G}{T}{e_1}{e_2})$
  and $\thetas$ be graph substitutions.
  If $e$ is typable
  and
  $G \equiv T\thetas$,
  then
  $\emptyset \vdash_P x[\Xs]\thetas: \tau$
  for all $x[\Xs]: \tau \in \CollectVars(T)$.
\end{prop}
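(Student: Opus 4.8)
The plan is to proceed by structural induction on the restricted template $\Tx$ (with $\Tx \equiv T$, guaranteed by the revised \TyCase{} of \Cref{def:pattern-condition}), peeling off the matching one layer of $T$ at a time and simultaneously tracking how the type $\tau_1$ of the scrutinee decomposes. The key invariant to carry along the induction is that whenever a subtemplate $\Tx_i$ of $T$ has a constructor atom $C(\seq{Z},X)$ at its root, Conditions~(1)--(4) of \Cref{def:disjoint-condition} force the type of the corresponding matching subgraph $\Tx_i\thetas$ to be \emph{uniquely} determined by $C$ — the constructor name pins down which production rule applies (Condition~(4)), Condition~(2) guarantees the subcomponents hang off $C$ through their primary roots, and Condition~(3), phrased via $\FusedToRoot$, guarantees that those primary roots stay distinct even after further production-rule applications, so that the traversal from $C$ to each subcomponent is deterministic. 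This is precisely the ``LHS of $\multimap$'' and ``RHS of $\multimap$'' discussion in \Cref{sec:sec4-constraints}, which I would turn into a lemma: if $\CollectVars(T) \vdash_P^S \Tx : \tau_1$ (with $S$ the restricted rule set of \Cref{def:pattern-condition}) and $\Tx$ has a constructor atom at its root satisfying $\PatCond(\Tx)$, then for any $\thetas$ with $G \equiv \Tx\thetas$ the type of each maximal subgraph of $G$ hanging off that constructor is the annotated type $\tau$ of the graph variable rooting the corresponding subtemplate.

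First I would set up the induction and dispatch the base case, $\Tx = \nu\seq{Y}.(x[\seq{Z},X]:\tau)$ with no subtemplates: here $x[\seq{Z},X]\thetas$ \emph{is} $G$ (up to $\equiv$) and $\tau_1 = \tau$ by inversion on $\vdash_P^S$, so $\emptyset \vdash_P x[\seq{Z},X]\thetas : \tau$ follows from the premise $\CollectVars(T)\vdash_P^S \Tx:\tau_1$ together with the typing of $e_1$ and \TyCase{}'s requirement that $\tau_1$ is also the scrutinee's type (this is Case~(ii) of the ``RHS of $\multimap$'' argument — the graph variable serves as the primary root of the whole pattern). For the inductive step when $\Tx$ has a constructor atom $C(\seq{Z},X)$ at its root, I would invert the $\vdash_P^S$ derivation: since $S = \{\TyProd{},\TyAlpha{},\TyVar{},\TyLIE{}\}$, the derivation of $\CollectVars(T)\vdash_P^S \Tx:\tau_1$ must end in \TyProd{} (possibly after \TyAlpha{}/\TyCong{}), exposing a production rule $\alpha(\seq{W},X)\lto \nu\seq{Y}.(C(\seq{Z},X),\seq{U\bowtie V},\tau_1,\dots,\tau_n)$ of $P$ and subderivations $\CollectVars(T)\vdash_P^S \Tx_i : \tau_i$. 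On the semantic side, $G \equiv \Tx\thetas$ together with the uniqueness lemma above (driven by Condition~(4) of \Cref{def:disjoint-condition}) forces $G$ to decompose, up to $\equiv$, as $\nu\seq{Y}.(C(\seq{Z},X),\seq{U\bowtie V},G_1,\dots,G_n)$ with $G_i \equiv \Tx_i\thetas$ — and this is the step where the separation conditions $\VarLinks(\Tx_i)\cap\VarLinks(\Tx_j)=\emptyset$ from \eqref{patc31}\eqref{patc32} are needed, to rule out the ambiguous sharing described in \Cref{sec:sec4-constraints} (a subcomponent of $C$ being shared between two graph variables, which is exactly what made $\eDLun$ unsound). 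Applying the induction hypothesis to each $\Tx_i$ with the same $\thetas$ then yields $\emptyset\vdash_P x[\seq{X}]\thetas:\tau$ for every $x[\seq{X}]:\tau\in\CollectVars(\Tx_i)$, and since $\CollectVars(T) = \bigcup_i \CollectVars(\Tx_i)$ (no graph variables live at the constructor node itself) the conclusion for $\Tx$ follows. The remaining case — $\Tx$ has a graph variable $x[\seq{Z},X]:\tau$ at the root \emph{with} subtemplates $\Tc_i$ — is handled analogously: $\PatCond$ forces each $\Tc_i$ to have a constructor at its root, the uniqueness lemma gives the types of the $\Tc_i\thetas$, and what is left for $x$ itself is a linear-implication type whose RHS is pinned by Case~(ii) and whose LHS (a multiset of the $\tau_i$'s) is pinned, up to multiset equivalence, by the constructor-rooted subtemplates.

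The main obstacle I anticipate is making the uniqueness lemma genuinely airtight with hyperlinks and fusions in the picture: Condition~(3) of \Cref{def:disjoint-condition} is stated in terms of $\FusedToRoot$ precisely because two primary roots that look distinct in one production rule may become fused after further rule applications, so the ``deterministic traversal'' claim is really a statement about \emph{all} graphs derivable from $\alpha(\seq{X})$, not just one rule-unfolding, and proving it cleanly likely needs an auxiliary induction on the derivation of $G$ from the production rules. A secondary subtlety is bookkeeping of free links and $\nu$-binders across the $\equiv$-steps — the link conditions $X\notin\{\seq{Z}\}$ \eqref{patc11}\eqref{patc12} and $\Root(\Tx_i)\in\{\seq{Z}\}$ \eqref{patc21}\eqref{patc22} are exactly what is needed to guarantee that the decomposition of $G$ respects which links are local and which are the primary roots used for traversal, but checking that $\fn$-constraints line up at every step (so that, e.g., the substitution lemma and \Cref{ECP} can later be invoked in the proof of \Cref{lem:preserve}) is where a fully rigorous argument would demand the most care. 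Given these, \Cref{lem:preserve} itself would then follow by the same case analysis as \Cref{theorem-preservation-subject-reduction}, with the \RdCaseMatchB{} case now discharged using \Cref{prop:matching-consistency} in place of the dynamic check.
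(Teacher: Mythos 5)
Your plan is sound and rests on the same pillars as the paper's own (sketched) argument: the scrutinee's type is pinned by the revised \TyCase{}, constructor-name disjointness (\Cref{def:disjoint-condition}) uniquely determines the types of constructor-rooted subgraphs, the pattern constraints (\Cref{def:pat-cond-syntax}, \Cref{def:pat-link}) exclude ambiguous sharing between graph variables, and a graph variable's type is recovered by cancelling the subcomponent types against the overall type (the LHS/RHS of $\multimap$ discussion). The organisational difference is that you run a structural induction on the restricted template \(\Tx\), peeling one constructor layer at a time, whereas the paper instead performs a two-way case split on whether a graph variable sits at the root of the matched graph, and grounds the ``forced decomposition'' step in the Normal Form machinery: \Cref{lem:normal-form-exists} and \Cref{lem:normal-form-uniq} give the matched value \(G\) a unique spanning tree whose backbone is the primary roots, so the correspondence between sub-templates and sub-spanning trees of \(G\) is immediate. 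The ``main obstacle'' you flag --- making the deterministic-traversal/uniqueness claim airtight in the presence of fusions, via an auxiliary induction on the derivation of \(G\) --- is exactly the content the paper has already factored out into those Normal Form propositions (proved in the appendices using \(\FusedToRoot\) and Condition~(3)); invoking them would close the gap in your inductive step where you assert \(G \equiv \nu\seq{Y}.(C(\seq{Z},X), \seq{U \bowtie V}, G_1, \dots, G_n)\) with \(G_i \equiv \Tx_i\thetas\) of type \(\tau_i\). Two minor points: the rule set \(S\) of \Cref{def:pattern-condition} does not contain \TyCong{} (congruence is applied once, outside, via \(\Tx \equiv T\)), so your inversion should not appeal to it inside the \(\vdash_P^S\) derivation; and in your inductive step you should make explicit that the induction hypothesis is the strengthened statement for sub-templates against subgraphs of the \emph{already determined} type \(\tau_i\), since the original proposition's hypothesis only pins the type at the top level through \TyCase{}.
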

Intuitively, this proposition expresses that
if a subgraph matches a graph variable in a \textbf{case} expression,
the type of that subgraph must coincide with the type assigned to the graph variable.

To argue for this property, we proceed in two main steps.
First, we observe that any graph (value) matched in a \textbf{case} expression
can be transformed into a congruent \emph{Normal Form},
and that this normal form is unique.
Then, we analyse the matching process between templates and
matching graphs in their Normal Forms,
divide it into two exhaustive cases,
and argue that the property holds in each case.

We begin by defining the notion of a \emph{Normal Form} for graphs.
\begin{mydef}[Normal Form]\label{def:normal-form}
  Let \(G\) be a typable graph whose type is not a linear implication type,
  and whose typing derivation uses only \(\TyProd{}\)
  in derivations without premises of \(\TyArrow{}\).
  Then the graph \(G\) is said to be in \emph{Normal Form}.
  A graph is said to have a Normal Form
  if it is congruent to a graph that is in Normal Form.
\end{mydef}
Since the use of \TyCong{} is excluded from derivations,
the Normal Form of any set of congruent graphs is syntactically unique,
up to \(\alpha\)-renaming of local links.
This implies that the spanning tree of the graph whose
backbone is formed by primary roots
is uniquely determined.

We state that every typable graph (value) whose type is not a linear implication
has a unique Normal Form, up to \(\alpha\)-renaming of local link names.
We first state the existence of a Normal Form for such graphs.
\begin{prop}[Existence of a Normal Form]\label{lem:normal-form-exists}
  Let \(G\) be a typable graph whose type is not a linear implication type.
  Then, there exists a congruent graph $G'$
  whose
  typing derivation
  uses only \(\TyProd\) in
  the derivation
  without premises of \(\TyArrow\).
\end{prop}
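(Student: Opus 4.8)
The plan is to induct on the typing derivation of \(\emptyset \vdash_P G : \tau\); note that in the situations of interest (namely \Cref{prop:matching-consistency}) \(G\) is a closed value, so graph variables occur only inside \(\lambda\)-abstraction bodies, and by the admissibility of \TyLIE{} (\Cref{fig:ty-prod-admissible-eg}) we may assume \TyLIE{} itself is not used. We inspect the last rule. If it is \TyArrow{}, then \(G\) is a single \(\lambda\)-atom of arrow type and \(G\) is at once in Normal Form. If it is \TyCong{}, say \(G \equiv G_0\) with \(\emptyset \vdash_P G_0 : \tau\), the induction hypothesis gives \(G_0\) a Normal Form and ``having a Normal Form'' is closed under \(\equiv\), so \(G\) does too; an instance of \TyAlpha{} is absorbed in the same way, a renaming of free links being pushed into the choice of link names used by the \TyProd{} steps (production rules being read modulo renaming). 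If it is \TyProd{}, we keep that step and recurse on each premise \(\emptyset \vdash_P T_i : \tau_i\): each \(\tau_i\) is \(\multimap\)-free by the restriction on production rules, so the induction hypothesis yields \(T_i \equiv T_i'\) with a \TyProd{}-headed (or, for an arrow-typed slot, \TyArrow{}-headed) derivation, and \(\nu\Zs.(C(\Ys), \seq{U \bowtie V}, T_1', \dots, T_n')\) is the required congruent graph \(G'\).

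The only substantive case is when a \(\multimap\)-free type is obtained from a \(\multimap\)-type. Among the rules of \Cref{table:revised-typing-rules} only \TyLIElimZ{} can do this, so \(G = \nu\Zs.T\) with \(\emptyset \vdash_P T : \LI{\emptyset}{\tau}{\Ys}\). Here we would run an auxiliary induction on the \(\multimap\)-sub-derivation establishing: whenever \(\emptyset \vdash_P T : \LI{\tau_1,\dots,\tau_m}{\alpha(\Xs)}{\Ys}\) is derivable, \(T\) is congruent to a partially-applied instance \(\nu\seq{V}.(C(\Ys), \seq{W \bowtie U}, G_1, \dots, G_k)\) of some production rule for \(\alpha\) whose slots other than \(\tau_1,\dots,\tau_m\) are filled by graphs \(G_j\) that themselves possess Normal Forms, the \(\tau_i\)-slots being left open. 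The base case is \TyLIIntro{} (no slot filled); the step case \TyLITrans{} glues the partially-applied instance coming from \(T_1\) with the one coming from \(T_2\) along the cancelled type \(\tau_2\), flattening the nested \(\nu\)-binders with \((E1),(E8),(E9),(E10)\); and \TyLIIntroZ{} either appeals to the main induction hypothesis (when its subject already carries a \(\multimap\)-free type) or simply records a \(\lambda\)-atom filler. Specialising this to the empty antecedent \(m = 0\) that \TyLIElimZ{} forces leaves a \emph{completely} filled instance \(\nu\Zs.(C(\Ys), \seq{W \bowtie U}, G_1, \dots, G_n)\), which is exactly the conclusion of a \TyProd{} step whose premises \(\emptyset \vdash_P G_i : \sigma_i\) are put into Normal Form by the induction hypothesis; this furnishes \(G'\) and its derivation.

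The main obstacle is this re-association of the \(\multimap\)-sub-derivation. The rules \TyLIIntro{}, \TyLITrans{}, \TyLIElimZ{} are strictly more permissive than \TyProd{}: \TyLITrans{} may combine two still-incomplete fragments in an order unrelated to the left-to-right filling of any single production rule, a slot may be filled by a fragment that is itself incomplete, and the bookkeeping of which link is hidden when is distributed over many steps. Making the auxiliary claim precise therefore requires tracking, through the induction, exactly which production rule each \TyLIIntro{} leaf instantiates and verifying that the link identifications accumulated by the successive \TyLITrans{} steps --- constrained by side-conditions (i)--(v) of \TyLITrans{} and the free-link requirements of \Cref{ProductionRule} --- remain compatible, so that the flattened graph really is an instance of that rule. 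The structural restrictions of \Cref{def:disjoint-condition} are not needed for existence (they become essential only for the companion uniqueness claim, which ensures the spanning tree through the primary roots is determined), but this link arithmetic is what resists a short argument.
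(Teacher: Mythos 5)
There is a genuine gap, and it sits exactly where you admit the argument ``resists a short argument''. Your overall plan --- dispose of \TyCong{}/\TyAlpha{}, isolate \TyLIElimZ{} as the only way a $\multimap$-free type can arise, and then re-associate the $\multimap$-sub-derivation into the \TyProd{} admissibility pattern --- is the same idea the paper pursues, and you are right that \Cref{def:disjoint-condition} is not needed for existence. But your auxiliary invariant is not strong enough to be closed under \TyLITrans{}: you require $T$ to be congruent to a partially-applied instance of a \emph{single} production rule for $\alpha$ in which every slot other than $\tau_1,\dots,\tau_m$ is filled by a graph that already has a Normal Form. \TyLITrans{} does not respect this shape: it can fill a slot of the outer rule with a fragment that is itself incomplete, so the open antecedents of the overall type live at arbitrary depth, not among the slots of the top rule. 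The doubly-linked-list segment of \Cref{fig:diff-dbllist-typing-derivation-eg} is a concrete counterexample: its type is $\LI{\mathit{dbllist}(W_2,Y,W_4)}{\mathit{dbllist}(X,Y,Z)}{W_4,W_2,X,Z}$, the outer $\Cons$ rule's $\mathit{dbllist}$-slot is occupied by the second (still incomplete) $\Cons$ fragment, and the single open hole is that fragment's hole, not a slot of the outer rule. So the invariant must be reformulated as a nested, multi-hole context (in effect a partial derivation forest), which is precisely what the paper's Theorem on $\TyLIs{}$ elimination constructs explicitly.

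Second, even with the corrected invariant, the substantive content of the proof is the ``link arithmetic'' you defer: one must show that the $\nu$-bindings and fusions accumulated across the \TyLITrans{} steps can be commuted and re-localised so that every \TyLITrans{} up the left spine has an empty-antecedent right premise ending in \TyLIIntroZ{}, at which point the block collapses to \TyProd{}. In the paper this is carried out by the three proof-tree transformations (prenex normal form, hyperlink-creation localisation, proof-tree normalisation) and verified through the side conditions (ii)--(v) of \TyLITrans{} via the lemmas on free-link containment, localisation, and link-condition preservation; the paper's existence proof then just composes this with the elimination of \TyAlpha{} and the consolidation of \TyCong{} into a single final application. Your proposal names this obstacle but does not discharge it, so as written it is a plan with a mis-stated key lemma rather than a proof.
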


Next, we state the uniqueness of the Normal Form.
\begin{prop}[Uniqueness of a Normal Form]\label{lem:normal-form-uniq}
  Let \(G\) and \(G'\) be graphs in Normal Form such that \(G \equiv G'\).
  Then \(G\) and \(G'\) share the same typing derivation
  among those that contain no premises of \(\TyArrow{}\).
\end{prop}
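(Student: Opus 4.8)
The plan is to prove uniqueness of the Normal Form (\Cref{lem:normal-form-uniq}) by a structural induction that tracks the \emph{spanning-tree backbone} formed by primary roots. First I would observe that, since \TyCong{} is forbidden in Normal-Form derivations, any derivation of $\emptyset \vdash_P G : \tau$ in Normal Form (with $\tau$ not a linear implication and no $\TyArrow{}$ premises) must have its last step be \TyProd{}: indeed \TyVar{} is impossible under the empty environment, \TyApp{} would require an $\TyArrow{}$ premise, \TyAlpha{} and \TyCong{} are excluded, and \TyArrow{} gives an arrow type. So the root of the derivation applies a production rule $\alpha(\Xs) \lto \nu \Zs.(C(\Ys), \seq{U\bowtie V}, \tau_1,\dots,\tau_n) \in P$, forcing $G \equiv \nu\Zs.(C(\Ys),\seq{U\bowtie V},T_1,\dots,T_n)$ with $\emptyset \vdash_P T_i : \tau_i$ in Normal Form. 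The key is then to pin down, \emph{from $G$ alone up to $\equiv$}, which constructor atom sits at the root of this decomposition and how $G$ splits into the subterms $T_i$.

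The main argument is: by Condition~(1) of \Cref{def:disjoint-condition}, the primary root $X$ of $\alpha(\Xs)$ is a free link of $G$ and is attached to the constructor atom $C$; by Condition~(4), the constructor name $C$ is unique across $P$, so $C$ and hence the production rule applied is determined by the atom occupying the primary-root port of $G$. Then by Condition~(2), $\Root(\tau_i) \in \{\Zs\}$, i.e.\ each $T_i$ is attached to $C$ at a distinguished port, and by Condition~(3) (using $\FusedToRoot$), the primary roots of the $\tau_i$ are pairwise non-fused and non-fused with $X$ even after further rule applications, so the ports of $C$ partition the ``rest'' of $G$ unambiguously into the components $T_1,\dots,T_n$: starting from each such port and following primary roots never leads back to $C$ or to a sibling component. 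This gives a canonical graph decomposition $G \equiv \nu\Zs.(C(\Ys),\seq{U\bowtie V},G_1,\dots,G_n)$ with each $G_i$ determined up to $\equiv$ and $\alpha$-renaming of local links; matching it against the two derivations of $G$ and $G'$ shows they apply the same rule and split into $\equiv$-equal subgraphs. Applying the induction hypothesis to each $G_i$ (noting each $\tau_i$ is $\multimap$-free by restriction~(2) on production rules, and its Normal-Form derivation is again \TyProd{}-rooted) yields that the two derivations coincide.

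For the base case, $n = 0$: $G \equiv \nu\Zs.(C(\Ys),\seq{U\bowtie V})$ and there is nothing left to decompose; the rule is again fixed by the constructor name via Condition~(4), and the two derivations are identical single \TyProd{} steps (up to $\alpha$-renaming). I would also handle the bookkeeping that ``sharing the same typing derivation'' is meant modulo $\alpha$-renaming of local link names $\{\Zs\}$, which is forced because $G \equiv G'$ only up to such renaming.

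The hard part will be making precise and rigorous the claim that Conditions~(2) and~(3) force a \emph{unique} port-wise decomposition of $G$ into the $G_i$ --- i.e.\ that following primary roots out of the $i$-th relevant port of $C$ reaches exactly the atoms of $G_i$ and no others, and that this is independent of which $\equiv$-representative of $G$ one starts from. This is essentially the statement that the ``primary-root spanning tree'' is well defined, and it needs a careful inductive argument showing that $\FusedToRoot$ over-approximates every link that could become fused with a primary root under any sequence of rule applications, so that distinct subcomponents stay disjoint at every level. I expect this reachability/disjointness lemma to be the real content; once it is established, the uniqueness of the Normal-Form derivation follows by the structural induction sketched above.
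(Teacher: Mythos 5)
Your proposal is correct and follows essentially the same route as the paper: both identify the root constructor atom via the primary-root link (using the disjointness conditions on production rules, in particular Conditions (3) and (4) of the restriction on production rules), decompose the remainder into sub-spanning-trees reachable through primary roots, and conclude by structural induction that the \TyProd{}-only derivation is unique up to $\alpha$-renaming of local links. The ``hard part'' you flag --- that $\FusedToRoot$ keeps sibling components disjoint so the port-wise decomposition is canonical --- is exactly the step the paper also treats as the crux (and, like you, leaves at the level of a structural-induction sketch).
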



Detailed proofs of \Cref{lem:normal-form-exists} and \Cref{lem:normal-form-uniq}
appear in Appendix~A.4
of the extended version~\cite{sano2025arxiv}.

Using \Cref{lem:normal-form-exists} and \Cref{lem:normal-form-uniq},
we give a proof sketch of \Cref{prop:matching-consistency} below.
\begin{proof}[Proof sketch of \Cref{prop:matching-consistency}]
  Consider the case expression
  $(\caseof{G}{T}{e_1}{e_2})$.
  Suppose that both \( G \) and \( T \) have the type \(\tau'\).
  Since the matching target graph in a \textbf{case} expression
  is typable,
  and because in \Cref{sec:ext-type} we imposed the restriction that
  the type of the first argument of a \textbf{case} expression
  must not be a linear implication type,
  we can consider a Normal Form for the graph.

  We now analyse template-graph matching in the following
  two cases.
  \begin{enumerate}
    \item
          \textit{%
            A graph variable appears at the root of a graph,
            that is,
            the graph variable has a root link which is the root link of the type of a graph.
          }

          By the typing rule Ty-Case,
          the types of the expression $e_1$ and the template $T$ must be identical.
          Therefore, by the induction hypothesis,
          the graph reduced from $e_1$ has the same type as $T$.

          At this point, we consider two cases.
          \begin{enumerate}
            \item
                  If $T$ consists of only a graph variable,
                  then the type of the graph variable corresponds directly to the type of $T$,
                  which is also the type of the pattern being matched.

            \item
                  On the other hand, if $T$ is not solely a graph variable,
                  the typing must be derived using the rule Ty-LI-E.
                  According to \Cref{def:pattern-condition} and \Cref{def:disjoint-condition},
                  the root of each sub-spanning tree connected to the graph variable in $T$
                  and those matches in $G$
                  contains exactly the same constructor atom.
          \end{enumerate}

          As a result, the type of these sub-spanning trees remains consistent.
          Therefore,
          the type of the graph variable obtained by subtracting
          the types of sub-spanning tree from the overall types
          in both $T$ and the reduced graph $G$
          must be the same type.

    \item
          \textit{%
            The graph variable does not appear at the root of the graph.
          }

          We consider a sub-template containing a graph variable:
          \(
          T \triangleq \nu \Ys.(x[\Zs, X], \seq{\Tc}).
          \)
          According to \Cref{def:pattern-condition},
          the graph variable must be located immediately under a constructor atom of the form
          $C(\dots, X, \dots)$, where $X$ is the root link of the graph variable.

          During matching, this constructor atom is expected to correspond to
          an identical constructor atom $C(\dots, X, \dots)$ in the target graph $G$.
          Since $G$ is assumed to admit a unique spanning tree,
          the sub-spanning tree of $G$ that matches the sub-template must be
          the one connected to $X$ via this constructor atom.

          At this point, we analyse the type consistency.
          Because of the disjointness constraint on the production rules,
          there exists a unique production rule that can be applied to any given constructor atom
          via the typing rule \TyProd{}.
          Therefore, the type of the sub-template $T$ and that of the corresponding subgraph in $G$
          must be identical.

          Finally
          we consider the cases whether the sub-template consists solely
          of a graph variable or not.
          In either case, the conclusion remains the same as in Case~(1).
          \qedhere
  \end{enumerate}
\end{proof}

\begin{proof}[Proof sketch of \Cref{lem:preserve}]
  The preservation property stated in \Cref{lem:preserve}
  can be established straightforwardly
  using the proof of \Cref{sec:sec3-soundness},
  with \Cref{prop:matching-consistency}.
\end{proof}

The soundness of the type system is established by \Cref{lem:preserve},
in combination with the progress lemma which
can be established straightforwardly from
\Cref{sec:sec3-soundness}.

\section{Related Work}\label{sec:related}

In this section, we discuss related work and highlight key differences.

\subsection{Advancements Over Prior Work on \(\lambda_{GT}\)}

Earlier work on \(\lambda_{GT}\)~\cite{sano2023}\cite{sano-icgt2023} required
runtime type checking in \texttt{case} expressions.
In this study, we seek \emph{static} type checking,
eliminating the need for dynamic checks and ensuring a statically sound language design.

A key contribution in this work is the introduction of
\emph{linear implication types} (\(\multimap\))
to the prior type system of \(\lambda_{GT}\).
This mechanism enables the typing of incomplete data structures,
which is inspired by the \emph{Magic Wand} operator in Separation Logic~\cite{OHearn-SLsite}
and the \emph{Difference Type} in LMNtalGG~\cite{yamamoto2024}.

The difference between the
proposed linear implication types
and the Magic Wand and the Difference Types is an explicit representation of free links.
Unlike in Separation Logic, where free links are not explicitly tracked,
our framework, which relies on hyperlinks, requires explicit
handling of free links.
Similarly, while LMNtalGG
does not require the explicit handling of free links due to its non-hypergraph setting,
our approach mandates their explicit declaration
to accommodate hypergraph-based data structures.

Another major refinement in our work is the introduction of
language-level constraints
alongside type system restrictions to ensure soundness.
The mere addition of an operator analogous to the Magic Wand does not guarantee soundness.
Thus, constraints are imposed not only on the type system but also on the language syntax,
particularly in \texttt{case} patterns, to maintain correctness.
This approach contrasts with Disjoint LMNtalGG~\cite{yamamoto2024},
which relies on dynamic type checks, whereas our approach ensures static type checking.

\subsection{Comparison with Other Existing Research}

Ownership type systems~\cite{ownership2005} and affine/linear type systems
help manage destructive memory operations preventing common pitfalls
such as memory leaks and dangling pointers.
However, these approaches alone
cannot handle data structures
involving shared or cyclic references.

Separation Logic~\cite{Reynolds02}\cite{separation-logic} and shape
analysis~\cite{shape-analysis} have been extensively studied for verifying
pointer-manipulating programs.
While powerful in principle, they often require
specialised knowledge of specification formalisms (e.g., symbolic heaps and
inductive predicates~\cite{beyond-symbolic-heaps}\cite{PiskacWZ13}\cite{EneaLSV14}\cite{IosifRS13}),
making them difficult for general programmers to adopt.

Graph Types~\cite{graphtypes} provide a means of defining canonical spanning trees
with \emph{auxiliary edges} described by routing expressions (regular path
specifications).
In contrast,
\(\lambda_{GT}\) directly manipulates graphs
using powerful pattern matching inspired by graph transformation,
rather than accessing nodes via routing expressions.

Numerous graph transformation languages and frameworks
(e.g.,~\cite{structuredgamma}\cite{agg2012}\cite{GP}\cite{GP2}\cite{Groove}\cite{grgen-net}\cite{lmntal2009}\cite{hyperlmntal}\cite{PORGY2014}\cite{progress1999})
have been developed, yet their integration into a functional paradigm remains a challenge.
In particular,
verifying the correctness of pattern matching statically
is difficult due to the complexity of graphs.
We address this issue by
incorporating linear implication types into the type system of \(\lambda_{GT}\).

Conventional functional languages (e.g., Haskell, OCaml) typically process
tree-shaped data with potential sharing using \texttt{let rec} definitions
or built-in reference types.
However, these mechanisms do not inherently
enforce structural shape constraints.
Several approaches (e.g., Clean~\cite{clean1987}, Inductive Graphs~\cite{fungraph})
employ graph-based representations internally,
but they still primarily treat data as trees within the type system.
UnCAL~\cite{uncal2000} and FUnCAL~\cite{funcal}
supports functional queries on graph databases based on bisimulation;
however this differs significantly from the pattern matching mechanism in \(\lambda_{GT}\).


Context patterns in Haskell~\cite{ContextPatterns} extend conventional pattern matching
by enabling not only the matching of subtrees
but also the matching of the surrounding context from which the subtrees are removed.
This context is represented as a function,
allowing for its reuse in subsequent computation.
This capability closely resembles the kind of structural decomposition targeted in $\lambda_{GT}$.
However, a key difference lies in the underlying data structures:
while $\lambda_{GT}$ supports graphs including doubly-linked lists and leaf-linked trees,
contextual patterns are strictly limited to trees and their corresponding contexts.

\section{Conclusion and Future Work}\label{sec:conclusion}

Enabling intuitive and safe manipulation of complex data structures
is a fundamental challenge in designing a programming language.
The $\lambda_{GT}$ language, a purely functional programming language
that treats graphs as primary data structure,
addresses some of these challenges.
By abstracting data with shared references and cycles as graphs,
it enables intuitive operations through pattern matching.
The language also comes with its type system to guarantee the safety
of these operations.

However, some open challenges for the type system were identified in
previous work.
This study addresses these challenges by incorporating linear implication
into the $\lambda_{GT}$ type system
and introducing new constraints to maintain soundness.

The extended type system allows typing of \emph{incomplete graphs},
that is, graphs in which some elements are missing from the graphs of user-defined types.
This type system
can handle structures such as a doubly-linked list missing one or more tail nodes,
or a leaf-linked tree with one or more absent leaves,
without requiring separate type definitions for such partial structures.
With this type system,
programs that perform operations involving these structures as intermediate values
can now be fully statically typed without additional effort from the user.

Future work on the technical side includes the handling of graphs with
free links and fusions only (i.e., graphs with no atoms) such as empty
difference lists (a.k.a. graph segments).  Although they are quite
important in practice, subtlety of their
handling is somewhat similar to the handling of empty strings in
formal grammar and would require special care.  Also,
graph structures with linear implication types are not yet allowed to
occur everywhere, and identifying and allowing more use cases of
linear implication types is our challenging future work.

Future work on a broader perspective
includes implementing a type checker and extending the type system,
such as incorporating polymorphic types and type inference.
Such enhancements would further improve the expressiveness and
usability of $\lambda_{GT}$,
paving the way for broader adoption of declarative operations
over complex data structures in purely functional programming paradigms.

\subsection*{Acknowledgements}
This work is partially supported by Grant-In-Aid for Scientific
Research (23K11057), JSPS, Japan.
The authors would like to thank the reviewers of the current and previous versions of this manuscript
for their valuable and constructive comments.
The authors also thank Prof.\ \mbox{Akimasa} Morihata for his helpful advice.

\bibliographystyle{ipsjsort-e}
\bibliography{ref}

@article{lmntal2009,
	title = {{LMNtal} as a hierarchical logic programming language},
	journal = {Theoretical Computer Science},
	volume = {410},
	number = {46},
	pages = {4784-4800},
	year = {2009},
	author = {Kazunori Ueda},
	keywords = {LMNtal, Concurrent logic programming, Concurrent constraint
	            programming, Hierarchical graph rewriting},
	doi = {10.1016/j.tcs.2009.07.043},
}

@article{hyperlmntal,
	author = {Ueda, Kazunori and Ogawa, Seiji},
	title = {{HyperLMNtal}: An Extension of a Hierarchical Graph Rewriting
	         Model},
	journal = {KI - K{\"u}nstliche Intelligenz},
	year = {2012},
	day = {01},
	volume = {26},
	number = {1},
	pages = {27--36},
	abstract = {LMNtal (pronounced ``elemental'') is a language model based
	            on hierarchical graph rewriting that uses point-to-point
	            links to represent connectivity and membranes to represent
	            hierarchy. LMNtal was designed to be a substrate language of
	            various computational models, especially those addressing
	            concurrency, mobility and multiset rewriting.},
	doi = {10.1007/s13218-011-0162-3},
}

@inproceedings{graphtypes,
	author = {Klarlund, Nils and Schwartzbach, Michael I.},
	title = {Graph Types},
	booktitle = {Proc.\ POPL'93},
	year = {1993},
	isbn = {0-89791-560-7},
	location = {Charleston, South Carolina, USA},
	pages = {196--205},
	numpages = {10},
	organization = {ACM},
	doi = {10.1145/158511.158628},
}

@article{Groove,
	author = {Ghamarian, A.H. and de~Mol, M. and Rensink, A. and Zambon, E.
	          and Zimakova, M.},
	title = {Modelling and analysis using {GROOVE}},
	journal = {Int J Softw Tools Technol Transfer},
	volume = {14},
	number = {1},
	pages = {15--40},
	year = {2012},
	doi = {10.1007/s10009-011-0186-x},
}

@book{EhrigEPT06,
	author = {Hartmut Ehrig and Karsten Ehrig and Ulrike Prange and Gabriele
	          Taentzer},
	title = {Fundamentals of Algebraic Graph Transformation},
	series = {Monographs in Theoretical Computer Science. An {EATCS} Series},
	publisher = {Springer},
	year = {2006},
	doi = {10.1007/3-540-31188-2},
	isbn = {978-3-540-31187-4},
	timestamp = {Sun, 02 Jun 2019 01:00:00 +0200},
	biburl = {https://dblp.org/rec/series/eatcs/EhrigEPT06.bib},
	bibsource = {dblp computer science bibliography, https://dblp.org},
}

@book{handbook_graph_grammar,
	author = {Rozenberg, Grzegorz},
	title = {Handbook of Graph Grammars and Computing by Graph
	         Transformation, Volume 1: Foundations},
	publisher = {World Scientific},
	year = {1997},
	address = {},
	edition = {},
	doi = {10.1142/3303},
}

@inproceedings{sano2021,
	title = {{Syntax-driven and compositional syntax and semantics of
	         Hypergraph Transformation System (in Japanese)}},
	author = {Jin Sano and Kazunori Ueda},
	booktitle = {Proc.\ 38nd JSSST Annual Conference (JSSST 2021)},
	url = {http://jssst.or.jp/files/user/taikai/2021/papers/45-L.pdf},
	year = {2021},
}

@article{sano-ba,
	author = {Jin Sano},
	title = {Implementing {G-Machine} in {HyperLMNtal}},
	year = {2021},
	doi = {10.48550/arXiv.2103.14698},
	journal = {arXiv},
	publisher = {Cornell University},
	archivePrefix = {arXiv},
	eprint = {2103.14698},
	timestamp = {Wed, 07 Apr 2021 15:31:46 +0200},
	biburl = {https://dblp.org/rec/journals/corr/abs-2103-14698.bib},
	bibsource = {dblp computer science bibliography, https://dblp.org},
	school = {Waseda University},
	type = {Bachelor's Thesis},
	note = {Bachelor's Thesis, Waseda University},
}

@inproceedings{shapetypes,
	title = {Shape types},
	author = {Pascal Fradet and Daniel Le M{\'e}tayer},
	booktitle = {Proc.\ POPL'97},
	pages = {27--39},
	year = {1997},
	organization = {ACM},
	doi = {10.1145/263699.263706},
}

@article{structuredgamma,
	title = {Structured {Gamma}},
	journal = {Science of Computer Programming},
	volume = {31},
	number = {2},
	pages = {263--289},
	year = {1998},
	author = {Pascal Fradet and Daniel Le M{\'e}tayer},
	doi = {10.1016/S0167-6423(97)00023-3},
}

@inproceedings{GP,
	author = {Manning, G. and Plump, D.},
	title = {The {GP} programming system},
	booktitle = {Proc.\ Graph Transformation and Visual Modelling Techniques
	             (GT-VMT 2008)},
	series = {Electronic Communications of the EASST},
	volume = {10},
	year = {2008},
	doi = {10.14279/tuj.eceasst.10.150},
}

@phdthesis{GP2,
	title = {GP 2: efficient implementation of a graph programming language},
	author = {Christopher Bak},
	school = {Department of Computer Science, The University of York},
	year = {2015},
	url = {https://etheses.whiterose.ac.uk/12586/},
}

@inproceedings{separation-logic,
	author = {Reynolds, J.C.},
	booktitle = {Proc. LICS 2002},
	title = {Separation logic: a logic for shared mutable data structures},
	year = {2002},
	volume = {},
	number = {},
	organization = {IEEE},
	pages = {55-74},
	doi = {10.1109/LICS.2002.1029817},
}

@inproceedings{funcal,
	author = {Matsuda, Kazutaka and Asada, Kazuyuki},
	title = {A functional reformulation of UnCAL graph-transformations: or,
	         graph transformation as graph reduction},
	year = {2017},
	isbn = {9781450347211},
	publisher = {Association for Computing Machinery},
	address = {New York, NY, USA},
	url = {https://doi.org/10.1145/3018882.3018883},
	doi = {10.1145/3018882.3018883},
	booktitle = {Proceedings of the 2017 ACM SIGPLAN Workshop on Partial
	             Evaluation and Program Manipulation},
	pages = {71–82},
	numpages = {12},
	keywords = {Termination, Regular Trees, Lazy Evaluation, Graph
	            Transformation, Functional Languages, Bisimulation},
	location = {Paris, France},
	series = {PEPM 2017},
}

@article{uncal2000,
	author = {Buneman, Peter and Fernandez, Mary and Suciu, Dan},
	year = {2000},
	month = {11},
	pages = {},
	title = {UnQL: A Query Language and Algebra for Semistructured Data
	         Based on Structural Recursion},
	volume = {9},
	journal = {The VLDB Journal},
	doi = {10.1007/s007780050084},
}

@article{grgen-net,
	author = {Jakumeit, Edgar and Buchwald, Sebastian and Kroll, Moritz},
	year = {2010},
	title = {{GrGen.NET}: The expressive, convenient and fast graph rewrite
	         system},
	volume = {12},
	number = {3},
	pages = {263-271},
	journal = {Int J Softw Tools Technol Transfer},
	doi = {10.1007/s10009-010-0148-8},
	publisher = {{Springer}},
	language = {english},
}

@inproceedings{EneaLSV14,
	author = {Constantin Enea and Ond\v{r}ej Leng\'{a}l and Mihaela
	          Sighireanu and Tom\'{a}\v{s} Vojnar},
	title = {Compositional Entailment Checking for a Fragment of Separation
	         Logic},
	booktitle = {APLAS},
	year = {2014},
	pages = {314--333},
	series = {Lecture Notes in Computer Science},
	volume = {8858},
	publisher = {Springer},
}

@inproceedings{IosifRS13,
	author = {Radu Iosif and Adam Rogalewicz and Jir{\'{\i}} Sim{\'{a}}cek},
	title = {The Tree Width of Separation Logic with Recursive Definitions},
	booktitle = {CADE},
	year = {2013},
	pages = {21--38},
	series = {Lecture Notes in Computer Science},
	volume = {7898},
	publisher = {Springer},
	url = {http://dx.doi.org/10.1007/978-3-642-38574-2\_2},
	doi = {10.1007/978-3-642-38574-2\_2},
	timestamp = {Tue, 09 Sep 2014 11:12:25 +0200},
	biburl = {http://dblp.uni-trier.de/rec/bib/conf/cade/IosifRS13},
	bibsource = {dblp computer science bibliography, http://dblp.org},
}

@inproceedings{PiskacWZ13,
	author = {Ruzica Piskac and Thomas Wies and Damien Zufferey},
	title = {Automating Separation Logic Using {SMT}},
	booktitle = {CAV},
	year = {2013},
	pages = {773--789},
	series = {Lecture Notes in Computer Science},
	volume = {8044},
	publisher = {Springer},
	url = {http://dx.doi.org/10.1007/978-3-642-39799-8\_54},
	doi = {10.1007/978-3-642-39799-8\_54},
	timestamp = {Tue, 16 Sep 2014 15:30:18 +0200},
	biburl = {http://dblp.uni-trier.de/rec/bib/conf/cav/PiskacWZ13},
	bibsource = {dblp computer science bibliography, http://dblp.org},
}

@inproceedings{Reynolds02,
	author = {John C. Reynolds},
	title = {Separation Logic: {A} Logic for Shared Mutable Data Structures},
	booktitle = {LICS},
	year = {2002},
	pages = {55--74},
	publisher = {{IEEE} Computer Society},
	url = {http://doi.ieeecomputersociety.org/10.1109/LICS.2002.1029817},
	doi = {10.1109/LICS.2002.1029817},
	biburl = {http://dblp.uni-trier.de/rec/bib/conf/lics/Reynolds02},
	bibsource = {dblp computer science bibliography, http://dblp.org},
}

@misc{OHearn-SLsite,
	author = {Peter O'Hearn},
	title = {Separation Logic},
	OPTyear = {2010},
	note = {\url{
	        http://www0.cs.ucl.ac.uk/staff/p.ohearn/SeparationLogic/Separation_Logic/SL_Home.html
	        }},
}

@inproceedings{shape-analysis,
	author = {Reinhard Wilhelm and Shmuel Sagiv and Thomas W. Reps},
	title = {Shape Analysis},
	booktitle = {Compiler Construction, 9th International Conference, {CC}
	             2000},
	series = {Lecture Notes in Computer Science},
	volume = {1781},
	pages = {1--17},
	publisher = {Springer},
	year = {2000},
	url = {https://doi.org/10.1007/3-540-46423-9\_1},
	doi = {10.1007/3-540-46423-9\_1},
	timestamp = {Tue, 14 May 2019 10:00:48 +0200},
	biburl = {https://dblp.org/rec/conf/cc/WilhelmSR00.bib},
	bibsource = {dblp computer science bibliography, https://dblp.org},
}

@article{ownership2005,
	author = {Dietl, Werner and Müller, Peter},
	year = {2005},
	pages = {5-32},
	title = {Universes: Lightweight Ownership for {JML}},
	volume = {4},
	number = {8},
	journal = {Journal of Object Technology},
	doi = {10.5381/jot.2005.4.8.a1},
}

@inproceedings{beyond-symbolic-heaps,
	author = {Jens Pagel and Florian Zuleger},
	title = {Beyond Symbolic Heaps: Deciding Separation Logic With Inductive
	         Definitions},
	booktitle = {LPAR23. LPAR-23: 23rd International Conference on Logic for
	             Programming, Artificial Intelligence and Reasoning},
	editor = {Elvira Albert and Laura Kovacs},
	series = {EPiC Series in Computing},
	volume = {73},
	pages = {390--408},
	year = {2020},
	publisher = {EasyChair},
	bibsource = {EasyChair, https://easychair.org},
	issn = {2398-7340},
	url = {https://easychair.org/publications/paper/VTGk},
	doi = {10.29007/vkmj},
}

@article{sano2023,
	title = {Type Checking Data Structures More Complex than Trees},
	author = {Jin Sano and Naoki Yamamoto and Kazunori Ueda},
	journal = {Journal of Information Processing},
	volume = {31},
	pages = {112-130},
	year = {2023},
	doi = {10.2197/ipsjjip.31.112},
	note = {A revised version is available from arXiv:2209.05149},
}

@inproceedings{haskell,
	title = {{The Glasgow Haskell Compiler}},
	note = {\url{http://www.haskell.org/ghc/}},
}

@inproceedings{PORGY2014,
	author = "Fern{\'a}ndez, Maribel and Kirchner, H{\'e}l{\`e}ne and Mackie
	          , Ian and Pinaud, Bruno",
	editor = "Beckmann, Arnold and Csuhaj-Varj{\'u}, Erzs{\'e}bet and Meer,
	          Klaus",
	title = "Visual Modelling of Complex Systems: Towards an Abstract
	         Machine for {PORGY}",
	booktitle = "Language, Life, Limits",
	year = "2014",
	series = {LNCS},
	volume = {8493},
	publisher = "Springer",
	address = "Cham",
	pages = "183--193",
	abstract = "PORGY is a visual modelling tool, where a system is defined
	            by a strategic graph program. In this paper, we provide an
	            operational semantics for strategic graph programs by means
	            of an abstract machine. The semantics specifies the valid
	            transformation steps, providing a link between the model and
	            its implementation in PORGY.",
	doi = {10.1007/978-3-319-08019-2\_19},
}

@inproceedings{agg2012,
	author = "Runge, Olga and Ermel, Claudia and Taentzer, Gabriele",
	editor = "Sch{\"u}rr, Andy and Varr{\'o}, D{\'a}niel and Varr{\'o},
	          Gergely",
	title = "{AGG} 2.0 -- New Features for Specifying and Analyzing
	         Algebraic Graph Transformations",
	booktitle = "Proc.\ AGTIVE 2011",
	series = {LNCS},
	volume = {7233},
	year = "2012",
	publisher = "Springer",
	address = "Berlin, Heidelberg",
	pages = "81--88",
	abstract = "The integrated development environment AGG supports the
	            specification of algebraic graph transformation systems based
	            on attributed, typed graphs with node type inheritance, graph
	            rules with application conditions, and graph constraints. It
	            offers several analysis techniques for graph transformation
	            systems including graph parsing, consistency checking of
	            graphs as well as conflict and dependency detection in
	            transformations by critical pair analysis of graph rules, an
	            important instrument to support the confluence check of graph
	            transformation systems. AGG 2.0 includes various new features
	            added over the past two years. It supports the specification
	            of complex control structures for rule application comprising
	            the definition of control and object flow for rule sequences
	            and nested application conditions. Furthermore, new
	            possibilities for constructing rules from existing ones (e.g.
	            , inverse, minimal, amalgamated, and concurrent rules) and
	            for more flexible usability of critical pair analyses have
	            been realized.",
	doi = {10.1007/978-3-642-34176-2\_8},
}

@incollection{progress1999,
	author = {Andy Sch{\"u}rr and Andreas J. Winter and Albert Z{\"u}ndorf},
	title = {The PROGRES Approach: Language and Environment},
	year = {1997},
	publisher = {World Scientific},
	doi = {10.1142/9789812384720\_0002},
	booktitle = {Handbook of Graph Grammars and Computing by Graph
	             Transformation: Volume 2: Applications, Languages and Tools},
	chapter = {13},
	pages = {487--550},
}

@article{fungraph,
	author = {Erwig, Martin},
	title = {Inductive Graphs and Functional Graph Algorithms},
	year = {2001},
	issue\_date = {September 2001},
	publisher = {Cambridge University Press},
	address = {USA},
	volume = {11},
	number = {5},
	doi = {10.1017/S0956796801004075},
	abstract = {We propose a new style of writing graph algorithms in
	            functional languages which is based on an alternative view of
	            graphs as inductively defined data types. We show how this
	            graph model can be implemented efficiently, and then we
	            demonstrate how graph algorithms can be succinctly given by
	            recursive function definitions based on the inductive graph
	            view. We also regard this as a contribution to the teaching
	            of algorithms and data structures in functional languages
	            since we can use the functional-style graph algorithms
	            instead of the imperative algorithms that are dominant today.
	            },
	journal = {J. Funct. Program.},
	month = {sep},
	pages = {467--492},
	numpages = {26},
}

@inproceedings{clean1987,
	author = {T. H. Brus and Marko C. J. D. van Eekelen and M. O. van Leer
	          and Marinus J. Plasmeijer},
	editor = {Gilles Kahn},
	title = {{CLEAN:} {A} language for functional graph writing},
	booktitle = {Functional Programming Languages and Computer Architecture},
	series = {Lecture Notes in Computer Science},
	volume = {274},
	pages = {364--384},
	publisher = {Springer},
	year = {1987},
	doi = {10.1007/3-540-18317-5\_20},
}

@inproceedings{yamamoto2024,
	author = {Yamamoto, Naoki and Ueda, Kazunori},
	title = {Grammar-based Pattern Matching and Type Checking for Difference
	         Data Structures},
	year = {2024},
	isbn = {9798400709692},
	publisher = {Association for Computing Machinery},
	address = {New York, NY, USA},
	url = {https://doi.org/10.1145/3678232.3678243},
	doi = {10.1145/3678232.3678243},
	booktitle = {Proceedings of the 26th International Symposium on
	             Principles and Practice of Declarative Programming},
	articleno = {13},
	numpages = {13},
	number = {13},
	keywords = {Difference Data Structures, Graph Grammars, Graph Rewriting,
	            Pattern Matching, Type Checking},
	location = {Milano, Italy},
	series = {PPDP '24},
}

@inproceedings{sano-icgt2023,
	author = "Sano, Jin and Ueda, Kazunori",
	editor = "Fern{\'a}ndez, Maribel and Poskitt, Christopher M.",
	title = {Implementing the $\lambda_{GT}$ Language: A Functional Language
	         with Graphs as First-Class Data},
	booktitle = "Proc. 16th International Conference on Graph Transformation
	             (ICGT 2023)",
	series = {LNCS},
	volume = {13961},
	year = "2023",
	publisher = "Springer",
	address = "Cham",
	pages = "263--277",
	abstract = "Several important data structures in programming are beyond
	            trees; for example, difference lists, doubly-linked lists,
	            skip lists, threaded trees, and leaf-linked trees. They can
	            be abstracted into graphs (or more precisely, port
	            hypergraphs). In existing imperative programming languages,
	            these structures are handled with destructive assignments to
	            heaps as opposed to a purely functional programming style.
	            These low-level operations are prone to errors and not
	            straightforward to verify. On the other hand, purely
	            functional languages allow handling data structures
	            declaratively with type safety. However, existing purely
	            functional languages are incapable of dealing with data
	            structures other than trees succinctly. In earlier work, we
	            proposed a new purely functional language, {\$}{\$}{
	            \backslash}lambda {\_}{\{}GT{\}}{\$}{\$}, that handles graphs
	            as immutable, first-class data structures with pattern
	            matching and designed a new type system for the language.
	            This paper presents a prototype implementation of the
	            language constructed in about 500 lines of OCaml code. We
	            believe this serves as a reference interpreter for further
	            investigation, including the design of full-fledged languages
	            based on {\$}{\$}{\backslash}lambda {\_}{\{}GT{\}}{\$}{\$}.",
	doi = "10.1007/978-3-031-36709-0\_14",
}

@inproceedings{ContextPatterns,
	author = "Mohnen, Markus",
	editor = "Kluge, Werner",
	title = "Context patterns in Haskell",
	booktitle = "Implementation of Functional Languages",
	year = "1997",
	publisher = "Springer Berlin Heidelberg",
	address = "Berlin, Heidelberg",
	pages = "41--57",
	abstract = "In modern functional languages, pattern matching is used to
	            define functions or expressions by performing an analysis of
	            the structure of values. We extend Haskell with a new
	            non-local form of patterns called context patterns, which
	            allow the matching of subterms without fixed distance from
	            the root of the whole term. The semantics of context patterns
	            is defined by transforming them to standard Haskell programs.
	            Typical applications of context patterns are functions which
	            search a data structure and possibly transform it.",
	isbn = "978-3-540-69239-3",
}

@article{sano2025arxiv,
	title = {Introducing Linear Implication Types to $\lambda_{GT}$ for
	         Computing With Incomplete Graphs},
	author = {Jin Sano and Naoki Yamamoto and Kazunori Ueda},
	year = {2025},
	journal = {arXiv},
	publisher = {Cornell University},
	eprint = {2510.17429},
	archivePrefix = {arXiv},
	primaryClass = {cs.PL},
	doi = {10.48550/arXiv.2510.17429},
}

\begin{biography}
  \profile{Jin Sano}{%
    received his B.Eng. degree in 2021 and his M.Eng. degree in 2023,
    and has been a Ph.D. student at Waseda University, Japan, since 2024.
    His research interests include programming languages and software verification.
  }
  \profile{Naoki Yamamoto}{%
    received his M.Eng.\ and Dr.Eng.\
    degrees from Waseda University in 2021 and 2025, respectively.
    He has served as a research associate at Waseda University since 2022,
    was promoted to an assistant professor (non-tenure) in 2025,
    and has also served as a part-time lecturer at the University of Tokyo since 2025.
    His research interests include programming languages and program verification by proof assistants.
  }
  \profile{Kazunori Ueda}{%
    received his Dr.~Eng. degree from the University of Tokyo in 1986.
    After joining NEC in 1983, he was on secondment to the Institute for
    New Generation Computer Technology (ICOT) from 1985 to 1992, and has
    been with Waseda University since 1993, serving as Professor since
    1997. He was also a Visiting Professor at the Egypt-Japan University
    of Science and Technology from 2010 to 2025. His research interests
    include the design and implementation of programming languages,
    concurrency and parallelism, high-performance verification, and hybrid
    systems. He is a Fellow of IPSJ and JSSST, and an honorary member of
    JSSST.
  }
\end{biography}

\appendix

\section{Formal Definitions of Substitutions}\label{app:substitutions}

We define two forms of capture-avoiding substitution: link
substitution and graph substitution.

\begin{mydef}[Link Substitution]
  A \emph{link substitution},
  \(T\angled{ Z_1, \dots, Z_n / Y_1, \dots, Y_n }\),
  that replaces all free occurrences of
  \(Y_i\) with \(Z_i\)
  is defined as
  in \figref{table:hyperlink-substitution}.
  Here, the \(Y_1, \dots, Y_n\) should be pairwise distinct.
  Note that, if a free occurrence of \(Y_i\) occurs at a
  location where \(Z_i\) would not be free,
  \(\alpha\)-conversion may be required.

  \begin{figure}[t]
    \small
    \hrulefill{}
    \vspace*{2pt}
    \begin{center}
      \begin{tabular}{@{}r@{\hspace{0.5em}}c@{\hspace{0.5em}}l@{}}
        \(\zero\angled{\Zs/\Ys}\)
                                       & \(\triangleq \) & \(\zero\)                                                \\[1pt]
        \(p(\Xs)\angled{\Zs/\Ys}\)     & \(\triangleq \) & \(p(X_1\angled{\Zs/\Ys}, \ldots, X_n\angled{\Zs/\Ys}) \) \\
        \multicolumn{3}{r}{%
          where \(
          X\angled{\Zs/\Ys} =
          \left\{
          \begin{array}{@{}ll}
            Z_i & \mbox{if } X = Y_i          \\
            X   & \mbox{if } X \notin \{\Ys\}
          \end{array}
          \right.
          \)
        } \\[3pt]
        \(x[\Xs]\angled{\Zs/\Ys}\)     & \(\triangleq \) & \(x[X_1\angled{\Zs/\Ys}, \ldots, X_n\angled{\Zs/\Ys}] \) \\
        \multicolumn{3}{r}{%
          where \(
          X\angled{\Zs/\Ys} =
          \left\{
          \begin{array}{@{}ll}
            Z_i & \mbox{if } X = Y_i          \\
            X   & \mbox{if } X \notin \{\Ys\}
          \end{array}
          \right.
          \)
        } \\[3pt]

        \((T_1, T_2)\angled{\Zs/\Ys}\) & \(\triangleq \) & \((T_1 \angled{\Zs/\Ys}, T_2 \angled{\Zs/\Ys})\)         \\[3pt]
        \(
        (\nu X.T)\angled{\Zs/\Ys}\)    & \(\triangleq \) & \\
        \multicolumn{3}{l}{%
          \(\left\{
          \begin{array}{@{}l@{~~}l@{}}
            \nu X.T\angled{\seq{Z'}/\seq{Y'}} & \mbox{if } X = Y_i\ \land                               \\
                                              & \seq{Z'} = Z_1, \dots, Z_{i - 1}, Z_{i + 1}, \dots, Z_n \\
                                              & \seq{Y'} = Y_1, \dots, Y_{i - 1}, Y_{i + 1}, \dots, Y_n \\[1pt]
            \nu X.T\angled{\Zs/\Ys}           & \mbox{if } X \notin \{\Ys\} \land X \notin \{\Zs\}      \\[3pt]
            \nu W.(T\angled{W/X})\angled{\Zs/\Ys}
                                              & \mbox{if } X \notin \{\Ys\} \land X \in \{\Zs\}         \\
                                              & \land W \notin \mathit{fn}(T) \land W \notin \{\Zs\}
          \end{array}
          \right.
          \)
        }
      \end{tabular}
    \end{center}
    \hrulefill{}
    \caption{(Hyper)link Substitution}\label{table:hyperlink-substitution}
  \end{figure}
\end{mydef}

Before defining graph substitution, we define the notion of \emph{free
  functors}, where a functor, written $x/n$,
stands for the name $x$ of a graph variable $x[\Xs]$
with its arity $n$ which is equal to $|\Xs|$.
We regard two graph variables with the same name and the
same arity and within the same scope as referring to the same variable
with possibly renamed free links.

\begin{mydef}[Free functors of an expression]\label{def:free-functor}

  We define free functors of an expression \(e\), \(\mathit{ff}(e)\),
  in \figref{table:free-functor}.

  \begin{figure}[t]
    \normalsize
    \hrulefill{}
    \begin{center}
      \begin{tabular}{r@{\hspace{0.5em}}c@{\hspace{0.5em}}l}
        \multicolumn{3}{l}{\(\mathit{ff} (\caseof{e_1}{T}{e_2}{e_3}) \triangleq \)}                                                                                  \\
                                                     &                 & \(\mathit{ff}(e_1) \cup (\mathit{ff}(e_2) \setminus \mathit{ff}(T)) \cup \mathit{ff}(e_3)\) \\[3pt]
        \(\mathit{ff}((e_1\; e_2))\)                 & \(\triangleq \) & \(\mathit{ff}(e_1) \cup \mathit{ff}(e_2)\)                                                  \\[3pt]
        \(\mathit{ff} (x [\Xs])\)                    & \(\triangleq \) & \(\{x/\norm{\Xs}\}\)                                                                        \\[1pt]
        \(\mathit{ff} (v\,(\Xs))\)                   & \(\triangleq \) & \(\emptyset\)                                                                               \\[1pt]
        \(\mathit{ff} ((\lambda\, x[\Xs].e) (\Ys))\) & \(\triangleq \) & \(\mathit{ff}(e) \setminus \{x/\norm{\Xs}\}\)                                               \\[1pt]
        \(\mathit{ff} ((T_1, T_2))\)                 & \(\triangleq \) & \(\mathit{ff}(T_1) \cup \mathit{ff}(T_2)\)                                                  \\[1pt]
        \(\mathit{ff} (\nu X. T)\)                   & \(\triangleq \) & \(\mathit{ff} (T)\)                                                                         \\
      \end{tabular}
    \end{center}
    \hrulefill{}
    \caption{Free functors of an expression}\label{table:free-functor}
  \end{figure}
\end{mydef}

Free functors are not to be confused with free link names.



\begin{mydef}[Graph Substitution]
  We define capture-avoiding substitution
  \(\theta\)
  of a graph variable
  \(x [\Xs]\)
  with a template \(T\)
  in \(e\),
  written \(e [T / x [\Xs]]\),
  as in \figref{table:graph-substitution}.
  The definition is standard except that it handles the substitution of
  the free links of graph variables in the third rule.
\end{mydef}

\begin{figure}[t]
  \small
  \hrulefill{}
  \begin{center}
    \begin{tabular}{@{}r@{~}c@{~}l}
      \((T_1, T_2)\theta\)       & \(\triangleq \) 
                                 & \((T_1 \theta, T_2 \theta)\) \\[3pt]
      \((\nu X. T)\theta\)       & \(\triangleq \) 
                                 & \(\nu X. T\theta\) \\[3pt]
      \((x [\Xs])[T / y [\Ys]]\) & \(\triangleq \) &
      if \(x/\norm{\Xs} = y/\norm{\Ys}\) 
      then \(T{\angled{\Xs/\Ys}}\) else \(x [\Xs]\) \\[1pt]
      \((C (\Xs))\theta\) & \(\triangleq \) & \(C (\Xs)\) \\[1pt]
      \multicolumn{3}{@{}l}{%
      \(((\lambda\, x [\Xs].e) (\Zs))[T / y [\Ys]]~\triangleq \)} \\
                                 & \multicolumn{2}{@{}l}{%
      \hspace*{-20pt}if \(x/\norm{\Xs} = y/\norm{\Ys}\) then
      \((\lambda\, x [\Xs].e) (\Zs)\)} \\
                                 & \multicolumn{2}{@{}l}{%
      \hspace*{-20pt}else if \(x/\norm{\Xs} \notin \mathit{ff}(T)\) then
      \((\lambda\, x [\Xs].e[T / y [\Ys]]) (\Zs)\)} \\
                                 & \multicolumn{2}{@{}l}{%
      \hspace*{-20pt}else
      \((\lambda\, z[\Xs].e [z [\Xs] / x [\Xs]] [T / y [\Ys]]) (\Zs)\)} \\
                                 & & \hfill{} where \(z/\norm{\Xs} \notin
                                 \mathit{ff}(e)\cup\mathit{ff}(T)\) \\[1pt]
      \multicolumn{3}{@{}l}{%
      \((\caseof{e_1}{T}{e_2}{e_3})\theta\)} \\
                                 & \(\triangleq \) 
                                 & \(\caseof{e_1\theta}{T}{e_2\theta}{e_3\theta}\) \\[3pt]
      \((T_1\; T_2)\theta\)      & \(\triangleq \) 
                                 & \((T_1 \theta\;\; T_2 \theta)\) \\
    \end{tabular}
  \end{center}
  \hrulefill{}
  \caption{Graph Substitution}\label{table:graph-substitution}
\end{figure}
\section{Complete Proof of the Substitution Lemma with Linear Implication Types}\label{sec:app-soundness}

\begin{lemma*}[\textbf{Lemma \ref{lemma-substitution-lemma}}, Substitution Lemma]
  If \(\Gamma, x[\Xs]:\tau' \vdash_P e : \tau\) and
  \(\Gamma \vdash_P G : \tau'\), then
  \(\Gamma \vdash_P e[G / x[\Xs]] : \tau\).
\end{lemma*}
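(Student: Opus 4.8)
The plan is to proceed by induction on the derivation tree of \(\Gamma, x[\Xs]:\tau' \vdash e : \tau\), with a case analysis on the typing rule used in the final step. The statement is the standard substitution lemma, so the skeleton is familiar; what requires care is the interaction of graph substitution \([G/x[\Xs]]\) with (i) link renaming, (ii) structural congruence under \TyCong{}, and (iii) the capture-avoiding clauses of graph substitution for \(\lambda\)-abstractions. I would first fix the invariant that the link constraints (\(\fn(e)=\fn(\tau)\), \(\{\Xs\}=\fn(\tau')\)) are maintained throughout, since graph substitution is defined only when \(\fn(G)=\{\Xs\}\), so substitution does not alter free links, and the side conditions on each typing rule remain satisfiable after substitution.

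First I would dispatch the base case \TyVar{}: if \(e = x[\Zs]\) with \(\tau = \tau'\angled{\Zs/\Xs}\), then \(e[G/x[\Xs]] = G\angled{\Zs/\Xs}\), and I must show \(\Gamma \vdash G\angled{\Zs/\Xs} : \tau'\angled{\Zs/\Xs}\); this follows from \(\Gamma \vdash G:\tau'\) together with \TyAlpha{} (iterated, and using \TyCong{}/\(\alpha\)-conversion to handle any capture), after checking the link-name bookkeeping. If instead \(e = y[\Zs]\) with \(y/\norm{\Zs}\neq x/\norm{\Xs}\), then \(e[G/x[\Xs]] = e\) and \(y[\Zs]:\sigma \in \Gamma\), so \TyVar{} still applies in \(\Gamma\). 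For the inductive cases \TyApp{}, \TyArrow{}, \TyCase{}, \TyProd{}, \TyLIIntro{}, \TyLITrans{}, \TyLIElimZ{}, \TyLIIntroZ{}, the substitution commutes with the term-forming operation (using the clauses of \figref{table:graph-substitution}), the induction hypothesis applies to each premise — for \TyArrow{} and \TyCase{} after an appeal to \Cref{lemma-weakening} to accommodate the extra binding \(x[\Xs]:\tau'\) in the enlarged environment, and after renaming the bound graph variable via the third branch of the \(\lambda\)-clause if \(x/\norm{\Xs}\) clashes with it — and the same typing rule reassembles the conclusion because its side conditions are purely syntactic/link-theoretic and unaffected by substituting a value for \(x[\Xs]\). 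The \TyAlpha{} and \TyCong{} cases need the observation that \([G/x[\Xs]]\) respects link renaming and structural congruence, i.e.\ \((T\angled{Y/X})[G/x[\Xs]] \equiv (T[G/x[\Xs]])\angled{Y/X}\) and \(T_1 \equiv T_2 \implies T_1[G/x[\Xs]] \equiv T_2[G/x[\Xs]]\); these are routine structural inductions on \figref{table:lgt-cong} and \figref{table:graph-substitution} but should be stated as auxiliary sublemmas.

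The main obstacle I expect is the capture-avoidance machinery in the \(\lambda\)-abstraction and \textbf{case} cases combined with hyperlink renaming: one must carefully argue that the side renamings introduced by the graph-substitution clauses (the fresh \(z\) in \((\lambda x[\Xs].e)(\Zs)[T/y[\Ys]]\), and the \(\alpha\)-conversions of local links forced when \(Z_i\) would be captured) do not disturb typability, which ultimately rests on the free-functor bookkeeping of \Cref{def:free-functor} and on \TyAlpha{}/\TyCong{} being admissible renaming steps. A secondary subtlety is that \(\thetas\) in the reduction rules is a sequence of substitutions of values; since each \(G_i\) is a value (no graph variables), the order of substitutions is irrelevant, which lets the lemma be applied iteratively in the preservation proof — but within this lemma itself only a single substitution is treated, so I would simply note that iteration is sound by the order-independence already assumed in \Cref{SmallStepSemantics}. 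The full details are deferred to the appendix as indicated in the excerpt.
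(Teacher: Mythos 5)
Your proposal is correct and follows essentially the same route as the paper's proof in Appendix B: induction on the typing derivation with case analysis on the last rule, the two subcases of \TyVar{} handled via \TyAlpha{}, and the two auxiliary commutation facts (substitution commutes with link renaming, and preserves structural congruence) isolated exactly where the paper needs them for \TyAlpha{} and \TyCong{}. Your explicit appeal to weakening in the \TyArrow{}/\TyCase{} cases (to lift \(\Gamma \vdash G : \tau'\) into the extended environment before applying the induction hypothesis) is in fact slightly more careful than the paper, which leaves that step implicit.
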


\begin{proof}
  By induction on the derivation tree of the typing relation
  \(\Gamma, x[\Xs]:\tau' \vdash_P e : \tau\).
  We split the cases by the rule used in the final step.

  \noindent
  \textbf{Case \TyVar{}:} 
  By assumption, we have a proof of the form
  \begin{prooftree}
    \AXC{$y[\Zs] \in \dom(\Gamma, x[\Xs]:\tau')$}
    \RightLabel{\TyVar{}}
    \UIC{$\Gamma, x[\Xs]:\tau' \vdash_P y[\Ys] : \tau'\angled{\Ys/\Zs}$}
  \end{prooftree}
  and a proof ending with
  \(\Gamma \vdash_P G : \tau'\).  The goal is to derive
  \[
    \Gamma \vdash_P y[\Ys][G / x[\Xs]]  : \tau'\angled{\Ys/\Zs}.
  \]

  \begin{enumerate}
    \def\labelenumi{\arabic{enumi}.}
    \item If \(y = x\),
          since we allow only one graph variable with the same name
          in a single type
          environment, we have \(y[\Zs] = x[\Xs]\) and \(\Zs = \Xs\).
          From the assumption \(\Gamma \vdash_P G : \tau'\), we
          have \(\Gamma \vdash_P G\angled{\Ys/\Zs} : \tau'\angled{\Ys/\Zs}\)
          using \TyAlpha{}.
          Since \(y = x\), \(y[\Ys][G/x[\Xs]] =
          G\angled{\Ys/\Xs} = G\angled{\Ys/\Zs}\), which allows us to
          rewrite the last typing relation to
          \(\Gamma \vdash_P y[\Ys][G/x[\Xs]] : \tau'\angled{\Ys/\Zs}\).

    \item If \(y \neq x\), we have
          (i) \(y[\Ys][G/x[\Xs]] = y[\Ys]\) and (ii)
          \(y[\Zs] \in \dom(\Gamma)\) from the premise of the above \TyVar{}.
          From (ii) we have
          \(\Gamma \vdash_P y[\Ys] : \tau'\angled{\Ys/\Zs}\) by \TyVar{},
          which is the same as
          \(\Gamma \vdash_P y[\Ys][G/x[\Xs]] : \tau'\angled{\Ys/\Zs}\) by (i).
  \end{enumerate}

  \noindent
  \textbf{Case \TyArrow{}:} 
  By assumption, we have a proof ending with the following form.
  \vspace{-12pt}
  \begin{prooftree}
    \def\ScoreOverhang{0pt}
    \def\defaultHypSeparation{\hskip .1in}
    \def\labelSpacing{2pt}
    \AXC{$\Gamma, x[\Xs]:\tau', y[\Ys]: \tau_1 \vdash_P e_1 : \tau_2$}
    \RightLabel{\TyArrow{}}
    \UIC{$\Gamma, x[\Xs]\narrowcolon \tau' \vdash_P
        (\lambda\, y[\Ys]\narrowcolon \tau_1.e_1)(\Zs)
        \narrowcolon\, (\tau_1\mathord{\to}\tau_2) (\Zs)$}
  \end{prooftree}

  Since we do not allow two or more graph variables with the same name
  (though the name conflict could be circumvented by renaming),
  we can assume \(y \neq x\).

  From the induction hypothesis on the premise of the above
  derivation step
  \[ \Gamma, y[\Ys]: \tau_1, x[\Xs]:\tau' \vdash_P e_1 : \tau_2 \]
  and the assumption \(\Gamma \vdash_P G : \tau'\), we have
  \[ \Gamma, y[\Ys]:\tau_1 \vdash_P e_1[G/x[\Xs]] : \tau_2. \]
  By applying \TyArrow{}, we have
  \[ \Gamma \vdash_P
    (\lambda y[\Ys]:\tau_1.\, e_1[G/x[\Xs]])(\Zs) : (\tau_1 \to \tau_2)(\Zs). \]
  Because \(y \neq x\), we can move the substitution outwards and get
  \[ \Gamma \vdash_P
    (\lambda y[\Ys]:\tau_1.\, e_1)(\Zs)[G/x[\Xs]] : (\tau_1 \to \tau_2)(\Zs). \]

  \noindent
  \textbf{Case \TyApp{}:} 
  By assumption, we have a proof ending with the following form.
  \begin{prooftree}
    \def\labelSpacing{2pt}
    \def\defaultHypSeparation{\hskip .05in}
    \AXC{$\begin{array}{@{}r@{~}l@{}}
          \Gamma, x[\Xs]\narrowcolon\tau' & \vdash_P
          e_1\narrowcolon\,(\tau_1 \rightarrow \tau) (\Zs) \\
          \Gamma, x[\Xs]\narrowcolon\tau' & \vdash_P
          e_2\narrowcolon\tau_1
        \end{array}$}
    \RightLabel{\TyApp{}}
    \UIC{$\Gamma, x[\Xs]\narrowcolon\tau' \vdash_P (e_1\; e_2)\narrowcolon\tau$}
  \end{prooftree}
  For the two premises of the above \TyApp{},
  the induction hypothesis gives us
  \(\Gamma \vdash_P e_1[G/x[\Xs]] : (\tau_1 \to \tau)(\Zs)\)
  and
  \(\Gamma \vdash_P e_2[G/x[\Xs]] : \tau_1\).
  From these, \TyApp{}
  gives us
  \[
    \Gamma \vdash_P e_1[G/x[\Xs]]\; e_2[G/x[\Xs]] : \tau.
  \]
  Because
  \(e[G/x[\Xs]] = (e_1[G/x[\Xs]]\; e_2[G/x[\Xs]])\), we get
  \[
    \Gamma \vdash_P (e_1\; e_2)[G/x[\Xs]] : \tau.
  \]
  and thus the lemma holds.

  \noindent
  \textbf{Case \TyAlpha{}:} 
  By assumption, we have a proof ending with the following form.
  \begin{prooftree}
    \AXC{$\Gamma, x[\Xs]:\tau' \vdash_P T : \tau_1$}
    \AXC{$Y \notin \mathit{fn}(T)$}
    \RightLabel{\TyAlpha{}}
    \BIC{$\Gamma, x[\Xs]:\tau' \vdash_P T\angled{Y/Z} : \tau_1\angled{Y/Z}$}
  \end{prooftree}

  By applying the induction hypothesis to the first premise
  \(\Gamma, x[\Xs]:\tau' \vdash_P T : \tau_1\), we obtain
  \(\Gamma \vdash_P T[G/x[\Xs]] : \tau_1\).
  Applying \TyAlpha{} to this typing relation, we obtain
  \(\Gamma \vdash_P T[G/x[\Xs]]\angled{Y/Z} : \tau_1\angled{Y/Z}\).

  The remaining step to reach the final goal
  \(\Gamma \vdash_P T\angled{Y/Z}[G/x[\Xs]] : \tau_1\angled{Y/Z}\)
  is to prove
  \[
    T[G/x[\Xs]]\angled{Y/Z} = T\angled{Y/Z}[G/x[\Xs]].
  \]
  This is proved by induction on the form of $T$.  We focus on the
  base case of $T$ being a graph variable $y[\Ws]$ because it is the
  most non-obvious case that actually performs substitution,
  whereas the structural induction steps are straightforward.

  If $y\ne x$, the both sides are just $y[\Ws]\angled{Y/Z}$ and the
  claim holds trivially.

  If $y=x$, the left-hand side is $G\angled{\Ws/\Xs}\angled{Y/Z}$,
  while the right-hand side is
  $y[\Ws\angled{Y/Z}][G/x[\Xs]]$ $=$
  $G\angled{\Ws\angled{Y/Z}/\Xs}$.

  By the definition of graph substitution, $\fn(G)=\{\Xs\}$; therefore
  it suffices to consider how each free link $X_i$ in $G$
  is affected by the
  $\angled{Y/Z}$, for $1\le i \le |\Xs|$.
  If $W_i=Z$, the link substitution applied to
  $G$ behaves as $\angled{Y/X_i}$ for both sides.
  If $W_i\ne Z$, the link substitution applied to
  $G$ behaves as $\angled{W_i/X_i}$ for both sides.  Therefore
  $G\angled{\Ws/\Xs}\angled{Y/Z} = G\angled{\Ws\angled{Y/Z}/\Xs}$.

  \noindent
  \textbf{Case \TyCong{}:} 
  By assumption, we have a proof ending with the following form.
  \begin{prooftree}
    \AXC{$\Gamma, x[\Xs]:\tau'  \vdash_P T : \tau$}
    \AXC{$T \equiv T'$}
    \RightLabel{\TyCong{}}
    \BIC{$\Gamma, x[\Xs]:\tau'  \vdash_P T' : \tau$}
  \end{prooftree}
  By applying the induction hypothesis to the first premise
  of \TyCong{}, we have \(\Gamma \vdash_P T[G/x[\Xs]] : \tau\).

  Structural congruence rules (\figref{table:lgt-cong})
  enjoy the following substitution property
  \[
    T \equiv T'
    \implies
    T[G/x[\Xs]] \equiv T'[G/x[\Xs]],
  \]
  which is straightforward by noticing that the requirement $\fn(G)=\{\Xs\}$
  of graph substitution ensures that the side conditions of (E6) and
  (E10) on free links are not affected by graph substitution.

  Now using $T[G/x[\Xs]] \equiv T'[G/x[\Xs]]$ as the second premise of
  \TyCong{}, we get
  \[
    \Gamma \vdash_P T'[G/x[\Xs]] : \tau.
  \]

  \noindent
  \textbf{Case \TyProd{}:} 
  By assumption, we have a proof ending with the form
  \vspace{-12pt}
  \begin{prooftree}
    \def\ScoreOverhang{0pt}
    \def\defaultHypSeparation{\hskip .05in}
    \def\labelSpacing{2pt}
    \AXC{$\Gamma, x[\Xs]:\tau' \vdash_P T_1 : \tau_1$}
    \AXC{$\dots$}
    \AXC{$\Gamma, x[\Xs]:\tau' \vdash_P T_n : \tau_n$}
    \RightLabel{\TyProd{}}
    \TIC{$
        \begin{array}{l}
          \Gamma, x[\Xs]:\tau' \vdash_P \\
          \nu \Zs.(C(\Ys), T_1, \dots, T_n, \seq{U \bowtie V}) : \alpha (\Ws)
        \end{array}
      $}
  \end{prooftree}
  where
  \(\alpha (\Ws) \lto \nu \Zs.(C(\Ys), \tau_1, \dots, \tau_n, \seq{U
    \bowtie V}) \in P\).

  By applying the induction hypothesis to the premise
  \(\Gamma, x[\Xs]:\tau' \vdash_P T_i : \tau_i\), we have
  \(\Gamma \vdash_P T_i[G/x[\Xs]] : \tau_i\),
  from which another application of \TyProd{} gives us
  \[\begin{array}{@{~}l@{}}
      \Gamma \vdash_P \\
      \nu \Zs.(C(\Ys), T_1[G/x[\Xs]], \dots, T_n[G/x[\Xs]], \seq{U \bowtie V}) : \alpha(\Ws).
    \end{array}\]
  From the definition of graph substitution, we have
  \begin{align}
     & (\nu \Zs.(C(\Ys), T_1, \dots, T_n, \seq{U \bowtie V}))[G/x[\Xs]] \nonumber  \\[-3pt]
     & = \nu \Zs.(C(\Ys), T_1[G/x[\Xs]], \dots, T_n[G/x[\Xs]], \seq{U \bowtie V}),
  \end{align}
  which allows us to factor out the substitution and get
  \[
    \Gamma
    \vdash_P \nu \Zs.(C(\Ys), T_1, \dots, T_n, \seq{U \bowtie
      V})[G/x[\Xs]] : \alpha(\Ws).
  \]

  \noindent
  \textbf{Case \TyCase{}:} 
  By assumption, we have a proof ending with the form
  \vspace{-8pt}
  \begin{prooftree}
    \def\extraVskip{1pt}
    \def\defaultHypSeparation{\hskip .1in}
    \AXC{$\begin{array}{r@{~}l}
          \Gamma                  & \vdash_P e_1 : \tau_1
          \\
          \Gamma, \CollectVars(T) & \vdash_P e_2 : \tau_2
        \end{array}$}
    \AXC{$\begin{array}{r@{~}l}
          \CollectVars(T) & \vdash_P T : \tau_1
          \\
          \Gamma          & \vdash_P e_3 : \tau_2
        \end{array}$}
    \RightLabel{\TyCase{}}
    \BIC{$\ml{\Gamma \vdash_P\shortstrut (\caseof{e_1}{T}{e_2}{e_3}) : \tau_2}$}
  \end{prooftree}%
  where each graph variable occurring
  in \(T\) must occur with
  a type annotation as
  $x_i[\overrightarrow{X_i}]:\sigma_i$,
  and \(\Gamma'\) = \{$\overrightarrow{x[\overrightarrow{X}]:\sigma}$\}.

  By applying the induction hypothesis to the premises,
  we have the following:
  \[\begin{array}{r@{~}l}
      \Gamma                  & \vdash_P e_1[G / x[\Xs]] : \tau_1, \\
      \CollectVars(T)         & \vdash_P T : \tau_1,               \\
      \Gamma, \CollectVars(T) & \vdash_P e_2[G / x[\Xs]] : \tau_2, \\
      \Gamma                  & \vdash_P e_3[G / x[\Xs]] : \tau_2.
    \end{array}\]
  Using them as premises of \TyCase{}, we have
  \begin{align}
    \Gamma \vdash_P
    (\textbf{case}\ e_1[G/x[\Xs]]\ \textbf{of}\
    T\rightarrow e_2[G/x[\Xs]] \nonumber \\[-3pt]
    |\ \textbf{otherwise}\ \rightarrow e_3[G/x[\Xs]]) : \tau_2.
  \end{align}
  By factoring out the substitution, we obtain
  \[
    \Gamma
    \vdash_P (\caseof{e_1}{T}{e_2}{e_3})[G/x[\Xs]] : \tau_2.
  \]

  \noindent
  \textbf{Case \TyLIIntro{}:} 
  By assumption, we have a proof ending with the form
  \vspace{-6pt}
  \begin{prooftree}
    \def\ScoreOverhang{0pt}
    \AXC{\strut}
    \RightLabel{\TyLIIntro{}}
    \UIC{$\begin{array}{l}
          \Gamma, x[\Xs]:\tau'  \vdash_P \\
          (C(\Ys), \seq{U \bowtie V}) :
          \LI{\taus}{\alpha(\Ws)}{\Ys}
        \end{array}$}
  \end{prooftree}
  where $\alpha(\Ws) \lto \nu \Zs.(C(\Ys), \taus, \seq{U \bowtie V}) \in P$.
  From this side condition and \TyLIIntro{}, we also have the following.
  \vspace{-6pt}
  \begin{prooftree}
    \AXC{\strut}
    \RightLabel{\TyLIIntro{}}
    \UIC{$\Gamma \vdash_P
        (C(\Ys), \seq{U \bowtie V})
        : \LI{\taus}{\alpha(\Ws)}{\Ys}$}
  \end{prooftree}
  Since \((C(\Ys), \seq{U \bowtie V})[G / x[\Xs]] =
  (C(\Ys), \seq{U \bowtie V})\), the above can be written also as
  \vspace{-6pt}
  \begin{prooftree}
    \def\ScoreOverhang{0pt}
    \AXC{\strut}
    \RightLabel{\TyLIIntro{}}
    \UIC{$\begin{array}{@{}l@{}}
          \Gamma \vdash_P \\
          (C(\Ys), \seq{U \bowtie V})[G/x[\Xs]]
          : \LI{\taus}{\alpha(\Ws)}{\Ys}
        \end{array}$}
  \end{prooftree}
  thus establishing the claim.

  \noindent
  \textbf{Case \TyLITrans{}:} 
  By assumption, we have a proof ending with the form
  \begin{prooftree}
    \AXC{$\begin{array}{r@{~}l}
          \Gamma, x[\Xs]:\tau'
           & \vdash_P T_1 : \LI{\seq{\tau_0}}{\tau_1}{\Zs}         \\
          \Gamma, x[\Xs]:\tau'
           & \vdash_P T_2 : \LI{\tau_1, \seq{\tau_2}}{\tau_3}{\Ys}
        \end{array}$}
    \RightLabel{\TyLITrans{}}
    \UIC{$\begin{array}{@{}l@{}}
          \Gamma, x[\Xs]:\tau' \vdash_P \\
          \nu \Vs.(T_1, T_2)
          : \LI{\seq{\tau_0}, \seq{\tau_2}}{\tau_3}{\Ws}
        \end{array}$}
  \end{prooftree}
  where \(\{\Ws\} = (\{\Zs\} \cup \{\Ys\}) \setminus \{\Xs\}\).
  By applying the induction hypothesis to the premises, we get the following:
  \[\begin{array}{r@{~}l}
      \Gamma & \vdash_P T_1[G/x[\Xs]] : \LI{\seq{\tau_0}}{\tau_1}{\Zs}, \\
      \Gamma & \vdash_P
      T_2[G/x[\Xs]] : \LI{\tau_1, \seq{\tau_2}}{\tau_3}{\Ys}.
    \end{array}\]
  Applying \TyLITrans{} to them gives us
  \[\mathmakebox[\dimexpr\width-2pt\relax][l]{%
    \Gamma \vdash_P
    \nu \Vs.(T_1[G/x[\Xs]], T_2[G/x[\Xs]]):
    \LI{\seq{\tau_0}, \seq{\tau_2}}{\tau_3}{\Ws}.}
  \]
  From the definition of graph substitution, the graph substitution
  can be factored out (without  \(\alpha\)-conversion) and we get
  \[
    \Gamma \vdash_P
    (\nu \Vs.(T_1, T_2))[G/x[\Xs]]:
    \LI{\seq{\tau_0}, \seq{\tau_2}}{\tau_3}{\Ws}.
  \]

  \noindent
  \textbf{Case \TyLIElimZ{}:} 
  By assumption, we have a proof ending with the following form.
  \begin{prooftree}
    \AXC{$\Gamma, x[\Xs]:\tau' \vdash_P
        T
        : \LIempty{\tau}{\Ys}$}
    \RightLabel{\TyLIElimZ{}}
    \UIC{$\Gamma, x[\Xs]:\tau' \vdash_P \nu\Zs.T : \tau$}
  \end{prooftree}
  By applying the induction hypothesis to the premise, we get
  \(\Gamma \vdash_P
  T[G/x[\Xs]]
  : \LIempty{\tau}{\Ys}\).
  Using this as the premise of \TyLIElimZ{}, we get
  \[
    \Gamma \vdash_P
    \nu\Zs.T[G/x[\Xs]]: \tau,
  \]
  where we note that graph substitution does not change the set of free links.

  \noindent
  \textbf{Case \TyLIIntroZ{}:} 
  By assumption, we have a proof ending with the following form.
  \begin{prooftree}
    \AXC{$\Gamma, x[\Xs]:\tau' \vdash_P T : \tau_1$}
    \RightLabel{\TyLIIntroZ{}}
    \UIC{$\Gamma, x[\Xs]:\tau' \vdash_P
        T
        : \LIempty{\tau_1}{\Ys}$}
  \end{prooftree}
  %
  By applying the induction hypothesis to the premise, we get
  \(\Gamma \vdash_P T[G/x[\Xs]] : \tau_1\).
  Using this as the premise of \TyLIIntroZ{}, we get
  \[
    \Gamma \vdash_P
    T[G/x[\Xs]]
    : \LIempty{\tau_1}{\Ys},
  \]
  where we note that graph substitution does not change the set of free links.
\end{proof}


\section{Removing Typing Rules on Linear Implication Types from Proofs}\label{sec:elim-limprules}

Our type system includes a transition rule, \TyLITrans{}, and an elimination rule, \TyLIElimZ{},
which allow a derivation to cancel a linear implication
and yield a judgement whose result type is not a linear implication type.
We show that, in typing graphs,
such cancellation is admissible:
any derivation that introduces a linear implication only to eliminate it subsequently
is inessential and can be transformed into one free of the linear implication fragment.
This situation parallels the simply typed $\lambda$-calculus,
where a term of base type, such as an integer,
admits a derivation without application or arrow-introduction rules.
Formally, we prove the following:
\emph{If a graph is typable at a type that is not a linear implication,
  then there exists a typing derivation for that judgement
  that does not employ any of the linear implication rules
  \TyLIIntro{}, \TyLIIntroZ{}, \TyLITrans{}, or \TyLIElimZ{}.}


We note the following conventions.

\begin{itemize}
  \item
        We assume that molecules (e.g., $T_1, T_2, T_3, \dots$) are
        left-associative.
  \item
        According to \Cref{def:TypingRelation},
        we assume without further notice that the free links of
        the both sides of a typing relation are equal; that is, for any
        \(\Gamma \vdash_P T: \tau\), we assume \(\fn(T) = \fn(\tau)\).
\end{itemize}


We first give the relevant typing rules.

\noindent
\textbf{\TyProd{}}

\begin{prooftree}
  \def\defaultHypSeparation{\hskip .05in}
  \AXC{$\Gamma \vdash_P T_1 : \tau_1$}
  \AXC{$\dots$}
  \AXC{$\Gamma \vdash_P T_n : \tau_n$}
  \RightLabel{$\TyProd{}$}
  \TIC{$\Gamma \vdash_P
      \nu \Zs.(C(\Ys), \seq{U \bowtie V}, T_1, \dots, T_n) : \alpha (\Xs)$}
\end{prooftree}
{\raggedleft where
$\paren{\alpha (\Xs) \lto
    \nu \Zs.(C(\Ys), \seq{U \bowtie V}, \tau_1, \dots, \tau_n)} \in P.$}

\noindent
\textbf{\TyLIIntro}

\begin{prooftree}
  \AXC{$\paren{\alpha (\Xs) \lto \nu
        \Zs.(C(\Ys), \seq{U \bowtie V}, \taus)} \in P$}
  \RightLabel{$\TyLIIntro{}$}
  \UIC{$\begin{array}{@{}l@{}}
        \Gamma \vdash_P (C(\Ys), \seq{U \bowtie V}) \\
        \qquad\qquad : \LI{\taus}{\alpha(\Xs)}{\Us, \Vs, \Ys}
      \end{array}$}
\end{prooftree}

\noindent
\textbf{\TyLITrans{}}


\begin{prooftree}
  \def\ScoreOverhang{0pt}
  \AXC{$\begin{array}{@{}l@{}}
        \Gamma \vdash_P T_1 \\
        : \LI{\tau_2, \seq{\tau_1}}{\tau_1}{\Xs}
      \end{array}$}
  \AXC{$\begin{array}{@{}l@{}}
        \Gamma \vdash_P T_2 \\
        : \LI{\seq{\tau_2}}{\tau_2}{\Ys}
      \end{array}$}
  \RightLabel{$\TyLITrans{}$}
  \BIC{$\Gamma \vdash_P \nu \Zs.(T_1, T_2) :
      \LI{\seq{\tau_1},\seq{\tau_2}}{\tau_1}{\Ws}$}
\end{prooftree}%
where
\begin{enumerate}
  \item[(i)]
        \(\{\Xs, \Ys\} \setminus \{\Zs\} = \{\Ws\}\),
  \item[(ii)]
        \(\{\Xs\} \cap \{\Ys\} \subseteq \fn(\tau_2)\),
  \item[(iii)]
        \(\{\Xs\} \cap \fn(\seq{\tau_2}) \subseteq \fn(\tau_2)\),
  \item[(iv)]
        \(\paren{\{\Ys\} \cup \fn(\seq{\tau_2})} \cap \fn(\seq{\tau_1}) \subseteq \fn(\tau_2)\),
        and
  \item[(v)]
        $\fn(\tau_1) \subseteq \{\Ws\} \cup \fn(\seq{\tau_1},\seq{\tau_2})$.
\end{enumerate}
Here, $\seq{\tau_i}$ denotes $\tau_{i1}, \dots, \tau_{in}$
and is different from $\tau_i$.

These side conditions are used in the subsequent operations:
Conditions (ii), (iv), and (v) are applied in Hyperlink Creation Localisation (Step~2),
and Condition (iii) is used in Proof Tree Normalisation (Step~3)
to ensure that condition (ii) holds.

\noindent
\textbf{\TyLIElimZ{}}

\begin{prooftree}
  \AXC{$\Gamma \vdash_P T : \LIempty{\tau}{\Ys}$}
  \RightLabel{$\TyLIElimZ{}$}
  \UIC{$\Gamma \vdash_P \nu \Xs.T : \tau$}
\end{prooftree}

Because the both sides of a typing relation must have the same free links,
\(\fn(T) = \{\Ys\}\) and
\(\fn(\tau) = \{\Ys\} \setminus \{\Xs\}\).

\noindent
\textbf{\TyLIIntroZ}

\begin{prooftree}
  \AXC{$\Gamma \vdash_P T : \tau$}
  \RightLabel{$\TyLIIntroZ{}$}
  \UIC{$\Gamma \vdash_P T : \LIempty{\tau}{\Ys}$}
\end{prooftree}

Because the both sides of a typing relation must have the same free links,
\(\fn(T) = \fn(\tau) = \{\Ys\}\).

We review the admissibility of \TyProd{} (\Cref{thm:typrod-admissibility})
in \figref{fig:ty-prod-admissibility},
which shows how a proof with \TyProd{} can be converted to a proof without it.
It is important to note that (i) the LHS of the linear implication
type of the right premise of \(\TyLITrans{}\) is empty, and
(ii) the free links of its RHS and and the free links of the
linear implication type itself must be equal.
In the following discussion, we show how a proof tree that uses
typing rules with linear implication types can be transformed into one
that uses \TyProd{} instead by converting it into this form.

\begin{figure*}[t]
  \hrulefill{}

  \def\ScoreOverhang{0pt}
  \def\defaultHypSeparation{\hskip .05in}

  \raggedright \TyProd{}:
  \begin{prooftree}\footnotesize
    \AXC{$\paren{\alpha (\Xs) \lto \nu
          \Zs.(C(\Ys), \seq{U \bowtie V}, \tau_1, \dots, \tau_n)} \in P$}
    \AXC{$\Gamma \vdash_P T_1 : \tau_1$}
    \AXC{$\dots$}
    \AXC{$\Gamma \vdash_P T_n : \tau_n$}
    \RightLabel{$\TyProd{}$}
    \QuaternaryInfC{$\Gamma \vdash_P
        \nu \Zs.(C(\Ys), \seq{U \bowtie V}, T_1, \dots, T_n) : \alpha (\Xs)$}
  \end{prooftree}

  \(\TyProd{}\) derived from other rules:

  \begin{prooftree}\footnotesize
    \AXC{$\paren{\alpha (\Xs) \lto \nu
          \Zs.(C(\Ys), \seq{U \bowtie V}, \tau_1, \dots, \tau_n)} \in P$}
    \RightLabel{$\TyLIIntro{}$}
    \UIC{$\Gamma \vdash_P
        (C(\Ys), \seq{U \bowtie V})
        : \LI{\tau_1,\tau_2,\dots,\tau_n}{\alpha(\Xs)}{\Us,\Vs,\Ys}$}
    \AXC{$\Gamma \vdash_P T_1 : \tau_1$}
    \RightLabel{$\TyLIIntroZ{}$}
    \UIC{$\Gamma \vdash_P T_1
        : \LIempty{\tau_1}{\Zs_1}$}
    \RightLabel{$\TyLITrans{}$}
    \BIC{$\mlc{\Gamma \vdash_P (C(\Ys), \seq{U \bowtie V},T_1)
          : \LI{\tau_2,\dots,\tau_n}{\alpha(\Xs)}{\Zs_1,\Us,\Vs,\Ys}
          \\
          \ddots \\
          \Gamma \vdash_P (C(\Ys), \seq{U \bowtie V},T_1,\dots,T_{n-1})
          : \LI{\tau_n}{\alpha(\Xs)}{\Zs_{n-1},\dots,\Zs_1,\Us,\Vs,\Ys}
        }$}
    \AXC{\hspace*{-40pt}$\Gamma \vdash_P T_n : \tau_n$}
    \RightLabel{$\TyLIIntroZ{}$}
    \UIC{\hspace*{-40pt}$\Gamma \vdash_P T_n
        : \LIempty{\tau_n}{\Zs_n}$}
    \RightLabel{$\TyLITrans{}$}
    \BIC{$\Gamma \vdash_P (C(\Ys), \seq{U \bowtie V},T_1, \dots,T_{n-1},T_n)
        : \LIempty{\alpha(\Xs)}{\Zs_n,\Zs_{n-1},\dots,\Zs_1,\Us,\Vs,\Ys}$}
    \RightLabel{$\TyLIElimZ{}$}
    \UIC{$\Gamma \vdash_P \nu \Zs.
        (C(\Ys), \seq{U \bowtie V},T_1,\dots, T_{n-1},T_n)
        : \alpha(\Xs)$}
  \end{prooftree}
  \hrulefill{}
  \caption{\TyProd{} Admissibility}
  \label{fig:ty-prod-admissibility}
\end{figure*}

\begin{theorem}[$\TyLIs{}$ Elimination]\label{thm:ty-lis-elim}
  Any typing derivation for a graph that uses only
  \(\TyProd\) and \(\TyLIs\)
  in the largest prefix of its derivation tree
  that contains no premises of \(\TyArrow\)
  can be transformed
  into one in which the typing uses only \(\TyProd\)
  and \(\TyAlpha\)
  together with a single final application of \(\TyCong\).
\end{theorem}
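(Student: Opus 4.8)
The plan is to prove a slightly strengthened statement by induction on the number of applications of the $\TyLIs{}$ rules in the prefix: \emph{if $\Gamma \vdash T : \tau$ is derivable using only $\TyProd{}$ and $\TyLIs{}$ in the largest prefix free of $\TyArrow{}$ premises and $\tau$ is not a linear implication type, then there is a graph $T' \equiv T$ such that $\Gamma \vdash T' : \tau$ is derivable using only $\TyProd{}$ and $\TyAlpha{}$ in the corresponding prefix.} The theorem follows by appending a single $\TyCong{}$ with premise $T' \equiv T$ at the root; this strengthening is what explains why one final $\TyCong{}$ suffices rather than scattered congruence steps, since the congruence absorbs once and for all the reordering of molecules and the relocation of $\nu$-binders that $\TyLITrans{}$ leaves behind. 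Since $\TyLIE{}$ is admissible from $\TyLITrans{}$, $\TyLIElimZ{}$, $\TyLIIntroZ{}$, a derivation mentioning $\TyLIE{}$ is first rewritten accordingly, so without loss of generality the prefix is built from $\TyProd{}$, $\TyLIIntro{}$, $\TyLIIntroZ{}$, $\TyLITrans{}$, $\TyLIElimZ{}$ only.

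For the base of the induction and the easy case: if the prefix uses no $\TyLIs{}$ rule it is already of the required form (take $T' = T$). Since $\tau$ is not a linear implication, the rule at the root is either $\TyProd{}$ or $\TyLIElimZ{}$. If it is $\TyProd{}$, apply the induction hypothesis to each premise $\Gamma \vdash T_i : \tau_i$ that lies within the prefix (premises cut off by a $\TyArrow{}$ edge are left untouched), obtaining $T_i' \equiv T_i$ with a $\TyProd{}$/$\TyAlpha{}$-only derivation; re-applying $\TyProd{}$ to the $T_i'$ produces $T' = \nu\Zs.(C(\Ys),\seq{U\bowtie V},T_1',\dots,T_n') \equiv T$, because $\equiv$ is a congruence. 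No interior $\TyCong{}$ is incurred precisely because we only claim congruence of the result, not syntactic equality.

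The substantive case is when the root is $\TyLIElimZ{}$. Here I would isolate the maximal connected sub-derivation below it whose interior rules are all $\TyLIs{}$ — call it the \emph{block} — and in \textbf{Step 1} bring every sub-derivation dangling off the block (the premises of $\TyLIIntroZ{}$ occurrences and of any $\TyProd{}$ inside the block) into $\TyProd{}$/$\TyAlpha{}$-only form by the induction hypothesis. In \textbf{Step 2} (Hyperlink Creation Localisation) I would use the side conditions (ii), (iv), (v) of $\TyLITrans{}$ together with the $\nu$-commutation and $\nu$-extrusion laws (E9), (E10) to hoist the $\nu$-binders introduced piecemeal by the interior $\TyLITrans{}$ applications out to the top of the block; those side conditions are exactly what excludes the only obstruction — a $\nu$-binder capturing a link free elsewhere — with the relevant link classification being that of \tabref{table:revision-comparison}. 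In \textbf{Step 3} (Proof Tree Normalisation) I would rearrange the block into the canonical shape of the right-hand derivation of \figref{fig:ty-prod-admissibility}: one $\TyLIIntro{}$ at the head, which by its side condition names a unique production rule $\alpha(\Xs) \lto \nu\Zs.(C(\Ys),\seq{U\bowtie V},\tau_1,\dots,\tau_n) \in P$; a left-nested spine of $\TyLITrans{}$ steps each cancelling one $\tau_i$ against the corresponding normalised $T_i : \tau_i$ (wrapped by $\TyLIIntroZ{}$ to supply the empty antecedent, using left-associativity of molecules so the spine builds exactly the $\TyProd{}$ molecule form); and the final $\TyLIElimZ{}$ discharging the now-empty antecedent. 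Permuting the spine so the $\tau_i$ are consumed in the order fixed by the production rule is where condition (iii) is invoked — it keeps condition (ii) satisfied under the permutation. Finally the whole block collapses to a single $\TyProd{}$ with premises $\Gamma \vdash T_i' : \tau_i$, the mismatch between the concrete link names produced by $\TyLIIntro{}$ and those the production rule demands being absorbed into $\TyAlpha{}$ steps; the resulting graph is congruent to $T$, and the number of $\TyLIs{}$ applications has strictly decreased, so the induction closes.

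The main obstacle I expect is Step 2, in tandem with the bookkeeping in Step 3 showing that condition (iii) preserves (ii) under spine permutation. Because $\TyLITrans{}$ interleaves $\nu$-binder introduction with antecedent cancellation, proving that all the binders can be hoisted simultaneously — and that the five side conditions (i)–(v) are jointly exactly sufficient and invoked in the right places — requires a carefully maintained invariant recording, for each link, in which of the partial structure, the supplied sub-structures, and the composite it occurs free, of the kind depicted in \figref{fig:li-freelinks-vis}. The high-level idea (compile each $\TyLIs{}$-block down to a $\TyProd{}$ by reversing \figref{fig:ty-prod-admissibility}) is routine; it is this link-level accounting, confirming nothing escapes or is captured, that constitutes the real work.
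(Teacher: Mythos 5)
Your plan is, in outline, the same route the paper takes: treat each linear-implication block ending in \TyLIElimZ{}, reshape it into the canonical derivation of \figref{fig:ty-prod-admissible-eg}, replace the block by a single \TyProd{}, push link renaming into \TyAlpha{}, absorb all structural rearrangement into one trailing \TyCong{}, and induct so that nested blocks are handled; the count-based induction and the explicit restriction to non-implication result types are harmless repackagings of the paper's upward traversal with its three cases (\TyLIIntroZ{}, \TyLIIntro{}, \TyLITrans{} above an \TyLIElimZ{}).

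The genuine gap is in the binder manipulation, which is where the theorem actually lives. As written, your ``Step 2'' only hoists the \(\nu\)-binders created by the interior \TyLITrans{} steps outward to the top of the block (that is the paper's Step 1, prenex normalisation), and your Step 3 then assumes each supplied subcomponent is already available as a self-contained non-implication judgement \(\vdash T_i' : \tau_i\) ready to be wrapped by \TyLIIntroZ{}. What is missing is the opposite move, the paper's actual Hyperlink Creation Localisation: after prenexing, a right premise of the spine has the form \(\vdash T_i : \LI{}{\tau_i}{\dots}\) with the local links of \(T_i\) bound only at the outermost level, and one must re-scope exactly \(\fn(T_i)\setminus\fn(\tau_i)\) back around \(T_i\) (inserting a \TyLIElimZ{}/\TyLIIntroZ{} pair) so that the eventual \TyProd{} has premises of the required shape and a conclusion subject whose binding structure matches the production rule; relying on the final \TyCong{} to ``absorb the relocation of \(\nu\)-binders'' does not work, because the congruence only adjusts the subject of the conclusion, not the form of the premises inside the derivation. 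Justifying that this re-scoping produces a congruent graph --- i.e.\ that no other component of the block mentions the re-scoped links --- is precisely the disjointness computation of \Cref{lem:l4-nu-localisation}, which rests on the free-link containment property of \Cref{lem:elim-limprules-lemma1} together with conditions (ii)--(iv) of \TyLITrans{}; and the Step-3 re-association must itself be shown not to break conditions (ii)--(iv) for the remaining \TyLITrans{} instances, as in \Cref{lem:l5-linkcond-preserve}. You flag this link-level accounting as ``the real work'' but neither perform it nor set up the invariant that would; since the theorem is exactly the claim that this accounting goes through, the proposal as it stands outlines the paper's proof rather than establishing it.
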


\begin{proof}
  We follow a proof tree (without the final $\TyCong$ or $\TyAlpha$)
  upwards, repeating the operations described below, so that
  we can construct a proof tree of a graph template without using
  \(\TyLIs\).

  If the original proof tree does not contain
  \(\TyLIElimZ\), the claim holds vacuously because it does not
  involve linear implication types at all.

  If the original proof tree contains
  \(\TyLIElimZ\), we can split the cases into the following
  (i), (ii), (iii) based on the typing rule just above the \(\TyLIElimZ\).
  (Note that the conclusions of the other typing rules are not of linear
  implication rules.)

  \medskip\noindent
  \textbf{(i) Case \(\TyLIIntroZ\):}

  If \(\TyLIElimZ\) is preceded by \(\TyLIIntroZ\),
  since the free links of the both sides of a type relation is the same
  in \(\TyLIIntroZ\), no hyperlinks are hidden by \(\TyLIElimZ\).
  That is, the proof tree is of the following form.
  \begin{prooftree}
    \AXC{$\vinf{\Gamma \vdash_P T : \tau}$}
    \RightLabel{$\TyLIIntroZ{}$}
    \UIC{$\Gamma \vdash_P T : \LIempty{\tau}{\Ys}$}
    \RightLabel{$\TyLIElimZ{}$}
    \UIC{$\Gamma \vdash_P T : \tau$}
  \end{prooftree}

  Because the premise of
  \(\TyLIIntroZ\) and the conclusion of
  \(\TyLIElimZ\)
  are the same typing judgement,
  we can simply remove those \(\TyLIElimZ\) and \(\TyLIIntroZ\).
  \begin{prooftree}
    \AXC{$\vinf{\Gamma \vdash_P T : \tau}$}
  \end{prooftree}
  We repeat the same operation to the proof subtrees above this step.

  \medskip\noindent
  \textbf{(ii) Case \(\TyLIIntro\):}\label{ii-case-tyliintro}

  If \(\TyLIElimZ\) is preceded by \(\TyLIIntro\),
  since the LHS of the linear implication type in the premise of
  \(\TyLIIntroZ\) is empty, we have the following form:

  \begin{prooftree}
    \AXC{$\paren{\alpha (\Xs) \lto \nu
          \Zs.(C(\Ys), \seq{U \bowtie V})} \in P$}
    \RightLabel{$\TyLIIntro{}$}
    \UIC{$\begin{array}{@{}l@{}}
          \Gamma \vdash_P \\
          \quad (C(\Ys), \seq{U \bowtie V})
          : \LIempty{\alpha(\Xs)}{\Us, \Vs, \Ys}
        \end{array}$}
    \RightLabel{$\TyLIElimZ{}$}
    \UIC{$\Gamma \vdash_P \nu \Ws.(C(\Ys), \seq{U \bowtie V}) : \alpha(\Xs)$}
  \end{prooftree}

  Here, from the production rules and the link condition of
  \(\TyLIElimZ\), we have \(\{\Zs\} = \{\Ws\}\).
  Therefore, we can replace \(\TyLIElimZ, \TyLIIntro\) by \(\TyProd\)
  to obtain a typing to a graph structurally congruent to the original:

  \begin{prooftree}
    \AXC{$\paren{\alpha (\Xs) \lto \nu
          \Zs.(C(\Ys), \seq{U \bowtie V})} \in P$}
    \RightLabel{$\TyProd{}$}
    \UIC{$\Gamma \vdash_P
        \nu \Zs.(C(\Ys), \seq{U \bowtie V}) : \alpha (\Xs)$}
  \end{prooftree}

  where
  \(\Zs.(C(\Ys), \seq{U \bowtie V}) \equiv \Ws.(C(\Ys), \seq{U \bowtie V})\).

  \medskip\noindent
  \textbf{(iii) Case \(\TyLITrans\):}

  If \(\TyLIElimZ\) is preceded by \(\TyLITrans\),
  by following the proof subtree above
  \(\TyLIElimZ\) towards upper left, one finds a series of one ore more
  \(\TyLITrans\)'s followed by \(\TyLIIntro\) which terminates the
  subtree, as in the upper part (before transformation) in
  \figref{fig:prenex-trans}.
  Note that one cannot use \(\TyLIIntroZ\) here because
  the LHS of the linear implication type has one or more types.
  In this case,
  we first transform the derivation into its prenex normal form in Step~1.
  Then, by repeatedly applying Step~2 and Step~3,
  we obtain a proof tree in which, for each \(\TyLITrans\) step up to
  the left and uppermost \(\TyLIIntro\),
  the right premise has a linear implication type with an empty left-hand side,
  and the right subtree ends with \(\TyLIIntroZ\),
  as illustrated in \figref{fig:ty-prod-admissibility}.
  Hence, this fragment can be replaced by a single \(\TyProd\) rule.
  The same operation is applied recursively to the right proof subtree of
  \(\TyLITrans\) as well.

  \begin{description}
    \item[Step 1. Prenex Normal Form Transformation:]

          First, we move the link creations of the graph to the leftmost of the term.
          This transformation is carried out as shown in \figref{fig:prenex-trans}.
          If an unintended variable capture might occur,
          then such links should be $\alpha$-converted in advance
          by applying \(\TyAlpha{}\) above
          the right premise of \(\TyLITrans{}\).
          Since this yields a proof tree for a congruent graph,
          we obtain the original typing relation by applying
          $\TyCong$ at the very bottom.

          This transformation is valid, that is,
          it satisfies the side conditions of the typing rules.
          The link conditions (i)--(iv) of \(\TyLITrans{}\) are straightforward,
          and condition (v) is established in \Cref{lem:elim-limprules-lemma1}.


    \item[Step 2. Hyperlink Creation Localisation:]

          We follow a sequence of \(\TyLITrans\)'s from the bottom to upper
          left, and
          as long as linear implication type of the right premise has an
          empty LHS, we transform the right subtree to the one that uses
          \(\TyLIIntroZ\) and \(\TyLIElimZ\) as shown in
          \figref{fig:hyperlink-hiding-localisations}.
          The boxes in the figure are provided merely for visual clarity.
          In the transformation,
          the boxed term \( T_i \) is replaced by \(\nu \seq{Y_i}. T_i\),
          so that the rules \(\TyLIIntroZ\) and \(\TyLIElimZ\)
          can be introduced immediately after the boxed typing judgement
          \(\vdash_P T_i : \LIempty{\tau_i}{\dots}\).

          This localisation procedure stops when we reach the leftmost
          uppermost leaf or when the right premise has a linear
          implication type with a non-empty LHS.
          We will show in \Cref{lem:l4-nu-localisation} that the graph templates
          of the proof
          trees before and after the transformation are congruent.
          Hence, we obtain the original typing relation by applying
          $\TyCong$ at the very bottom.

    \item[Step 3. Proof Tree Normalisation:]


          We follow the sequence of the \(\TyLITrans\)'s towards upper left.
          Suppose that each of the first to the $(m-1)$-th \(\TyLITrans\)'s
          has the right premise with \(\TyLIIntroZ\).
          Suppose, for the $m$-th \(\TyLITrans_m\), the right premise has
          a linear implication type with a non-empty LHS.

          Then, by reordering the elements of the graph template, we can
          reduce the number of types of the LHS of the linear implication
          type of the right premise by the transformation shown in
          \figref{fig:ProofTreeNormalization}.
          The transformation is repeated until the LHS of the linear
          implication type of the right premise becomes empty.

          The transformation step is explained in \figref{fig:ProofTreeNormalization}.
          Before transformation,
          the first (\(m-1\)) \(\TyLITrans\)'s use \(\TyLIIntroZ\) as their
          right premises, and the \(m\)-th \(\TyLITrans_m\) has an linear
          implication type with a non-empty LHS.

          The graph templates of the two conclusions are obviously congruent
          and we obtain the original typing relation by applying
          $\TyCong$ at the very bottom.
          We will show in \Cref{lem:l5-linkcond-preserve} that the link condition
          of all \(\TyLITrans\)s are satisfied.
          The above transformation is repeated until the LHS of the linear
          implication type of the right premise of \(\TyLITrans_m\) becomes empty.
          When the LHS of the linear implication type of the right premise of
          \(\TyLITrans_m\) becomes empty,
          we apply Step 2 (Hyperlink Creation Localisation) to \(\TyLITrans_m\). \qedhere
  \end{description}
\end{proof}

\begin{figure*}[t]
  \hrulefill{}

  \def\ScoreOverhang{0pt}
  \def\defaultHypSeparation{\hskip .05in}

  Before transformation:
  \begin{prooftree}\footnotesize
    \AXC{$\paren{\alpha (\Xs) \lto \nu
          \Zs.(C(\Ys), \seq{U \bowtie V}, \taus)} \in P$}
    \RightLabel{$\TyLIIntro{}$}
    \UIC{$\vdash_P
        T_{n+1} : \LI{\tau_n,\dots}{\alpha(\Xs)}{\Us, \Vs, \Ys}$}
    \AXC{$\vinf{\vdash_P T_n : \LI{\dots}{\tau_n}{\dots}}$}
    \RightLabel{$\TyLITrans_n$}
    \BIC{$\begin{array}{c}
          \vdash_P
          \nu\seq{Z_n}.(T_{n+1},T_n) : \LI{\tau_{n-1},\dots}{\tau}{\dots} \\
          \quad \ddots                                                    \\[1pt]
          \begin{array}{r@{~}l}
            \vdash_P
            \nu\seq{Z_2}.(\dots(
            \nu\seq{Z_{i-1}}.(
            \nu\seq{Z_i}.(
            \nu\seq{Z_{i+1}}.(\dots(
            \nu\seq{Z_n}.(T_{n+1},T_n),\quad &                            \\
            \dots),
            T_{i+1}), T_i), T_{i-1}), \dots), T_2)
                                             & : \LI{\tau_1}{\tau}{\dots} \\
          \end{array}
        \end{array}$}
    \AXC{\raisebox{-21pt}{\hspace*{-55pt}$\vinf{\vdash_P T_1 : \LIempty{\tau_1}{\dots}}$}}
    \RightLabel{$\TyLITrans_1$}
    \BIC{$\vdash_P
        \nu\seq{Z_1}.(
        \nu\seq{Z_2}.(\dots(
        \nu\seq{Z_{i-1}}.(
        \nu\seq{Z_i}.(
        \nu\seq{Z_{i+1}}.(\dots(
        \nu\seq{Z_n}.(T_{n+1},T_n),
        \dots), T_{i+1}), T_i), T_{i-1}), \dots), T_2), T_1)
        : \LIempty{\tau}{\dots}$}
    \RightLabel{$\TyLIElimZ{}$}
    \UIC{$\vdash_P
        \nu\seq{X}
        \seq{Z_1}.(
        \nu\seq{Z_2}.(\dots(
        \nu\seq{Z_{i-1}}.(
        \nu\seq{Z_i}.(
        \nu\seq{Z_{i+1}}.(\dots(
        \nu\seq{Z_n}.(T_{n+1},T_n),
        \dots), T_{i+1}), T_i), T_{i-1}), \dots), T_2), T_1)
        : \tau$}
  \end{prooftree}

  where \(T_{n+1} \triangleq (C(\Ys), \seq{U \bowtie V})\) and\\
  \(\{\Xs\} = \fn(
  \nu\seq{Z_1}.(
  \nu\seq{Z_2}.(\dots(
  \nu\seq{Z_{i-1}}.(
  \nu\seq{Z_i}.(
  \nu\seq{Z_{i+1}}.(\dots(
  \nu\seq{Z_n}.(T_{n+1},T_n),
  \dots), T_{i+1}), T_i), T_{i-1}), \dots), T_2), T_1)
  ) \setminus \fn(\tau)\)

  \medskip
  After transformation:
  \begin{prooftree}\footnotesize
    \AXC{$\paren{\alpha (\Xs) \lto \nu
          \Zs.(C(\Ys), \seq{U \bowtie V}, \taus)} \in P$}
    \RightLabel{$\TyLIIntro{}$}
    \UIC{$\vdash_P
        T_{n+1} : \LI{\tau_n,\dots}{\alpha(\Xs)}{\Us, \Vs, \Ys}$}
    \AXC{$\vinf{\vdash_P T_n : \LI{\dots}{\tau_n}{\dots}}$}
    \RightLabel{$\TyLITrans_n$}
    \BIC{$\begin{array}{c}
          \vdash_P
          (T_{n+1},T_n) : \LI{\tau_{n-1},\dots}{\tau}{\dots} \\
          \quad \ddots                                       \\
          \qquad \vdash_P
          (T_{n+1},T_n, \dots, T_{i+1}, T_i, T_{i-1}, \dots, T_2)
          : \LI{\tau_1}{\tau}{\dots}                         \\
        \end{array}$}
    \AXC{\raisebox{-13pt}{\hspace*{-20pt}$\vinf{\vdash_P T_1 : \LIempty{\tau_1}{\dots}}$}}
    \RightLabel{$\TyLITrans_1$}
    \BIC{$\vdash_P
        (T_{n+1},T_n, \dots, T_{i+1}, T_i, T_{i-1}, \dots, T_2, T_1)
        : \LIempty{\tau}{\dots}$}
    \RightLabel{$\TyLIElimZ{}$}
    \UIC{$\vdash_P \nu \seq{X}.(T_{n+1},T_n, \dots, T_{i+1}, T_i, T_{i-1}, \dots, T_2, T_1) : \tau$}
  \end{prooftree}

  where \(T_{n+1} \triangleq (C(\Ys), \seq{U \bowtie V})\) and
  \(\{\Xs\} = \fn((T_{n+1}, \dots, T_1)) \setminus \fn(\tau)\)

  \hrulefill{}
  \caption{Prenex Normal Form Transformation}\label{fig:prenex-trans}
\end{figure*}

\begin{figure*}[tp]
  \hrulefill{}

  \def\ScoreOverhang{0pt}
  \def\defaultHypSeparation{\hskip .05in}

  Before transformation:
  \begin{prooftree}\scriptsize
    \AXC{$\vinf{\vdash_P (T_{n+1}, \dots, T_{i+1}) : \LI{\tau_i,\dots}{\tau}{\dots}}$}
    \AXC{$\vinf{\boxed{\vdash_P \boxed{T_i} : \LIempty{\tau_i}{\dots}}}$}
    \RightLabel{$\TyLITransA_i$}
    \BIC{$\vdash_P
        (T_{n+1}, \dots, T_{i+1}, \boxed{T_i}) : \LI{\tau_{i-1},\dots}{\tau}{\dots}$}
    \AXC{$\vinf{\vdash_P T_{i-1} : \LIempty{\tau_{i-1}}{\dots}}$}
    \RightLabel{$\TyLIElimZA_{i-1}$}
    \UIC{$\vdash_P \nu\seq{Y_{i-1}}.T_{i-1} : \tau_{i-1}$}
    \RightLabel{$\TyLIIntroZA_{i-1}$}
    \UIC{$\vdash_P \nu\seq{Y_{i-1}}.T_{i-1} : \LIempty{\tau_{i-1}}{\dots}$}
    \RightLabel{$\TyLITransA_{i-1}$}
    \BIC{$\mlc{
          \vdash_P
          (T_{n+1}, \dots, T_{i+1}, \boxed{T_i}, \nu\seq{Y_{i-1}}.T_{i-1}) : \LI{\tau_{i-2},\dots}{\tau}{\dots} \\[6pt]
          \quad \ddots \\[6pt]
          \vdash_P
          (T_{n+1},T_n, \dots, T_{i+1}, \boxed{T_i}, \nu\seq{Y_{i-1}}.T_{i-1}, \dots, \nu\seq{Y_2}.T_2)
          : \LI{\tau_1}{\tau}{\dots}
        }$}
    \AXC{\hspace*{-40pt}$\vinf{\vdash_P T_1 : \LIempty{\tau_1}{\dots}}$}
    \RightLabel{$\TyLIElimZA_1$}
    \UIC{\hspace*{-40pt}$\vdash_P \nu\seq{Y_1}.T_1 : \tau_1$}
    \RightLabel{$\TyLIIntroZA_1$}
    \UIC{\hspace*{-40pt}$\vdash_P \nu\seq{Y_1}.T_1 : \LIempty{\tau_1}{\dots}$}
    %
    \RightLabel{$\TyLITransA_1$}
    \BIC{$\vdash_P
        (T_{n+1},T_n, \dots, T_{i+1}, \boxed{T_i},
        \nu\seq{Y_{i-1}}.T_{i-1}, \dots, \nu\seq{Y_2}.T_2, \nu\seq{Y_1}.T_1)
        : \LIempty{\tau}{\dots}$}
    \RightLabel{$\TyLIElimZA{}$}
    \UIC{$\vdash_P \nu \seq{X}.(T_{n+1},T_n, \dots, T_{i+1}, \boxed{T_i},
        \nu\seq{Y_{i-1}}.T_{i-1}, \dots, \nu\seq{Y_2}.T_2, \nu\seq{Y_1}.T_1) : \tau$}
  \end{prooftree}

  \medskip
  where
  \(\{\Xs\} = \fn((T_{n+1},T_n, \dots, T_{i+1}, \boxed{T_i}, \nu\seq{Y_{i-1}}.T_{i-1}, \dots, \nu\seq{Y_1}.T_1)) \setminus \fn(\tau)\)

  \medskip

  After transformation:
  \begin{prooftree}\scriptsize
    \AXC{$\vinf{\vdash_P (T_{n+1}, \dots, T_{i+1}) : \LI{\tau_i,\dots}{\tau}{\dots}}$}
    \AXC{$\vinf{\boxed{\vdash_P T_i : \LIempty{\tau_i}{\dots}}}$}
    \RightLabel{$\TyLIElimZA_i$}
    \UIC{$\vdash_P \nu\seq{Y_i}.T_i : \tau_i$}
    \RightLabel{$\TyLIIntroZA_i$}
    \UIC{$\vdash_P \boxed{\nu\seq{Y_i}.T_i} : \LIempty{\tau_i}{\dots}$}
    \RightLabel{$\TyLITransA_i$}
    \BIC{$\vdash_P
        (T_{n+1}, \dots, T_{i+1}, \boxed{\nu\seq{Y_i}.T_i}) : \LI{\tau_{i-1},\dots}{\tau}{\dots}$}
    \AXC{$\vinf{\vdash_P T_{i-1} : \LIempty{\tau_{i-1}}{\dots}}$}
    \RightLabel{$\TyLIElimZA_{i-1}$}
    \UIC{$\vdash_P \nu\seq{Y_{i-1}}.T_{i-1} : \tau_{i-1}$}
    \RightLabel{$\TyLIIntroZA_{i-1}$}
    \UIC{$\vdash_P \nu\seq{Y_{i-1}}.T_{i-1} : \LIempty{\tau_{i-1}}{\dots}$}
    \RightLabel{$\TyLITransA_{i-1}$}
    \BIC{$\mlc{
          \vdash_P
          (T_{n+1}, \dots, T_{i+1}, \boxed{\nu\seq{Y_i}.T_i}, \nu\seq{Y_{i-1}}.T_{i-1}) : \LI{\tau_{i-2},\dots}{\tau}{\dots} \\[3pt]
          \quad \ddots \\[3pt]
          \vdash_P
          (T_{n+1}, \dots, T_{i+1}, \boxed{\nu\seq{Y_i}.T_i}, \nu\seq{Y_{i-1}}.T_{i-1}, \dots, \nu\seq{Y_2}.T_2)
          : \LI{\tau_1}{\tau}{\dots}
        }$}
    \AXC{\hspace{-40pt}$\vinf{\vdash_P T_1 : \LIempty{\tau_1}{\dots}}$}
    \RightLabel{$\TyLIElimZA_1$}
    \UIC{\hspace*{-40pt}$\vdash_P \nu\seq{Y_1}.T_1 : \tau_1$}
    \RightLabel{$\TyLIIntroZA_1$}
    \UIC{\hspace*{-40pt}$\vdash_P \nu\seq{Y_1}.T_1 : \LIempty{\tau_1}{\dots}$}
    \RightLabel{$\TyLITransA_1$}
    \BIC{$\vdash_P
        (T_{n+1},T_n, \dots, T_{i+1}, \boxed{\nu\seq{Y_i}.T_i},
        \nu\seq{Y_{i-1}}.T_{i-1}, \dots, \nu\seq{Y_2}.T_2, \nu\seq{Y_1}.T_1)
        : \LIempty{\tau}{\dots}$}
    \RightLabel{$\TyLIElimZA{}$}
    \UIC{$\vdash_P \nu \seq{Z}.(T_{n+1},T_n, \dots, T_{i+1}, \boxed{\nu\seq{Y_i}.T_i},
        \nu\seq{Y_{i-1}}.T_{i-1}, \dots, \nu\seq{Y_2}.T_2, \nu\seq{Y_1}.T_1) : \tau$}
  \end{prooftree}

  \medskip
  where
  \begin{itemize}
    \item
          \(\{\seq{Y_i}\} = \fn(T_i) \setminus \fn(\tau_i)\)
    \item
          \(\{\Zs\} = \fn((T_{n+1},\dots,T_{i+1}, \boxed{\nu\seq{Y_i}.T_i},\nu\seq{Y_{i-1}}.T_{i-1},\dots,\nu\seq{Y_1}.T_1)) \setminus \fn(\tau) = \{\Xs\} \setminus \{\seq{Y_i}\}\)
  \end{itemize}

  Here,
  \TyProd{},
  \TyLIIntro{},
  \TyLITrans{},
  \TyLIElimZ{}, and
  \TyLIIntroZ{},
  are abbreviated as
  \TyProdA{},
  \TyLIIntroA{},
  \TyLITransA{},
  \TyLIElimZA{}, and
  \TyLIIntroZA{}, respectively.

  \hrulefill{}
  \caption{Hyperlink Creation Localisation}
  \label{fig:hyperlink-hiding-localisations}
\end{figure*}

\begin{figure*}[t]
  \hrulefill{}

  Before transformation:

  \def\ScoreOverhang{0pt}
  \def\defaultHypSeparation{\hskip .05in}
  \begin{prooftree}\footnotesize
    \AXC{$\vinf{\vdash_P (T_{n+1}, \dots, T_{m+1})
          : \LI{\tau_m,\seq{\tau_{m+1}}}{\tau}{\dots}}$}
    \AXC{$\vinf{\vdash_P T_m : \LI{\tau_i, \seq{\tau_m}}{\tau_m}{\dots}}$}
    \RightLabel{$\TyLITrans_m$}
    \BIC{$\mlc{
          \vdash_P (T_{n+1}, \dots, T_{m+1}, T_m)
          : \LI{\seq{\tau_{m+1}},\tau_i,\seq{\tau_m}}{\tau}{\dots}\\
          \ddots_2 \\
          \vdash_P
          (T_{n+1},\dots,T_m,\dots,T_{i+1})
          : \LI{\tau_i,\dots}{\tau}{\dots}
        }$}
    \AXC{$\vdash_P T_i : \tau_i$}
    \RightLabel{$\TyLIIntroZ_i$}
    \UIC{$\vdash_P T_i : \LIempty{\tau_i}{\dots}$}
    \RightLabel{$\TyLITrans_i$}
    \BIC{$\mlc{
          \vdash_P
          (T_{n+1},\dots,T_m,\dots,T_{i+1},T_i)
          : \LI{\dots}{\tau}{\dots} \\
          \ddots_1 \\
          \vdash_P
          (T_{n+1}\dots,T_m,\dots,
          T_{i+1},T_i,T_{i-1},\dots,T_1)
          : \LIempty{\tau}{\dots}
        }$}
    \RightLabel{$\TyLIElimZ{}$}
    \UIC{$\vdash_P \nu \seq{X}.(T_{n+1},\dots,T_m,\dots,T_{i+1},T_i,T_{i-1},\dots,T_1) : \tau$}
  \end{prooftree}

  \medskip
  After transformation:

  \begin{prooftree}\footnotesize
    \AXC{$\vinf{\vdash_P (T_{n+1}, \dots, T_{m+1})
          : \LI{\tau_m,\seq{\tau_{m+1}}}{\tau}{\dots}}$}
    \AXC{$\vinf{\vdash_P T_m : \LI{\tau_i,\seq{\tau_m}}{\tau_m}{\dots}}$}
    \AXC{$\vdash_P T_i : \tau_i$}
    \RightLabel{$\TyLIIntroZ_i$}
    \UIC{$\vdash_P T_i : \LIempty{\tau_i}{\dots}$}
    \RightLabel{$\TyLITrans_i$}
    \BIC{$\vdash_P (T_m,T_i)
        : \LI{\seq{\tau_m}}{\tau}{\dots}$}
    \RightLabel{$\TyLITrans_m$}
    \BIC{$\mlc{
          \vdash_P (T_{n+1}, \dots, T_{m+1}, (T_m, T_i))
          : \LI{\seq{\tau_{m+1}},\seq{\tau_m}}{\tau}{\dots}\\
          \ddots_2 \qquad\\
          \qquad \ddots_1 \\
          \vdash_P
          (T_{n+1}\dots,(T_m,T_i),\dots,
          T_{i+1},T_{i-1},\dots,T_1)
          : \LIempty{\tau}{\dots}
        }$}
    \RightLabel{$\TyLIElimZ{}$}
    \UIC{$\vdash_P \nu\seq{X}.(T_{n+1},\dots,(T_m,T_i),\dots,T_{i+1},T_{i-1},\dots,T_1) : \tau$}
  \end{prooftree}

  \hrulefill{}
  \caption{Proof Tree Normalisation}
  \label{fig:ProofTreeNormalization}
\end{figure*}

\subsection{Lemmata}\label{sec:elim-limprules-lemma}

In this section,
we show that the transformations of proof trees in the proof of
\Cref{thm:ty-lis-elim} are valid;
they preserve the side conditions of the typing rules.

\begin{lemma}[Free Link Containment]\label{lem:elim-limprules-lemma1}
  If \(\Gamma \vdash_P T : \LI{\seq{\tau}}{\tau}{\Xs}\) holds,
  then \(\fn(\tau) \subseteq \{\Xs\} \cup \fn(\seq{\tau})\).
\end{lemma}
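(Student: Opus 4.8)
The plan is to prove the inclusion by induction on the derivation of $\Gamma \vdash_P T : \LI{\seq{\tau}}{\tau}{\Xs}$, performing a case analysis on the rule used in the last step. Since both sides of a typing judgement carry the same set of free links, and since the first of the $\multimap$-freeness restrictions in \Cref{sec:ext-type} forces the operands of every $\multimap$ and $\to$ to be $\multimap$-free, the only rules whose conclusion can be a linear implication type (with a template as subject) are \TyLIIntro{}, \TyLIIntroZ{}, \TyLITrans{}, \TyAlpha{}, \TyCong{}, and \TyVar{}; the remaining rules (\TyProd{}, \TyArrow{}, \TyApp{}, \TyLIElimZ{}) all conclude $\multimap$-free types. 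Moreover, in the only use of this lemma — inside the proof of \Cref{thm:ty-lis-elim}, whose transformations act on derivations of graphs — the subject $T$ contains no graph variable, so \TyVar{} never occurs; I will nevertheless dispatch it under the mild standing assumption that every binding $x[\Ys]:\sigma$ of $\Gamma$ with $\sigma$ a linear implication type already satisfies the desired containment.

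The two rules that do the real work are \TyLIIntro{} and \TyLITrans{}. For \TyLIIntro{}, the subject is $(C(\Ys),\seq{U\bowtie V})$, its type is $\LI{\taus}{\alpha(\Xs)}{\Us,\Vs,\Ys}$, and the side condition supplies a production rule $\alpha(\Xs)\lto \nu\Zs.(C(\Ys),\seq{U\bowtie V},\taus)$. The free-link requirement of \Cref{ProductionRule} gives $\{\Xs\}=(\{\Ys\}\cup\fn(\seq{U\bowtie V})\cup\fn(\taus))\setminus\{\Zs\}$, hence $\fn(\alpha(\Xs))=\{\Xs\}\subseteq\{\Ys\}\cup\fn(\seq{U\bowtie V})\cup\fn(\taus)=\{\Us,\Vs,\Ys\}\cup\fn(\taus)$, which is exactly the claim. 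For \TyLITrans{}, the conclusion is $\Gamma\vdash_P \nu\Zs.(T_1,T_2):\LI{\seq{\tau_1},\seq{\tau_2}}{\tau_1}{\Ws}$, and the target inclusion $\fn(\tau_1)\subseteq\{\Ws\}\cup\fn(\seq{\tau_1},\seq{\tau_2})$ is verbatim side condition~(v) of the rule — this is precisely why (v) was stipulated — so the step is immediate and no reconstruction across the $\multimap$-fusion is required.

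The other cases are routine. \TyLIIntroZ{} has an empty antecedent, and its side condition $\fn(T)=\fn(\tau)=\{\Ys\}$ makes the right-hand side of the inclusion equal to $\{\Ys\}=\fn(\tau)$. \TyCong{} leaves the type untouched, so the induction hypothesis on the premise already is the claim. \TyAlpha{} (and, symmetrically, \TyVar{}) applies a link renaming $\angled{Y/X}$ to a type for which the containment already holds — by the induction hypothesis on the premise for \TyAlpha{}, and by the standing assumption on the binding for \TyVar{}; since $\fn(\cdot)$ of a renamed type is the image under the renaming of the original $\fn(\cdot)$ (immediate from the definitions of \fn{} and link substitution) and a function on link names preserves finite-set inclusions, the containment transfers to the renamed type.

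I do not expect a genuine obstacle here: the only point requiring care is the free-link bookkeeping under the renamings in the \TyAlpha{}/\TyVar{} cases, together with the (purely conceptual) observation that side condition~(v) of \TyLITrans{} already packages exactly the inductive invariant, which is what makes the transitivity case trivial rather than the hard one it would otherwise be.
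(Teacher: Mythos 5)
Your proof is correct and follows essentially the same route as the paper's: induction on the derivation, with \TyLIIntro{} handled via the free-link requirement of production rules, \TyLITrans{} discharged directly by side condition~(v), and \TyLIIntroZ{} by the equality of free links on both sides of the judgement. The only difference is that you also spell out the \TyAlpha{}, \TyCong{}, and \TyVar{} cases, which the paper leaves implicit because the lemma is only invoked on derivations built from \TyProd{} and the \TyLIs{} rules; this extra bookkeeping is sound and does not change the argument.
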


\begin{proof}
  By induction on the structure
  of the typing derivation for linear implication.
  \begin{description}
    \item[Case \TyLIIntro{}]\mbox{}\\
          By assumption, we have a proof of the form.
          \begin{prooftree}
            \AXC{$\paren{\alpha (\Xs) \lto \nu
                  \Zs.(C(\Ys), \seq{U \bowtie V}, \taus)} \in P$}
            \RightLabel{$\TyLIIntro{}$}
            \UIC{$\begin{array}{@{}l@{}}
                   \Gamma \vdash_P
                   (C(\Ys), \seq{U \bowtie V})\\
                   \qquad\qquad : \LI{\taus}{\alpha(\Xs)}{\Us, \Vs, \Ys}
                  \end{array}$}
          \end{prooftree}
          Then, the lemma holds since
          \(\fn(\alpha(\Xs)) =
          \paren{\{\Ys\} \cup \fn(\seq{U \bowtie V}) \cup \fn(\taus)} \setminus \{\Zs\}
          \subseteq \paren{\{\Us, \Vs, \Ys\} \cup \fn(\taus))}\).

    \item[Case \TyLITrans{}]\mbox{}\\
          Since this property is precisely the link condition (v) of \TyLITrans{},
          it is trivially satisfied whenever that condition is explicitly enforced.

          Furthermore,
          when we perform the transformation in \figref{fig:prenex-trans},
          and the graph typed by \TyLITrans{} contains no leftmost hyperlink creation,
          the property holds even when the link condition (v) of \TyLITrans{}
          is not explicitly imposed,
          as shown below.

          By assumption, we have a proof of the form.
          \begin{prooftree}
            \def\ScoreOverhang{0pt}
            \def\defaultHypSeparation{\hskip .05in}
            \AXC{$\ml{
                  \Gamma \vdash_P T_1 : \LI{\tau_2, \seq{\tau_1}}{\tau_1}{\Xs}
                  \\
                  \Gamma \vdash_P T_2 : \LI{\seq{\tau_2}}{\tau_2}{\Ys}
                }$}
            \RightLabel{$\TyLITrans{}$}
            \UIC{$\Gamma \vdash_P (T_1, T_2) :
                \LI{\seq{\tau_1},\seq{\tau_2}}{\tau_1}{\Xs, \Ys}$}
          \end{prooftree}

          By the induction hypothesis,
          we have
          \(\fn(\tau_1) \subseteq \{\Xs\} \cup \fn(\tau_2, \seq{\tau_1})\)
          and
          \(\fn(\tau_2) \subseteq \{\Ys\} \cup \fn(\seq{\tau_2})\).
          Thus, the following desired property holds.
          \begin{align*}
            \fn(\tau_1) & \subseteq \{\Xs\} \cup \fn(\tau_2) \cup \fn(\seq{\tau_1})                            \\
                        & \subseteq \{\Xs\} \cup \paren{\{\Ys\} \cup \fn(\seq{\tau_2})} \cup \fn(\seq{\tau_1}) \\
                        & =
            \{\Xs,\Ys\} \cup \fn(\seq{\tau_1},\seq{\tau_2})
          \end{align*}
    \item[Case \TyLIIntroZ{}]\mbox{}\\
          By assumption, we have a proof of the form.
          \begin{prooftree}
            \AXC{$\Gamma \vdash_P T : \tau$}
            \RightLabel{$\TyLIIntroZ{}$}
            \UIC{$\Gamma \vdash_P T : \LIempty{\tau}{\Ys}$}
          \end{prooftree}
          Since the free links of both sides of typing relation should be equal,
          \(\fn(T) = \fn(\tau) = \{\Ys\}\) must hold
          and thus
          \(\fn(\tau) \subseteq \{\Ys\} \cup \fn(\zero)\).
  \end{description}
\end{proof}

\begin{lemma}[Subset Duality]\label{lem:elim-limprules-lemma2}
  $(A \cap C \subseteq B) = (A \setminus B \subseteq A \setminus C)$.
\end{lemma}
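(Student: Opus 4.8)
The statement asserts the logical equivalence of two set inclusions, so the plan is to prove it either by a direct pointwise argument handling each implication separately, or --- more economically --- by reducing both inclusions to a single set-difference identity. I lean towards the latter. The key computation is that $(A\setminus B)\setminus(A\setminus C) = (A\cap C)\setminus B$: an element $x$ belongs to the left-hand side iff $x\in A$, $x\notin B$, and it is not the case that ($x\in A$ and $x\notin C$); since $x\in A$ already holds, the last conjunct collapses to $x\in C$, giving exactly $x\in(A\cap C)\setminus B$. From this identity, $A\setminus B\subseteq A\setminus C$ is equivalent to $(A\setminus B)\setminus(A\setminus C)=\emptyset$, which is equivalent to $(A\cap C)\setminus B=\emptyset$, which is in turn equivalent to $A\cap C\subseteq B$, establishing the claim.

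If a purely element-wise proof is preferred for readability, I would instead argue the two directions directly: for the forward direction, given $A\cap C\subseteq B$ and $x\in A\setminus B$, assuming $x\in C$ would force $x\in A\cap C\subseteq B$, contradicting $x\notin B$, so $x\notin C$ and hence $x\in A\setminus C$; for the backward direction, given $A\setminus B\subseteq A\setminus C$ and $x\in A\cap C$, assuming $x\notin B$ would force $x\in A\setminus B\subseteq A\setminus C$, hence $x\notin C$, contradicting $x\in C$, so $x\in B$. Either presentation is short and self-contained, and I would pick whichever reads more smoothly in the surrounding text.

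The only subtlety worth flagging --- it is not really an obstacle --- is the asymmetry between the two sides in how they refer to $A$: on the left, $A$ sits inside the intersection whose containment is asserted; on the right, $A$ occurs on both sides of the inclusion and therefore merely restricts the range of the implicit quantifier to elements of $A$. Recognising that, under the standing hypothesis $x\in A$, the set $A\setminus C$ behaves like the plain complement of $C$ is what makes the two formulations match up. Beyond that, the proof uses nothing more than propositional reasoning; no induction, no appeal to the typing machinery, and no link-name bookkeeping is involved.
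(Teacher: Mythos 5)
Your proof is correct, and both of your presentations reduce to the same propositional observation the paper makes: under the standing hypothesis $x \in A$, the condition $\neg(x \in A \land x \notin C)$ collapses to $x \in C$, which is exactly how the paper shows the two implications $(x \in A \land x \in C \implies x \in B)$ and $(x \in A \land x \notin B \implies x \in A \land x \notin C)$ are equivalent. No gap; either of your write-ups would serve as a drop-in replacement.
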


\begin{proof}
  The LHS and the RHS are translated to
  $(x \in A \land x \in C \implies x \in B)$ and
  $((x \in A \land x \notin B) \implies (x \in A \land x \notin C))$,
  which are equivalent.
\end{proof}

\begin{lemma}[Hyperlink Creation Localisation]\label{lem:l4-nu-localisation}
  A graph with Hyperlink Creation Localisation as shown in
  \figref{fig:hyperlink-hiding-localisations}
  is congruent to the original graph.
\end{lemma}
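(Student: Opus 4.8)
The plan is to reduce the claim to the congruence axioms (E9) and (E10) of \figref{table:lgt-cong}. Write $M$ for the body molecule of the ``before'' graph, in which $T_i$ occurs flat, and set $\seq{Y_i}\triangleq\fn(T_i)\setminus\fn(\tau_i)$, so that the transformation replaces $\nu\seq{X}.M$ by $\nu\seq{Z}.M'$ with $\{\seq{Z}\}=\{\seq{X}\}\setminus\{\seq{Y_i}\}$ and $M'$ the result of replacing $T_i$ in $M$ by $\nu\seq{Y_i}.T_i$. Everything follows once I establish two link-level facts: \textbf{(a)} $\{\seq{Y_i}\}\subseteq\{\seq{X}\}$, so that the links pushed inward were genuinely among the outer binders; and \textbf{(b)} $\{\seq{Y_i}\}$ is disjoint from $\fn(U)$ for every component $U$ of $M$ other than $T_i$, i.e.\ from $\fn(T_j)$ for $j>i$ and from $\fn(\nu\seq{Y_j}.T_j)$ for $j<i$.

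Granting (a) and (b), the congruence is mechanical. First I would use (E9) to bring the names $\seq{Y_i}$ to the front of the outer binder block, so $\nu\seq{X}.M\equiv\nu\seq{Z}.\nu\seq{Y_i}.M$. Then, viewing $M$ as $(R,T_i)$ where $R$ is the multiset of all components other than $T_i$ (legitimate by (E1)--(E3)), fact (b) says no name of $\seq{Y_i}$ is free in $R$, so repeated use of (E2) and (E10) slides the whole $\nu\seq{Y_i}$ block inward past $R$: $\nu\seq{Y_i}.(R,T_i)\equiv(R,\nu\seq{Y_i}.T_i)$. Together with (E1)--(E3) this is exactly $M'$, hence $\nu\seq{X}.M\equiv\nu\seq{Z}.M'$. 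No $\alpha$-conversion is needed, since the names in $\seq{Y_i}\subseteq\seq{X}$ are already fresh for $M$.

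It remains to derive (a) and (b), and for this I would read off the side conditions of the instance $\TyLITrans$ of \figref{table:revised-typing-rules} at which $T_i$ is cancelled (its right premise has empty antecedent, so $T_i$ carries a linear implication type with empty left operand, whence $\fn(T_i)=\fn(\tau_i)$ there). Using the convention that the free links of the two sides of a typing relation coincide, condition (ii) gives $\fn((T_{n+1},\dots,T_{i+1}))\cap\fn(T_i)\subseteq\fn(\tau_i)$, which is the $j>i$ half of (b); condition (iv) gives $\fn(T_i)\cap\fn(\tau_{i-1},\dots,\tau_1)\subseteq\fn(\tau_i)$, and since $\fn(\nu\seq{Y_j}.T_j)=\fn(T_j)\setminus\{\seq{Y_j}\}\subseteq\fn(\tau_j)$ for the already-localised components $j<i$, this yields the $j<i$ half of (b). For (a): every link of $\seq{Y_i}$ lies in $\fn(T_i)\subseteq\fn(M)$, so it suffices to show $\{\seq{Y_i}\}\cap\fn(\tau)=\emptyset$ for the final consequent type $\tau$; tracing $\tau$ back to the top \TyLIIntro{} and using the link constraint of its production rule, \Cref{lem:elim-limprules-lemma1} (Free Link Containment), and condition (v) down the chain, $\fn(\tau)$ is contained in $\{\seq{Y}\}\cup\fn(\seq{U\bowtie V})\cup\bigcup_j\fn(\tau_j)$, each piece of which is disjoint from $\seq{Y_i}$ by (b) and by the component-wise link relations.

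The hard part will be the bookkeeping in (a): because the prenex transformation of \figref{fig:prenex-trans} collapses each step's local binder to the outer block while each \TyLITrans{} accumulates antecedent types, one must apply \Cref{lem:elim-limprules-lemma1} and condition (v) at the right instances and keep the ``free links of both sides agree'' invariant consistently; by contrast (b) is a near-verbatim restatement of conditions (ii) and (iv), and the (E9)/(E10) manipulation is routine.
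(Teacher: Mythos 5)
Your overall strategy is the paper's: show (a) that the localised links $\{\seq{Y_i}\}=\fn(T_i)\setminus\fn(\tau_i)$ are among the outer binders $\seq{X}$ and (b) that they are not free in any other component, then conclude by (E9)/(E10)-style structural congruence. Your fact (b) is correctly established: the $j>i$ half is exactly side condition (ii) of $\TyLITrans_i$, and the $j<i$ half follows from condition (iv) together with $\fn(\nu\seq{Y_j}.T_j)\subseteq\fn(\tau_j)$ for the already-localised components (the paper reaches the same conclusion by a slightly different combination of conditions (ii) and (iv) at different steps, but the content is the same). The congruence manipulation itself is routine, as you say.

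The gap is in your justification of (a). You reduce it correctly to $\{\seq{Y_i}\}\cap\fn(\tau)=\emptyset$, but the decomposition you propose — unwinding $\tau$ to the top \TyLIIntro{} so that $\fn(\tau)\subseteq\fn(T_{n+1})\cup\bigcup_j\fn(\tau_j)$ — leaves the pieces $\fn(\tau_j)$ with $j>i$ uncontrolled. Fact (b) bounds $\fn(T_j)$, not $\fn(\tau_j)$, and these need not coincide above step $i$: precisely there the right premise of $\TyLITrans_j$ may carry a non-empty antecedent (this is the situation in which the localisation procedure has not yet reached, or stops), so by \Cref{lem:elim-limprules-lemma1} one only gets $\fn(\tau_j)\subseteq\fn(T_j)\cup\fn(\seq{\sigma_j})$ with $\seq{\sigma_j}$ the antecedent types, whose free links need not be free in $T_j$. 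Closing this requires a further descent along the chain (tracking where each such $\sigma$ is cancelled and re-applying condition (iv)), which your sketch does not supply and which "by (b) and the component-wise link relations" does not cover. The paper's proof avoids the issue entirely by applying \Cref{lem:elim-limprules-lemma1} once, locally, to the \emph{left premise of $\TyLITrans_i$}: this gives $\fn(\tau)\subseteq\fn((T_{n+1},\dots,T_{i+1}))\cup\fn(\tau_i)\cup\dots\cup\fn(\tau_1)$, so intersecting with $\fn(T_i)$ only conditions (ii) and (iv) at step $i$ are needed to get $\fn(T_i)\cap\fn(\tau)\subseteq\fn(\tau_i)$, and then \Cref{lem:elim-limprules-lemma2} (Subset Duality) yields $\{\seq{Y_i}\}\subseteq\fn(T_i)\setminus\fn(\tau)\subseteq\{\seq{X}\}$. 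Replacing your global unwinding by this local application fixes the argument; as written, step (a) is not yet a proof.
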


\begin{proof}
  Consider the proof tree illustrated in the upper part of
  \figref{fig:hyperlink-hiding-localisations}.
  Its root rule is \(\TyLIElimZ\).
  The leftmost branch consists exactly of the rules \(\TyLITrans_1, \dots, \TyLITrans_i\).
  In every such derivation step,
  the left-hand side of the linear implication type in the second premise
  is empty.

  Here, the links $\Xs$ are defined as
  \[
    \{\Xs\} \triangleq
    \paren{\fn(T_{n+1}) \cup \dots \cup \fn(T_1)} \setminus \fn(\tau).
  \]
  and the rule \(\TyLIIntro\) is applied at the top left,
  \[
    T_{n+1} \triangleq (C(\Xs), \seq{U \bowtie V}).
  \]

  Then, the three side conditions on the links for \(\TyLITrans_i\)
  are given as follows.
  For all \(1 \le j, k \le n+1\),
  \begin{align}
    j > i & \implies \fn(T_j) \cap \fn(T_i)    \subseteq \fn(\tau_i) \label{C1}            \\
    j > i & \implies \fn(T_j) \cap \fn(\zero) = \emptyset \subseteq \fn(\tau_i) \label{C2} \\
    i > k & \implies \paren{\fn(T_i) \cup \fn(\zero)} \cap \fn(\tau_k)
    \notag \\
    & \qquad\qquad = \fn(T_i) \cap \fn(\tau_k) \subseteq \fn(\tau_i) \label{C3}
  \end{align}
  By replacing \(j, i\) in \eqref{C1} with \(i, k\),
  we obtain
  \[
    \fn(T_i) \cap \fn(T_k) \subseteq \fn(\tau_k).
  \]
  Substituting this into \eqref{C3} for \(\fn(\tau_k)\) yields
  \[
    \fn(T_i) \cap (\fn(T_i) \cap \fn(T_k)) = \fn(T_i) \cap \fn(T_k) \subseteq \fn(\tau_i).
  \]
  Hence, together with~\eqref{C1}, we obtain
  \begin{align*}
    \forall j \neq i.\quad
    \fn(T_j) \cap \fn(T_i) \subseteq \fn(\tau_i).
    \label{C4}
  \end{align*}
  Therefore,
  \[
    \forall j \neq i.\quad
    \fn(T_j) \cap (\fn(T_i) \setminus \fn(\tau_i)) = \emptyset.
  \]
  Let
  \[
    \{\Ys\} \triangleq (\fn(T_i) \setminus \fn(\tau_i)).
  \]
  Then, the above can be rewritten as
  \begin{align}
    \forall j \neq i.\quad
    \fn(T_j) \cap \{\Ys\} = \emptyset.
    \label{C5}
  \end{align}
  By \Cref{lem:elim-limprules-lemma1}, we have
  \[
    \fn(\tau) \subseteq \fn(T_{n+1}) \cup \dots \cup \fn(T_{i+1}) \cup \fn(\tau_i) \cup \dots \cup \fn(\tau_1).
  \]
  Hence,
  \begin{align*}
    & \fn(T_i) \cap \fn(\tau)\\
    & \qquad \subseteq 
     \underbrace{(\fn(T_i) \cap \fn(T_{n+1})) \cup \dots \cup (\fn(T_i) \cap \fn(T_{i+1}))}_{\subseteq \fn(\tau_i) \; \because \eqref{C1}} \\
    & \qquad\qquad \cup
    \underbrace{(\fn(T_i) \cap \fn(\tau_i))}_{\subseteq \fn(\tau_i)} \\
    & \qquad\qquad \cup
    \underbrace{(\fn(T_i) \cap \fn(\tau_{i-1})) \cup \dots \cup (\fn(T_i) \cap \fn(\tau_1))}_{\subseteq \fn(\tau_i) \; \because \eqref{C3}} \\
    & \qquad \subseteq \fn(\tau_i)
  \end{align*}
  From this and \Cref{lem:elim-limprules-lemma2}, it follows that
  \begin{align}
    \fn(T_i) \setminus \fn(\tau_i) \subseteq \fn(T_i) \setminus \fn(\tau).
    \label{C6}
  \end{align}
  Therefore, for links $\Xs$, we have
  \begin{align*}
    \{\Xs\}
     & \triangleq \bigl(\fn(T_n) \cup \dots \cup \fn(T_i)
    \cup \dots \cup \fn(T_1)\bigr) \setminus \fn(\tau) \\
     & =          \paren{\fn(T_n) \setminus \fn(\tau)} \cup \dots \cup
    {\paren{\fn(T_i) \setminus \fn(\tau)}} \\
     & \qquad \cup \dots \cup \paren{\fn(T_1) \setminus \fn(\tau)} \\
     & \supseteq  {\paren{\fn(T_i) \setminus \fn(\tau)}} \\
     & \supseteq  {\paren{\fn(T_i) \setminus \fn(\tau_i)}} \qquad \because \eqref{C6} \\
     & =          {\{\Ys\}}.
  \end{align*}

  Let \(\{\Zs\} = \{\Xs\} \setminus \{\Ys\}\).
  From~\eqref{C5} and \(\{\Ys\} \subseteq \{\Xs\}\), we obtain
  \begin{align*}
     & \nu \Zs.(T_{n+1}, T_n, \dots, T_{i+1}, \nu\Ys.T_i, T_{i-1}, \dots, T_2, T_1)
    \\
     & \qquad \equiv
    \nu \Xs.(T_{n+1}, T_n, \dots, T_{i+1}, T_i, T_{i-1}, \dots, T_2, T_1).
  \end{align*}
\end{proof}

\begin{lemma}[$\TyLITrans{}$ Link Condition Preservation]\label{lem:l5-linkcond-preserve}
  Transforming proof trees as
  \figref{fig:ProofTreeNormalization}
  preserves the link conditions
  especially for \(\TyLITrans\).
\end{lemma}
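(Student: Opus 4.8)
<br>

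The statement to prove is \Cref{lem:l5-linkcond-preserve}: the Proof Tree Normalisation transformation of \figref{fig:ProofTreeNormalization} preserves the link conditions, especially those of \TyLITrans{}.

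The plan is to track, step by step through the transformed derivation of \figref{fig:ProofTreeNormalization}, each application of \TyLITrans{} and verify that conditions (i)--(v) of that rule still hold. I would organise the argument around the three groups of \TyLITrans{} applications that appear after the transformation: (a) the newly introduced \TyLITrans_i that fuses $T_m$ with $T_i$ (with $\TyLIIntroZ_i$ above it, so the right premise has empty left-hand side), (b) the repositioned \TyLITrans_m that now takes $(T_m, T_i)$ as its right premise, and (c) the remaining \TyLITrans{} applications in the two segments marked $\ddots_2$ and $\ddots_1$, whose premises differ from the original only by the relocation of $T_i$. For group (c) the key observation is that the set of free links of the combined subterm is unchanged by the transformation (since we only move $T_i$ to a different molecule position), so the free-link condition (i) is automatically inherited; the side conditions (ii)--(v) for those steps are inherited because the types $\tau, \seq{\tau_{m+1}}, \seq{\tau_m}, \ldots$ occurring in the premises are identical to those in the original derivation, and these conditions constrain only the types and their free-link sets, not the syntactic arrangement of the terms.

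The substantive work is in groups (a) and (b). For \TyLITrans_i in group (a): its left premise is $\vdash_P T_m : \LI{\tau_i, \seq{\tau_m}}{\tau_m}{\dots}$ (available from the original derivation), and its right premise is $\vdash_P T_i : \LI{\emptyset}{\tau_i}{\dots}$ obtained via $\TyLIIntroZ_i$, so $\seq{\tau_2}$ in the rule schema is empty. Many of conditions (ii)--(v) then degenerate: $\fn(\seq{\tau_2}) = \emptyset$ makes (iii) and the $\fn(\seq{\tau_2})$ part of (iv) trivial, and (ii) $\{\Xs\}\cap\{\Ys\}\subseteq\fn(\tau_i)$ plus the $\{\Ys\}\cap\fn(\seq{\tau_1})$ part of (iv) and (v) must be derived from the link conditions that held for the original $\TyLITrans_m$ and $\TyLITrans_i$. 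Here I would reuse the reasoning already deployed in the proof of \Cref{lem:l4-nu-localisation} (in particular the chain of inclusions culminating in $\fn(T_j)\cap\fn(T_i)\subseteq\fn(\tau_i)$ for $j\neq i$, and \Cref{lem:elim-limprules-lemma1} and \Cref{lem:elim-limprules-lemma2}), since the situation is structurally the same: $T_i$ is being merged next to another term whose type has $\tau_i$ on its left-hand side. For \TyLITrans_m in group (b), the right premise is now $\vdash_P (T_m, T_i) : \LI{\seq{\tau_m}}{\tau}{\dots}$ rather than $\vdash_P T_m : \LI{\tau_i,\seq{\tau_m}}{\tau_m}{\dots}$ — but $\fn((T_m,T_i)) = \fn(T_m)\cup\fn(T_i)$ and $\tau_i$ has been absorbed, so I would check that (i) still gives the same $\Ws$ as before (the extra links $\fn(T_i)\setminus\fn(\tau_i)$ being exactly those that were already present in the original derivation below $\TyLITrans_i$), and that (ii)--(v) follow by combining the original conditions for $\TyLITrans_m$ and $\TyLITrans_i$.

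The main obstacle I anticipate is bookkeeping the free-link sets precisely enough to see that condition (i) — the free-link identity $\{\Xs,\Ys\}\setminus\{\Zs\}=\{\Ws\}$ — and the subset conditions (ii)--(v) are genuinely preserved when $T_i$ is relocated and $\tau_i$ is cancelled one step earlier than before. In particular one must be careful that no link which was previously a free link of the combined subterm becomes hidden (or vice versa) by the reordering, and that $\fn(\tau_i)$ interacts correctly with the antecedent multisets $\seq{\tau_m}$ and $\seq{\tau_{m+1}}$. I expect that, as in \Cref{lem:l4-nu-localisation}, the crucial lever is that the derivation ends in \TyLIElimZ{}, which forces the left-hand side of the linear implication to be empty at the bottom and thereby propagates strong disjointness constraints (via the iterated application of condition~(i)) on the free links of the $T_j$; once those disjointness facts are extracted, the remaining inclusions are routine. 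I would therefore present the proof as: first extract the disjointness consequences from the original derivation exactly as in \Cref{lem:l4-nu-localisation}, then verify each of (i)--(v) for the three groups of \TyLITrans{} applications in the transformed tree in turn.
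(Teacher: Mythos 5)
Your proposal is correct and follows essentially the same route as the paper: a case-by-case verification of conditions (i)--(v) for each \TyLITrans{} application in the transformed tree of \figref{fig:ProofTreeNormalization}, deriving each from the corresponding side conditions of the original derivation (with (i) trivial in the prenex-normalised setting and (v) supplied by \Cref{lem:elim-limprules-lemma1}). The only notable difference is that where you plan to redo the disjointness chain of \Cref{lem:l4-nu-localisation}, the paper obtains the same inclusions more directly from the equality \(\fn(T_j)=\fn(\tau_j)\) forced by the \TyLIIntroZ{} side condition on the relocated right premises, which collapses most of the checks to one line.
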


\begin{proof}
  For the link condition (i) of \(\TyLITrans{}\), it holds trivially.
  The link condition (v) of \(\TyLITrans{}\) is established by \Cref{lem:elim-limprules-lemma1}.
  Therefore, we focus on conditions (ii)--(iv) in the latter discussion.
  For the link conditions of the rule \(\TyLITrans_i\)
  in the transformed typing derivation in
  \figref{fig:ProofTreeNormalization},
  the following holds.
  \begin{enumerate}
    \item[(ii)]
          \(\fn(T_m) \cap \fn(T_i) = \fn(T_m) \cap \fn(\tau_i) \subseteq \fn(\tau_i)\)\\
          \(\quad \because \fn(T_i) = \fn(\tau_i)\).
    \item[(iii)]
          \(\fn(T_m) \cap \fn(\zero) \subseteq \fn(\tau_i)\).
    \item[(iv)]
          \(\paren{\fn(T_i) \cup \fn(\zero)} \cap \fn(\zero) \subseteq \fn(\tau_i)\).
  \end{enumerate}

  For the link conditions of the rule \(\TyLITrans_m\),
  the following holds.
  \begin{enumerate}
    \item[(ii)]
          \(\fn(T_{m+1}) \cap \fn((T_m,T_i)) \subseteq \fn(\tau_m)\)
          because
          \begin{itemize}
            \item
                  \(\fn(T_{m+1}) \cap \fn(T_m) \subseteq \fn(\tau_m)\)
                  holds by the link condition (ii) of \(\TyLITrans_m\)
                  in the previous proof tree.
            \item
                  \(\fn(T_{m+1}) \cap \fn(T_i) =
                  \fn(T_{m+1}) \cap \fn(\tau_i) \subseteq \fn(\tau_m)\)
                  holds by the link condition (iii) of \(\TyLITrans_m\)
                  in the previous proof tree.
          \end{itemize}
    \item[(iii)]
          \(\fn(T_{m+1}) \cap \fn(\seq{\zero}) \subseteq \fn(\tau_m)\)
          trivially holds.
    \item[(iv)]
          $\fn((T_m,T_i)) \cap \fn(\tau_{m-1},\dots,\tau_{i+1},\tau_{i-1},\dots,\tau_1)$ \\
          $=
           \paren{\fn(T_m) \cup \fn(\tau_i)} \cap \fn(\tau_{m-1},\dots,\tau_{i+1},\tau_{i-1},\dots,\tau_1)$ \\
          $\qquad\qquad\quad\ \because \fn(T_i) = \fn(\tau_i)$ \\
          $\subseteq \fn(\tau_m)
          \quad\because
           \text{Condition (iv) of previous $\TyLITrans_i$.}$
  \end{enumerate}
  For the link conditions of the rule \(\TyLITrans_{i-1}\),
  the following holds.
  \begin{enumerate}
    \item[(ii)]
          \(\fn((T_{m+1},(T_m,T_i),\dots,T_{i+1})) \cap \fn(T_{i-1}) =
          \fn((T_{m+1},(T_m,T_i),\dots,T_{i+1})) \cap \fn(\tau_{i-1}) \subseteq \fn(\tau_{i-1})\)
          \(\quad \because \fn(T_{i-1}) = \fn(\tau_{i-1})\).
    \item[(iii)]
          \(\fn((T_{m+1},(T_m,T_i),\dots,T_{i+1}) \cap \fn(\zero) \subseteq \fn(\tau_{i-1})\)
          trivially holds.
    \item[(iv)]
          \(\paren{\fn(T_{i-1}) \cup \fn(\zero)} \cap \fn(\tau_{i-2}, \dots, \tau_1) =
          \fn(\tau_{i-1}) \cap \fn(\tau_{i-2}, \dots, \tau_1) =
          \fn(\tau_{i-1}) \subseteq \fn(\tau_{i-1})\).
  \end{enumerate}
  For the link conditions of the rule
  \(\TyLITrans_j \quad (n > j \neq i)\)
  the following holds.
  \begin{enumerate}
    \item[(ii)]
          \(\fn((T_{m+1},(T_m,T_i),\dots,T_{j+1})) \cap \fn(T_j) =
          \fn((T_{m+1},(T_m,T_i),\dots,T_{j+1})) \cap \fn(\tau_j) \subseteq \fn(\tau_j)\)
          \(\quad \because \fn(T_j) = \fn(\tau_j)\).
    \item[(iii)]
          \(\fn((T_{m+1},(T_m,T_i),\dots,T_{j+1}) \cap \fn(\zero) \subseteq \fn
          (\tau_j)\)
          trivially holds.
    \item[(iv)]
          \(\paren{\fn(T_j) \cup \fn(\zero)} \cap \fn(\paren{\tau_{j-1}, \dots, \tau_1} \setminus \tau_i) =
          \fn(\tau_j) \cap \fn(\paren{\tau_{j-1}, \dots, \tau_1} \setminus \tau_i) =
          \fn(\tau_j) \subseteq \fn(\tau_j)\).
  \end{enumerate}
\end{proof}

\subsection{An Example of the Elimination of $\TyLIs$}\label{sec:elim-limprules-eg}

We show how a graph (value) that is not of a linear implication type
can be typed without using the typing rules of \(\TyLIs{}\),
using the following production rules as an example:
\[\begin{array}{@{\kern-1pt}r@{\kern1pt}l@{}}
  \li(X)  & \lto \Nil(X),                                             \\
  \li(X)  & \lto \nu W_1 W_2.(\Cons(W_1,W_2,X), \nat(W_1), \li(W_2)), \\
  \nat(X) & \lto \one(X),                                             \\
  \nat(X) & \lto \two(X).
\end{array}\]

\Figref{fig:BeforeAfterTransformation-1}
and
\Figref{fig:BeforeAfterTransformation-2} shows
transformations of a derivation tree.
The subscript \(i\) attached to symbols such as \(\Cons_i\)
is merely for readability, to indicate the correspondence between constructor atoms
within the proof tree.
Similarly, the subscript \(i\) on type atoms such as \(\dbllist_i\)
marks those that will be cancelled,
so that the correspondence can be easily seen.

\begin{sidewaysfigure*}[p]
\captionsetup{justification=centering,singlelinecheck=false}
  \hrulefill{}

  \def\ScoreOverhang{0pt}
  \def\defaultHypSeparation{\hskip .05in}
  \hspace*{-60pt}

  (1)
  Before all the transformations:
  \begin{prooftree}\tiny
    \AXC{}
    \RightLabel{$\TyLIIntroA{}_1$}
    \UIC{$\ml{\vdash_P \Cons_1(W_1,W_2,X)\\
          : \LI{\cancel{\nat_1(W_1)}, \li_2(W_2)}{\li_1(X)}{W_2,W_1,X}}$}
    \AXC{}
    \RightLabel{$\TyLIIntroA{}_{1'}$}
    \UIC{$\ml{\vdash_P \one_1(W_1) \\
          : \LIempty{\cancel{\nat_1(W_1)}}{W_1}}$}
    \RightLabel{$\TyLITransA{}_1$}
    \BIC{$\ml{\vdash_P \nu W_1.(\Cons_1(W_1,W_2,X), \one_1(W_1))\\
          : \LI{\cancel{\li_2(W_2)}}{\li_1(X)}{W_2,X}}$}
    \AXC{}
    \RightLabel{$\TyLIIntroA{}_2$}
    \UIC{$\ml{\vdash_P \Cons_2(W_3,W_4,W_2)\\
          : \LI{\cancel{\nat_2(W_3)},\li_3(W_4)}{\li_2(W_2)}{W_4,W_3,W_2}}$}
    \AXC{}
    \RightLabel{$\TyLIIntroA{}_{2'}$}
    \UIC{$\ml{\vdash_P \two_2(W_3)\\
          : \LIempty{\cancel{\nat_2(W_3)}}{W_3}}$}
    \RightLabel{$\TyLITransA{}_2$}
    \BIC{$\ml{\vdash_P \nu W_3.(\Cons_2(W_3,W_4,W_2), \two_2(W_3))\\
          : \LI{\li_3(W_4)}{\cancel{\li_2(W_2)}}{W_4,W_2}}$}
    \RightLabel{$\TyLITransA{}_{12}$}
    \BIC{$\ml{\vdash_P
          \nu W_2.(\nu W_1.(\Cons_1(W_1,W_2,X), \one_1(W_1))
          ,(\nu W_3.\Cons_2(W_3,W_4,W_2), \two_2(W_3)))}{
          : \LI{\cancel{\li_3(W_4)}}{\li_1(X)}{W_4,X}}$}
    \AXC{}
    \RightLabel{$\TyLIIntroA{}_3$}
    \UIC{$\ml{\vdash_P \Nil_3(W_4)\\
          : \LIempty{\cancel{\li_3(W_4)}}{W_4}}$}
    \RightLabel{$\TyLITransA{}_3$}
    \BIC{$\ml{\vdash_P
          \nu W_4.(\nu W_2.(\nu W_1.(\Cons_1(W_1,W_2,X), \one_1(W_1))
          ,\nu W_3.(\Cons_2(W_3,W_4,W_2), \two_2(W_3)))
          ,\Nil_3(W_4))}{
          : \LIempty{\li_1(X)}{X}}$}
    \RightLabel{$\TyLIElimZA{}_{123}$}
    \UIC{$\ml{\vdash_P
          \nu W_4.(\nu W_2.(\nu W_1.(\Cons_1(W_1,W_2,X), \one_1(W_1))
          ,\nu W_3.(\Cons_2(W_3,W_4,W_2), \two_2(W_3)))
          ,\Nil_3(W_4))}{
          : \li_1(X)}$}
  \end{prooftree}

  \medskip

  (2)
  After Prenex Normal Form Transformation (Step 1):
  \begin{prooftree}\tiny
    \AXC{}
    \RightLabel{$\TyLIIntroA{}_1$}
    \UIC{$\ml{\vdash_P \Cons_1(W_1,W_2,X)\\
          : \LI{\cancel{\nat_1(W_1)}, \li_2(W_2)}{\li_1(X)}{W_2,W_1,X}}$}
    \AXC{}
    \RightLabel{$\TyLIIntroA{}_{1'}$}
    \UIC{$\ml{\vdash_P \one_1(W_1) \\
          : \LIempty{\cancel{\nat_1(W_1)}}{W_1}}$}
    \RightLabel{$\TyLITransA{}_1$}
    \BIC{$\ml{\vdash_P (\Cons_1(W_1,W_2,X), \one_1(W_1))\\
          : \LI{\cancel{\li_2(W_2)}}{\li_1(X)}{W_2,W_1,X}}$}
    \AXC{}
    \RightLabel{$\TyLIIntroA{}_2$}
    \UIC{$\ml{\vdash_P \Cons_2(W_3,W_4,W_2)\\
          : \LI{\cancel{\nat_2(W_3)},\li_3(W_4)}{\li_2(W_2)}{W_4,W_3,W_2}}$}
    \AXC{}
    \RightLabel{$\TyLIIntroA{}_{2'}$}
    \UIC{$\ml{\vdash_P \two_2(W_3)\\
          : \LIempty{\cancel{\nat_2(W_3)}}{W_3}}$}
    \RightLabel{$\TyLITransA{}_2$}
    \BIC{$\ml{\vdash_P \nu W_3.(\Cons_2(W_3,W_4,W_2), \two_2(W_3))\\
          : \LI{\li_3(W_4)}{\cancel{\li_2(W_2)}}{W_4,W_2}}$}
    \RightLabel{$\TyLITransA{}_{12}$}
    \BIC{$\ml{\vdash_P
          ((\Cons_1(W_1,W_2,X), \one_1(W_1))
          ,\nu W_3.(\Cons_2(W_3,W_4,W_2), \two_2(W_3)))}{
          : \LI{\cancel{\li_3(W_4)}}{\li_1(X)}{W_4,W_2,W_1,X}}$}
    \AXC{}
    \RightLabel{$\TyLIIntroA{}_3$}
    \UIC{$\ml{\vdash_P \Nil_3(W_4)\\
          : \LIempty{\cancel{\li_3(W_4)}}{W_4}}$}
    \RightLabel{$\TyLITransA{}_3$}
    \BIC{$\ml{\vdash_P
          (((\Cons_1(W_1,W_2,X), \one_1(W_1))
          ,(\nu W_3.\Cons_2(W_3,W_4,W_2), \two_2(W_3)))
          ,\Nil_3(W_4))}{
          : \LIempty{\li_1(X)}{W_4,W_2,W_1,X}}$}
    \RightLabel{$\TyLIElimZA{}_{123}$}
    \UIC{$\ml{\vdash_P
          \nu W_1 W_2 W_3 W_4.
          ((\Cons_1(W_1,W_2,X), \one_1(W_1)
          ,\nu W_3.(\Cons_2(W_3,W_4,W_2), \two_2(W_3)))
          ,\Nil_3(W_4))}{
          : \li_1(X)}$}
    \RightLabel{$\TyCong{}$}
    \UIC{$\ml{\vdash_P
          \nu W_4.(\nu W_2.(\nu W_1.(\Cons_1(W_1,W_2,X), \one_1(W_1))
          ,\nu W_3.(\Cons_2(W_3,W_4,W_2), \two_2(W_3)))
          ,\Nil_3(W_4))}{
          : \li_1(X)}$}
  \end{prooftree}

  (3)
  After Hyperlink Creation Localisation (Step 2):
  \begin{prooftree}\tiny
    \AXC{}
    \RightLabel{$\TyLIIntroA{}_1$}
    \UIC{$\ml{\vdash_P \Cons_1(W_1,W_2,X)\\
          : \LI{\cancel{\nat_1(W_1)}, \li_2(W_2)}{\li_1(X)}{W_2,W_1,X}}$}
    \AXC{}
    \RightLabel{$\TyLIIntroA{}_{1'}$}
    \UIC{$\ml{\vdash_P \one_1(W_1) \\
          : \LIempty{\cancel{\nat_1(W_1)}}{W_1}}$}
    \RightLabel{$\TyLITransA{}_1$}
    \BIC{$\ml{\vdash_P (\Cons_1(W_1,W_2,X), \one_1(W_1))\\
          : \LI{\cancel{\li_2(W_2)}}{\li_1(X)}{W_2,W_1,X}}$}
    \AXC{}
    \RightLabel{$\TyLIIntroA{}_2$}
    \UIC{$\ml{\vdash_P \Cons_2(W_3,W_4,W_2)\\
          : \LI{\cancel{\nat_2(W_3)},\li_3(W_4)}{\li_2(W_2)}{W_4,W_3,W_2}}$}
    \AXC{}
    \RightLabel{$\TyLIIntroA{}_{2'}$}
    \UIC{$\ml{\vdash_P \two_2(W_3)\\
          : \LIempty{\cancel{\nat_2(W_3)}}{W_3}}$}
    \RightLabel{$\TyLITransA{}_2$}
    \BIC{$\ml{\vdash_P \nu W_3.(\Cons_2(W_3,W_4,W_2), \two_2(W_3))\\
          : \LI{\li_3(W_4)}{\cancel{\li_2(W_2)}}{W_4,W_2}}$}
    \RightLabel{$\TyLITransA{}_{12}$}
    \BIC{$\ml{\vdash_P
          ((\Cons_1(W_1,W_2,X), \one_1(W_1))
          ,\nu W_3.(\Cons_2(W_3,W_4,W_2), \two_2(W_3)))}{
          : \LI{\cancel{\li_3(W_4)}}{\li_1(X)}{W_4,W_2,W_1,X}}$}
    \AXC{}
    \RightLabel{$\TyLIIntroA{}_3$}
    \UIC{$\ml{\vdash_P \Nil_3(W_4)\\
          : \LIempty{\cancel{\li_3(W_4)}}{W_4}}$}
    \RightLabel{$\TyLIElimZA{}_3$}
    \UIC{$\vdash_P \Nil_3(W_4) : \li_3(W_4)$}
    \RightLabel{$\TyLIIntroZA{}_3$}
    \UIC{$\ml{\vdash_P \Nil_3(W_4)\\
          : \LIempty{\cancel{\li_3(W_4)}}{W_4}}$}
    \RightLabel{$\TyLITransA{}_3$}
    \BIC{$\ml{\vdash_P
          (((\Cons_1(W_1,W_2,X), \one_1(W_1))
          ,(\nu W_3.\Cons_2(W_3,W_4,W_2), \two_2(W_3)))
          ,\Nil_3(W_4))}{
          : \LIempty{\li_1(X)}{W_4,W_2,W_1,X}}$}
    \RightLabel{$\TyLIElimZA{}_{123}$}
    \UIC{$\ml{\vdash_P
          \nu W_1 W_2 W_4.
          ((\Cons_1(W_1,W_2,X), \one_1(W_1)
          ,\nu W_3.(\Cons_2(W_3,W_4,W_2), \two_2(W_3)))
          ,\Nil_3(W_4))}{
          : \li_1(X)}$}
    \RightLabel{$\TyCong{}$}
    \UIC{$\ml{\vdash_P
          \nu W_4.(\nu W_2.(\nu W_1.(\Cons_1(W_1,W_2,X), \one_1(W_1))
          ,\nu W_3.(\Cons_2(W_3,W_4,W_2), \two_2(W_3)))
          ,\Nil_3(W_4))}{
          : \li_1(X)}$}
  \end{prooftree}

  (4)
  After Proof Tree Normalisation (Step 3):
  \begin{prooftree}\tiny
    \AXC{}
    \RightLabel{$\TyLIIntroA{}_1$}
    \UIC{$\ml{\vdash_P \Cons_1(W_1,W_2,X)\\
          : \LI{\cancel{\nat_1(W_1)}, \li_2(W_2)}{\li_1(X)}{W_2,W_1,X}}$}
    \AXC{}
    \RightLabel{$\TyLIIntroA{}_{1'}$}
    \UIC{$\ml{\vdash_P \one_1(W_1) \\
          : \LIempty{\cancel{\nat_1(W_1)}}{W_1}}$}
    \RightLabel{$\TyLITransA{}_1$}
    \BIC{$\ml{\vdash_P (\Cons_1(W_1,W_2,X), \one_1(W_1))\\
          : \LI{\cancel{\li_2(W_2)}}{\li_1(X)}{W_2,W_1,X}}$}
    \AXC{}
    \RightLabel{$\TyLIIntroA{}_2$}
    \UIC{$\ml{\vdash_P \Cons_2(W_3,W_4,W_2)\\
          : \LI{\cancel{\nat_2(W_3)},\li_3(W_4)}{\li_2(W_2)}{W_4,W_3,W_2}}$}
    \AXC{}
    \RightLabel{$\TyLIIntroA{}_{2'}$}
    \UIC{$\ml{\vdash_P \two_2(W_3)\\
          : \LIempty{\cancel{\nat_2(W_3)}}{W_3}}$}
    \RightLabel{$\TyLITransA{}_2$}
    \BIC{$\ml{\vdash_P \nu W_3.(\Cons_2(W_3,W_4,W_2), \two_2(W_3))\\
          : \LI{\cancel{\li_3(W_4)}}{\li_2(W_2)}{W_4,W_2}}$}
    \AXC{}
    \RightLabel{$\TyLIIntroA{}_3$}
    \UIC{$\ml{\vdash_P \Nil_3(W_4)\\
          : \LIempty{\li_3(W_4)}{W_4}}$}
    \RightLabel{$\TyLIElimZA{}_3$}
    \UIC{$\vdash_P \Nil_3(W_4) : \li_3(W_4)$}
    \RightLabel{$\TyLIIntroZA{}_3$}
    \UIC{$\ml{\vdash_P \Nil_3(W_4)\\
          : \LIempty{\cancel{\li_3(W_4)}}{W_4}}$}
    \RightLabel{$\TyLITransA{}_3$}
    \BIC{$\ml{\vdash_P
          (\nu W_3.(\Cons_2(W_3,W_4,W_2), \two_2(W_3))
          ,\Nil_3(W_4))}{
          : \LIempty{\cancel{\li_2(W_2)}}{W_4,W_2}}$}
    \RightLabel{$\TyLITransA{}_{12}$}
    \BIC{$\ml{\vdash_P
          (\Cons_1(W_1,W_2,X), \one_1(W_1)
          ,(\nu W_3.(\Cons_2(W_3,W_4,W_2), \two_2(W_3))
          ,\Nil_3(W_4)))}{
          : \LIempty{\li_1(X)}{W_4,W_2,W_1,X}}$}
    \RightLabel{$\TyLIElimZA{}_{123}$}
    \UIC{$\ml{\vdash_P
          \nu W_1 W_2 W_4.
          (\Cons_1(W_1,W_2,X), \one_1(W_1)
          ,(\nu W_3.(\Cons_2(W_3,W_4,W_2), \two_2(W_3))
          ,\Nil_3(W_4)))}{
          : \li_1(X)}$}
    \RightLabel{$\TyCong{}$}
    \UIC{$\ml{\vdash_P
          \nu W_4.(\nu W_2.(\nu W_1.(\Cons_1(W_1,W_2,X), \one_1(W_1))
          ,\nu W_3.(\Cons_2(W_3,W_4,W_2), \two_2(W_3)))
          ,\Nil_3(W_4))}{
          : \li_1(X)}$}
  \end{prooftree}

  \hrulefill{}
  \medskip
  \caption{Transformations of a Proof Tree (1)--(4)}
  \label{fig:BeforeAfterTransformation-1}
\end{sidewaysfigure*}

\begin{sidewaysfigure*}[p]
\captionsetup{justification=centering,singlelinecheck=false}
  \hrulefill{}

  \def\ScoreOverhang{0pt}
  \def\defaultHypSeparation{\hskip .05in}
  \hspace*{-60pt}

  (5)
  After Hyperlink Creation Localisation (Step 2):
  \begin{prooftree}\tiny
    \AXC{}
    \RightLabel{$\TyLIIntroA{}_1$}
    \UIC{$\ml{\vdash_P \Cons_1(W_1,W_2,X)\\
          : \LI{\cancel{\nat_1(W_1)}, \li_2(W_2)}{\li_1(X)}{W_2,W_1,X}}$}
    \AXC{}
    \RightLabel{$\TyLIIntroA{}_{1'}$}
    \UIC{$\ml{\vdash_P \one_1(W_1) \\
          : \LIempty{\cancel{\nat_1(W_1)}}{W_1}}$}
    \RightLabel{$\TyLITransA{}_1$}
    \BIC{$\ml{\vdash_P (\Cons_1(W_1,W_2,X), \one_1(W_1))\\
          : \LI{\cancel{\li_2(W_2)}}{\li_1(X)}{W_2,W_1,X}}$}
    \AXC{}
    \RightLabel{$\TyLIIntroA{}_2$}
    \UIC{$\ml{\vdash_P \Cons_2(W_3,W_4,W_2)\\
          : \LI{\cancel{\nat_2(W_3)},\li_3(W_4)}{\li_2(W_2)}{W_4,W_3,W_2}}$}
    \AXC{}
    \RightLabel{$\TyLIIntroA{}_{2'}$}
    \UIC{$\ml{\vdash_P \two_2(W_3)\\
          : \LIempty{\cancel{\nat_2(W_3)}}{W_3}}$}
    \RightLabel{$\TyLITransA{}_2$}
    \BIC{$\ml{\vdash_P \nu W_3.(\Cons_2(W_3,W_4,W_2), \two_2(W_3))\\
          : \LI{\cancel{\li_3(W_4)}}{\li_2(W_2)}{W_4,W_2}}$}
    \AXC{}
    \RightLabel{$\TyLIIntroA{}_3$}
    \UIC{$\ml{\vdash_P \Nil_3(W_4)\\
          : \LIempty{\li_3(W_4)}{W_4}}$}
    \RightLabel{$\TyLIElimZA{}_3$}
    \UIC{$\vdash_P \Nil_3(W_4) : \li_3(W_4)$}
    \RightLabel{$\TyLIIntroZA{}_3$}
    \UIC{$\ml{\vdash_P \Nil_3(W_4)\\
          : \LIempty{\cancel{\li_3(W_4)}}{W_4}}$}
    \RightLabel{$\TyLITransA{}_3$}
    \BIC{$\ml{\vdash_P
          \nu W_4.(\nu W_3.(\Cons_2(W_3,W_4,W_2), \two_2(W_3))
          ,\Nil_3(W_4))\\
          : \LIempty{\li_2(W_2)}{W_2}}$}
    \RightLabel{$\TyLIElimZA{}$}
    \UIC{$\ml{\vdash_P
          \nu W_4.(\nu W_3.(\Cons_2(W_3,W_4,W_2), \two_2(W_3))
          ,\Nil_3(W_4))
          : \li_2(W_2)}$}
    \RightLabel{$\TyLIIntroZA{}$}
    \UIC{$\ml{\vdash_P
          \nu W_4.(\nu W_3.(\Cons_2(W_3,W_4,W_2), \two_2(W_3))
          ,\Nil_3(W_4))}{
          : \LIempty{\cancel{\li_2(W_2)}}{W_2}}$}
    \RightLabel{$\TyLITransA{}_{12}$}
    \BIC{$\ml{\vdash_P
          (\Cons_1(W_1,W_2,X), \one_1(W_1)
          ,\nu W_4.(\nu W_3.(\Cons_2(W_3,W_4,W_2), \two_2(W_3))
          ,\Nil_3(W_4)))}{
          : \LIempty{\li_1(X)}{W_2,W_1,X}}$}
    \RightLabel{$\TyLIElimZA{}_{123}$}
    \UIC{$\ml{\vdash_P
          \nu W_1 W_2
          (\Cons_1(W_1,W_2,X), \one_1(W_1)
          ,\nu W_4.(\nu W_3.(\Cons_2(W_3,W_4,W_2), \two_2(W_3))
          ,\Nil_3(W_4)))}{
          : \li_1(X)}$}
    \RightLabel{$\TyCong{}$}
    \UIC{$\ml{\vdash_P
          \nu W_4.(\nu W_2.(\nu W_1.(\Cons_1(W_1,W_2,X), \one_1(W_1))
          ,\nu W_3.(\Cons_2(W_3,W_4,W_2), \two_2(W_3)))
          ,\Nil_3(W_4))}{
          : \li_1(X)}$}
  \end{prooftree}

  (6)
  After Proof Tree Normalisation (Step 3):
  Same as (5).

  (7)
  After Hyperlink Creation Localisation (Step 2):
  \begin{prooftree}\tiny
    \AXC{}
    \RightLabel{$\TyLIIntroA{}_1$}
    \UIC{$\ml{\vdash_P \Cons_1(W_1,W_2,X)\\
          : \LI{\cancel{\nat_1(W_1)}, \li_2(W_2)}{\li_1(X)}{W_2,W_1,X}}$}
    \AXC{}
    \RightLabel{$\TyLIIntroA{}_{1'}$}
    \UIC{$\ml{\vdash_P \one_1(W_1) \\
          : \LIempty{\nat_1(W_1)}{W_1}}$}
    \RightLabel{$\TyLIElimZA{}$}
    \UIC{$\ml{\vdash_P \one_1(W_1)
          : \nat_1(W_1)}$}
    \RightLabel{$\TyLIIntroZA{}$}
    \UIC{$\ml{\vdash_P \one_1(W_1) \\
          : \LIempty{\cancel{\nat_1(W_1)}}{W_1}}$}
    \RightLabel{$\TyLITransA{}_1$}
    \BIC{$\ml{\vdash_P (\Cons_1(W_1,W_2,X), \one_1(W_1))\\
          : \LI{\cancel{\li_2(W_2)}}{\li_1(X)}{W_2,W_1,X}}$}
    \AXC{}
    \RightLabel{$\TyLIIntroA{}_2$}
    \UIC{$\ml{\vdash_P \Cons_2(W_3,W_4,W_2)\\
          : \LI{\cancel{\nat_2(W_3)},\li_3(W_4)}{\li_2(W_2)}{W_4,W_3,W_2}}$}
    \AXC{}
    \RightLabel{$\TyLIIntroA{}_{2'}$}
    \UIC{$\ml{\vdash_P \two_2(W_3)\\
          : \LIempty{\cancel{\nat_2(W_3)}}{W_3}}$}
    \RightLabel{$\TyLITransA{}_2$}
    \BIC{$\ml{\vdash_P \nu W_3.(\Cons_2(W_3,W_4,W_2), \two_2(W_3))\\
          : \LI{\cancel{\li_3(W_4)}}{\li_2(W_2)}{W_4,W_2}}$}
    \AXC{}
    \RightLabel{$\TyLIIntroA{}_3$}
    \UIC{$\ml{\vdash_P \Nil_3(W_4)\\
          : \LIempty{\li_3(W_4)}{W_4}}$}
    \RightLabel{$\TyLIElimZA{}_3$}
    \UIC{$\vdash_P \Nil_3(W_4) : \li_3(W_4)$}
    \RightLabel{$\TyLIIntroZA{}_3$}
    \UIC{$\ml{\vdash_P \Nil_3(W_4)\\
          : \LIempty{\cancel{\li_3(W_4)}}{W_4}}$}
    \RightLabel{$\TyLITransA{}_3$}
    \BIC{$\ml{\vdash_P
          \nu W_4.(\nu W_3.(\Cons_2(W_3,W_4,W_2), \two_2(W_3))
          ,\Nil_3(W_4))\\
          : \LIempty{\li_2(W_2)}{W_2}}$}
    \RightLabel{$\TyLIElimZA{}$}
    \UIC{$\ml{\vdash_P
          \nu W_4.(\nu W_3.(\Cons_2(W_3,W_4,W_2), \two_2(W_3))
          ,\Nil_3(W_4))
          : \li_2(W_2)}$}
    \RightLabel{$\TyLIIntroZA{}$}
    \UIC{$\ml{\vdash_P
          \nu W_4.(\nu W_3.(\Cons_2(W_3,W_4,W_2), \two_2(W_3))
          ,\Nil_3(W_4))}{
          : \LIempty{\cancel{\li_2(W_2)}}{W_2}}$}
    \RightLabel{$\TyLITransA{}_{12}$}
    \BIC{$\ml{\vdash_P
          (\Cons_1(W_1,W_2,X), \one_1(W_1)
          ,\nu W_4.(\nu W_3.(\Cons_2(W_3,W_4,W_2), \two_2(W_3))
          ,\Nil_3(W_4)))}{
          : \LIempty{\li_1(X)}{W_2,W_1,X}}$}
    \RightLabel{$\TyLIElimZA{}_{123}$}
    \UIC{$\ml{\vdash_P
          \nu W_1 W_2
          (\Cons_1(W_1,W_2,X), \one_1(W_1)
          ,\nu W_4.(\nu W_3.(\Cons_2(W_3,W_4,W_2), \two_2(W_3))
          ,\Nil_3(W_4)))}{
          : \li_1(X)}$}
    \RightLabel{$\TyCong{}$}
    \UIC{$\ml{\vdash_P
          \nu W_4.(\nu W_2.(\nu W_1.(\Cons_1(W_1,W_2,X), \one_1(W_1))
          ,\nu W_3.(\Cons_2(W_3,W_4,W_2), \two_2(W_3)))
          ,\Nil_3(W_4))}{
          : \li_1(X)}$}
  \end{prooftree}

  (8)
  Replacing with \TyProd{}.
  \begin{prooftree}\tiny
    \AXC{$\paren{\li(X) \lto \nu W_1 W_2.(\Cons(W_1,W_2,X), \nat(W_1), \li(W_2))} \in P$}
    \AXC{}
    \RightLabel{$\TyLIIntroA{}_{1'}$}
    \UIC{$\ml{\vdash_P \one_1(W_1) \\
          : \LIempty{\nat_1(W_1)}{W_1}}$}
    \RightLabel{$\TyLIElimZA{}$}
    \UIC{$\ml{\vdash_P \one_1(W_1)
          : \nat_1(W_1)}$}
    \AXC{}
    \RightLabel{$\TyLIIntroA{}_2$}
    \UIC{$\ml{\vdash_P \Cons_2(W_3,W_4,W_2)\\
          : \LI{\cancel{\nat_2(W_3)},\li_3(W_4)}{\li_2(W_2)}{W_4,W_3,W_2}}$}
    \AXC{}
    \RightLabel{$\TyLIIntroA{}_{2'}$}
    \UIC{$\ml{\vdash_P \two_2(W_3)\\
          : \LIempty{\cancel{\nat_2(W_3)}}{W_3}}$}
    \RightLabel{$\TyLITransA{}_2$}
    \BIC{$\ml{\vdash_P \nu W_3.(\Cons_2(W_3,W_4,W_2), \two_2(W_3))\\
          : \LI{\cancel{\li_3(W_4)}}{\li_2(W_2)}{W_4,W_2}}$}
    \AXC{}
    \RightLabel{$\TyLIIntroA{}_3$}
    \UIC{$\ml{\vdash_P \Nil_3(W_4)\\
          : \LIempty{\li_3(W_4)}{W_4}}$}
    \RightLabel{$\TyLIElimZA{}_3$}
    \UIC{$\vdash_P \Nil_3(W_4) : \li_3(W_4)$}
    \RightLabel{$\TyLIIntroZA{}_3$}
    \UIC{$\ml{\vdash_P \Nil_3(W_4)\\
          : \LIempty{\cancel{\li_3(W_4)}}{W_4}}$}
    \RightLabel{$\TyLITransA{}_3$}
    \BIC{$\ml{\vdash_P
          \nu W_4.(\nu W_3.(\Cons_2(W_3,W_4,W_2), \two_2(W_3))
          ,\Nil_3(W_4))\\
          : \LIempty{\li_2(W_2)}{W_2}}$}
    \RightLabel{$\TyLIElimZA{}$}
    \UIC{$\ml{\vdash_P
          \nu W_4.(\nu W_3.(\Cons_2(W_3,W_4,W_2), \two_2(W_3))
          ,\Nil_3(W_4))
          : \li_2(W_2)}$}
    \RightLabel{$\TyProd{}$}
    \TIC{$\ml{\vdash_P
          \nu W_1 W_2
          (\Cons_1(W_1,W_2,X), \one_1(W_1)
          ,\nu W_4.(\nu W_3.(\Cons_2(W_3,W_4,W_2), \two_2(W_3))
          ,\Nil_3(W_4)))}{
          : \li_1(X)}$}
    \RightLabel{$\TyCong{}$}
    \UIC{$\ml{\vdash_P
          \nu W_4.(\nu W_2.(\nu W_1.(\Cons_1(W_1,W_2,X), \one_1(W_1))
          ,\nu W_3.(\Cons_2(W_3,W_4,W_2), \two_2(W_3)))
          ,\Nil_3(W_4))}{
          : \li_1(X)}$}
  \end{prooftree}

  \hrulefill{}
  \medskip
  \caption{\hbox{Transformations of a Proof Tree (5)--(8)}}
  \label{fig:BeforeAfterTransformation-2}
\end{sidewaysfigure*}

\section{Proof of Propositions of Normal Forms}\label{app:normal-form-proof}

In this section,
We establish the existence and uniqueness of Normal Forms.

\subsection{Existence of Normal Form}\label{sec:existence-normal}


We first show
for any graph,
the typing derivation,
that is the largest prefix of its derivation tree
that does not contain premises of \(\TyArrow\),
is derived solely by using
\(\TyAlpha\), \(\TyCong\), \(\TyProd\), and \(\TyLIs\)
(\Cref{lem:A11}).

Then, we show
such typing derivation can be transformed
so that \(\TyAlpha\) does not occur
and \(\TyCong\) occurs only once as the final rule
in the largest prefix of its derivation tree
that contains no premises of \(\TyArrow\)
(\Cref{lem:elim-ty-alpha}, \Cref{lem:elim-ty-cong}).

Then,
Any typing derivation for a graph that uses only
\(\TyProd\) and \(\TyLIs\)
in the largest prefix of its derivation tree
that contains no premises of \(\TyArrow\)
can be transformed,
as shown in Appendix \ref{sec:elim-limprules},
into one in which the typing uses only \(\TyProd\),
\TyAlpha{}, and \TyCong{}.

Applying the previous steps again,
we obtain the typing derivation
only consists of \(\TyProd\)
for a congruent graph.

\begin{lemma}\label{lem:A11}
  For any graph $G$,
  the typing is derived solely by using
  \(\TyAlpha\), \(\TyCong\), \(\TyProd\), and \(\TyLIs\)
  in the largest prefix of its derivation tree
  that does not contain premises of \(\TyArrow\)
\end{lemma}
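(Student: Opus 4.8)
The plan is to prove \Cref{lem:A11} by induction on the typing derivation of $\Gamma \vdash_P G : \tau$, using an inversion-style argument on the subject; the environment $\Gamma$ plays no role. First I would record two easy closure facts. (a) Structural congruence, link renaming $\angled{Y/X}$, and taking immediate subterms all preserve the property of \emph{being a graph}, i.e.\ a value containing no graph variables: none of the congruence rules of \figref{table:lgt-cong} (in particular (E6) and (E10), whose side conditions on free links are immaterial here) nor link substitution introduce a graph variable, and the immediate subterms of $\mathbf{0}$, $p(\Xs)$, $X\bowtie Y$, $(G_1,G_2)$, $\nu X.G$ are again graphs. (b) Consequently, the class of judgements whose subject is a graph is closed under taking the \emph{typing} premises of each of the rules \TyProd{}, \TyAlpha{}, \TyCong{}, \TyLIIntro{}, \TyLITrans{}, \TyLIElimZ{}, \TyLIIntroZ{}; the non-typing premises of \TyProd{} and \TyLIIntro{} (the side conditions naming a production rule of $P$) contribute no further rule instances.

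Next I would observe, by inspecting the conclusions of the rules in \figref{table:typing-rules} and \figref{table:revised-typing-rules}, that the last rule of any derivation whose subject is a graph must be one of \TyArrow{}, \TyAlpha{}, \TyCong{}, \TyProd{}, \TyLIIntro{}, \TyLITrans{}, \TyLIElimZ{}, or \TyLIIntroZ{}: the conclusion of \TyVar{} is a graph variable, that of \TyApp{} an application, and that of \TyCase{} a \textbf{case} expression, none of which is a graph, so those three rules are excluded; and \TyArrow{} is possible only when $G$ is a $\lambda$-abstraction atom $(\lambda x[\Xs]:\tau_1.e)(\Ys)$, in which case $\tau = (\tau_1 \to \tau_2)(\Ys)$ and this $\lambda$-atom is exactly where the prefix ``without premises of \TyArrow{}'' is cut off — the \TyArrow{} step and the subderivation of its body $e$ lie outside that prefix.

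Then the induction is immediate. If the last rule is \TyArrow{}, the contribution of this subderivation to the \TyArrow{}-free prefix is empty and the claim holds vacuously. Otherwise the last rule is one of \TyAlpha{}, \TyCong{}, \TyProd{}, \TyLIIntro{}, \TyLITrans{}, \TyLIElimZ{}, \TyLIIntroZ{}, all of which lie among the permitted rules \TyAlpha{}, \TyCong{}, \TyProd{}, \TyLIs{}; by fact (b) each of its typing premises is again a derivation whose subject is a graph, so the induction hypothesis applies to each and shows that the prefix above it, up to the next \TyArrow{} boundary if any, uses only the permitted rules. Glueing the root step to these sub-prefixes shows that the whole \TyArrow{}-free prefix of the derivation of $\Gamma\vdash_P G:\tau$ uses solely \TyAlpha{}, \TyCong{}, \TyProd{}, and \TyLIs{}.

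The main obstacle is not the induction but the bookkeeping around the $\lambda$-atom case and the precise notion of the ``prefix without premises of \TyArrow{}'': one must be careful that a $\lambda$-atom embedded as a subcomponent of a larger graph (e.g.\ as one of the $T_i$ in a \TyProd{} step corresponding to a function-valued slot of a production rule) is typed by \TyArrow{}, and that this \TyArrow{} instance — though reachable from the root without traversing any \TyArrow{} premise — is precisely where the prefix stops and hence does not count against the claim. Spelling out closure fact (a) explicitly for the congruence rules and for link substitution is routine but should be done so that fact (b), on which the whole induction rests, is fully justified.
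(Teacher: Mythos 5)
Your proposal is correct and follows essentially the same route as the paper: induction on the typing derivation, ruling out \TyVar{}, \TyApp{}, and \TyCase{} because the subject is a graph, treating \TyArrow{} as the point where the \TyArrow{}-free prefix is cut off, and applying the induction hypothesis to the graph-subject premises of the remaining rules. The only difference is that you spell out the closure facts (congruence, link substitution, and subterm formation preserve graph-hood) that the paper leaves implicit in the phrase ``by the definition of these rules''.
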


\begin{proof}
  We proceed by induction on the structure of the typing derivation.
  Since \( G \) is a graph and not a case expression or an application,
  the rules \(\TyCase\) and \(\TyApp\) are not applicable.
  As \( G \) is a graph and not a graph variable,
  the rule \(\TyVar\) cannot be applied.
  Thus, only the rules
  \(\TyAlpha\), \(\TyCong\), \(\TyProd\), \(\TyLIs\), and \(\TyArrow\)
  are applicable.
  If one of
  \(\TyAlpha\), \(\TyCong\), \(\TyProd\), or \(\TyLIs\)
  is applied,
  then its premises must be typing judgements of graphs,
  by the definition of these rules.
  Therefore, the claim holds by the induction hypothesis.
\end{proof}

\begin{lemma}\label{lem:elim-ty-alpha}
  Any typing derivation for a graph can be transformed
  so that \(\TyAlpha\) does not occur
  in the largest prefix of its derivation tree
  that contains no premises of \(\TyArrow\).
\end{lemma}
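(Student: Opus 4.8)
The plan is to establish \Cref{lem:elim-ty-alpha} by a rule-permutation argument that, read bottom-up, replaces each occurrence of \TyAlpha{} in the \TyArrow{}-free prefix by a relabelled copy of the (already \TyAlpha{}-free) sub-derivation sitting above it. A single application of \TyAlpha{} performs an injective link renaming $\angled{Y/X}$ with $Y \notin \fn(T)$, applied simultaneously to the subject and to the type; composing such steps realises an arbitrary injective link renaming. So the core of the proof is the following \emph{renaming-stability} claim, which I would prove first: \emph{if $\Gamma \vdash_P G : \tau$ has a derivation whose \TyArrow{}-free prefix uses only \TyCong{}, \TyProd{}, and the $\multimap$-rules \TyLIIntro{}, \TyLITrans{}, \TyLIElimZ{}, \TyLIIntroZ{}, and $\sigma$ is an injective link renaming whose range avoids $\fn(G)$, then $\Gamma \vdash_P G\sigma : \tau\sigma$ admits a derivation with the same property.}

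I would prove this sublemma by induction on the \TyAlpha{}-free derivation, splitting on the last rule of the \TyArrow{}-free prefix. For \TyArrow{} the claim is immediate, since a $\lambda$-atom's only free links are the links of the atom and the body typing is untouched, so \TyArrow{} re-applies verbatim. For \TyCong{} one uses that $T \equiv T'$ implies $T\sigma \equiv T'\sigma$ — the same stability of structural congruence under link substitution already invoked for \TyCong{} in the proof of the Substitution Lemma, where the side conditions of (E6) and (E10) are unaffected by renaming — and then re-applies \TyCong{} above the premise supplied by the induction hypothesis. The remaining structural rules, \TyProd{} with $n \geq 1$, \TyLITrans{}, \TyLIElimZ{}, and \TyLIIntroZ{}, all go through by pushing $\sigma$ into the premises via the induction hypothesis — after $\alpha$-renaming the $\nu$-bound links of the subject away from the range of $\sigma$ so that $\sigma$ descends under the binders without capture — and observing that each rule's side conditions are purely set-theoretic relations among free-link sets, hence preserved by an injective renaming. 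Granting the sublemma, \Cref{lem:elim-ty-alpha} follows by a bottom-up pass over the given derivation: a vacuous \TyAlpha{} step ($X \notin \fn(T)$) is deleted outright; for a non-vacuous \TyAlpha{} step the induction hypothesis first makes the sub-derivation above it \TyAlpha{}-free, and the sublemma then yields a \TyAlpha{}-free derivation of that step's conclusion, which is spliced in; every other outer rule is re-applied unchanged over \TyAlpha{}-free premise derivations.

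The step I expect to be the main obstacle is the two leaf cases of the sublemma, \TyLIIntro{} and nullary \TyProd{}, because there the links moved by $\sigma$ are exactly the link names dictated by a production rule of $P$: in \TyLIIntro{} they are the bound links $\{\Ys\} \cup \fn(\seq{W \bowtie U})$ of the rule, and in \TyProd{} with $n = 0$ they are the left-hand-side links $\Xs$. Absorbing $\sigma$ into such a leaf requires reading ``$(\alpha(\Xs) \lto \dots) \in P$'' up to an injective relabelling of the rule's link names — the convention under which a hypergraph grammar rule is a schema whose port names are positional. I would make this convention explicit; it is in any case forced by \Cref{def:normal-form}, since a graph whose free links are named differently from those of every rule could otherwise never receive a \TyProd{}-only derivation. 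Once this is fixed, the leaf cases close by re-instantiating the rule with the relabelled links and re-applying \TyLIIntro{} (respectively \TyProd{}), and the bottom-up procedure delivers a derivation of $\Gamma \vdash_P G : \tau$ whose \TyArrow{}-free prefix is free of \TyAlpha{}, proving \Cref{lem:elim-ty-alpha}.
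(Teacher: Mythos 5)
Your proposal is correct and follows essentially the same route as the paper: the renaming introduced by \TyAlpha{} is propagated upward through the derivation (your renaming-stability sublemma is just a precise formulation of this), and it is absorbed at \TyArrow{}, \TyProd{}, and \TyLIIntro{} by exploiting that the $\lambda$-atom's links are unconstrained and that the link names of a production rule instance can be freely relabelled — exactly the point you flag as the "main obstacle" and which the paper simply asserts. Your treatment of \TyCong{} (stability of $\equiv$ under link substitution) and of capture-avoidance under $\nu$-binders spells out details the paper's terse proof leaves implicit, but the underlying argument is the same.
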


\begin{proof}
  Since link substitution is defined compositionally for graphs,
  the rule \(\TyAlpha\) can be propagated upward
  through the derivation tree.

  When the transformation reaches a rule
  \(\TyArrow\), \(\TyProd\), or \(\TyLIIntro\),
  the application of \(\TyAlpha\) can be eliminated at that point.
  \begin{itemize}
    \item
          In the case of \(\TyArrow\),
          no restriction is imposed on the free links
          of the $\lambda$-abstraction atom or the arrow type.
          Hence, the resulting typing judgement
          can be obtained directly without using \(\TyAlpha\).
    \item
          For \(\TyProd\) and \(\TyLIIntro\),
          the link names in the applying production rule
          can be freely renamed.
          Therefore, the typing judgement after applying \(\TyAlpha\)
          can be derived directly without using \(\TyAlpha\).
  \end{itemize}
\end{proof}


\begin{lemma}\label{lem:elim-ty-cong}
  Any typing derivation for a graph that uses only
  \(\TyCong\), \(\TyProd\), and \(\TyLIs\)
  in the largest prefix of its derivation tree
  that contains no premises of \(\TyArrow\)
  can be transformed
  so that \(\TyCong\) occurs only once as the final rule.
\end{lemma}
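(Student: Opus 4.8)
The plan is to push every application of $\TyCong{}$ upward through the derivation tree by appealing to the substitution property of structural congruence with respect to the typing rules, mirroring the strategy already used in \Cref{lem:elim-ty-alpha} for $\TyAlpha{}$. Concretely, I would argue by induction on the height of the largest prefix of the derivation tree containing no premises of $\TyArrow{}$, and show that whenever $\TyCong{}$ appears immediately above another rule $R \in \{\TyProd{}, \TyLIIntro{}, \TyLITrans{}, \TyLIElimZ{}, \TyLIIntroZ{}\}$, the two applications can be permuted so that $\TyCong{}$ rises toward the leaves — until it either meets a $\TyLIIntro{}$ leaf (where it can be discharged, since the leaf's side condition only constrains the production rule, whose links can be freely chosen to match the congruent graph up to renaming) or accumulates at the very bottom of the prefix, where a single $\TyCong{}$ remains. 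Iterating this permutation pushes all congruences to a single final step.

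First I would set up the key commutation lemma: if $T \equiv T'$ and $\Gamma \vdash T : \tau$ is derived by a rule $R$ whose principal term is $T$, then there is a derivation of $\Gamma \vdash T' : \tau$ ending in $\TyCong{}$ whose sub-derivations apply $\TyCong{}$ only to strictly smaller terms. For $\TyProd{}$ and the $\multimap$-introduction/transitivity/elimination rules this follows by analysing how $\equiv$ (the rules (E1)–(E10) of \figref{table:lgt-cong}) interacts with the syntactic shape each rule expects: reordering and re-association of molecules (E1)–(E3) corresponds to permuting the premises $T_i$; movement of $\nu$ binders (E9)(E10) corresponds to redistributing the hidden links $\{\Zs\}$, $\{\Ws\}$, $\{\Vs\}$ among the subterms, which is permitted precisely because the side conditions of $\TyProd{}$ and $\TyLITrans{}$ (conditions (i)–(v)) are phrased purely in terms of free-link sets, which are preserved by $\equiv$; absorption of fusions (E6)(E7) can be handled because the side condition of \Cref{ProductionRule} already normalises fusions of local links, and $\TyLIIntro{}$ carries the fused links explicitly in $\{\Us,\Vs,\Ys\}$. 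The structural rules (E4)(E5) let the congruence descend into subterms, which is exactly what lets the induction hypothesis apply.

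Then I would assemble the lemma: starting from a derivation in the prefix using only $\TyCong{}, \TyProd{}, \TyLIs{}$, repeatedly apply the commutation step to the topmost occurrence of $\TyCong{}$ that sits above another rule, each time strictly decreasing a well-chosen measure (e.g., the multiset of depths at which $\TyCong{}$ occurs, ordered so that occurrences closer to the leaves are smaller, together with the size of the term the congruence is applied to). Since $\equiv$ is transitive, consecutive $\TyCong{}$ steps merge into one; the process terminates with at most one $\TyCong{}$ at the root of the prefix.

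The main obstacle I anticipate is the interaction between $\nu$-binder movement under $\equiv$ and the fairly intricate free-link side conditions of $\TyLITrans{}$ — in particular verifying that after reshuffling which links are bound in $T_1$, in $T_2$, and in the enclosing $\nu\Zs$, conditions (ii)–(v) still hold. I expect this to require the auxiliary facts already established in Appendix \ref{sec:elim-limprules} (notably \Cref{lem:elim-limprules-lemma1} on free-link containment and \Cref{lem:l5-linkcond-preserve} on preservation of the $\TyLITrans{}$ link conditions), so I would invoke those rather than re-deriving the link bookkeeping. The $\TyLIIntro{}$ case, while delicate because the congruence may rewrite the constructor atom's fusion structure, is ultimately routine since the production-rule side condition is stable under $\alpha$-renaming and fusion absorption; I would treat it as the base case where $\TyCong{}$ is finally discharged.
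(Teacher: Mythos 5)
There is a genuine gap: your commutation goes in the wrong direction. You propose to push \(\TyCong{}\) \emph{upward} toward the leaves and discharge it at a \(\TyLIIntro{}\) (or \(\TyProd{}\)) axiom, mirroring \Cref{lem:elim-ty-alpha}. That works for \(\TyAlpha{}\) because link renaming preserves the exact syntactic shape of the subject, so the renamed term is still literally of the form \((C(\Ys), \seq{W \bowtie U})\) that the leaf rule requires. Structural congruence does not have this property: rules (E1), (E6)--(E8) and (E10) of \figref{table:lgt-cong} change the syntactic shape (inserting or deleting \(\mathbf{0}\), absorbing fusions, extruding \(\nu\)-binders), so a graph congruent to \((C(\Ys), \seq{W \bowtie U})\) need not itself match the conclusion schema of \(\TyLIIntro{}\), and a graph congruent to \(\nu\Zs.(C(\Ys),\dots)\) need not decompose as \(\TyProd{}\) demands. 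Your ``key commutation lemma'' therefore fails exactly at the step where the congruence is supposed to be absorbed, and this is why the lemma's conclusion retains one residual \(\TyCong{}\) rather than none. Your own fallback clause (``accumulates at the very bottom'') hints at the right answer but contradicts the upward-propagation mechanism you describe.

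The paper's proof commutes in the opposite, easy direction: by the induction hypothesis each premise of the last rule \(R\) ends in a single \(\TyCong{}\) from some \(G_i' \equiv G_i\); one then applies \(R\) directly to the congruence-free premises \(G_i'\) --- which is always legal, since \(R\)'s applicability depends only on the premises' \emph{types} and free-link sets, both preserved by \(\equiv\) --- and finishes with one \(\TyCong{}\) below the conclusion, merging consecutive \(\TyCong{}\)s by transitivity. This also dissolves the difficulty you anticipate with the \(\TyLITrans{}\) side conditions (ii)--(v): no redistribution of \(\nu\)-binders between \(T_1\), \(T_2\) and the enclosing scope is ever needed, because the primed premises have the same types and free links as the unprimed ones, so the side conditions hold verbatim. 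If you reverse the direction of your permutation (float congruences to the root rather than to the leaves), drop the discharge-at-leaf step, and add the reflexivity base case that inserts a trivial \(\TyCong{}\) when none occurs, your argument becomes the paper's.
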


\begin{proof}
  We proceed by induction on the structure of the typing derivation.
  Note that the number of rule applications strictly decreases.

  \noindent \textbf{Base Case.}
  Consider the case in which the typing derivation uses only a single rule.



  In the typing derivation tree,
  the only rules with no premises
  are \(\TyProd\)
  whose production rule has no child types
  and \(\TyLIIntro\).
  If the typing judgement \( \Gamma \vdash_P G : \tau \) is derived using only
  \TyProd{} and \TyLIIntro{},
  then by the reflexivity of structural congruence \( G \equiv G \),
  we can add \TyCong{} to derive the same typing judgement.

  \noindent \textbf{Induction Step.}
  We perform case analysis on the last rule used in the derivation:
  \TyProd{}, \TyLIIntro{}, \TyLITrans{},
  \TyLIIntroZ{}, \TyLIElimZ{}, or \TyCong{}.

  \begin{description}
    \item[Case \TyProd{}.]
          Assume the last rule applied in the derivation is \TyProd{}.
          By induction hypotheses the premises
          can be derived using \TyCong{} only once at the end.
          Thus, the following derivation can be obtained:
          \begin{prooftree}
            \def\extraVskip{1pt}
            \AXC{$\vdots$}
            \noLine
            \UIC{$\ml{\Gamma \vdash_P G_1': \tau_1'}$}
            \RightLabel{\TyCong{}}
            \UIC{$\ml{\Gamma \vdash_P G_1: \tau_1}$}
            \AXC{\hspace{-4mm}$\cdots$\hspace{-4mm}}
            \AXC{$\vdots$}
            \noLine
            \UIC{$\ml{\Gamma \vdash_P G_n': \tau_n'}$}
            \RightLabel{\TyCong{}}
            \UIC{$\ml{\Gamma \vdash_P G_n': \tau_n'}$}
            \RightLabel{\TyProd{}}
            \TIC{$\ml{\Gamma \vdash_P\tallstrut
                  \Zs.(C(\Ys), G_1, \dots, G_n, \seq{U \bowtie V}): \tau}$}
          \end{prooftree}
          Since \( G_i' \equiv G_i \),
          it follows from structural congruence rules (e.g., E6, E7) that
          \begin{align*}
                     & \Zs.(C(\Ys), G_1', \dots, G_n', \seq{U \bowtie V}) \\
            \equiv\  & \Zs.(C(\Ys), G_1, \dots, G_n, \seq{U \bowtie V}).
          \end{align*}
          Therefore, the following derivation is valid:
          \begin{prooftree}
            \def\extraVskip{1pt}
            \AXC{$\vdots$}
            \noLine
            \UIC{$\ml{\Gamma \vdash_P G_1': \tau_1'}$}
            \AXC{$\cdots$}
            \AXC{$\vdots$}
            \noLine
            \UIC{$\ml{\Gamma \vdash_P G_n': \tau_n'}$}
            \RightLabel{\TyProd{}}
            \TIC{$\ml{\Gamma \vdash_P\tallstrut \Zs.(C(\Ys),G_1',\dots,G_n',\seq{U \bowtie V}): \tau}$}
            \RightLabel{\TyCong{}}
            \UIC{$\ml{\Gamma \vdash_P\tallstrut \Zs.(C(\Ys),G_1,\dots,G_n,\seq{U \bowtie V}): \tau}$}
          \end{prooftree}
          This derivation uses \TyCong{} only once at the final step,
          and thus satisfies the lemma.

    \item[Case \TyLITrans{}.]
          Assume the last rule applied in the derivation is \TyLITrans{}.
          By induction hypotheses the premises
          can be derived using \TyCong{} only once at the end.
          Thus, the following derivation can be obtained,
          where $\TyCongA$ stands for $\TyCong$ and
          $\TyLITransA$ stands for $\TyLITrans$ :
          \begin{prooftree}
            \def\extraVskip{1pt}
            \def\ScoreOverhang{0pt}
            \def\defaultHypSeparation{\hskip .1in}
            \AXC{$\vdots$}
            \noLine
            \UIC{$\ml{\Gamma \vdash_P G_1' \\[-2pt]
                  \quad : \LI{\seq{\tau_0}}{\tau_1}{\Zs}}$}
            \RightLabel{\TyCongA{}}
            \UIC{$\ml{\Gamma \vdash_P G_1 \\[-2pt]
                  \quad : \LI{\seq{\tau_0}}{\tau_1}{\Zs}}$}
            \AXC{$\vdots$}
            \noLine
            \UIC{$\ml{\Gamma \vdash_P G_2' \\[-2pt]
                  \quad : \LI{\tau_1, \seq{\tau_2}}{\tau_3}{\Ys}}$}
            \RightLabel{\TyCongA{}}
            \UIC{$\ml{\Gamma \vdash_P G_2 \\[-2pt]
                  \quad : \LI{\tau_1, \seq{\tau_2}}{\tau_3}{\Ys}}$}
            \RightLabel{\TyLITransA{}}
            \BIC{$\Gamma \vdash_P\tallstrut \nu \Xs.(G_1, G_2)
                : \LI{\seq{\tau_0}, \seq{\tau_2}}{\tau_3}{\Ws}$}
          \end{prooftree}
          Since \(G_1' \equiv G_1\) and \(G_2' \equiv G_2\),
          it follows from structural congruence rules (e.g., E6, E7) that
          \(\nu \Xs.(G_1, G_2) \equiv \nu \Xs.(G_1', G_2')\).
          Therefore, the following derivation is valid:
          \begin{prooftree}
            \def\extraVskip{1pt}
            \def\ScoreOverhang{0pt}
            \AXC{$\vdots$}
            \noLine
            \UIC{$\ml{\Gamma \vdash_P G_1' \\[-2pt]
                  \quad : \LI{\seq{\tau_0}}{\tau_1}{\Zs}}$}
            \AXC{$\vdots$}
            \noLine
            \UIC{$\ml{\Gamma \vdash_P G_2' \\[-2pt]
                  \quad : \LI{\tau_1, \seq{\tau_2}}{\tau_3}{\Ys}}$}
            \RightLabel{\TyLITransA{}}
            \BIC{$\Gamma \vdash_P\tallstrut \nu \Xs.(G_1', G_2')
                : \LI{\seq{\tau_0}, \seq{\tau_2}}{\tau_3}{\Ws}$}
            \RightLabel{\TyCongA{}}
            \UIC{$\Gamma \vdash_P\tallstrut \nu \Xs.(G_1, G_2)
                : \LI{\seq{\tau_0}, \seq{\tau_2}}{\tau_3}{\Ws}$}
          \end{prooftree}
          This derivation uses \TyCong{} only once at the final step,
          and thus satisfies the lemma.

    \item[Case \TyLIElimZ{}.]
          Assume the last rule applied in the derivation is \TyLIElimZ{}.
          By induction hypotheses the premises
          can be derived using \TyCong{} only once at the end.
          Thus, the following derivation can be obtained:
          \begin{prooftree}
            \AXC{$\vdots$}
            \noLine
            \UIC{$\Gamma \vdash_P\shortstrut
                T' : \LIempty{\tau}{\Ys}$}
            \RightLabel{\TyCong{}}
            \UIC{$\Gamma \vdash_P\shortstrut
                T : \LIempty{\tau}{\Ys}$}
            \RightLabel{\TyLIElimZ{}}
            \UIC{$\Gamma \vdash_P\shortstrut T : \tau$}
          \end{prooftree}
          The above can be transformed as follows.
          \begin{prooftree}
            \AXC{$\vdots$}
            \noLine
            \UIC{$\Gamma \vdash_P\shortstrut
                T' : \LIempty{\tau}{\Ys}$}
            \RightLabel{\TyLIElimZ{}}
            \UIC{$\Gamma \vdash_P\shortstrut T' : \tau$}
            \RightLabel{\TyCong{}}
            \UIC{$\Gamma \vdash_P\shortstrut T : \tau$}
          \end{prooftree}
          This derivation uses \TyCong{} only once at the final step,
          and thus satisfies the lemma.

    \item[Case \TyLIIntroZ{}.]
          Assume the last rule applied in the derivation is \TyLIIntroZ{}.
          By induction hypotheses the premises
          can be derived using \TyCong{} only once at the end.
          Thus, the following derivation can be obtained:
          \begin{prooftree}
            \AXC{$\vdots$}
            \noLine
            \UIC{$\Gamma \vdash_P\shortstrut T' : \tau$}
            \RightLabel{\TyCong{}}
            \UIC{$\Gamma \vdash_P\shortstrut T : \tau$}
            \RightLabel{\TyLIIntroZ{}}
            \UIC{$\Gamma \vdash_P
                T : \LIempty{\tau}{\Ys}$}
          \end{prooftree}
          The above can be transformed as follows.
          \begin{prooftree}
            \AXC{$\vdots$}
            \noLine
            \UIC{$\Gamma \vdash_P\shortstrut T' : \tau$}
            \RightLabel{\TyLIIntroZ{}}
            \UIC{$\Gamma \vdash_P\shortstrut T' : \LIempty{\tau}{\Ys}$}
            \RightLabel{\TyCong{}}
            \UIC{$\Gamma \vdash_P\shortstrut
                T : \LIempty{\tau}{\Ys}$}
          \end{prooftree}
          This derivation uses \TyCong{} only once at the final step,
          and thus satisfies the lemma.

    \item[Case \TyCong{}.]
          If multiple consecutive applications of \(\TyCong\) occur,
          they can be merged into a single application.
          \begin{prooftree}
            \AXC{$\vdots$}
            \noLine
            \UIC{$\Gamma \vdash_P\shortstrut G_1: \tau$}
            \RightLabel{\TyCong{}}
            \UIC{$\Gamma \vdash_P\shortstrut G_2: \tau$}
            \RightLabel{\TyCong{}}
            \UIC{$\Gamma \vdash_P\shortstrut G_3: \tau$}
          \end{prooftree}
          Since \(G_1 \equiv G_2\) and \(G_2 \equiv G_3\),
          \(G_1 \equiv G_3\).
          Therefore the above can be transformed as follows.
          \begin{prooftree}
            \AXC{$\vdots$}
            \noLine
            \UIC{$\Gamma \vdash_P\shortstrut G_1: \tau$}
            \RightLabel{\TyCong{}}
            \UIC{$\Gamma \vdash_P\shortstrut G_3: \tau$}
          \end{prooftree}

  \end{description}
\end{proof}

\noindent
\textbf{\Cref{lem:normal-form-exists}}
(Existence of Normal Form).
\textit{
  Let \(G\) be a typable graph whose type is not a linear implication type.
  Then, there exists a congruent graph $G'$
  whose
  typing derivation
  uses only \(\TyProd\) in
  the derivation
  without premises of \(\TyArrow\).
}

\begin{proof}
  By \Cref{lem:A11},
  for any graph,
  the typing derivation,
  that is the largest prefix of its derivation tree
  that does not contain premises of \(\TyArrow\),
  is derived solely by using
  \(\TyAlpha\), \(\TyCong\), \(\TyProd\), and \(\TyLIs\)

  Using \Cref{lem:elim-ty-alpha} and \Cref{lem:elim-ty-cong}),
  such typing derivation can be transformed
  so that \(\TyAlpha\) does not occur
  and \(\TyCong\) occurs only once as the final rule
  in the largest prefix of its derivation tree
  that contains no premises of \(\TyArrow\).

  Therefore,
  as shown in Appendix \ref{sec:elim-limprules},
  Any typing derivation for a graph that uses only
  \(\TyProd\) and \(\TyLIs\)
  in the largest prefix of its derivation tree
  that contains no premises of \(\TyArrow\)
  can be transformed,
  into one in which the typing uses only \(\TyProd\),
  \TyAlpha{}, and \TyCong{}.

  Applying \Cref{lem:elim-ty-alpha} and \Cref{lem:elim-ty-cong} again,
  we obtain the typing derivation
  only consists of \(\TyProd\)
  for a congruent graph.
\end{proof}

\subsection{Uniqueness of Normal Form}\label{sec:uniqueness-normal}

\noindent
\textbf{\Cref{lem:normal-form-uniq}}
(Existence of Normal Form).
\textit{
  Let \(G\) and \(G'\) be graphs in Normal Form such that \(G \equiv G'\).
  Then \(G\) and \(G'\) share the same typing derivation
  among those that contain no premises of \(\TyArrow{}\).
}

\begin{proof}
  As pointed out after \Cref{def:normal-form},
  the syntactic structure of a graph in Normal Form can be viewed as a spanning tree,
  where the nodes correspond to atoms and the edges correspond to
  their primary root links.
  Accordingly, the statement
  is equivalent to claiming that the structure of this spanning tree is uniquely determined.

  To establish the uniqueness of the spanning tree,
  we traverse the primary root links of atoms in the graph,
  starting from the free root link \( X \) of the type of the graph,
  and demonstrate that the resulting traversal structure is uniquely determined.

  When a graph $G$ whose type has a link $X$ as a root atom
  is in a Normal Form,
  it can be written in the following form:
  \begin{align}\label{eq:normal-form-01}
    \nu \Ys.(C(\Zs, X), \overrightarrow{W_1 \bowtie W_2}, \Gs).
  \end{align}
  This representation arises due to two key factors:
  the constraints imposed on type production rules by \Cref{def:disjoint-condition},
  and the fact that only the typing rules \TyProd{} and
  \TyAlpha{} are used in the derivation.
  The use of \TyProd{} ensures that the syntactic form of the graph conforms
  to the structural constraints specified by \Cref{def:disjoint-condition}.
  Although \TyAlpha{} permits the renaming of free link names,
  it does not affect the overall syntactic structure of the graph.

  Due to the conditions on links in \Cref{def:disjoint-condition},
  specifically Conditions (3),
  there is no atom in $\Gs$ whose final argument is $X$
  except for $C(\Zs, X)$
  in \Cref{eq:normal-form-01}.
  Therefore,
  identifying the atom with final link $X$ uniquely determines the first atom
  $C(\Zs,X)$.

  By structural induction over the form of $G$,
  it is straightforward to see that
  every atom in $G_j$ in $\Gs$ is traversable via primary root
  links from $Z_i$.
  Also, any atom in $G_k$ where $k \neq j$ does not have a root link $Z_i$
  due to the link conditions in
  \Cref{def:disjoint-condition} (Conditions (3)).
  Therefore,
  we can uniquely determine which atom belongs to
  which sub-spanning tree $G_j$ for every atom in $G$.
  Thus the partition of $G_j$'s is uniquely determined.

  Also, since $Z_i$s are distinct by Condition (3)
  in \Cref{def:disjoint-condition},
  The order of $G_j$'s is also uniquely determined.

  Finally, the representation of $\overrightarrow{W_1 \bowtie W_2}$
  is uniquely determined by the form of the production rules.

  Therefore, Normal Form of any typable graph is uniquely determined
  up to $\alpha$-conversion of links.
  This implies that the spanning tree of the graph,
  whose backbone consists of root links, is unique.
\end{proof}

\end{document}